\documentclass{entcs} 
\usepackage{entcsmacro}
\usepackage{graphicx}
\sloppy


\usepackage[T1]{fontenc}

\usepackage{amsmath}
\usepackage{amssymb}
\usepackage{array}
\usepackage{bbold}              
\usepackage{centernot}
\usepackage{cite}
\usepackage{enumerate}
\usepackage[mathscr]{euscript}  
\let\mathcur\mathscr
\usepackage{hyperref}
\usepackage{mathrsfs}           
\usepackage{multirow}
\usepackage{paralist}
\usepackage{prooftree}
\usepackage{relsize}
\usepackage{stmaryrd}
\SetSymbolFont{stmry}{bold}{U}{stmry}{m}{n}
\usepackage{wrapfig}
\usepackage[all,cmtip,poly]{xy}

\makeatletter
\def\old@comma{,}
\catcode`\,=13
\def,{%
  \ifmmode%
    \old@comma\discretionary{}{}{}%
  \else%
    \old@comma%
  \fi%
}
\makeatother

\newcommand{\capp}{\textsf{CAP}}
\newcommand{\ppc}{\textsf{PPC}}

\newcommand{\ie}{{\em i.e.~}}

\newcommand{\cf}{{\em cf.~}}

\newcommand{\coloneq}{\ensuremath{:=}}
\newcommand{\Coloneq}{\ensuremath{:\coloneq}}
\newcommand{\llbrace}{\{\!\!\{}
\newcommand{\rrbrace}{\}\!\!\}}

\newcommand{\pair}[2]{\ensuremath{\langle{#1},{#2}\rangle}}

\newcommand{\qqqquad}{\qquad\qquad}

\newcommand{\mathLarger}[1]{\ensuremath{\mathlarger{\mathlarger{#1}}}}

\newcommand{\sequP}[2]{\ensuremath{{#1}\vdash_{\mathsf{p}}{#2}}}
\newcommand{\sequT}[2]{\ensuremath{{#1}\vdash{#2}}}
\newcommand{\sequPDeriv}[2]{\ensuremath{{#1}\rhd_{\mathsf{p}}{#2}}}
\newcommand{\sequTDeriv}[2]{\ensuremath{ {#1}\rhd{#2}}}
\newcommand{\sequC}[2]{\ensuremath{#2}}
\newcommand{\sequTE}[2]{\ensuremath{{#1}\vdash{#2}}}

\newcommand{\emphdef}[1]{\textbf{{#1}}}
\newcommand{\eqdef}{\ensuremath{\triangleq}}

\newcommand{\Rule}[3]{
    \prooftree
         {#1}
    \justifies  
         {#2}
    \thickness=0.05em
    \using
         {#3}
    \endprooftree}
\newcommand{\RuleCo}[3]{
    \prooftree
         {#1}
    \Justifies  
         {#2}
    \thickness=0.1em
    \using
         {#3}
    \endprooftree}

\newcommand{\ruleName}[1]{\ensuremath{\textsc{({#1})}}}

\newcommand{\ruleBeta}{\ruleName{$\beta$}}

\newcommand{\rulePComp}{\ruleName{p-comp}}
\newcommand{\rulePConst}{\ruleName{p-const}}
\newcommand{\rulePMatch}{\ruleName{p-match}}

\newcommand{\ruleTAbs}{\ruleName{t-abs}}
\newcommand{\ruleTApp}{\ruleName{t-app}}
\newcommand{\ruleTComp}{\ruleName{t-comp}}
\newcommand{\ruleTConst}{\ruleName{t-const}}
\newcommand{\ruleTSubs}{\ruleName{t-subs}}
\newcommand{\ruleTVar}{\ruleName{t-var}}

\newcommand{\ruleEqmuComp}{\ruleName{e-comp}}
\newcommand{\ruleEqmuContr}{\ruleName{e-contr}}
\newcommand{\ruleEqmuFold}{\ruleName{e-fold}}
\newcommand{\ruleEqmuFunc}{\ruleName{e-func}}
\newcommand{\ruleEqmuRec}{\ruleName{e-rec}}
\newcommand{\ruleEqmuRefl}{\ruleName{e-refl}}
\newcommand{\ruleEqmuSymm}{\ruleName{e-symm}}
\newcommand{\ruleEqmuTrans}{\ruleName{e-trans}}
\newcommand{\ruleEqmuUnion}{\ruleName{e-union}}
\newcommand{\ruleEqmuUnionAssoc}{\ruleName{e-union-assoc}}
\newcommand{\ruleEqmuUnionComm}{\ruleName{e-union-comm}}
\newcommand{\ruleEqmuUnionIdem}{\ruleName{e-union-idem}}

\newcommand{\ruleEqcoComp}{\ruleName{e-comp-t}}
\newcommand{\ruleEqcoFunc}{\ruleName{e-func-t}}
\newcommand{\ruleEqcoRefl}{\ruleName{e-refl-t}}
\newcommand{\ruleEqcoUnion}{\ruleName{e-union-t}}

\newcommand{\ruleSubmuComp}{\ruleName{s-comp}}
\newcommand{\ruleSubmuEq}{\ruleName{s-eq}}
\newcommand{\ruleSubmuFunc}{\ruleName{s-func}}
\newcommand{\ruleSubmuHyp}{\ruleName{s-hyp}}
\newcommand{\ruleSubmuRec}{\ruleName{s-rec}}
\newcommand{\ruleSubmuRefl}{\ruleName{s-refl}}
\newcommand{\ruleSubmuTrans}{\ruleName{s-trans}}
\newcommand{\ruleSubmuUnionL}{\ruleName{s-union-l}}
\newcommand{\ruleSubmuUnionRL}{\ruleName{s-union-r1}}
\newcommand{\ruleSubmuUnionRR}{\ruleName{s-union-r2}}

\newcommand{\ruleSubcoComp}{\ruleName{s-comp-t}}
\newcommand{\ruleSubcoFunc}{\ruleName{s-func-t}}
\newcommand{\ruleSubcoRefl}{\ruleName{s-refl-t}}
\newcommand{\ruleSubcoUnion}{\ruleName{s-union-t}}

\newcommand{\set}[1]{\ensuremath{\left\{{#1}\right\}}}
\newcommand{\lista}[1]{\ensuremath{\left[{#1}\right]}}

\newcommand{\Variable}{\ensuremath{{\mathbb{V}}}}
\newcommand{\Constant}{\ensuremath{{\mathbb{C}}}}
\newcommand{\Pattern}{\ensuremath{{\mathbb{P}}}}
\newcommand{\DataStructure}{\ensuremath{{\mathbb{D}}}}
\newcommand{\Term}{\ensuremath{{\mathbb{T}}}}
\newcommand{\MatchableForms}{\ensuremath{{\mathbb{M}}}}

\newcommand{\TypeVariable}{\ensuremath{\mathcal{V}}}
\newcommand{\TypeConstant}{\ensuremath{\mathcal{C}}}
\newcommand{\Type}{\ensuremath{\mathcal{T}}}

\newcommand{\DataTypeVariable}{\ensuremath{\TypeVariable_{D}}}
\newcommand{\DataType}{\ensuremath{\Type_{D}}}

\newcommand{\Tree}{\ensuremath{\mathfrak{T}}}
\newcommand{\TreeFinite}{\ensuremath{\Tree^{\mathit{fin}}}}
\newcommand{\TreeRegular}{\ensuremath{\Tree^{\mathit{reg}}}}

\newcommand{\Parts}[1]{\ensuremath{\wp\left({#1}\right)}}

\newcommand{\Natural}{\ensuremath{\mathbb{N}}}

\newcommand{\Refl}{\ensuremath{\mathit{Refl}}}
\newcommand{\Symm}{\ensuremath{\mathit{Symm}}}
\newcommand{\Trans}{\ensuremath{\mathit{Trans}}}

\newcommand{\Phieqtypeco}{\ensuremath{\Phi_{\eqtypeco}}}

\newcommand{\Phisubtypeco}{\ensuremath{\Phi_{\subtypeco}}}

\newcommand{\M}{\ensuremath{{\cal M}}}
\renewcommand{\S}{\ensuremath{{\cal S}}}
\newcommand{\U}{\ensuremath{{\cal U}}}
\newcommand{\X}{\ensuremath{{\cal X}}}

\newcommand{\treeFont}[1]{\mathcur{#1}}
\newcommand{\tA}{\ensuremath{\treeFont{A}}}
\newcommand{\tB}{\ensuremath{\treeFont{B}}}
\newcommand{\tC}{\ensuremath{\treeFont{C}}}
\newcommand{\tD}{\ensuremath{\treeFont{D}}}

\newcommand{\idatatype}{\ensuremath{\mathrel@}}
\newcommand{\ifunctype}{\ensuremath{\supset}}
\newcommand{\iuniontype}{\ensuremath{\oplus}}
\newcommand{\irectype}{\ensuremath{\mu}}

\newcommand{\consttype}[1]{\ensuremath{\mathbb{#1}}}
\newcommand{\datatype}[2]{\ensuremath{{#1}\idatatype{#2}}}
\newcommand{\functype}[2]{\ensuremath{{#1}\ifunctype{#2}}}
\newcommand{\uniontype}[2]{\ensuremath{{#1}\iuniontype{#2}}}
\newcommand{\maxuniontype}[2]{\ensuremath{\mathLarger{\iuniontype}_{\substack{#1}}{#2}}}
\newcommand{\rectype}[2]{\ensuremath{\irectype{#1}.{#2}}}

\newcommand{\subtype}{\ensuremath{\preceq}}

\newcommand{\eqtype}{\ensuremath{\simeq}}

\newcommand{\subtypemu}{\ensuremath{\subtype_{\mu}}}

\newcommand{\eqtypemu}{\ensuremath{\eqtype_{\mu}}}

\newcommand{\subtypeco}{\ensuremath{\subtype_{\Tree}}}

\newcommand{\eqtypeco}{\ensuremath{\eqtype_{\Tree}}}

\newcommand{\bool}{\ensuremath{\mathsf{Bool}}}
\newcommand{\nat}{\ensuremath{\mathsf{Nat}}}

\newcommand{\iappterm}{\ensuremath{\,}}
\newcommand{\idataterm}{\ensuremath{\,}}
\newcommand{\icaseterm}{\ensuremath{\mathrel|}}
\newcommand{\ifuncterm}{\ensuremath{\shortrightarrow}}

\newcommand{\matchable}[1]{\ensuremath{{#1}}}
\newcommand{\constterm}[1]{\ensuremath{\mathtt{#1}}}
\newcommand{\appterm}[2]{\ensuremath{{#1}\iappterm{#2}}}
\newcommand{\dataterm}[2]{\ensuremath{{#1}\idataterm{#2}}}
\newcommand{\absterm}[3]{\ensuremath{{#1}\ifuncterm_{#3}{#2}}}

\newcommand{\ifThenElse}[3]{\texttt{if }{#1}\texttt{ then }{#2}\texttt{ else }{#3}}

\newcommand{\fail}{\ensuremath{\mathtt{fail}}}
\newcommand{\wait}{\ensuremath{\mathtt{wait}}}
\newcommand{\avoids}{\ensuremath{\mathrel\mathtt{avoids}}}

\newcommand{\reduce}{\ensuremath{\rightarrow}}

\newcommand{\fv}[1]{\ensuremath{\mathsf{fv}\!\left({#1}\right)}}
\newcommand{\fm}[1]{\ensuremath{\mathsf{fm}\!\left({#1}\right)}}

\newcommand{\dom}[1]{\ensuremath{\mathsf{dom}\left({#1}\right)}}
\newcommand{\img}[1]{\ensuremath{\mathsf{img}\left({#1}\right)}}

\newcommand{\rename}[2]{\ensuremath{\left\{{#2}/{#1}\right\}}}
\newcommand{\substitute}[3]{\rename{#1}{#2}{#3}}
\newcommand{\basicmatch}[2]{\ensuremath{\llbrace{#2}/{#1}\rrbrace}}
\newcommand{\match}[2]{\basicmatch{#1}{#2}}

\newcommand{\card}[2]{\ensuremath{\#_{#1}\!\left({#2}\right)}}
\newcommand{\cut}[2]{\ensuremath{{#1}|_{{#2}}}}
\newcommand{\size}[1]{\ensuremath{|{#1}|}}
\newcommand{\length}[1]{\size{#1}}
\newcommand{\res}[2]{\ensuremath{{#1}|_{#2}}}
\newcommand{\unfoldf}[3]{\ensuremath{{#1}_{#2}^{#3}}}

\newcommand{\toBTree}[1]{\ensuremath{\llbracket{#1}\rrbracket^{\Tree}}}

\newcommand{\Psicomp}[2]{\ensuremath{{\cal P}_{\mathsf{comp}}\!\left({#1},{#2}\right)}}

\newcommand{\compatible}[2]{\ensuremath{{#1}\lll{#2}}}
\newcommand{\matches}[2]{\ensuremath{{#1}\mathbin{\vartriangleleft}{#2}}}
\newcommand{\nmatches}[2]{\ensuremath{{#1}\mathbin{\centernot\vartriangleleft}{#2}}}

\newcommand{\at}[2]{{#1}\ensuremath{|_{#2}}}
\newcommand{\lookup}[2]{{#1}\ensuremath{\|_{#2}}}
\newcommand{\pos}[1]{\ensuremath{\mathsf{pos}\!\left({#1}\right)}}
\newcommand{\cpos}[2]{\ensuremath{\mathsf{mmpos}\!\left({#1},{#2}\right)}}
\newcommand{\mpos}[1]{\ensuremath{\mathsf{maxpos}\!\left({#1}\right)}}


\begin{document}
\begin{frontmatter}
\title{Type Soundness for Path Polymorphism\thanksref{STICAmSud}}

\thanks[STICAmSud]{Work partially funded by the international projects DeCOPA
STIC-AmSud 146/2012, CONICET, CAPES, CNRS; and ECOS-Sud A12E04, CONICET, CNRS.}

\author[UBA]{Andr\'es Viso\thanksref{emailAndres}}
\author[UNQ]{Eduardo Bonelli\thanksref{emailEduardo}}
\author[UnB]{Mauricio Ayala-Rinc\'{o}n\thanksref{emailMauricio}}

\address[UBA]{Consejo Nacional de Investigaciones Cient\'{i}ficas y
T\'{e}cnicas -- CONICET \\
Departamento de Computaci\'{o}n \\
Facultad de Ciencias Exactas y Naturales \\
Universidad de Buenos Aires -- UBA \\
Buenos Aires, Argentina}
\address[UNQ]{Consejo Nacional de Investigaciones Cient\'{i}ficas y
T\'{e}cnicas -- CONICET \\
Departamento de Ciencia y Tecnolog\'{i}a \\
Universidad Nacional de Quilmes -- UNQ \\
Bernal, Argentina}
\address[UnB]{Departamentos de Matem\'{a}tica e Ci\^{e}ncia da
Computa\c{c}\~{a}o \\
Universidade de Bras\'{\i}lia -- UnB \\
Bras\'{\i}lia D.F., Brasil}

\thanks[emailAndres]{Email:
  \href{mailto:aeviso@dc.uba.ar}{\texttt{\normalshape aeviso@dc.uba.ar}}} 
\thanks[emailEduardo]{Email:
  \href{mailto:eabonelli@gmail.com}{\texttt{\normalshape eabonelli@gmail.com}}}
\thanks[emailMauricio]{Email:
  \href{mailto:ayala@unb.br}{\texttt{\normalshape ayala@unb.br}}}

\begin{abstract}
\emph{Path polymorphism} is the ability to define functions that can operate
uniformly over arbitrary recursively specified data structures. Its essence is
captured by patterns of the form $\dataterm{x}{y}$ which decompose a compound
data structure into its parts. Typing these kinds of patterns is challenging
since the type of a compound should determine the type of its components.
We propose a static type system (\ie no run-time analysis) for a pattern
calculus that captures this feature. Our solution combines type application,
constants as types, union types and recursive types. We address the fundamental
properties of Subject Reduction and Progress that guarantee a well-behaved
dynamics. Both these results rely crucially on a notion of \emph{pattern
compatibility} and also on a coinductive characterisation of subtyping.
\end{abstract}

\begin{keyword}
$\lambda$-Calculus, Pattern Matching, Path Polymorphism, Static Typing
\end{keyword}

\end{frontmatter}

\maketitle

\section{Introduction}
\label{sec:intro}

Applicative representation of data structures in functional programming
languages consists in applying variable arity constructors to arguments.
Examples are: $$
\arraycolsep=1.5pt\def\arraystretch{1}
\begin{array}{rcl}
s & = &
\dataterm{\dataterm{\constterm{cons}}{(\dataterm{\constterm{vl}}{v_1})}}{(\dataterm{\dataterm{\constterm{cons}}{(\dataterm{\constterm{vl}}{v_2})}}{\constterm{nil}})} \\
t & = & \dataterm{\dataterm{\dataterm{\constterm{node}}{(\dataterm{\constterm{vl}}{v_3})}}{(\dataterm{\dataterm{\dataterm{\constterm{node}}{(\dataterm{\constterm{vl}}{v_4})}}{\constterm{nil}}}{\constterm{nil}})}}{(\dataterm{\dataterm{\dataterm{\constterm{node}}{(\dataterm{\constterm{vl}}{v_5})}}{\constterm{nil}}}{\constterm{nil}})}
\end{array} $$
These are data structures that hold values, prefixed by the constructor
\constterm{vl} for ``value'' ($v_{1,2}$ in the first case, and $v_{3,4,5}$ in
the second). Consider the following function for updating the values of any of
these two structures by applying some user-supplied function $f$ to it: 
\begin{equation}
\mathsf{upd} = \absterm{\matchable{f}}{
\arraycolsep=1.5pt\def\arraystretch{1}
\begin{array}[t]{clll}
(          & \dataterm{\constterm{vl}}{\matchable{z}} & \ifuncterm_{\set{z:A}}      & \dataterm{\constterm{vl}}{(\appterm{f}{z})} \\
\icaseterm & \dataterm{\matchable{x}}{\matchable{y}}  & \ifuncterm_{\set{x:C, y:D}} & \dataterm{(\appterm{\appterm{\mathsf{upd}}{f}}{x})}{(\appterm{\appterm{\mathsf{upd}}{f}}{y})} \\
\icaseterm & \matchable{w}                            & \ifuncterm_{\set{w:E}}      & w)
\end{array}
}{\set{f:\functype{A}{B}}}
\label{eq:intro:upd}
\end{equation}
Both $\appterm{\appterm{\mathsf{upd}}{(+1)}}{s}$ and
$\appterm{\appterm{\mathsf{upd}}{(+1)}}{t}$ may be evaluated. The expression to
the right of ``='' is called an \emph{abstraction} and consists of a unique
\emph{branch}; this branch in turn is formed from a pattern (\matchable{f}), a
user-specified type declaration for the variables in the pattern
($\set{f:\functype{A}{B}}$), and a body (in this case the body is itself
another abstraction that consists of three branches). An  argument to an
abstraction is matched against the patterns, in the order in which they are
written, and the appropriate body is selected. Notice the pattern
$\dataterm{\matchable{x}}{\matchable{y}}$. This pattern embodies the essence of
what is known as \emph{path
polymorphism}~\cite{DBLP:journals/jfp/JayK09,Jay:2009} since it abstracts a
path being ``split''.  The starting point of this paper is how to type a
calculus, let us call it $\capp$ for \emph{Calculus of Applicative Patterns},
that admits such examples. We next show why the problem is challenging, explain
our contribution and also discuss why the current literature falls short of
addressing it. We do so with an introduction-by-example approach, for the full
syntax and semantics of the calculus refer to Sec.~\ref{sec:terms}.

\paragraph*{Preliminaries on typing patterns expressing path polymorphism}  
Consider these two simple examples:
\begin{equation}
  \appterm{(\absterm{\constterm{nil}}{0}{})}{\constterm{cons}}
  \qqqquad
  \appterm{(\absterm{\dataterm{\constterm{vl}}{x}}{x+1}{\set{x:\nat}})}
          {(\dataterm{\constterm{vl}}{\constterm{true}})}
  \label{eq:intro:cx}
\end{equation}
They should clearly not be typable. In the first case, the abstraction is not
capable of handling $\constterm{cons}$. This is avoided by introducing
singleton types in the form of the constructors themselves: $\constterm{nil}$
is given type $\consttype{nil}$ while $\constterm{cons}$ is given type
$\consttype{cons}$; these are then compared. In the second case,
$\matchable{x}$ in the pattern is required to be $\nat$ yet the type of the
argument to $\constterm{vl}$ in $\dataterm{\constterm{vl}}{\constterm{true}}$
is $\bool$. This is avoided by introducing type
application~\cite{DBLP:journals/corr/abs-1009-3429} into types:
$\dataterm{\constterm{vl}}{x}$ is assigned a type of the form
$\datatype{\consttype{vl}}{\nat}$ while
$\dataterm{\constterm{vl}}{\constterm{true}}$ is assigned type
$\datatype{\consttype{vl}}{\bool}$; these are then compared.

Consider next the pattern $\dataterm{\matchable{x}}{\matchable{y}}$ of
$\mathsf{upd}$. It can be instantiated with different applicative terms in each
recursive call to $\mathsf{upd}$. For example, suppose $A = B = \nat$, that
$v_1$ and $v_2$ are numbers and consider
$\appterm{\appterm{\mathsf{upd}}{(+1)}}{s}$. The following table illustrates
some of the terms with which $\matchable{x}$ and $\matchable{y}$ are
instantiated during the evaluation of
$\appterm{\appterm{\mathsf{upd}}{(+1)}}{s}$:
\begin{center}
\begin{tabular}{l|c|c}
 & $\matchable{x}$ & $\matchable{y}$ \\ 
\hline
$\appterm{\appterm{\mathsf{upd}}{(+1)}}{s}$                                &
$\dataterm{\constterm{cons}}{(\dataterm{\constterm{vl}}{v_1})}$            &
$\dataterm{\dataterm{\constterm{cons}}{(\dataterm{\constterm{vl}}{v_2})}}
           {\constterm{nil}}$                                              \\
$\appterm{\appterm{\mathsf{upd}}{(+1)}}{(\dataterm{\constterm{cons}}
         {(\dataterm{\constterm{vl}}{v_1})})}$                             &
$\constterm{cons}$                                                         &
$\dataterm{\constterm{vl}}{v_1}$
                                                                           \\
$\appterm{\appterm{\mathsf{upd}}{(+1)}}{(
  \dataterm{\dataterm{\constterm{cons}}{(\dataterm{\constterm{vl}}{v_2})}}
           {\constterm{nil}}
)}$                                                                        &
$\dataterm{\constterm{cons}}{(\dataterm{\constterm{vl}}{v_2})}$            &
$\constterm{nil}$
\end{tabular}
\end{center}
The type assigned to $\matchable{x}$ (and $\matchable{y}$) should encompass all
terms in its respective column. This suggests  adopting a union type for
$\matchable{x}$. On the assumption that the programmer has provided an
exhaustive coverage, the type of $\matchable{x}$ in $\mathsf{upd}$ is: $$
\rectype{\alpha}{
  \uniontype{
    \uniontype{(\datatype{\consttype{vl}}{A})}{(\datatype{\alpha}{\alpha})}
  }{(
    \uniontype{\consttype{cons}}{\uniontype{\consttype{node}}{\consttype{nil}}}
  )}
}$$ Here $\irectype$ is the recursive type constructor and $\iuniontype$ the
union type constructor. The variable $y$ in the pattern $\dataterm{x}{y}$ will
also be assigned the same type. Note that $\mathsf{upd}$ itself is assigned
type $\functype{(\functype{A}{B})}{(\functype{F_A}{F_B})}$, where $F_X$ is
$\rectype{\alpha}{
  \uniontype{
    \uniontype{(\datatype{\consttype{vl}}{X})}{(\datatype{\alpha}{\alpha})}
  }{(
    \uniontype{\consttype{cons}}{\uniontype{\consttype{node}}{\consttype{nil}}}
  )}
}$.
Thus variables in applicative patterns will be assigned union types.

Recursive types are useful to give static semantics to fixpoint combinators,
which embodies the essence of recursion and thus \emph{path polymorphism}. Together
with unions, they allow to model recursively defined data types. Combining these
ideas with type application allows to define data types in a more intuitive
manner, like for example lists and trees $$
\rectype{\alpha}{
  \uniontype{\consttype{nil}}
            {(\datatype{\datatype{\consttype{cons}}{A}}{\alpha})}
}
\qquad
\rectype{\alpha}{
  \uniontype{\consttype{nil}}{(
    \datatype{\datatype{\datatype{\consttype{node}}{A}}{\alpha}}{\alpha}
  )}
}$$ The advantage of this approach is that the type expression reflects the
structure of the terms that inhabit it
(\cf~Fig.~\ref{fig:typingSchemesForPatternsAndTerms}). This will prove to be
convenient for our proposed notion of \emph{pattern compatibility}.

Compatibility is the key for ensuring Safety (Subject Reduction, SR for short,
and Progress). Consider the following example:
\begin{equation}
(\absterm{\dataterm{\constterm{vl}}{\matchable{x}}}
         {\ifThenElse{x}{1}{0}}
         {\set{x:\bool}})
\icaseterm
(\absterm{\dataterm{\constterm{vl}}{\matchable{y}}}
         {y+1}
         {\set{y:\nat}})
\label{eq:example:compatib:i}
\end{equation}
Although there is a branch capable of handling a term such as
$\appterm{\constterm{vl}}{4}$, namely the second one,  evaluation in $\capp$
takes place in left-to-right order following standard practice in functional
programming languages. Since the term $\appterm{\constterm{vl}}{4}$ \emph{also}
matches the pattern $\dataterm{\constterm{vl}}{\matchable{x}}$, we would obtain
the (incorrect) reduct $\ifThenElse{4}{1}{0}$. We thus must relate the types of
$\dataterm{\constterm{vl}}{\matchable{x}}$ and
$\dataterm{\constterm{vl}}{\matchable{y}}$ in order to avoid failure of SR.
Since $\dataterm{\constterm{vl}}{\matchable{y}}$ is an instance of
$\dataterm{\constterm{vl}}{\matchable{x}}$, we require the type of the latter
to be a subtype of the type of the former since it will always have priority:
$\datatype{\consttype{vl}}{\nat}\subtype \datatype{\consttype{vl}}{\bool}$.
Fortunately, this is not the case since $\nat\centernot \subtype \bool$,
rendering this example untypable.

Consider now, a term such as: 
\begin{equation}
\matchable{f} \ifuncterm_{\set{f:\functype{A}{B}}}
\arraycolsep=1.5pt\def\arraystretch{1}
\begin{array}[t]{cllll}
(          & \dataterm{\constterm{vl}}{\matchable{z}} & \ifuncterm_{\set{z:A}}      & \dataterm{\constterm{vl}}{(\appterm{f}{z})} \\
\icaseterm & \dataterm{\matchable{x}}{\matchable{y}}  & \ifuncterm_{\set{x:C, y:D}} & \dataterm{x}{y})
\end{array}
\label{eq:example:compatib:ii}
\end{equation}
This function takes an argument $\matchable{f}$ and pattern-matches with a data
structure to apply $f$ only when this data structure is an application with the
constructor $\constterm{vl}$ on the left-hand side. Assigning $\matchable{x}$ in
the second branch the type $C = \consttype{vl}$ is a potential source of failure
of SR since the function would accept arguments of type
$\datatype{\consttype{vl}}{D}$. Our proposed notion of compatibility will check
the \emph{types} occurring at offending positions in the \emph{types} of both
patterns. In this case, if $C = \consttype{vl}$ then $\datatype{C}{D} \subtype
\datatype{\consttype{vl}}{A}$ is enforced. Note that if $C$ were a type such as
$\rectype{\alpha}{\uniontype{\consttype{vl}}{\datatype{\alpha}{\alpha}}}$,
then also the same condition would be enforced.    

Let us return to example (\ref{eq:intro:upd}). The type declarations would be
$C = D = \rectype{\alpha}{\!}
\uniontype{
  \uniontype{(\datatype{\consttype{vl}}{A})}{(\datatype{\alpha}{\alpha})}
}{(
  \uniontype{\consttype{cons}}{\uniontype{\consttype{node}}{\consttype{nil}}}
)}$ and $E =
\uniontype{\consttype{cons}}{\uniontype{\consttype{node}}{\consttype{nil}}}$.
We now illustrate how compatibility determines any possible source of failure
of SR. Let us call $p, q$ and $r$ the three patterns of the innermost
abstraction of (\ref{eq:intro:upd}), resp. Since pattern $p$ does not subsume
$q$, we determine the (maximal) positions in both patterns which are sources of
failure of subsumption. In this case, it is that of $\constterm{vl}$ in $p$ and
$\matchable{x}$ in $q$. We now consider the \emph{subtype} at that position in
$\datatype{\consttype{vl}}{A}$, the type of $p$, and the \emph{subtype} at the
same position in $\datatype{F_A}{F_A}$, the type of $q$: the first is
$\consttype{vl}$ and the second is $F_A$. Since $F_A$ does not admit
$\consttype{vl}$ (\cf~Def.~\ref{def:compatibility}), these branches are
immediately declared compatible. In the case of $p$ and $r$, $\epsilon$ is the
offending position in the failure of $p$ subsuming $r$: since the type
application constructor $\idatatype$ located at position $\epsilon$ in
$\datatype{\consttype{vl}}{A}$ is not admitted by $E$, the type of $r$, these
branches are immediately declared compatible. Finally, a similar analysis
between $q$ and $r$ entails that these are compatible too. The type system and
its proof of Safety will therefore assure us that this example preserves
typability.

\paragraph*{Summary of contributions:} 
\begin{itemize}
  \item A typing discipline for \capp. We statically guarantee safety for path
  polymorphism in its purest form (other, more standard forms of polymorphism
  such as parametric polymorphism which we believe to be easier to handle, are
  out of the scope of this paper). 
  
  \item A proof of safety for the resulting system. It relies on the syntactic
  notion of pattern compatibility mentioned above, hence no runtime analysis is
  required. 

  \item Invertibility of subtyping of recursive types. This is crucial for the
  proof of safety. It relies on an equivalent coinductive formulation for which
  invertibility implies invertibility of subtyping of recursive types.
\end{itemize}

\paragraph*{Related work}
The literature on (typed) pattern calculi is extensive; we mention the most
relevant ones (see~\cite{DBLP:journals/jfp/JayK09,Jay:2009} for a more thorough
listing). In~\cite{DBLP:journals/jfp/ArbiserMR09} the constructor calculus is
proposed. It has a different notion of pattern matching: it uses a case
construct $\set{c_1 \mapsto s_1, \ldots, c_n \mapsto s_n} \cdot t$ in which
certain occurrences of the constructors $c_i$ in $t$ are replaced by their
corresponding terms. \cite{DBLP:journals/corr/abs-1009-3429} studies typing to
ensure that these constructor substitutions never block on a constant not in
their domain. Recursive types are not considered (nor is path polymorphism).
Two further closely related efforts merit comments: the first is the work by
Jay and Kesner and the second is that of the $\rho$-calculus by Kirchner and
colleagues.

In~\cite{DBLP:conf/esop/JayK06,DBLP:journals/jfp/JayK09} the Pure Pattern
Calculus (\ppc) is studied. It allows patterns to be computed dynamically (they
may contain free variables). A type system for a \ppc\ like calculus is given
in~\cite{Jay:2009} however neither recursive nor union types are considered.
\cite{Jay:2009} also studies a simple static pattern calculus. However, there
are numerous differing aspects w.r.t. this work among which we can mention the
following. First, the typed version of~\cite{Jay:2009} (the \emph{Query
Calculus}) omits recursive types and union types. Then, although it admits a
form of path polymorphism, this is at the cost of matching types at runtime and
thus changing the operational semantics of the untyped calculus; our system is
purely static, no runtime analysis is required.

The $\rho$-calculus~\cite{DBLP:journals/igpl/CirsteaK01} is a generic pattern
matching calculus parameterized over a matching theory. There has been
extensive work exploring numerous
extensions~\cite{DBLP:conf/rta/CirsteaKL01,DBLP:conf/fossacs/CirsteaKL01,DBLP:journals/entcs/CirsteaKL02,DBLP:conf/types/CirsteaLW03,DBLP:journals/entcs/LiquoriW05,DBLP:conf/popl/BartheCKL03}.
None addresses path polymorphism however. Indeed, none of the above allow
patterns of the form $\dataterm{\matchable{x}}{\matchable{y}}$. This limitation
seems to be due to the alternative approach to typing
$\dataterm{\constterm{c}}{\matchable{x}}$ adopted in the literature on the
$\rho$-calculus where $\constterm{c}$ is assigned a \emph{fixed} functional
type. This approach seems incompatible with path polymorphism, as we see it, in
that it suggests no obvious way of typing patterns of the form
$\dataterm{\matchable{x}}{\matchable{y}}$ where $\matchable{x}$ denotes an
arbitrary piece of unstructured \emph{data}. Additional differences with our
work are:
\begin{itemize}
  \item \cite{DBLP:journals/entcs/CirsteaKL02}: It does not introduce union
  types. No runtime matching error detection takes place (this is achieved via
  Progress in our paper).
  
  \item \cite{DBLP:conf/rta/CirsteaKL01}: It deals with an untyped
  $\rho$-calculus. Hence no SR.
  \item \cite{DBLP:conf/fossacs/CirsteaKL01,DBLP:conf/popl/BartheCKL03}:
  Neither union nor recursive types are considered.
\end{itemize}

\textbf{Structure of the paper.} Sec.~\ref{sec:terms} introduces the terms and
operational semantics of $\capp$. The typing system is developed in
Sec.~\ref{sec:typingSystem} together with a precise definition of
compatibility. Sec.~\ref{sec:safety} studies Safety: SR and Progress. Finally,
we conclude. The document you are reading is the report including full proofs.


\section{Syntax and Operational Semantics of \capp}
\label{sec:terms}

We assume given an infinite set of term variables $\Variable$ and constants $\Constant$. The syntax
of \capp\ consists of four syntactic categories, namely \emphdef{patterns} ($p, q, \ldots$), \emphdef{terms} ($s, t, \ldots$), \emphdef{data structures} ($d, e, \ldots$) and \emphdef{matchable forms} ($m,n,\ldots$): $$
\begin{array}{ll}
\begin{array}[t]{rlll}
p & \Coloneq & \matchable{x}   & \text{(matchable)} \\
  & |        & \constterm{c}   & \text{(constant)} \\
  & |        & \dataterm{p}{p} & \text{(compound)} \\
\end{array}
\quad &
\begin{array}[t]{rlll}
t & \Coloneq  & x                                                                       & \text{(variable)} \\
  & |         & \constterm{c}                                                           & \text{(constant)} \\
  & |         & \appterm{t}{t}                                                          & \text{(application)} \\
  & |         & \absterm{p}{t}{\theta} \icaseterm \dots \icaseterm \absterm{p}{t}{\theta} & \text{(abstraction)}
\end{array}
\\
\\
\begin{array}[t]{rlll}
d & \Coloneq & \constterm{c}   & \text{(constant)} \\
  & |        & \dataterm{d}{t} & \text{(compound)} 
\end{array}
\quad &
\begin{array}[t]{rlll}
m & \Coloneq & d                                                                       & \text{(data structure)} \\
  & |        & \absterm{p}{t}{\theta} \icaseterm \dots \icaseterm \absterm{p}{t}{\theta} & \text{(abstraction)}
\end{array}
\end{array} $$

The set of patterns, terms, data structures and matchable forms are denoted $\Pattern$, $\Term$,
$\DataStructure$ and $\MatchableForms$, resp. Variables occurring in patterns are called
\emphdef{matchables}. We often abbreviate
$\absterm{p_1}{s_1}{\theta_1} \icaseterm \ldots \icaseterm \absterm{p_n}{s_n}{\theta_n}$ with
$(\absterm{p_i}{s_i}{\theta_i})_{i \in 1..n}$. The $\theta_i$ are typing contexts annotating the
type assignments for the variables in $p_i$ (\cf Sec.~\ref{sec:typingSystem}). The \emphdef{free
variables} of a term $t$ (notation $\fv{t}$) are defined as expected; in a pattern $p$ we call
them \emphdef{free matchables} ($\fm{p}$). All free matchables in each $p_i$ are assumed to be
bound in their respective bodies $s_i$. Positions in patterns and terms are defined as expected
and denoted $\pi,\pi',\ldots$ ($\epsilon$ denotes the root position). We write \pos{s} for the
set of positions of $s$ and $\at{s}{\pi}$ for the subterm of $s$ occurring at position $\pi$.

A \emphdef{substitution} ($\sigma, \sigma_i, \ldots$) is a partial function from term variables
to terms.  If it assigns $u_i$ to $x_i$, $i\in 1..n$, then we write $\{u_1/x_1,\ldots,u_n/x_n\}$.
Its domain (\dom{\sigma}) is $\set{x_1,\ldots,x_n}$. Also, $\set{}$ is the identity substitution.
We write $\sigma s$ for the result of applying $\sigma$ to term $s$. \emphdef{Matchable forms}
are required for defining the \emphdef{matching operation}, described next. 

Given a pattern $p$ and a term $s$, the matching operation $\basicmatch{p}{s}$ determines whether
$s$ matches $p$. It may have one of three outcomes: success, fail (in which case it returns the
special symbol $\fail$) or undetermined (in which case it returns the special symbol $\wait$).
We say $\basicmatch{p}{s}$ is \emphdef{decided} if it is either successful or it fails. In the
former it yields a substitution $\sigma$; in this case we write $\basicmatch{p}{s} = \sigma$.
The disjoint union of matching outcomes is given as follows (``\eqdef'' is used for definitional equality): $$
\begin{array}{c}
\begin{array}{rll}
\fail\uplus o            & \eqdef & \fail \\
o \uplus \fail           & \eqdef & \fail \\
\sigma_1 \uplus \sigma_2 & \eqdef & \sigma
\end{array}
\hspace{.5cm}
\begin{array}{rll}
\wait \uplus \sigma & \eqdef & \wait \\
\sigma \uplus \wait & \eqdef & \wait \\
\wait \uplus \wait  & \eqdef & \wait
\end{array}
\end{array} $$
where $o$ denotes any possible output and $\sigma_1\uplus\sigma_2 \eqdef \sigma$ if the domains of
$\sigma_1$ and $\sigma_2$ are disjoint. This always holds given that patterns are assumed to be
linear (at most one occurrence of any matchable). The matching operation is defined as follows,
where the defining clauses below are evaluated from top to bottom\footnote{This is simplification
to the static patterns case of the matching operation introduced in~\cite{DBLP:journals/jfp/JayK09}.}: $$
\begin{array}{llll}
\basicmatch{\matchable{x}}{u}                 & \eqdef & \rename{x}{u} \\
\basicmatch{\constterm{c}}{\constterm{c}}     & \eqdef & \set{} \\
\basicmatch{\dataterm{p}{q}}{\dataterm{u}{v}} & \eqdef & \basicmatch{p}{u} \uplus \basicmatch{q}{v} &
  \quad \text{if $\dataterm{u}{v}$ is a \emph{matchable form}} \\
\basicmatch{p}{u}                             & \eqdef & \fail &
  \quad \text{if $u$ is a \emph{matchable form}} \\
\basicmatch{p}{u}                             & \eqdef & \wait
\end{array} $$
For example: $\basicmatch{\constterm{c}}{\absterm{\matchable{x}}{s}{}} = \fail$;
$\basicmatch{\constterm{c}}{\constterm{d}} = \fail$; $\basicmatch{\constterm{c}}{x} = \wait$
and $\basicmatch{\dataterm{\constterm{c}}{\constterm{c}}}{\dataterm{x}{\constterm{d}}} = \fail$.
We now turn to the only reduction axiom of $\capp$: $$
\begin{array}{c}
\Rule{\match{p_i}{u} = \fail \text{ for all } i < j \quad \match{p_j}{u} = \sigma_j \quad j \in 1..n}
     {\appterm{(\absterm{p_i}{s_i}{\theta_i})_{i \in 1..n}}{u} \reduce \sigma_j s_j}
     {\ruleBeta}
\end{array} $$
It may be applied under any context and states that if the argument $u$ to an abstraction
$(\absterm{p_i}{s_i}{\theta_i})_{i\in 1..n}$ fails to match all patterns $p_i$ with $i<j$ and 
successfully matches pattern $p_j$ (producing a substitution $\sigma_j$), then the term
$\appterm{(\absterm{p_i}{s_i}{\theta_i})_{i \in 1..n}}{u}$ reduces to $\sigma_j s_j$. 

The following example illustrates the use of the reduction rule and the matching operation:
\begin{equation}
\arraycolsep=1.4pt\def\arraystretch{1}
\begin{array}[t]{l}
\appterm{(\absterm{\constterm{true}}{1}{} \icaseterm \absterm{\constterm{false}}{0}{})}{(\appterm{(\absterm{\constterm{true}}{\constterm{false}}{} \icaseterm \absterm{\constterm{false}}{\constterm{true}}{})}{\constterm{true}})} \\
\begin{array}{rcll}
\qquad & \reduce & \appterm{(\absterm{\constterm{true}}{1}{} \icaseterm \absterm{\constterm{false}}{0}{})}{\match{\constterm{true}}{\constterm{true}}\ \constterm{false}} \\
       & =       & \appterm{(\absterm{\constterm{true}}{1}{} \icaseterm \absterm{\constterm{false}}{0}{})}{\constterm{false}} \\
       & \reduce & \match{\constterm{false}}{\constterm{false}}\ 0 & \quad\match{\constterm{true}}{\constterm{false}} = \fail \\
       & =       & 0
\end{array}
\end{array}
\label{eq:terms:ejemploReduccion}
\end{equation}

\begin{proposition}
Reduction in \capp\ is confluent (CR).
\end{proposition}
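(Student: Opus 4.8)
The plan is to follow the Tait--Martin-L\"of technique of parallel reduction. First I would define a parallel reduction relation $\Rightarrow$ on terms by the expected congruence clauses (reflexivity on variables and constants, componentwise reduction of applications, and reduction of the bodies $s_i$ of an abstraction, while the patterns $p_i$ and contexts $\theta_i$ are left untouched since patterns are static and carry nothing reducible), together with a single $\beta$-clause: whenever $M = (\absterm{p_i}{s_i}{\theta_i})_{i \in 1..n}$ has bodies $s_i \Rightarrow s_i'$, the argument satisfies $u \Rightarrow u'$, and moreover $\match{p_i}{u'} = \fail$ for all $i<j$ and $\match{p_j}{u'} = \sigma_j$, then $\appterm{M}{u} \Rightarrow \sigma_j s_j'$. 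The crucial design choice is that the matching conditions are evaluated on the reduced argument $u'$ rather than on $u$. One then checks the sandwich $\reduce\ \subseteq\ \Rightarrow\ \subseteq\ \reduce^{*}$, so that the two relations share the same reflexive--transitive closure and confluence of $\reduce$ follows from confluence of $\Rightarrow$.

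The heart of the argument is a \emph{stability} (monotonicity) lemma for the matching operation, which I would prove by induction on $p$ (equivalently, on the structure of the defining clauses). It states that matchable forms are closed under $\Rightarrow$ (a data structure reduces to a data structure and an abstraction to an abstraction), and that matching commutes with reduction as follows: if $u \Rightarrow u'$ then (i) $\match{p}{u} = \fail$ implies $\match{p}{u'} = \fail$; and (ii) $\match{p}{u} = \sigma$ implies $\match{p}{u'} = \sigma'$ for some $\sigma'$ with $\sigma \Rightarrow \sigma'$ pointwise. In words, decided matches are preserved by reduction, only the captured subterms being reduced, whereas an undetermined ($\wait$) match may become decided. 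Alongside this I would establish the usual substitution lemma for parallel reduction: if $s \Rightarrow s'$ and $\sigma \Rightarrow \sigma'$ pointwise, then $\sigma s \Rightarrow \sigma' s'$.

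With these lemmas in hand, confluence reduces to the diamond property for $\Rightarrow$, proved by induction on the derivation of $s \Rightarrow t_1$ with a case analysis mirrored on $s \Rightarrow t_2$. The only delicate situation is $s = \appterm{M}{u}$ with one reduction firing the root redex via the $\beta$-clause and the other being a mere congruence step (or a root step selecting a different index). Here I would apply the induction hypothesis to obtain a common reduct $w$ of the two reducts of $u$, and then invoke the stability lemma: since the firing side decided the match on its copy of the argument, the preserved $\fail$ and success outcomes transport along $\Rightarrow$ to $w$, so the congruence side can in turn fire \emph{the same} branch $p_j$, the two results being joined by the substitution lemma. I expect this interaction between reduction inside the argument and the outcome of matching to be the main obstacle: one must verify that the selected index cannot differ on the two sides --- which is exactly where stability forbids an argument from reducing both to a $\fail$ and to a success against the same pattern --- and that the $\wait$ case causes no trouble, since an undetermined match blocks the $\beta$-clause on both sides. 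The standard closure argument (the diamond property of $\Rightarrow$ yields confluence of $\Rightarrow$, hence of $\reduce^{*} = \Rightarrow^{*}$) then concludes.
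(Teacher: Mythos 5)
Your proposal is correct and follows essentially the same route as the paper, which obtains confluence by adapting the parallel-reduction (Tait--Martin-L\"of) proof of Jay and Kesner for \ppc\ to the static-pattern setting of \capp. The ``stability'' lemma you isolate --- that matchable forms are closed under reduction and that decided outcomes of $\match{p}{u}$ (both $\fail$ and success, the latter up to pointwise reduction of the captured substitution) are preserved along reduction of $u$ --- is precisely the Rigid Matching Condition that the paper identifies as the key step.
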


This result follows from a straightforward adaptation of the CR proof presented
in~\cite{DBLP:journals/jfp/JayK09} to our calculus. The key step is proving that the matching
operation satifies the \emph{Rigid Matching Condition (RMC)} proposed in the cited work.
Note that $\capp$ is just the static patterns fragment of $\ppc$ where instead of the usual
abstraction we have alternatives (\ie we abstract multiple branches with the same constructor).
Our contribution is on the typed variant of the calculus.


\section{Typing System}
\label{sec:typingSystem}

This section presents $\mu$-types, the finite type expressions that shall be used for typing
terms in \capp, their associated notions of equivalence and subtyping and then the
typing schemes. Also, further examples and definitions associated to compatibility are included. 

\subsection{Types}
\label{sec:typingMu}

In order to ensure that patterns such as $\dataterm{\matchable{x}}{\matchable{y}}$ decompose only data structures rather than arbitrary terms, we shall introduce two sorts of typing expressions: \emph{types} and \emph{datatypes}, the latter being strictly included in the former. 
%
We assume given countably infinite sets $\DataTypeVariable$ of \emphdef{datatype variables} ($\alpha, \beta, \ldots$), $\TypeVariable_{A}$ of \emphdef{type variables} $(X, Y, \ldots$) and $\TypeConstant$ of \emphdef{type constants} $(\consttype{c}, \consttype{d}, \ldots)$. We define
$\TypeVariable \eqdef  \TypeVariable_{A} \cup \DataTypeVariable$ and use metavariables $V, W, \ldots$ to denote an arbitrary element in it. Likewise, we write $a, b, \ldots$ for elements in $\TypeVariable \cup \TypeConstant$. The sets $\DataType$ of \emphdef{$\mu$-datatypes} and $\Type$ of \emphdef{$\mu$-types}, resp., are inductively defined as follows: $$
\begin{array}{cc}
\begin{array}{rlll}
D & \Coloneq & \alpha              & \quad \text{(datatype variable)} \\
  & |        & \consttype{c}       & \quad \text{(atom)} \\
  & |        & \datatype{D}{A}     & \quad \text{(compound)} \\
  & |        & \uniontype{D}{D}    & \quad \text{(union)} \\
  & |        & \rectype{\alpha}{D} & \quad \text{(recursion)} 
\end{array}
& \qquad
\begin{array}{rlll}
A & \Coloneq & X                & \quad \text{(type variable)} \\
  & |        & D                & \quad \text{(datatype)} \\
  & |        & \functype{A}{A}  & \quad \text{(type abstraction)} \\
  & |        & \uniontype{A}{A} & \quad \text{(union)} \\
  & |        & \rectype{X}{A}   & \quad \text{(recursion)}
\end{array}
\end{array} $$

\begin{remark}
A type of the form $\rectype{\alpha}{A}$ is not valid in general since it may produce invalid unfoldings. For example, $\rectype{\alpha}{\functype{\alpha}{\alpha}} = \functype{(\rectype{\alpha}{\functype{\alpha}{\alpha}})}{(\rectype{\alpha}{\functype{\alpha}{\alpha}})}$. On the other hand, types of the form $\rectype{X}{D}$ are not necessary since they denote the solution to the equation $X = D$, hence $X$ is a variable representing a datatype.
\end{remark}

We consider $\iuniontype$ to bind tighter than $\ifunctype$, while $\idatatype$ binds
tighter than $\iuniontype$. Therefore $\functype{\uniontype{\datatype{D}{A}}{A'}}{B}$ means
$\functype{(\uniontype{(\datatype{D}{A})}{A'})}{B}$. Additionally, when refering to a finite series of
consecutive unions such as $\uniontype{\uniontype{A_1}{\ldots}}{A_n}$ we will use the simplified
notation $\maxuniontype{i \in 1..n}{A_i}$. This notation is not strict on how subexpressions $A_i$
are associated hence, in principle, it refers to any of all possible associations. In the next
section we present an equivalence relation on $\mu$-types that will identify all these associations. 
We often write $\rectype{V}{A}$ to mean either $\rectype{\alpha}{D}$ or $\rectype{X}{A}$. A \emphdef{non-union $\mu$-type} $A$ is a $\mu$-type of one of the following forms: $\alpha$, $\consttype{c}$, $\datatype{D}{A}$, $X$, $\functype{A}{B}$ or $\rectype{V}{A}$ with $A$ a non-union $\mu$-type. We assume $\mu$-types are \emphdef{contractive}: $\rectype{V}{A}$ is contractive if $V$ occurs in $A$ only under a type constructor $\ifunctype$ or $\idatatype$, if at all. We henceforth redefine $\Type$ to be the set of \emphdef{contractive $\mu$-types}. $\mu$-types come equipped with a notion of equivalence $\eqtypemu$ and subtyping $\subtypemu$.

\begin{figure}[t] $$
\begin{array}{c}
\Rule{}
     {\sequTE{}{A \eqtypemu A}}
     {\ruleEqmuRefl}
\quad
\Rule{\sequTE{}{A \eqtypemu B}
      \quad
      \sequTE{}{B \eqtypemu C}}
     {\sequTE{}{A \eqtypemu C}}{\ruleEqmuTrans}
\quad
\Rule{\sequTE{}{A \eqtypemu B}}
     {\sequTE{}{B \eqtypemu A}}
     {\ruleEqmuSymm}
\\
\\
\Rule{\sequTE{}{A \eqtypemu A'}
      \quad
      \sequTE{}{B \eqtypemu B'}}
     {\sequTE{}{\functype{A}{B} \eqtypemu \functype{A'}{B'}}}
     {\ruleEqmuFunc}
\qquad
\Rule{\sequTE{}{D \eqtypemu D'}
      \quad
      \sequTE{}{A \eqtypemu A'}}
     {\sequTE{}{\datatype{D}{A} \eqtypemu \datatype{D'}{A'}}}
     {\ruleEqmuComp}
\\
\\
\Rule{}
     {\sequTE{}{\uniontype{A}{A} \eqtypemu A}}
     {\ruleEqmuUnionIdem}
\qquad
\Rule{}
     {\sequTE{}{\uniontype{A}{B} \eqtypemu \uniontype{B}{A}}}
     {\ruleEqmuUnionComm}
\\
\\
\Rule{}
     {\sequTE{}{\uniontype{A}{(\uniontype{B}{C})} \eqtypemu \uniontype{(\uniontype{A}{B})}{C}}}
     {\ruleEqmuUnionAssoc}
\\
\\
\Rule{\sequTE{}{A \eqtypemu A'}
      \quad
      \sequTE{}{B \eqtypemu B'}}
     {\sequTE{}{\uniontype{A}{B} \eqtypemu \uniontype{A'}{B'}}}
     {\ruleEqmuUnion}
\qquad
\Rule{\sequTE{}{A \eqtypemu B}}
     {\sequTE{}{\rectype{V}{A} \eqtypemu \rectype{V}{B}}}
     {\ruleEqmuRec}
\\
\\
\Rule{}
     {\sequTE{}{\rectype{V}{A} \eqtypemu \substitute{V}{\rectype{V}{A}}{A}}}
     {\ruleEqmuFold}
\quad
\Rule{\sequTE{}{A \eqtypemu \substitute{V}{A}{B}}
      \quad
      \rectype{V}{B} \text{ contractive}}
     {\sequTE{}{A \eqtypemu \rectype{V}{B}}}
     {\ruleEqmuContr}
\end{array} $$
\caption{Type equivalence for $\mu$-types}
\label{fig:equivalenceSchemesMu}
\end{figure}

\begin{figure}[t] $$
\begin{array}{c}
\Rule{}
     {\sequTE{\Sigma}{A \subtypemu A}}
     {\ruleSubmuRefl}
\qquad
\Rule{}
     {\sequTE{\Sigma, V \subtypemu W}{V\subtypemu W}}
     {\ruleSubmuHyp}
\qquad
\Rule{\sequTE{}{A \eqtypemu B}}
     {\sequTE{\Sigma}{A \subtypemu B}}
     {\ruleSubmuEq}
\\
\\
\Rule{\sequTE{\Sigma}{A \subtypemu B} 
      \quad
      \sequTE{\Sigma}{B \subtypemu C}}
     {\sequTE{\Sigma}{A \subtypemu C}}
     {\ruleSubmuTrans} 
\qquad
\Rule{\sequTE{\Sigma}{D \subtypemu D'} 
      \quad 
      \sequTE{\Sigma}{A \subtypemu A'}}
     {\sequTE{\Sigma}{\datatype{D}{A} \subtypemu \datatype{D'}{A'}}}
     {\ruleSubmuComp}
\\
\\
\Rule{\sequTE{\Sigma}{A \subtypemu A'}
      \quad 
      \sequTE{\Sigma}{B \subtypemu B'}}
     {\sequTE{\Sigma}{\functype{A'}{B} \subtypemu \functype{A}{B'}}}
     {\ruleSubmuFunc} 
\qquad
\Rule{\sequTE{\Sigma}{A \subtypemu C}
      \quad 
      \sequTE{\Sigma}{B \subtypemu C}}
     {\sequTE{\Sigma}{\uniontype{A}{B} \subtypemu C}}
     {\ruleSubmuUnionL}
\\
\\
\Rule{\sequTE{\Sigma}{A \subtypemu B}}
     {\sequTE{\Sigma}{A \subtypemu \uniontype{B}{C}}}
     {\ruleSubmuUnionRL}
\qquad
\Rule{\sequTE{\Sigma}{A \subtypemu C}}
     {\sequTE{\Sigma}{A \subtypemu \uniontype{B}{C}}}
     {\ruleSubmuUnionRR} 
\\
\\
\Rule{\sequTE{\Sigma, V \subtypemu W}{A \subtypemu B}
      \quad
      W \notin \fv{A} \quad V \notin \fv{B}
     }
     {\sequTE{\Sigma}{\rectype{V}{A} \subtypemu \rectype{W}{B}}}
     {\ruleSubmuRec}
\end{array} $$
\caption{Strong subtyping for $\mu$-types}\label{fig:subtypingSchemesMu}
\end{figure}

\begin{definition}
\begin{enumerate}
\item $\eqtypemu$ is defined by the schemes in Fig.~\ref{fig:equivalenceSchemesMu}.
\item $\subtypemu$ is defined by the schemes in Fig.~\ref{fig:subtypingSchemesMu} where a subtyping context $\Sigma$ is a set of assumptions over type variables of the form $V \subtypemu W$ with $V, W \in \TypeVariable$. 
\end{enumerate}
\end{definition}

$\ruleEqmuRec$ actually encodes two rules, one for datatypes ($\rectype{\alpha}{D}$) and one for arbitrary types ($\rectype{X}{A}$). Likewise for $\ruleEqmuFold$ and $\ruleEqmuContr$. The relation resulting from dropping $\ruleEqmuContr$~\cite{DBLP:journals/fuin/AriolaK96,DBLP:journals/fuin/BrandtH98} is called weak type equivalence~\cite{DBLP:conf/caap/Cardone92} and is known to be too weak to capture equivalence of its coinductive formulation (required for our proof of invertibility of subtyping \cf Prop.~\ref{prop:subtypingIsInvertible}); for example, types $\rectype{X}{\functype{A}{\functype{A}{X}}}$ and $\rectype{X}{\functype{A}{X}}$ cannot be equated.

Regarding the subtyping rules, we adopt those for union of~\cite{DBLP:conf/csl/Vouillon04}. It should be noted that the na\"ive variant of $\ruleSubmuRec$ in which \sequTE{\Sigma}{\rectype{V}{A} \subtypemu \rectype{V}{B}} is deduced from \sequTE{\Sigma}{A\subtypemu B}, is known to be unsound~\cite{DBLP:journals/toplas/AmadioC93}. We often abbreviate \sequTE{}{A\subtypemu B} as $A\subtypemu B$.

We can now use notation $\maxuniontype{i \in I}{A_i}$ on contractive $\mu$-types to denote several consecutive applications of the binary operator $\iuniontype$ irrespective of how they are associated. All such associations yield equivalent $\mu$-types. Such expressions will be useful to prove the correspondence between the types as trees formulation and the contractive $\mu$-types of the current section. To that end we introduce the following lemmas that extend the associative, commutative and idempotent properties to arbitrary unions.

To simplify the presentation of the proofs, we often resort to the following reasoning (or its symmetric variant) $$
\Rule{
  \Rule{\vdots}{A \eqtypemu B}{X}
  \quad
  \Rule{}{C \eqtypemu C}{\ruleEqmuRefl}
}{\uniontype{A}{C} \eqtypemu \uniontype{B}{C}}{\ruleEqmuUnion}
$$ by only stating $\ruleName{X}$ (\ie a rule, lemma, inductive hypothesis, etc.). Thus, we say that $\uniontype{A}{C} \eqtypemu \uniontype{B}{C}$ by $\ruleName{X}$ or, in other words, apply $\ruleName{X}$ within a \emph{union context}.

\begin{lemma}
\label{lem:unionIsAssociative}
Let $A$ and $A'$ be two distinct associations of $\maxuniontype{i \in 1..n}{A_i}$. Then, $A \eqtypemu A'$.
\end{lemma}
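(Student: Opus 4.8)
The plan is to prove the statement by a \emph{normal-form} argument: I fix the fully right-associated parenthesization as a canonical representative, show that every association of the sequence is $\eqtypemu$-equivalent to it, and then conclude by \ruleEqmuSymm\ and \ruleEqmuTrans. For a contiguous block $A_i, \ldots, A_j$ of the given sequence let $R_{[i,j]}$ denote its right-associated union, defined by $R_{[i,i]} \eqdef A_i$ and $R_{[i,j]} \eqdef \uniontype{A_i}{R_{[i+1,j]}}$ for $i < j$. The whole argument is the textbook generalized-associativity proof transported into the inductive presentation of $\eqtypemu$, so I phrase every auxiliary statement for arbitrary sub-blocks $R_{[i,j]}$ to keep the inductions self-contained.

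First I would establish a \emph{merge lemma}: for all $i \le k < j$, one has $\uniontype{R_{[i,k]}}{R_{[k+1,j]}} \eqtypemu R_{[i,j]}$. I prove this by induction on the size $k - i$ of the left block. In the base case $k = i$ the left-hand side is literally $\uniontype{A_i}{R_{[i+1,j]}} = R_{[i,j]}$, so \ruleEqmuRefl\ suffices. When $k > i$, I unfold $R_{[i,k]} = \uniontype{A_i}{R_{[i+1,k]}}$ and rewrite the left-hand side as $\uniontype{(\uniontype{A_i}{R_{[i+1,k]}})}{R_{[k+1,j]}}$; applying \ruleEqmuUnionAssoc\ in the direction supplied by \ruleEqmuSymm\ reassociates this to $\uniontype{A_i}{(\uniontype{R_{[i+1,k]}}{R_{[k+1,j]}})}$. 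The induction hypothesis (for the block $A_{i+1}, \ldots, A_j$ split at $k$, whose left part $A_{i+1}, \ldots, A_k$ is strictly shorter) gives $\uniontype{R_{[i+1,k]}}{R_{[k+1,j]}} \eqtypemu R_{[i+1,j]}$, and pushing this equivalence under the outer $\uniontype{A_i}{(\cdot)}$ by \ruleEqmuUnion\ in a union context (with \ruleEqmuRefl\ on the left argument) yields $\uniontype{A_i}{R_{[i+1,j]}} = R_{[i,j]}$; \ruleEqmuTrans\ closes the chain.

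With the merge lemma available, I would show by induction on the block size $j - i$ that \emph{every} association $A$ of $\maxuniontype{\ell \in i..j}{A_\ell}$ satisfies $A \eqtypemu R_{[i,j]}$. If $i = j$ there is nothing to associate and $A = A_i = R_{[i,i]}$. Otherwise any association has the shape $A = \uniontype{B}{C}$, where $B$ is an association of $A_i, \ldots, A_k$ and $C$ an association of $A_{k+1}, \ldots, A_j$ for some $i \le k < j$; both sub-blocks are strictly smaller, so the induction hypothesis gives $B \eqtypemu R_{[i,k]}$ and $C \eqtypemu R_{[k+1,j]}$, whence $A \eqtypemu \uniontype{R_{[i,k]}}{R_{[k+1,j]}}$ by \ruleEqmuUnion, and the merge lemma together with \ruleEqmuTrans\ gives $A \eqtypemu R_{[i,j]}$. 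Finally, for the two associations $A$ and $A'$ of $\maxuniontype{i \in 1..n}{A_i}$ in the statement, both are $\eqtypemu$-equivalent to $R_{[1,n]}$, so $A \eqtypemu A'$ follows by \ruleEqmuSymm\ and \ruleEqmuTrans.

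I do not anticipate a genuine obstacle here; the work is entirely bookkeeping. The only two points that deserve care are that the reassociations must be carried out \emph{inside} subexpressions, which is exactly what \ruleEqmuUnion\ applied in a union context licenses, and that the induction in the merge lemma must run on the split point (the size of the left block) rather than on the total length, so that the inductive hypothesis is genuinely available for the shorter block after reassociation.
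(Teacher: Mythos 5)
Your proof is correct and is precisely the standard elaboration of what the paper states in one line, namely that the lemma is a ``direct consequence of \ruleEqmuUnionAssoc'': you fix the right-associated form as a canonical representative and reduce everything to repeated instances of that rule (applied inside union contexts via \ruleEqmuUnion, with \ruleEqmuSymm\ and \ruleEqmuTrans\ to chain the steps). The inductions are set up soundly --- in particular running the merge lemma's induction on the size of the left block is exactly what makes the inductive hypothesis available after reassociation --- so there is nothing to object to.
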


\begin{proof}
Direct consequence of $\ruleEqmuUnionAssoc$.
%
  
\end{proof}

\begin{lemma}
\label{lem:unionIsCommutative}
Let $p$ be a permutation over $1..n$. Then, $\maxuniontype{i \in 1..n}{A_i} \eqtypemu \maxuniontype{i \in 1..n}{A_{p(i)}}$.
\end{lemma}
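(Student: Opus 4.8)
The plan is to reduce the statement to the case of swapping two adjacent summands, and to handle that case with \ruleEqmuUnionComm\ applied within a union context. First I would record the two facts that make the argument go through smoothly: that $\eqtypemu$ is an equivalence relation (rules \ruleEqmuRefl, \ruleEqmuSymm, \ruleEqmuTrans) and that it is a congruence with respect to $\iuniontype$, so that from $B \eqtypemu B'$ one may infer $\uniontype{B}{C} \eqtypemu \uniontype{B'}{C}$ and $\uniontype{C}{B} \eqtypemu \uniontype{C}{B'}$ by combining \ruleEqmuUnion\ with \ruleEqmuRefl\ (this is exactly the union-context reasoning described just before the lemma). Together with Lemma~\ref{lem:unionIsAssociative}, which lets me reassociate $\maxuniontype{i \in 1..n}{A_i}$ freely, this means I may treat the union as a flat list of summands and rewrite any contiguous block.

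Next I would prove the single-swap sublemma: for any $1 \le k < n$, exchanging the summands $A_k$ and $A_{k+1}$ yields an equivalent $\mu$-type. Using Lemma~\ref{lem:unionIsAssociative} I reassociate the flat union so that $\uniontype{A_k}{A_{k+1}}$ occurs as a genuine subexpression; \ruleEqmuUnionComm\ then gives $\uniontype{A_k}{A_{k+1}} \eqtypemu \uniontype{A_{k+1}}{A_k}$, and lifting this equivalence through the surrounding union context (as above) equates the two flat unions that differ only in this adjacent pair. A final appeal to Lemma~\ref{lem:unionIsAssociative} together with \ruleEqmuTrans\ puts the result back into whatever associated form is required.

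Finally, since every permutation $p$ of $1..n$ factors as a composition of adjacent transpositions $\tau_1 \circ \cdots \circ \tau_m$, I would apply the single-swap sublemma once per transposition and glue the resulting equivalences together with \ruleEqmuTrans, obtaining $\maxuniontype{i \in 1..n}{A_i} \eqtypemu \maxuniontype{i \in 1..n}{A_{p(i)}}$. (Equivalently one can induct on $n$: first bubble $A_n$ to the last position of the right-hand side by a sequence of adjacent swaps, then invoke the induction hypothesis on the remaining $n-1$ summands within a union context, combining via \ruleEqmuUnion.) The only real obstacle here is bookkeeping rather than mathematical depth: the $\maxuniontype{}{}$ notation suppresses associativity, so the care lies in repeatedly invoking Lemma~\ref{lem:unionIsAssociative} to expose exactly the two-element subterm to which \ruleEqmuUnionComm\ applies, and in confirming that the congruence step is always available. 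Since no step inspects the internal structure of the $A_i$, contractivity plays no role in this argument.
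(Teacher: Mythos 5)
Your proof is correct. It differs from the paper's in how the permutation is decomposed: you factor $p$ into adjacent transpositions and discharge each one by reassociating (Lem.~\ref{lem:unionIsAssociative}) to expose the pair $\uniontype{A_k}{A_{k+1}}$, applying $\ruleEqmuUnionComm$, and lifting through the surrounding union context via $\ruleEqmuUnion$ with $\ruleEqmuRefl$; the paper instead inducts on $n$, writing $p$ as a permutation $p'$ of $1..n-1$ together with the insertion of $A_n$ at position $k$, and uses a \emph{single} application of $\ruleEqmuUnionComm$ per induction step to swap $A_n$ with the whole block $\maxuniontype{i \in k..n-1}{A_{p'(i)}}$ (exploiting that $\ruleEqmuUnionComm$ commutes arbitrary types, not just single summands). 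The ingredients are identical in both arguments --- associativity, commutativity within a union context, and transitivity --- so the difference is one of bookkeeping: the paper's block swap needs only $O(n)$ commutation steps overall, while the transposition factorization may need $O(n^2)$, but neither cost matters here. Your parenthetical variant (bubble $A_n$ to the end, then apply the induction hypothesis inside a union context) is essentially the paper's proof with the block swap expanded into adjacent swaps. Your closing observations --- that the congruence step is always available and that contractivity is irrelevant --- are both accurate.
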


\begin{proof}
By induction on $n$. 
\begin{itemize}
  \item $n = 1$. This case is immediate since $p = \mathit{id}$.
  
  \item $n > 1$. Without loss of generality we can consider $p$ to be the function $$p(i) = \left\{ 
\begin{array}{ll}
p'(i)   & \quad \text{if $i < k$} \\
n       & \quad \text{if $i = k$} \\
p'(i-1) & \quad \text{if $i > k$}
\end{array} \right.$$ where $p'$ is a permutation over $1..n-1$ and $k \in 1..n$. That is, $p$ permutes $k$ with $n$ and behaves like $p'$ on every other position. Then, $$
\begin{array}{rcll}
\maxuniontype{i \in 1..n}{A_i} & \eqtypemu & \uniontype{(\maxuniontype{i \in 1..n-1}{A_i})}{A_n} & \quad \text{Lem.~\ref{lem:unionIsAssociative}} \\
                               & \eqtypemu & \uniontype{(\maxuniontype{i \in 1..n-1}{A_{p'(i)}})}{A_n} & \quad \text{by IH}
\end{array} $$
  If $k = n$ we are done, since $\uniontype{(\maxuniontype{i \in 1..n-1}{A_{p'(i)}})}{A_n} \eqtypemu \maxuniontype{i \in 1..n}{A_{p(i)}}$ by Lem.~\ref{lem:unionIsAssociative}. If not (\ie $k \in 1..n-1$) we just need to apply $\ruleEqmuUnionComm$ to the proper subexpression $$
\begin{array}{rcll}
\maxuniontype{i \in 1..n}{A_i} & \eqtypemu & \uniontype{(\maxuniontype{i \in 1..n-1}{A_{p'(i)}})}{A_n} \\
                               & \eqtypemu & \uniontype{(\maxuniontype{i \in 1..k-1}{A_{p'(i)}})}{(\uniontype{(\maxuniontype{i \in k..n-1}{A_{p'(i)}})}{A_n})} & \quad \text{Lem.~\ref{lem:unionIsAssociative}} \\
                               & \eqtypemu & \uniontype{(\maxuniontype{i \in 1..k-1}{A_{p'(i)}})}{(\uniontype{A_n}{(\maxuniontype{i \in k..n-1}{A_{p'(i)}})})} & \quad \text{\ruleEqmuUnionComm}\\
                               & \eqtypemu & \maxuniontype{i \in 1..n}{A_{p(i)}} & \quad \text{Lem.~\ref{lem:unionIsAssociative}} \\
\end{array} $$
\end{itemize}
\end{proof}

\begin{lemma}
\label{lem:unionIsIdempotent}
Let $J_m = \pair{J}{m}$ be a finite multiset\footnote{Recall that a \emph{multiset} is a pair $\M = \pair{\X}{m}$ where $\X$ is de \emph{underlying set} of $\M$ and $m : \X \to \Natural$ is its \emph{multiplicity function}. We will usually denote $\M$ with $\X$ when there is no ambiguity or the meaning is clear from the context.} such that $J \subseteq 1..n$, then $\maxuniontype{i \in 1..n}{A_i} \eqtypemu \uniontype{(\maxuniontype{i \in 1..n}{A_i})}{(\maxuniontype{j \in J_m}{A_j})}$.
\end{lemma}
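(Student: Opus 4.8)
The plan is to proceed by induction on the cardinality $\size{J_m} = \sum_{j \in J} m(j)$ of the multiset $J_m$, writing $U$ for $\maxuniontype{i \in 1..n}{A_i}$ throughout. The intuition is simply that every copy $A_j$ appearing in $\maxuniontype{j \in J_m}{A_j}$ already occurs as a summand of $U$, so absorbing these copies one at a time via idempotence leaves $U$ unchanged; the real work is purely associative/commutative bookkeeping, which Lemmas~\ref{lem:unionIsAssociative} and~\ref{lem:unionIsCommutative} allow me to perform freely on the loosely-associated expressions $\maxuniontype{}{}$.

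First I would dispatch the base case $\size{J_m} = 1$, where $J_m$ consists of a single index $j_0 \in 1..n$, so the goal reads $U \eqtypemu \uniontype{U}{A_{j_0}}$. If $n = 1$ this is exactly \ruleEqmuUnionIdem (read via \ruleEqmuSymm). Otherwise, Lemmas~\ref{lem:unionIsCommutative} and~\ref{lem:unionIsAssociative} let me reassociate $U$ as $\uniontype{U'}{A_{j_0}}$, where $U'$ is the union of the $A_i$ over the indices $i \in 1..n$ other than $j_0$. Then
$$\uniontype{U}{A_{j_0}} \eqtypemu \uniontype{(\uniontype{U'}{A_{j_0}})}{A_{j_0}} \eqtypemu \uniontype{U'}{(\uniontype{A_{j_0}}{A_{j_0}})} \eqtypemu \uniontype{U'}{A_{j_0}} \eqtypemu U,$$
where the steps use congruence (\ruleEqmuUnion) together with the reassociation, then \ruleEqmuUnionAssoc, then \ruleEqmuUnionIdem applied within a union context, and finally the reassociation again.

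For the inductive step, suppose $\size{J_m} = k + 1$ and pick an index $j_0 \in J$ with $m(j_0) \geq 1$; let $J_m'$ be the multiset obtained by removing one occurrence of $j_0$, so $\size{J_m'} = k \geq 1$ and, by Lemmas~\ref{lem:unionIsAssociative} and~\ref{lem:unionIsCommutative}, $\maxuniontype{j \in J_m}{A_j} \eqtypemu \uniontype{(\maxuniontype{j \in J_m'}{A_j})}{A_{j_0}}$. Reassociating with \ruleEqmuUnionAssoc under congruence gives
$$\uniontype{U}{(\maxuniontype{j \in J_m}{A_j})} \eqtypemu \uniontype{(\uniontype{U}{(\maxuniontype{j \in J_m'}{A_j})})}{A_{j_0}}.$$
By the induction hypothesis $\uniontype{U}{(\maxuniontype{j \in J_m'}{A_j})} \eqtypemu U$, so applying \ruleEqmuUnion within a union context rewrites the right-hand side to $\uniontype{U}{A_{j_0}}$, and the base case established above yields $\uniontype{U}{A_{j_0}} \eqtypemu U$. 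Chaining these equivalences with \ruleEqmuTrans closes the step.

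The only genuinely delicate points I anticipate are the degenerate cases, namely the implicit assumption that $J_m$ is non-empty (so that $\maxuniontype{j \in J_m}{A_j}$ denotes a well-formed union) and the singleton case $n = 1$ in the base case, together with making each rearrangement of a $\maxuniontype{}{}$ expression precise via Lemmas~\ref{lem:unionIsAssociative} and~\ref{lem:unionIsCommutative} before a pointwise rule such as \ruleEqmuUnionIdem can be applied. Everything else is routine congruence reasoning.
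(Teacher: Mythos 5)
Your proof is correct and follows essentially the same route as the paper's: induction on the cardinality of the multiset, peeling off one occurrence at a time and absorbing it via \ruleEqmuUnionIdem{} after associative/commutative rearrangement through Lemmas~\ref{lem:unionIsAssociative} and~\ref{lem:unionIsCommutative}. The only (cosmetic) difference is that the paper starts the induction at the empty multiset, where both sides are the same union up to association, whereas you start at a singleton and flag the empty case as degenerate; either convention is fine here.
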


\begin{proof}
This proof is by induction on $\#(J_m)$ (the cardinality of the multiset $J_m$).
\begin{itemize}
  \item $\#(J_m) = 0$. This case is immediate by Lem.~\ref{lem:unionIsAssociative} (note that both sides of the equivalence may be associated differently, thus $\ruleEqmuRefl$ is not enough).
  
  \item $\#(J_m) > 0$. Let $k \in J_m$. Then $$\kern-2em
\begin{array}{rcll}
\uniontype{(\maxuniontype{i \in 1..n}{A_i})}{(\maxuniontype{j \in J_m}{A_j})}
  & \eqtypemu & \uniontype{(\uniontype{(\maxuniontype{i \in 1..n}{A_i})}{(\maxuniontype{j \in (J_m \setminus \set{k})}{A_j})})}{A_k} & \enskip \text{Lem.~\ref{lem:unionIsCommutative}} \\
  & \eqtypemu & \uniontype{(\maxuniontype{i \in 1..n}{A_i})}{A_k}                                & \enskip \text{by IH} \\
  & \eqtypemu & \uniontype{(\maxuniontype{i \in 1..n \\ i \neq k}{A_i})}{(\uniontype{A_k}{A_k})} & \enskip \text{Lem.~\ref{lem:unionIsCommutative}} \\
  & \eqtypemu & \uniontype{(\maxuniontype{i \in 1..n \\ i \neq k}{A_i})}{A_k}                    & \enskip \text{\ruleEqmuUnionIdem} \\
  & \eqtypemu & \maxuniontype{i \in 1..n}{A_i}                                                   & \enskip \text{Lem.~\ref{lem:unionIsCommutative}}
\end{array} $$
\end{itemize}
\end{proof}

The following lemma presents an admissible rule regarding union types that shall be used later to relate $\eqtypemu$ with its
coinductive characterisation. Note that in this case there is no need for types $A_i, B_j$ to be
non-union types below.

\begin{lemma}
\label{lem:unionEquivalence}
Let $A = \maxuniontype{i \in 1..n}{A_i}, B = \maxuniontype{j \in 1..m}{B_j}$ and $ f : 1..n \to 1..m$, $g : 1..m \to 1..n$ functions such that $A_i \eqtypemu B_{f(i)}$ and $A_{g(j)} \eqtypemu B_j$ for every $i \in 1..n, j \in 1..m$. Then, $\maxuniontype{i \in 1..n}{A_i} \eqtypemu \maxuniontype{j \in 1..m}{B_j}$.
\end{lemma}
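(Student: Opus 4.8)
The plan is to sandwich both $A$ and $B$ between themselves and the combined union $\uniontype{A}{B}$: I will establish $A \eqtypemu \uniontype{A}{B}$ and $B \eqtypemu \uniontype{A}{B}$ separately, and then conclude $A \eqtypemu B$ by \ruleEqmuSymm\ and \ruleEqmuTrans. The hypotheses are perfectly suited for this, since $g$ witnesses that every $B_j$ is already present (up to $\eqtypemu$) among the $A_i$, while dually $f$ witnesses that every $A_i$ is already present among the $B_j$; idempotency (Lem.~\ref{lem:unionIsIdempotent}) is exactly what turns ``already present'' into ``absorbed by the union''.

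For the first half, $A \eqtypemu \uniontype{A}{B}$, let $J_m$ be the multiset whose underlying set is $\set{g(j) : j \in 1..m} \subseteq 1..n$ and in which each index $k$ occurs with multiplicity $\#\,g^{-1}(k)$. By reassociation and permutation (Lems.~\ref{lem:unionIsAssociative} and~\ref{lem:unionIsCommutative}) we have $\maxuniontype{j \in J_m}{A_j} \eqtypemu \maxuniontype{j \in 1..m}{A_{g(j)}}$, and the latter is $\eqtypemu \maxuniontype{j \in 1..m}{B_j} = B$ by applying the hypothesis $A_{g(j)} \eqtypemu B_j$ componentwise, that is \ruleEqmuUnion\ within a union context (a routine induction on $m$, the associations being immaterial by Lem.~\ref{lem:unionIsAssociative}). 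Hence $\maxuniontype{j \in J_m}{A_j} \eqtypemu B$. Instantiating Lem.~\ref{lem:unionIsIdempotent} with this $J_m$ gives $A = \maxuniontype{i \in 1..n}{A_i} \eqtypemu \uniontype{(\maxuniontype{i \in 1..n}{A_i})}{(\maxuniontype{j \in J_m}{A_j})}$, and rewriting the second component by the equivalence just established (again \ruleEqmuUnion\ within a union context) turns the right-hand side into $\uniontype{A}{B}$. Chaining with \ruleEqmuTrans\ yields $A \eqtypemu \uniontype{A}{B}$.

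The second half, $B \eqtypemu \uniontype{A}{B}$, is entirely symmetric, now driven by $f$ and the hypothesis $A_i \eqtypemu B_{f(i)}$: the same reasoning (taking the multiset image of $f$) produces $B \eqtypemu \uniontype{B}{A}$, after which a single application of \ruleEqmuUnionComm\ rewrites $\uniontype{B}{A}$ to $\uniontype{A}{B}$. Finally $A \eqtypemu \uniontype{A}{B} \eqtypemu B$ closes the argument through \ruleEqmuSymm\ and \ruleEqmuTrans.

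I do not anticipate any genuine obstacle: the mathematical content is supplied entirely by the three preceding lemmas. The only point requiring care is the bookkeeping that identifies the multiset $J_m$ (the image of $g$, or of $f$, with the correct multiplicities) with the indexed family $(A_{g(j)})_{j \in 1..m}$, together with the repeated ``union context'' congruence steps; both are justified by Lems.~\ref{lem:unionIsAssociative} and~\ref{lem:unionIsCommutative} and by the convention fixing the meaning of the $\maxuniontype{}{}$ notation up to associativity.
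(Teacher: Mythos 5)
Your proof is correct, and it takes a genuinely different route from the paper's. The paper runs a single directed equational chain from $A$ to $B$: it splits $1..n$ into the indices hit by $g$ and those not ($I$ and $I'$), pads the hit part up to the full multiset image $G$ of $g$ via Lem.~\ref{lem:unionIsIdempotent}, rewrites $\maxuniontype{j \in 1..m}{A_{g(j)}}$ into $B$, then converts the leftover $A_i$ (for $i \notin \img{g}$) into $B_{f(i)}$ and absorbs them into $B$ with a second application of idempotency. You instead sandwich both sides through the common upper bound $\uniontype{A}{B}$, proving $A \eqtypemu \uniontype{A}{B}$ (driven by $g$) and $B \eqtypemu \uniontype{A}{B}$ (driven by $f$) and closing with \ruleEqmuSymm{} and \ruleEqmuTrans. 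Both arguments use exactly the same toolkit --- Lems.~\ref{lem:unionIsAssociative}, \ref{lem:unionIsCommutative}, \ref{lem:unionIsIdempotent} and the componentwise congruence via \ruleEqmuUnion{} --- and each invokes idempotency twice. What yours buys is symmetry and less bookkeeping: you never need to partition $1..n$ by membership in $\img{g}$ nor track the auxiliary multisets $G'$ and $F$, since Lem.~\ref{lem:unionIsIdempotent} applied to the full multiset image of $g$ (resp.\ $f$) absorbs everything at once. The paper's version, in exchange, never introduces the intermediate type $\uniontype{A}{B}$ and stays a one-way rewrite. The only point needing care in your version --- identifying the multiset-indexed union $\maxuniontype{j \in J_m}{A_j}$ with the reindexed family $\maxuniontype{j \in 1..m}{A_{g(j)}}$ up to permutation and association --- is exactly the identification the paper also relies on implicitly, and it is covered by Lems.~\ref{lem:unionIsAssociative} and~\ref{lem:unionIsCommutative}.
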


\begin{proof}
It is immediate to see that for every multiset of indexes $I \subseteq 1..n$, $\maxuniontype{i \in I}{A_i} \eqtypemu \maxuniontype{i \in I}{B_{f(i)}}$, by applying $\ruleEqmuUnion$ as many times as needed and resorting to Lem.~\ref{lem:unionIsAssociative} if necessary. Similarly, $\maxuniontype{j \in J}{B_j} \eqtypemu \maxuniontype{j \in J}{A_{g(j)}}$ for $J \subseteq 1..m$. So lets consider some multisets and see how they relate to each other to finish our analysis $$
\begin{array}{rcl}
I  & \eqdef & \set{i    \mathrel| i \in 1..n, i \in    \img{g}} \\
I' & \eqdef & \set{i    \mathrel| i \in 1..n, i \notin \img{g}} \\
G  & \eqdef & \set{g(j) \mathrel| j \in 1..m} \\
F  & \eqdef & \set{f(i) \mathrel| i \in 1..n, i \notin \img{g}}
\end{array} $$
  First notice that, by definition, $I$ and $I'$ have no repeated elements and
  \begin{equation}
    \label{eq:eqiso:g}
    G = I \cup G' \quad \text{with} \quad G' \subseteq I
  \end{equation}
where $G'$ simply holds the repeated elements of $G$. Additionaly we have
  \begin{equation}
    \label{eq:eqiso:f}
    F \subseteq 1..m
  \end{equation}
Finally, we can conclude by resorting to some previous results $$
\begin{array}{rcll}
A & =         & \maxuniontype{i \in 1..n}{A_i} \\
  & \eqtypemu & \uniontype{(\maxuniontype{i \in I}{A_i})}{(\maxuniontype{i \in I'}{A_i})} & \quad \text{Lem.~\ref{lem:unionIsCommutative}} \\
  & \eqtypemu & \uniontype{(\uniontype{(\maxuniontype{i \in I}{A_i})}{(\maxuniontype{i \in G'}{A_i})})}{(\maxuniontype{i \in I'}{A_i})} & \quad \text{Lem.~\ref{lem:unionIsIdempotent} with (\ref{eq:eqiso:g})} \\
  & =         & \uniontype{(\maxuniontype{i \in G}{A_i})}{(\maxuniontype{i \in I'}{A_i})} \\
  & =         & \uniontype{(\maxuniontype{j \in 1..m}{A_{g(j)}})}{(\maxuniontype{i \in I'}{A_i})} \\
  & \eqtypemu & \uniontype{(\maxuniontype{j \in 1..m}{B_j})}{(\maxuniontype{i \in I'}{A_i})} \\
  & =         & \uniontype{(\maxuniontype{j \in 1..m}{B_j})}{(\maxuniontype{i \in 1..n \\ i \notin \img{g}}{A_i})} \\
  & \eqtypemu & \uniontype{(\maxuniontype{j \in 1..m}{B_j})}{(\maxuniontype{i \in 1..n \\ i \notin \img{g}}{B_{f(i)}})} \\
  & =         & \uniontype{(\maxuniontype{j \in 1..m}{B_j})}{(\maxuniontype{j \in F}{B_j})} \\
  & \eqtypemu & \maxuniontype{j \in 1..m}{B_j} & \quad \text{Lem.~\ref{lem:unionIsIdempotent} with (\ref{eq:eqiso:f})} \\
  & =         & B
\end{array} $$
\end{proof}


\subsubsection{Types as trees}
\label{sec:typingCoinductive}

Type safety, addressed in the Sec.\ref{sec:safety}, also relies on $\subtypemu$
enjoying the fundamental property of \emph{invertibility} of non-union types
(\cf Prop.~\ref{prop:subtypingIsInvertible}):
\begin{enumerate}
  \item If $\datatype{D}{A} \subtypemu \datatype{D'}{A'}$, then $D \subtypemu
  D'$ and $A \subtypemu A'$.
  \item If $\functype{A}{B} \subtypemu \functype{A'}{B'}$, then $A' \subtypemu
  A$ and $B \subtypemu B'$.
\end{enumerate}

To prove this we appeal to the standard tree interpretation of terms and
formulate an equivalent coinductive definition of equivalence and subtyping
($\subtypeco$). For the latter, invertibility of non-union types is proved
coinductively, (Lem.~\ref{lem:subtypingIsInvertible}), entailing Prop.~\ref{prop:subtypingIsInvertible}.

Consider \emphdef{type constructors} $\idatatype$ and $\ifunctype$ together
with \emphdef{type connector} $\iuniontype$ and the ranked alphabet
$\mathfrak{L} \eqdef \set{a^0 \mathrel| a \in \TypeVariable \cup \TypeConstant}
\cup \set{\idatatype^2, \ifunctype^2, \iuniontype^2}$.  We write $\Tree$ for
the set of (possibly) \emphdef{infinite types} with symbols in $\mathfrak{L}$.
This is a standard construction~\cite{terese03,journals/tcs/Courcelle83} given
by the metric completion based on a simple depth function measuring the
distance from the root to the minimum conflicting node in two trees.
Perhaps worth mentioning is that the type connector $\iuniontype$ does not
contribute to the depth (hence the reason for calling it a connector rather
than a constructor) excluding types consisting of infinite branches of
$\iuniontype$, such as $\uniontype{(\uniontype{\ldots}{\ldots})}{(\uniontype{\ldots}{\ldots})}$,
from $\Tree$. We use meta-variables $\tA, \tB, \ldots$ to denote elements of
$\Tree$.

\begin{remark}
\label{rem:maximalUnionTypes}
For any $\star \in \mathfrak{L}$, we write $\tA \neq \star$ to mean that
$\tA(\epsilon) \neq \star$, $\epsilon$ being the root position of the tree. For
example, $A \neq \iuniontype$ means that $A$ is a \emph{non-union type}. Any
type $\tA$ can be written as $\tA = \maxuniontype{i \in 1..n}{\tA_i}$ (dubbed a
\emph{maximal} union type) where $\tA_i \neq \iuniontype$ for all $i \in 1..n$
with $n\in \Natural$, irrespective of how their arguments are associated. All
such associations yield equivalent infinite types in a sense to be made precise
shortly.
\end{remark}

\subsubsection{Equivalence of Infinite Types}

\begin{figure}[t] $$
\begin{array}{c}
\RuleCo{}
       {a \eqtypeco a}
       {\ruleEqcoRefl}
\\
\\
\RuleCo{\tA \eqtypeco \tA' \quad \tB \eqtypeco \tB'}
       {\functype{\tA}{\tB} \eqtypeco \functype{\tA'}{\tB'}}
       {\ruleEqcoFunc}
\qquad
\RuleCo{\tD \eqtypeco \tD' \quad \tA \eqtypeco \tA'}
       {\datatype{\tD}{\tA} \eqtypeco \datatype{\tD'}{\tA'}}
       {\ruleEqcoComp}
\\
\\
\RuleCo{\begin{array}{ll}
          \tA_i \eqtypeco \tB_{f(i)} & \quad f : 1..n \to 1..m \\
          \tA_{g(j)} \eqtypeco \tB_j & \quad g : 1..m \to 1..n
        \end{array}
        \quad
        \tA_i, \tB_j \neq \iuniontype
        \quad
        n + m > 2}
       {\maxuniontype{i \in 1..n}{\tA_i} \eqtypeco \maxuniontype{j \in 1..m}{\tB_j}}
       {\ruleEqcoUnion}
\end{array} $$
\caption{Equivalence relation for infinite types}
\label{fig:equivalenceSchemesCo}
\end{figure}

\begin{definition}
Infinite type equivalence, written $\mathbin{\eqtypeco}$, is defined by the
coinductive interpretation of the schemes of
Fig.~\ref{fig:equivalenceSchemesCo}.
\end{definition}

Note that $\ruleEqcoUnion$ is actually a rule scheme, representing all
possible associations within maximal union types $\tA =
\maxuniontype{i \in 1..n}{\tA_i}$ and $\tB = \maxuniontype{j \in 1..m}{\tB_j}$.
Each instance of the rule states that every $A_i$ must be equivalent to some
$\tB_j$ via a function $f : 1..n \to 1..m$ and vice versa (with $g: 1..m \to
1..n$). Note that the  type connector $\iuniontype$ is seen to be not only
associative and commutative but also idempotent.

Formally, let $\Phieqtypeco : \Parts{\Tree \times \Tree} \to \Parts{\Tree
\times \Tree}$ be the functional associated to the rules in
Fig.~\ref{fig:equivalenceSchemesCo}, defined as follows: $$
\begin{array}{rcl}
\Phieqtypeco(\S) & =    & \set{\pair{a}{a} \mathrel| a \in \TypeVariable \cup \TypeConstant} \\
                 & \cup & \set{\pair{\datatype{\tD}{\tA}}{\datatype{\tD'}{\tA'}} \mathrel| \pair{\tD}{\tD'}, \pair{\tA}{\tA'} \in \S} \\
                 & \cup & \set{\pair{\functype{\tA}{\tB}}{\functype{\tA'}{\tB'}} \mathrel| \pair{\tA}{\tA'}, \pair{\tB}{\tB'} \in \S} \\
                 & \cup & \{\pair{\maxuniontype{i \in 1..n}{\tA_i}}{\maxuniontype{j \in 1..m}{\tB_j}} \mathrel| \tA_i, \tB_j \neq \iuniontype, n + m > 2 \\
                 &      & \qquad \exists f : 1..n \to 1..m \text{ s.t. } \pair{\tA_i}{\tB_{f(i)}} \in \S, \\
                 &      & \qquad \exists g : 1..m \to 1..n \text{ s.t. } \pair{\tA_{g(j)}}{\tB_j} \in \S\}
\end{array} $$

Then $\mathbin{\eqtypeco} \eqdef \nu\Phieqtypeco$. Now we show that it is
indeed an equivalence relation.

\begin{lemma}
$\eqtypeco$ is an equivalence relation (\ie reflexive, symmetric and
transitive).
\end{lemma}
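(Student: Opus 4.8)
The plan is to use the standard coinduction principle for the greatest fixed point $\eqtypeco \eqdef \nu\Phieqtypeco$: since $\Phieqtypeco$ is monotone on $\Parts{\Tree \times \Tree}$, every relation $R$ with $R \subseteq \Phieqtypeco(R)$ is contained in $\eqtypeco$. I would establish reflexivity, symmetry and transitivity by exhibiting, in each case, such a post-fixed point of $\Phieqtypeco$.

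For \emph{reflexivity}, take $R \eqdef \set{\pair{\tA}{\tA} \mathrel| \tA \in \Tree}$ and verify $R \subseteq \Phieqtypeco(R)$ by casing on the root of $\tA$. If $\tA$ is an atom $a$, the pair is in $\Phieqtypeco(R)$ by \ruleEqcoRefl; if $\tA = \datatype{\tD}{\tB}$ or $\tA = \functype{\tB}{\tC}$ the immediate subtrees are again $R$-related, so \ruleEqcoComp resp. \ruleEqcoFunc apply; and if $\tA = \maxuniontype{i \in 1..n}{\tA_i}$ with $n \geq 2$, taking $f = g = \mathit{id}$ in \ruleEqcoUnion works (note $n + n > 2$). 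For \emph{symmetry}, take $R$ to be the converse $\set{\pair{\tA}{\tB} \mathrel| \tB \eqtypeco \tA}$; unfolding $\tB \eqtypeco \tA$ through $\Phieqtypeco(\eqtypeco)$ and reading each clause backwards — swapping the witnessing functions $f$ and $g$ in the \ruleEqcoUnion case — gives $R \subseteq \Phieqtypeco(R)$.

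The substantial case is \emph{transitivity}, where I take $R \eqdef \mathbin{\eqtypeco} \circ \mathbin{\eqtypeco}$, the pairs $\pair{\tA}{\tC}$ admitting a $\tB$ with $\tA \eqtypeco \tB$ and $\tB \eqtypeco \tC$. The first ingredient is a uniform reading of $\eqtypeco$ in maximal-union form: writing $\tA = \maxuniontype{i \in 1..n_A}{\tA_i}$, $\tB = \maxuniontype{k \in 1..n_B}{\tB_k}$, $\tC = \maxuniontype{j \in 1..n_C}{\tC_j}$ with non-union components, any derivation of $\tA \eqtypeco \tB$ yields $f_{AB}, g_{AB}$ with $\tA_i \eqtypeco \tB_{f_{AB}(i)}$ and $\tA_{g_{AB}(k)} \eqtypeco \tB_k$; this is exactly \ruleEqcoUnion when $n_A + n_B > 2$, and the identity pairing obtained from \ruleEqcoRefl/\ruleEqcoComp/\ruleEqcoFunc when $n_A = n_B = 1$. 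Composing, $f_{AC} \eqdef f_{BC} \circ f_{AB}$ and $g_{AC} \eqdef g_{AB} \circ g_{BC}$ witness that each component of $\tA$ is $R$-related to a component of $\tC$ and vice versa (the middle witnesses being the corresponding $\tB_k$).

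To close the coinduction I split on the shape of $\pair{\tA}{\tC}$. When $n_A + n_C > 2$, the composed functions feed directly into the \ruleEqcoUnion clause of $\Phieqtypeco(R)$. When $n_A = n_C = 1$, both $\tA$ and $\tC$ are non-union and I must instead exhibit a \ruleEqcoRefl/\ruleEqcoComp/\ruleEqcoFunc witness; I route through the single middle component $\tB_{k_0}$ with $k_0 \eqdef f_{AB}(1)$, so that $\tA_1 \eqtypeco \tB_{k_0}$ and $\tB_{k_0} \eqtypeco \tC_1$ are equivalences between non-union types. The crucial point — and the part I expect to be most delicate — is that such an equivalence cannot be justified by \ruleEqcoUnion, since its side condition $n + m > 2$ fails for two singleton decompositions; hence $\tB_{k_0}$ shares its root constructor with both $\tA_1$ and $\tC_1$, and casing on that root (atom, $\idatatype$, or $\ifunctype$) exposes $R$-related immediate subtrees, giving the required clause. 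Keeping the bookkeeping of the union decompositions and the $n + m > 2$ side condition exactly right is the main obstacle; the atomic and constructor cases are then routine.
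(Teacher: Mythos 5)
Your proposal is correct and follows essentially the same route as the paper: all three properties are established by exhibiting a $\Phieqtypeco$-dense (post-fixed) set --- $\set{\pair{\tA}{\tA}}$, the converse of $\mathbin{\eqtypeco}$, and $\mathbin{\eqtypeco}\circ\mathbin{\eqtypeco}$ --- and invoking the coinduction principle, with the same case analysis on maximal union decompositions. The only difference is organisational: in the transitivity case you extract the witness functions uniformly (reading the non-union case as the identity pairing) and then split only on whether $\tA$ and $\tC$ are unions, whereas the paper splits on the sizes of all three decompositions up front; both arguments come down to the same composition of witness functions and the same observation that an equivalence between non-union types cannot be justified by the union rule, forcing a shared root constructor.
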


\begin{proof}
The three properties are proved be showing that the sets defining them are
$\Phieqtypeco$-dense. Then we conclude by the coinductive
principle\footnote{\emph{Coinductive principle}: if $\X$ is $\Phi$-dense, then
$\X \subseteq \nu\Phi$.} that the properties hold on $\eqtypeco$.
\begin{itemize}
  \item Reflexivity: $\Refl\ \eqdef \set{\pair{\tA}{\tA} \mathrel| \tA \in
  \Tree}$. Let $\pair{\tA}{\tA} \in \Refl$. We proceed by analyzing the shape
  of $\tA$:
  \begin{itemize}
    \item $\tA = a$. Immediate since $\pair{a}{a} \in \Phieqtypeco(\Refl)$ for
    every $a \in \TypeVariable \cup \TypeConstant$.
    
    \item $\tA = \datatype{\tD}{\tA'}$. By definition of reflexivity
    $\pair{\tD}{\tD}, \pair{\tA'}{\tA'} \in \Refl$. Then $\pair{A}{A} \in
    \Phieqtypeco\!\left(\Refl\right)$.
    
    \item $\tA = \functype{\tA'}{\tA''}$. Similarly to the previous case, we
    have $\pair{\tA'}{\tA'}, \pair{\tA''}{\tA''} \in \Refl$. Hence
    $\pair{\tA}{\tA} \in \Phieqtypeco\!\left(\Refl\right)$.
    
    \item $\tA = \maxuniontype{i \in 1..n}{\tA_i}$ with $\tA_i \neq \iuniontype$
    for $i \in i..n, n > 1$. Then, since $\pair{\tA_i}{\tA_i} \in \Refl$ and
    $n + n > 2$, we conclude
    $\pair{\maxuniontype{i \in 1..n}{\tA_i}}{\maxuniontype{i \in 1..n}{\tA_i}}
    \in \Phieqtypeco\!\left(\Refl\right)$ by considering $f = g = \mathit{id}$
    (the identity function).
  \end{itemize}

  \item Symmetry: $\Symm\!\left(\S\right) \eqdef \set{\pair{\tB}{\tA} \mathrel|
  \pair{\tA}{\tB} \in \S}$. We show that $\Symm\!\left(\eqtypeco\right)
  \subseteq \mathbin{\eqtypeco}$.
  
  Let $\pair{\tA}{\tB} \in \Symm\!\left(\eqtypeco\right)$, then
  $\pair{\tB}{\tA} \in \mathbin{\eqtypeco} =
  \Phieqtypeco\!\left(\eqtypeco\right)$. By Rem.~\ref{rem:maximalUnionTypes} we
  can consider maximal union types $$
\begin{array}{r@{\quad\text{with}\quad}l}
\tA = \maxuniontype{i \in 1..n}{\tA_i} & \tA_i \neq \iuniontype, i \in 1..n \\
\tB = \maxuniontype{j \in 1..m}{\tB_j} & \tB_j \neq \iuniontype, j \in 1..m
\end{array} $$ and we have two separate cases to analyze:
  \begin{enumerate}
    \item If $n = m = 1$, then both $\tA$ and $\tB$ are non-union types. Now we
    proceed by analyzing the shape of $\tB$:
    \begin{itemize}
      \item $\tB = a$. Then $\tA = a$ by definition of $\Phieqtypeco$ and the
      result is immediate since $\pair{a}{a} \in
      \Phieqtypeco\!\left(\Symm(\eqtypeco)\right)$ for every $a \in
      \TypeVariable \cup \TypeConstant$.
      
      \item $\tB = \datatype{\tD'}{\tB'}$. Again, by definition, we have
      $\tA = \datatype{\tD}{\tA'}$ with $\pair{\tD'}{\tD}, \pair{\tB'}{\tA'}
      \in \mathbin{\eqtypeco}$. Then $\pair{\tD}{\tD'}, \pair{\tA'}{\tB'} \in
      \Symm\!\left(\eqtypeco\right)$ and we conclude $\pair{\tA}{\tB} \in
      \Phieqtypeco\!\left(\Symm(\eqtypeco)\right)$.
      
      \item $\tB = \functype{\tB'}{\tB''}$. Similarly, $\tA =
      \functype{\tA'}{\tA''}$ with $ \pair{\tB'}{\tA'}, \pair{\tB''}{\tA''} \in
      \mathbin{\eqtypeco}$. Hence $\pair{\tA'}{\tB'}, \pair{\tA''}{\tB''} \in
      \Symm\!\left(\eqtypeco\right)$ and we conclude $\pair{\tA}{\tB} \in
      \Phieqtypeco\!\left(\Symm(\eqtypeco)\right)$.
    \end{itemize}
    
    \item If not, we have $n + m > 2$ and only the rule $\ruleEqcoUnion$
    applies. Then $$
\begin{array}{r@{\quad\text{s.t.}\quad}c@{\quad\text{for every}\quad}l}
\exists g : 1..m \to 1..n & \pair{\tB_j}{\tA_{g(j)}} \in \mathbin{\eqtypeco} & j \in 1..m \\
\exists f : 1..n \to 1..m & \pair{\tB_{f(i)}}{\tA_i} \in \mathbin{\eqtypeco} & i \in 1..n
\end{array} $$ Applying symmetry we get $\pair{\tA_i}{\tB_{f(i)}},
    \pair{\tA_{g(j)}}{\tB_j} \in \Symm\!\left(\eqtypeco\right)$ for every
    $i \in 1..n, j \in 1..m$. Thus, we conclude $\pair{\tA}{\tB} \in
    \Phieqtypeco\!\left(\Symm(\eqtypeco)\right)$.
  \end{enumerate}

  \item Transitivity: $\Trans\!\left(\S\right) \eqdef \set{\pair{\tA}{\tB}
  \mathrel| \exists \tC \in \Tree. \pair{\tA}{\tC}, \pair{\tC}{\tB} \in \S}$.
  As before, we show that $\Trans\!\left(\eqtypeco\right) \subseteq
  \mathbin{\eqtypeco}$. Let $\pair{\tA}{\tB} \in
  \Trans\!\left(\eqtypeco\right)$, then there exists $\tC \in \Tree$ such that
  $\pair{\tA}{\tC}, \pair{\tC}{\tB} \in \mathbin{\eqtypeco} =
  \Phieqtypeco\!\left(\eqtypeco\right)$. Again, we resort to
  Rem.~\ref{rem:maximalUnionTypes} an consider maximal union types $$
\begin{array}{r@{\quad\text{with}\quad}l}
\tA = \maxuniontype{i \in 1..n}{\tA_i} & \tA_i \neq \iuniontype, i \in 1..n \\
\tB = \maxuniontype{j \in 1..m}{\tB_j} & \tB_j \neq \iuniontype, j \in 1..m \\
\tC = \maxuniontype{k \in 1..l}{\tC_k} & \tC_k \neq \iuniontype, k \in 1..l
\end{array} $$
  \begin{enumerate}
    \item If $n = m = l = 1$ (\ie all three are non-union types), we proceed by
    analyzing the shape of $\tC$:
  \begin{itemize}
	  \item $\tC = a$. By definition of $\Phieqtypeco$, $\tA = a$ and $\tB =
	  a$. Then $\pair{\tA}{\tB} = \pair{a}{a} \in
	  \Phieqtypeco\!\left(\Trans(\eqtypeco)\right)$.
	  
	  \item $\tC = \datatype{\tD''}{\tC'}$. Once again by definition of
	  $\Phieqtypeco$, $\tA = \datatype{\tD}{\tA'}$ with $\pair{\tD}{\tD''},
	  \pair{\tA'}{\tC'} \in \mathbin{\eqtypeco}$ and $\tB =
	  \datatype{\tD'}{\tB'}$ with $\pair{\tD''}{\tD'}, \pair{\tC'}{\tB'} \in
	  \mathbin{\eqtypeco}$. Then $\pair{\tD}{\tD'}, \pair{\tA'}{\tB'} \in
	  \Trans\!\left(\subtypeco\right)$ and we conclude $\pair{\tA}{\tB} \in
	  \Phieqtypeco\!\left(\Trans(\eqtypeco)\right)$.
	  
	  \item $\tC = \functype{\tC'}{\tC''}$. Similarly, we have $\tA =
	  \functype{\tA'}{\tA''}$ and $\tB = \functype{\tB'}{\tB''}$ with
	  $\pair{\tA'}{\tC'}, \pair{\tA''}{\tC''}, \pair{\tC'}{\tB'},
	  \pair{\tC''}{\tB''} \in \mathbin{\eqtypeco}$. By transitivity
	  $\pair{\tA'}{\tB'}, \pair{\tA''}{\tB''} \in
	  \Trans\!\left(\eqtypeco\right)$ and $\pair{\tA}{\tB} \in
	  \Phieqtypeco\!\left(\Trans(\eqtypeco)\right)$.
  \end{itemize}
  
	\item If not (\ie $n + m + l > 3$), we have three different situations to
	consider:
    \begin{inparaenum}[(i)]
      \item $n + l > 2$ and $m + l > 2$;
      \item $n > 1$ and $m = l = 1$; or
      \item $m > 1$ and $n = l = 1$.
    \end{inparaenum}
    In terms of applied rules to derive $\tA \eqtypeco \tC$ and $\tC \eqtypeco
    \tB$, in the former case the only possibility is $\ruleEqcoUnion$ on both
    sides, while in the latter two we have $\ruleEqcoUnion$ on one side and any
    of the other three rules ($\ruleEqcoRefl$, $\ruleEqcoComp$,
    $\ruleEqcoFunc$) on the other. Note that this last two cases are symmetric,
    therefore we only analyse cases (i) and (ii) below:
    \begin{enumerate}[(i)]
      \item $n + l > 2$ and $m + l > 2$. By definition of $\Phieqtypeco$ $$
\begin{array}{r@{\quad\text{s.t.}\quad}c@{\quad\text{for every}\quad}l}
\exists f : 1..n \to 1..l  & \pair{\tA_i}{\tC_{f(i)}}  \in \mathbin{\eqtypeco} & i \in 1..n \\
\exists g : 1..l \to 1..n  & \pair{\tA_{g(k)}}{\tC_k}  \in \mathbin{\eqtypeco} & k \in 1..l \\
\exists f' : 1..l \to 1..m & \pair{\tC_k}{\tB_{f'(k)}} \in \mathbin{\eqtypeco} & k \in 1..l \\
\exists g' : 1..m \to 1..l & \pair{\tC_{g'(j)}}{\tB_j} \in \mathbin{\eqtypeco} & j \in 1..m
\end{array} $$
      Then, we have $\pair{\tA_i}{\tC_{f(i)}},
      \pair{\tC_{f(i)}}{\tB_{f'(f(i))}} \in \mathbin{\eqtypeco}$ for every $i
      \in 1..n$, and $\pair{\tA_{g(g'(j))}}{\tC_{g'(j)}},
      \pair{\tC_{g'(j)}}{\tB_j} \in \mathbin{\eqtypeco}$ for every $j \in
      1..m$.
      
      Here we have two possible situations. If $n = m = 1$ (hence $l > 1$) it
      is necessarily the case $\pair{\tA}{\tC_{f(1)}}, \pair{\tC_{f(1)}}{\tB}
      \in \mathbin{\eqtypeco}$ with all three non-union types. Then we can
      safely conclude, by the previous analysis made in case 1, that
      $\pair{\tA}{\tB} \in \Phieqtypeco\!\left(\Trans(\eqtypeco)\right)$.
      
      If not (\ie $n + m > 2$), taking $f'' = f' \circ f : 1..n \to 1..m$ we
      get $\pair{\tA_i}{\tB_{f''(i)}} \in \Trans\!\left(\eqtypeco\right)$.
      Similarly, $\pair{\tA_{g''(j)}}{\tB_j} \in
      \Trans\!\left(\eqtypeco\right)$ for every $j \in 1..m$ with $g'' = g
      \circ g' : 1..m \to 1..n$. Finally we conclude by $\ruleEqcoUnion$,
      $\pair{\tA}{\tB} \in \Phieqtypeco\!\left(\Trans(\eqtypeco)\right)$.

      \item $n > 1$ and $m = l = 1$. Then, by definition of $\Phieqtypeco$,
      $f : 1..n \to 1$ is a constant function and we have $\pair{\tA_i}{\tC}
      \in \mathbin{\eqtypeco}$ for every $i \in 1..n$. On the other hand
      $\pair{\tC}{\tB} \in \mathbin{\eqtypeco}$ by hypothesis. By transitivity
      once again we get $\pair{\tA_i}{\tB} \in \Trans\!\left(\eqtypeco\right)$
      and we conclude with the same constant function $f$, $\pair{\tA}{\tB} \in
      \Phieqtypeco\!\left(\Trans(\eqtypeco)\right)$.
    \end{enumerate}
  \end{enumerate}
\end{itemize}
\end{proof}

\begin{lemma}[Equality of non-union types is invertible]
\label{lem:equalityIsInvertible}
Let $\tA \eqtypeco \tB$ be two non-union types.
\begin{enumerate}
  \item If $\tA = a$, then $\tB = a$.
  \item If $\tA = \datatype{\tD}{\tA'}$, then $\tB = \datatype{\tD'}{\tB'}$
  with $\tD \eqtypeco \tD'$ and $\tA' \eqtypeco \tB'$.
  \item If $\tA = \functype{\tA'}{\tA''}$, then $\tB = \functype{\tB'}{\tB''}$
  with $\tA' \eqtypeco \tB'$ and $\tA'' \eqtypeco \tB''$.
\end{enumerate}
\end{lemma}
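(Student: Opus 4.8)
The plan is to exploit that $\eqtypeco = \nu\Phieqtypeco$ is a fixed point of $\Phieqtypeco$, so that $\eqtypeco = \Phieqtypeco(\eqtypeco)$. From the hypothesis $\tA \eqtypeco \tB$ I would therefore immediately obtain $\pair{\tA}{\tB} \in \Phieqtypeco(\eqtypeco)$, which means the pair lies in (at least) one of the four sets constituting the definition of $\Phieqtypeco$. The whole argument then reduces to determining which of these four sets can contain $\pair{\tA}{\tB}$ and reading off the three claimed decompositions from its shape.

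The key step --- and the only one requiring genuine thought --- is ruling out the fourth set (the one generated by $\ruleEqcoUnion$). By Rem.~\ref{rem:maximalUnionTypes}, since $\tA$ and $\tB$ are non-union types, their maximal union representations are trivial: one necessarily has $n = 1$ and $m = 1$, the single summands being $\tA$ and $\tB$ themselves. But membership in the fourth set carries the side condition $n + m > 2$, which fails here as $n + m = 2$. Hence $\pair{\tA}{\tB}$ must belong to one of the first three sets, and it is this exclusion of the union clause that makes the rest of the proof routine.

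With the union case eliminated, I would conclude by a straightforward case analysis on the shape of $\tA$, using that the head symbol of the first component determines uniquely which of the three remaining sets applies. If $\tA = a$ for some $a \in \TypeVariable \cup \TypeConstant$, the second and third sets are excluded (their first components are headed by $\idatatype$ and $\ifunctype$ respectively), so $\pair{\tA}{\tB}$ lies in the first set, forcing $\tB = a$; this gives the first claim. If $\tA = \datatype{\tD}{\tA'}$, only the second set matches this head symbol, so $\tB = \datatype{\tD'}{\tB'}$ with $\pair{\tD}{\tD'}, \pair{\tA'}{\tB'} \in \eqtypeco$, \ie $\tD \eqtypeco \tD'$ and $\tA' \eqtypeco \tB'$; this gives the second claim. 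The case $\tA = \functype{\tA'}{\tA''}$ is entirely symmetric, placing the pair in the third set and yielding the third claim.

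I expect no serious obstacle: the result is essentially an unfolding lemma for the coinductive definition, and the entire subtlety is concentrated in the observation that the $n + m > 2$ side condition of $\ruleEqcoUnion$ renders the union clause unavailable precisely when both sides are non-union. Once that is noted, the remaining case analysis is purely a matter of matching head constructors, with the component-wise equivalences extracted directly from the membership conditions in the respective sets.
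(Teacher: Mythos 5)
Your proof is correct and takes essentially the same route as the paper, which simply states that the result is immediate from the definition because only one rule of $\Phieqtypeco$ is applicable in each case; your write-up just makes explicit the two observations the paper leaves implicit, namely that $\eqtypeco = \Phieqtypeco(\eqtypeco)$ and that the side condition $n+m>2$ excludes the union clause when both sides are non-union.
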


\begin{proof}
Immediate from the definition of subtyping. Note that there's only one
applicable rule in each case.
\end{proof}

Along the document we often resort to the following definition and properties
of the substitution operator over infinite trees:

\begin{definition}
\label{def:treeSubstitution}
The \emphdef{substitution} of a variable $V$ by a tree $\tB$ in $\tA$ (notation
$\substitute{V}{\tB}{\tA}$) is defined as: $$
\begin{array}{r@{\quad \eqdef\quad}l@{\quad}l}
(\substitute{V}{\tB}{\tA})(\pi)     & \tA(\pi)  & \text{if $\tA(\pi)$ defined and $\tA(\pi) \neq V$} \\
(\substitute{V}{\tB}{\tA})(\pi\pi') & \tB(\pi') & \text{if $\tA(\pi)$ defined and $\tA(\pi) = V$}
\end{array} $$
\end{definition}

The following lemma provides a more convenient characterisation of the
substitution.

\begin{lemma}
\label{lem:treeSubstitution}
\begin{enumerate}[(i)]
  \item $\substitute{V}{\tB}{V} = \tB$.
  \item $\substitute{V}{\tB}{a} = a$ for $V \neq a \in \TypeVariable \cup
  \TypeConstant$.
  \item $\substitute{V}{\tB}{(\tA_1 \star \tA_2)} = \substitute{V}{\tB}{\tA_1}
  \star \substitute{V}{\tB}{\tA_2}$ for $\star \in \set{\idatatype, \ifunctype,
  \iuniontype}$.
\end{enumerate}
\end{lemma}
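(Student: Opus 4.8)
The three equalities are identities between infinite trees, that is, between partial functions $\Position \rightharpoonup \mathfrak{L}$, so each one reduces to checking that the two sides are defined at exactly the same positions and carry the same label there. No (co)induction is needed here: Def.~\ref{def:treeSubstitution} specifies $\substitute{V}{\tB}{\tA}$ explicitly and position-wise, so the whole plan is to fix an arbitrary position $\rho \in \Position$, determine which defining clause governs $\rho$ on each side, and compare the resulting labels (or note that both are undefined).

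For (i), the tree $V$ is the single-node tree with $V(\epsilon) = V$ and undefined elsewhere. Hence for every $\rho$ the second clause of Def.~\ref{def:treeSubstitution} applies with $\pi = \epsilon$ and $\pi' = \rho$, giving $(\substitute{V}{\tB}{V})(\rho) = \tB(\rho)$, while the first clause never applies since $V(\epsilon) = V$. Thus $\substitute{V}{\tB}{V}$ and $\tB$ agree everywhere. For (ii), $a$ is the single-node tree with $a(\epsilon) = a \neq V$: the first clause gives $(\substitute{V}{\tB}{a})(\epsilon) = a$, and at every $\rho \neq \epsilon$ neither clause applies (the only defined position of $a$ is $\epsilon$, whose label is $a \neq V$), so both sides are undefined; hence $\substitute{V}{\tB}{a} = a$.

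Part (iii) is the substantive case. Write $\tA = \tA_1 \star \tA_2$, so $\tA(\epsilon) = \star$, $\tA(1\rho') = \tA_1(\rho')$ and $\tA(2\rho') = \tA_2(\rho')$, and set $\tC = \substitute{V}{\tB}{\tA}$ and $\tC' = \substitute{V}{\tB}{\tA_1} \star \substitute{V}{\tB}{\tA_2}$. I would case split on $\rho$. At $\rho = \epsilon$: since $\tA(\epsilon) = \star \neq V$, the first clause yields $\tC(\epsilon) = \star = \tC'(\epsilon)$. For $\rho = 1\rho'$ (the case $\rho = 2\rho'$ being symmetric) I compare $\tC(1\rho')$ with $\tC'(1\rho') = (\substitute{V}{\tB}{\tA_1})(\rho')$ by a sub-split on how $\rho'$ relates to $\tA_1$: (a) if $\tA_1(\rho')$ is defined and $\neq V$ then, no proper prefix of $\rho'$ being labelled $V$, both sides equal $\tA_1(\rho')$ via the first clause; (b) if some prefix $\pi''$ of $\rho'$ satisfies $\tA_1(\pi'') = V$, writing $\rho' = \pi''\pi'''$, the second clause (applied to $\tA$ with $\pi = 1\pi''$ on the left and to $\tA_1$ with $\pi = \pi''$ on the right) gives $\tB(\pi''')$ on both sides; (c) otherwise $\tA_1(\rho')$ is undefined and no prefix is labelled $V$, so both sides are undefined. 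In every case the two sides coincide, whence $\tC = \tC'$ for $\star \in \set{\idatatype, \ifunctype, \iuniontype}$.

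The only point that deserves care, and the sole potential obstacle, is the well-definedness underlying these case splits: that the two clauses of Def.~\ref{def:treeSubstitution} never conflict, and that the prefix $\pi$ witnessing the second clause is unique. Both rest on $V$ being a nullary symbol of $\mathfrak{L}$, so that a position labelled $V$ is necessarily a leaf of $\tA$: it has no $\tA$-defined proper descendants, which rules out a clash between the clauses, and no two distinct prefixes along one branch can both be labelled $V$, which rules out ambiguity in $\pi$. I would record this observation once at the outset; with it in hand, the sub-case determination in part (iii), in particular the exclusivity of cases (a), (b) and (c), is routine.
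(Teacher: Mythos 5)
Your proposal is correct and follows essentially the same route as the paper's proof: a position-wise comparison of the two trees, splitting on $\rho=\epsilon$ versus $\rho=i\rho'$ and then on whether the position lies below an occurrence of $V$. The extra remark on the mutual exclusivity of the two defining clauses and the uniqueness of the witnessing prefix is a sound (if implicit in the paper) observation and does not change the argument.
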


\begin{proof}
The three cases are by analysis of the defined positions.
\begin{enumerate}[(i)]
  \item The only defined position in $V$ is $\epsilon$. Then, for every $\pi$
  in $\tB$ we have $$(\substitute{V}{\tB}{V})(\pi) =
  (\substitute{V}{\tB}{V})(\epsilon\pi) = B(\pi)$$
  
  \item The only defined position in $a \neq V$ is $\epsilon$, thus we have
  $(\substitute{V}{\tB}{a})(\epsilon) = a(\epsilon) = a$. Any other position is
  undefined.
  
  \item Here we have $\tA = \tA_1 \star \tA_2$ with $\star \in \set{\idatatype,
  \ifunctype, \iuniontype}$. We proceed by analysing the defined positions of
  $\tA$.
  \begin{itemize}
    \item $\pi = \epsilon$. Then
    $$(\substitute{V}{\tB}{(\tA_1 \star \tA_2)})(\epsilon) =
    (\tA_1 \star \tA_2)(\epsilon) = \star = (\substitute{V}{\tB}{\tA_1} \star
    \substitute{V}{\tB}{\tA_2})(\epsilon)$$
    \item $\pi = i\pi'$. Here we have two possibilities:
    \begin{enumerate}
      \item either $\tA(\pi) \neq V$. Then $\tA_i(\pi') \neq V$ and we have $$
\begin{array}{r@{\quad=\quad}l@{\quad}l}
(\substitute{V}{\tB}{(\tA_1 \star \tA_2)})(\pi) & (\tA_1 \star \tA_2)(i\pi')         & \text{by Def.~\ref{def:treeSubstitution}} \\
                                                & \tA_i(\pi') \\
                                                & (\substitute{V}{\tB}{\tA_i})(\pi') & \text{by Def.~\ref{def:treeSubstitution}} \\
                                                & (\substitute{V}{\tB}{\tA_1} \star \substitute{V}{\tB}{\tA_2})(\pi)
\end{array} $$
      \item or $\tA(\pi) = V$. Then $\tA_i(\pi') = V$ and by definition of
      substitution we have, for every position $\pi''$ in $\tB$ $$
\begin{array}{r@{\quad=\quad}l}
(\substitute{V}{\tB}{(\tA_1 \star \tA_2)})(\pi\pi'') & \tB(\pi'') \\
                                                     & (\substitute{V}{\tB}{\tA_i})(\pi'\pi'') \\
                                                     & (\substitute{V}{\tB}{\tA_1} \star \substitute{V}{\tB}{\tA_2})(\pi\pi'')
\end{array} $$
    \end{enumerate}
  \end{itemize}
\end{enumerate}
\end{proof}

We show next that the substitution preserves the equivalent relation.

\begin{lemma}
\label{lem:substitutionOfEqtypesCo}
Let $\tA \eqtypeco \tA'$ and $\tB \eqtypeco \tB'$. Then
$\substitute{V}{\tB}{\tA} \eqtypeco \substitute{V}{\tB'}{\tA'}$.
\end{lemma}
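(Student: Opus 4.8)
The plan is to use the coinductive principle. I would introduce the candidate relation
$$\S \eqdef \set{\pair{\substitute{V}{\tB}{\tA}}{\substitute{V}{\tB'}{\tA'}} \mathrel| \tA \eqtypeco \tA' \text{ and } \tB \eqtypeco \tB'}$$
and prove that $\S$ is $\Phieqtypeco$-dense, \ie $\S \subseteq \Phieqtypeco(\S)$; since $\mathbin{\eqtypeco} = \nu\Phieqtypeco$, the lemma follows immediately. Two preliminary facts drive the argument. First, taking $\tB = \tB' = V$ and recalling that substituting $V$ for $V$ is the identity (Lem.~\ref{lem:treeSubstitution}), we obtain $\mathbin{\eqtypeco} \subseteq \S$; together with monotonicity of $\Phieqtypeco$ this lets me discharge any pair already known to lie in $\eqtypeco$ directly into $\Phieqtypeco(\S)$. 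Second, Lem.~\ref{lem:treeSubstitution}(iii) pushes the substitution through the top symbol of a tree, so the density check reduces to an analysis of how $\tA$ and $\tA'$ decompose.

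Fix a pair $\pair{\substitute{V}{\tB}{\tA}}{\substitute{V}{\tB'}{\tA'}} \in \S$ with $\tA \eqtypeco \tA'$ and $\tB \eqtypeco \tB'$, and write both sides in maximal form $\tA = \maxuniontype{i \in 1..n}{\tA_i}$ and $\tA' = \maxuniontype{j \in 1..m}{\tA'_j}$ (Rem.~\ref{rem:maximalUnionTypes}). If $n = m = 1$, both are non-union and I proceed by the root of $\tA$ using Lem.~\ref{lem:equalityIsInvertible}. For $\tA = a$ atomic with $a \neq V$, both substitutions fix $a$ and $\pair{a}{a} \in \Phieqtypeco(\S)$ by \ruleEqcoRefl; for $a = V$ the two sides reduce to $\tB$ and $\tB'$, and since $\pair{\tB}{\tB'} \in \mathbin{\eqtypeco} \subseteq \S$ monotonicity gives $\pair{\tB}{\tB'} \in \Phieqtypeco(\S)$. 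For $\tA = \datatype{\tD}{\tA_0}$ (resp.\ $\tA = \functype{\tA_1}{\tA_2}$), invertibility yields a matching root for $\tA'$ with componentwise-equivalent children, Lem.~\ref{lem:treeSubstitution}(iii) distributes the substitution over that root, and the resulting child pairs lie in $\S$ by construction, so the pair is in $\Phieqtypeco(\S)$ via \ruleEqcoComp (resp.\ \ruleEqcoFunc).

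The delicate case is $n + m > 2$, where $\tA \eqtypeco \tA'$ arises from \ruleEqcoUnion, supplying $f : 1..n \to 1..m$ and $g : 1..m \to 1..n$ with $\tA_i \eqtypeco \tA'_{f(i)}$ and $\tA_{g(j)} \eqtypeco \tA'_j$. The complication — and the main obstacle — is that $\substitute{V}{\tB}{\tA}$ need not decompose as $\maxuniontype{i \in 1..n}{\substitute{V}{\tB}{\tA_i}}$ into \emph{non-union} summands, as \ruleEqcoUnion demands: if some $\tA_i$ is the variable $V$, substitution turns that summand into $\tB$, which may itself be a union. I would therefore split $1..n$ into $N = \set{i \mathrel| \tA_i \neq V}$ and $U = \set{i \mathrel| \tA_i = V}$, and likewise $N', U'$ on the primed side. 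By invertibility of atoms (Lem.~\ref{lem:equalityIsInvertible}(1)) a $V$-summand can only be equivalent to a $V$-summand, so $f$ and $g$ respect the split, \ie $f(N) \subseteq N'$, $f(U) \subseteq U'$, and dually for $g$; in particular $U = \emptyset$ iff $U' = \emptyset$.

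The maximal decomposition of $\substitute{V}{\tB}{\tA}$ is then the non-union summands $\substitute{V}{\tB}{\tA_i}$ for $i \in N$ together with, for each $i \in U$, the maximal summands of $\tB$ (and symmetrically on the primed side with $\tB'$). To produce the two coverings required by \ruleEqcoUnion I would send each $N$-summand $\substitute{V}{\tB}{\tA_i}$ to the legitimate $N'$-summand $\substitute{V}{\tB'}{\tA'_{f(i)}}$, whose pair is in $\S$ because $\tA_i \eqtypeco \tA'_{f(i)}$ and $\tB \eqtypeco \tB'$, and send each $\tB$-summand $\tB_k$ to a $\tB'$-summand $\tB'_l$ with $\tB_k \eqtypeco \tB'_l$ — available from $\tB \eqtypeco \tB'$ (directly, or through the covering \ruleEqcoUnion provides when $\tB$, $\tB'$ are themselves unions), with $\pair{\tB_k}{\tB'_l} \in \mathbin{\eqtypeco} \subseteq \S$; the reverse covering is built dually from $g$. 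Finally, since each original summand contributes at least one summand after substitution, the new index sets $\hat I$, $\hat J$ satisfy $\size{\hat I} + \size{\hat J} \geq n + m > 2$, so the side condition of \ruleEqcoUnion holds and the pair lands in $\Phieqtypeco(\S)$. The crux throughout is exactly this interaction between substitution and the non-union requirement of \ruleEqcoUnion, resolved by exploiting that $\eqtypeco$ builds in idempotence of $\iuniontype$ and that $\tB \eqtypeco \tB'$ re-supplies the covering for the freshly created summands.
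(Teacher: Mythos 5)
Your proof is correct and follows essentially the same route as the paper's: the same candidate relation $\S$ (the paper works with $\S \cup \mathbin{\eqtypeco}$, which your observation that $\mathbin{\eqtypeco} \subseteq \S$ via $\tB = \tB' = V$ renders unnecessary), a density check for $\Phieqtypeco$, maximal-union decomposition of $\tA$ and $\tA'$, and the same appeals to Lem.~\ref{lem:equalityIsInvertible} and Lem.~\ref{lem:treeSubstitution}. Your treatment of the $n+m>2$ case is in fact more careful than the paper's, which simply pairs $\substitute{V}{\tB}{\tA_i}$ with $\substitute{V}{\tB'}{\tA'_{f(i)}}$ and concludes ``by definition of $\Phieqtypeco$'', silently passing over exactly the point you isolate --- that when $\tA_i = V$ the summand $\substitute{V}{\tB}{\tA_i} = \tB$ may itself be a union, so that decomposition need not be maximal; your split into $N$ and $U$, the observation that $f$ and $g$ respect it by invertibility of atoms, and the re-covering supplied by $\tB \eqtypeco \tB'$ close that gap correctly.
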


\begin{proof}
Let $\S = \set{\pair{\substitute{V}{\tB}{\tA}}{\substitute{V}{\tB'}{\tA'}}
\mathrel| \tA \eqtypeco \tA', \tB \eqtypeco \tB'}$. We show that $\S \cup
\mathbin{\eqtypeco}$ is $\Phieqtypeco$-dense.

Let $\pair{\tC}{\tC'} \in \S \cup \mathbin{\eqtypeco}$. If $\pair{\tC}{\tC'}
\in \mathbin{\eqtypeco}$ the result is immediate by monotonicity of
$\Phieqtypeco$, since $\mathbin{\eqtypeco} =
\Phieqtypeco\!\left(\eqtypeco\right) \subseteq \Phieqtypeco\!\left(\S \cup
\mathbin{\eqtypeco}\right)$. Then we only present the case where
$\pair{\tC}{\tC'} \in \S$, $\tC = \substitute{V}{\tB}{\tA}$ and $\tC' =
\substitute{V}{\tB'}{\tA'}$ with $\tA \eqtypeco \tA'$ and $\tB \eqtypeco \tB'$.
Assume, without loss of generality $$
\begin{array}{r@{\quad\text{with}\quad}l}
\tA  = \maxuniontype{i \in 1..n}{\tA_i}  & \tA_i  \neq \iuniontype, i \in 1..n \\
\tA' = \maxuniontype{j \in 1..m}{\tA'_j} & \tA'_j \neq \iuniontype, j \in 1..m
\end{array} $$

\begin{enumerate}
  \item If $n = m = 1$ (\ie $\tA, \tA' \neq \iuniontype$), we analyze the shape
  of $\tA$:
  \begin{itemize}
    \item $\tA = a$. By Lem.~\ref{lem:equalityIsInvertible}, $\tA' = a$ and we
    have two possible cases. If $a \neq V$, by Lem.~\ref{lem:treeSubstitution}
    (ii), $\tC = a = \tC'$. If not, by Lem.~\ref{lem:treeSubstitution} (i),
    $\tC = \tB \eqtypeco \tB' = \tC'$. Both cases are immediate by definition
    of $\mathbin{\eqtypeco} \subseteq \Phieqtypeco\!\left(\S \cup
    \mathbin{\eqtypeco}\right)$.
    
    \item $\tA = \datatype{\tD}{\tA_1}$. By
    Lem.~\ref{lem:equalityIsInvertible}, $\tA' = \datatype{\tD'}{\tA'_1}$ with
    $\tD \eqtypeco \tD'$ and $\tA_1 \eqtypeco \tA'_1$. Then, by definition of
    $\S$, we have $\pair{\substitute{V}{\tB}{\tD}}{\substitute{V}{\tB'}{\tD'}}$
    and $\pair{\substitute{V}{\tB}{\tA_1}}{\substitute{V}{\tB'}{\tA'_1}} \in \S
    \cup \mathbin{\eqtypeco}$. Finally we conclude $\pair{\tC}{\tC'} \in
    \Phieqtypeco\!\left(\S \cup \mathbin{\eqtypeco}\right)$ since, by
    Lem.~\ref{lem:treeSubstitution} (iii), $$
\begin{array}{r@{\quad=\quad}c@{\quad=\quad}l}
\tC  & \substitute{V}{\tB}{(\datatype{\tD}{\tA_1})}    & \datatype{\substitute{V}{\tB}{\tD}}{\substitute{V}{\tB}{\tA_1}} \\
\tC' & \substitute{V}{\tB'}{(\datatype{\tD'}{\tA'_1})} & \datatype{\substitute{V}{\tB'}{\tD'}}{\substitute{V}{\tB'}{\tA'_1}}
\end{array} $$

    \item $\tA = \functype{\tA_1}{\tA_2}$. As before, by
    Lem.~\ref{lem:equalityIsInvertible}, we get $\tA =
    \functype{\tA'_1}{\tA'_2}$ with $\tA_1 \eqtypeco \tA'_1$ and $\tA_2
    \eqtypeco \tA'_2$. By definition $\S$ we have
    $\pair{\substitute{V}{\tB}{\tA_1}}{\substitute{V}{\tB'}{\tA'_1}}$ and
    $\pair{\substitute{V}{\tB}{\tA_2}}{\substitute{V}{\tB'}{\tA'_2}} \in \S
    \cup \mathbin{\eqtypeco}$. Thus, we conclude by
    Lem.~\ref{lem:treeSubstitution} (iii), $\pair{\tC}{\tC'} \in
    \Phieqtypeco\!\left(\S \cup \mathbin{\eqtypeco}\right)$.
  \end{itemize}
  
  \item If $n + m > 2$, by $\ruleEqcoUnion$ we have $$
\begin{array}{r@{\quad\text{s.t.}\quad}c@{\quad\text{for every}\quad}l}
\exists f : 1..n \to 1..m & \tA_i \eqtypeco \tA'_{f(i)} & i \in 1..n \\
\exists g : 1..m \to 1..n & \tA_{g(j)} \eqtypeco \tA'_j & j \in 1..m
\end{array} $$ Then,
  $\pair{\substitute{V}{\tB}{\tA_i}}{\substitute{V}{\tB'}{\tA'_{f(i)}}}$ and
  $\pair{\substitute{V}{\tB}{\tA_{g(j)}}}{\substitute{V}{\tB'}{\tA'_j}} \in \S
  \cup \mathbin{\eqtypeco}$ for every $i \in 1..n, j \in 1..m$. Once again we
  conclude by definition of $\Phieqtypeco$ and Lem.~\ref{lem:treeSubstitution}
  (iii), $\pair{\tC}{\tC'} \in
  \Phieqtypeco\!\left(\S \cup \mathbin{\eqtypeco}\right)$.
\end{enumerate}
\end{proof}

\subsubsection{Subtyping of trees}

In a similar way we have a coinductive characterization of subtyping over trees.

\begin{definition}
Infinite type subtyping, written $\subtypeco$, is defined by the coinductive
interpretation of the schemes in Fig.~\ref{fig:subtypingSchemesCo}. 
\end{definition}

The most interesting rule in Fig.~\ref{fig:subtypingSchemesCo} is
$\ruleSubcoUnion$. Here, for a maximal union type of the form $\maxuniontype{i
\in 1..n}{\tA_i}$ to be a subtype of a maximal union type $\maxuniontype{j \in
1..m}{\tB_j}$, one of the two must have at least one occurrence of the union
type construct ($n + m > 2$) and there must be a function $f : 1..n \to 1..m$
such that $\tA_i \subtypeco \tB_{f(i)}$ for each $i \in 1..n$. 

\begin{figure}[t] $$
\begin{array}{c}
\RuleCo{}
       {a \subtypeco a}
       {\ruleSubcoRefl}
\\
\\
\RuleCo{\tA' \subtypeco \tA \quad \tB \subtypeco \tB'}
       {\functype{\tA}{\tB} \subtypeco \functype{\tA'}{\tB'}}
       {\ruleSubcoFunc}
\qquad
\RuleCo{\tD \subtypeco \tD' \quad \tA \subtypeco \tA'}
       {\datatype{\tD}{\tA} \subtypeco \datatype{\tD'}{\tA'}}
       {\ruleSubcoComp}
\\
\\
\RuleCo{\tA_i \subtypeco \tB_{f(i)}
        \quad
        f : 1..n \to 1..m
        \quad
        \tA_i, \tB_j\neq \iuniontype
        \quad
        n + m > 2}
       {\maxuniontype{i \in 1..n}{\tA_i} \subtypeco \maxuniontype{j \in 1..m}{\tB_j}}
       {\ruleSubcoUnion}
\end{array} $$
\caption{Subtyping relation for infinite types}
\label{fig:subtypingSchemesCo}
\end{figure}

\begin{remark}
The rules are derived from those of Fig.~\ref{fig:subtypingSchemesMu}. More
precisely, rules $\ruleSubmuUnionRL$, $\ruleSubmuUnionRR$ and
$\ruleSubmuUnionL$ of Fig.~\ref{fig:subtypingSchemesMu} and the observation
that $\ruleSubmuUnionRL$ and $\ruleSubmuUnionRR$ can always be permuted past
$\ruleSubmuUnionL$.
\end{remark}

As above, the formal definition of the subtyping relation is given by the
associated function $\Phisubtypeco : \Parts{\Tree \times \Tree} \to
\Parts{\Tree \times \Tree}$ defined next: $$
\begin{array}{rcl}
\Phisubtypeco(\S) & =    & \set{\pair{a}{a} \mathrel| a \in \TypeVariable \cup \TypeConstant} \\
                  & \cup & \set{\pair{\datatype{\tD}{\tA}}{\datatype{\tD'}{\tA'}} \mathrel| \pair{\tD}{\tD'}, \pair{\tA}{\tA'} \in \S} \\
                  & \cup & \set{\pair{\functype{\tA}{\tB}}{\functype{\tA'}{\tB'}} \mathrel| \pair{\tA'}{\tA}, \pair{\tB}{\tB'} \in \S} \\
                  & \cup & \{\pair{\maxuniontype{i \in 1..n}{\tA_i}}{\maxuniontype{j \in 1..m}{\tB_j}} \mathrel| \tA_i, \tB_j \neq \iuniontype, n + m > 2 \\
                  &      & \qquad \exists f : 1..n \to 1..m \text{ s.t. } \pair{\tA_i}{\tB_{f(i)}} \in \S\}
\end{array} $$

Then $\mathbin{\subtypeco} = \nu\Phisubtypeco$. We now address some properties
of subtyping.

\begin{lemma}[Subtyping is a preorder]
$\subtypeco$ is a preorder (\ie reflexive and transitive).
\end{lemma}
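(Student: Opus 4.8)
The plan is to reuse the coinductive method employed for $\eqtypeco$: for each property I exhibit a relation that is $\Phisubtypeco$-dense and then invoke the coinductive principle to conclude that it is contained in $\nu\Phisubtypeco = \subtypeco$.

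\textbf{Reflexivity.} I would take $\Refl = \set{\pair{\tA}{\tA} \mathrel| \tA \in \Tree}$ and show $\Refl \subseteq \Phisubtypeco(\Refl)$ by cases on $\tA$ written in maximal union form $\maxuniontype{i \in 1..n}{\tA_i}$. The atom, compound and arrow cases ($n = 1$) are immediate from the reflexive component pairs in $\Refl$ (for the arrow, contravariance of the domain is harmless because the two sides coincide); the union case ($n > 1$) is settled by choosing $f = \mathit{id}$, so that $\tA_i \subtypeco \tA_{f(i)}$ holds because each $\pair{\tA_i}{\tA_i} \in \Refl$ and $n + n > 2$. This mirrors reflexivity for $\eqtypeco$ exactly.

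\textbf{Transitivity.} I would set $\Trans(\subtypeco) = \set{\pair{\tA}{\tB} \mathrel| \exists \tC \in \Tree.\, \pair{\tA}{\tC}, \pair{\tC}{\tB} \in \mathbin{\subtypeco}}$ and prove $\Trans(\subtypeco) \subseteq \Phisubtypeco(\Trans(\subtypeco))$. For $\pair{\tA}{\tB}$ with witness $\tC$, I write $\tA = \maxuniontype{i \in 1..n}{\tA_i}$, $\tC = \maxuniontype{k \in 1..l}{\tC_k}$ and $\tB = \maxuniontype{j \in 1..m}{\tB_j}$ in maximal union form and split on $n + l + m$. If $n = l = m = 1$ all three are non-union and I case on the shape of $\tC$, reading off the structure of $\tA$ and $\tB$ directly from the definition of $\Phisubtypeco$: for $\tC = a$ both neighbours equal $a$; for a compound the two component pairs land in $\Trans(\subtypeco)$ through the middle term $\tC$; for an arrow $\tC = \functype{\tC'}{\tC''}$ the same holds with the domain pair reversed by contravariance, i.e. writing $\tA = \functype{\tA'}{\tA''}$ and $\tB = \functype{\tB'}{\tB''}$ the pairs $\pair{\tB'}{\tA'}$ and $\pair{\tA''}{\tB''}$ belong to $\Trans(\subtypeco)$. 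Otherwise $n + l + m > 3$, and I distinguish (i) $n + l > 2$ and $l + m > 2$, where $\ruleSubcoUnion$ governs both sub-derivations, and the two symmetric mixed cases (ii) $n > 1$, $l = m = 1$ and (iii) $m > 1$, $n = l = 1$, where one side uses $\ruleSubcoUnion$ and the other a non-union rule, handled exactly as in the corresponding case of the $\eqtypeco$ transitivity proof via a constant selecting function.

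The delicate point is the union bookkeeping in case (i): from witness maps $f : 1..n \to 1..l$ with $\tA_i \subtypeco \tC_{f(i)}$ and $f' : 1..l \to 1..m$ with $\tC_k \subtypeco \tB_{f'(k)}$, I must check that $f' \circ f : 1..n \to 1..m$ satisfies $\tA_i \subtypeco \tB_{(f' \circ f)(i)}$, which follows since $\tA_i \subtypeco \tC_{f(i)} \subtypeco \tB_{f'(f(i))}$ places $\pair{\tA_i}{\tB_{(f' \circ f)(i)}} \in \Trans(\subtypeco)$; note that, unlike $\eqtypeco$, only a single covariant map is carried by $\ruleSubcoUnion$, so no second map need be composed, making this argument slightly lighter than its equivalence counterpart. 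The remaining care is simply ensuring that the degenerate configurations in which one side collapses to a non-union reduce to the case-1 analysis. I do not anticipate needing the later invertibility lemma (Lem.~\ref{lem:subtypingIsInvertible}) here, since reading off structure directly from $\Phisubtypeco$ suffices and avoids any circularity.
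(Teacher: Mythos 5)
Your proposal is correct and follows exactly the route the paper intends: the paper's proof of this lemma is simply ``similar to the one presented before for $\eqtypeco$,'' and your argument is precisely that adaptation, exhibiting $\Phisubtypeco$-dense relations for reflexivity and transitivity and handling the same case split on maximal union forms (including the degenerate configuration $n=m=1$, $l>1$ that must fall back to the non-union analysis). Your observations that $\ruleSubcoUnion$ carries only the single covariant map and that invertibility is not needed (avoiding circularity) are both accurate.
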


\begin{proof}
This proof is similar to the one presented before for $\eqtypeco$.
\end{proof}

The following notion of invertibility (Lem.~\ref{lem:subtypingIsInvertible}) is
the main result of the present Section and an essential property to prove
Subject Reduction (Prop.~\ref{prop:subjectReduction}) and Progress
(Prop.~\ref{prop:progress}) for the type system proposed in
Sec.~\ref{sec:typingSystem}.

\begin{lemma}[Subtyping of non-union types is invertible]
\label{lem:subtypingIsInvertible}
Let $\tA, \tB \in \Tree$ be non-union types. Suppose $\tA \subtypeco \tB$.
\begin{enumerate}
  \item If $\tA = a$, then $\tB = a$.
  \item If $\tA = \datatype{\tD}{\tA'}$, then $\tB = \datatype{\tD'}{\tB'}$
  with $\tD \subtypeco \tD'$ and $\tA' \subtypeco \tB'$.
  \item If $\tA = \functype{\tA'}{\tA''}$, then $\tB = \functype{\tB'}{\tB''}$
  with $\tB' \subtypeco \tA'$ and $\tA'' \subtypeco \tB''$.
\end{enumerate}
\end{lemma}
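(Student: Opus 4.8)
The plan is to unfold the coinductive definition once and then perform a case analysis on the root symbol of $\tA$. Since $\mathbin{\subtypeco} = \nu\Phisubtypeco$ is by construction a fixed point of $\Phisubtypeco$, we have $\mathbin{\subtypeco} = \Phisubtypeco(\subtypeco)$, so from $\tA \subtypeco \tB$ we obtain $\pair{\tA}{\tB} \in \Phisubtypeco(\subtypeco)$. It then suffices to inspect which of the four defining clauses of $\Phisubtypeco$ could have contributed this pair.

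The key observation, which is what makes the analysis clean, is that the union clause cannot apply. That clause only produces pairs $\pair{\maxuniontype{i \in 1..n}{\tA_i}}{\maxuniontype{j \in 1..m}{\tB_j}}$ subject to the side condition $n + m > 2$. But since both $\tA$ and $\tB$ are non-union types, writing them as maximal union types (Rem.~\ref{rem:maximalUnionTypes}) forces $n = m = 1$, hence $n + m = 2$, contradicting the side condition. Consequently $\pair{\tA}{\tB}$ must originate from one of the first three clauses of $\Phisubtypeco$, each of which is determined entirely by the root constructor of $\tA$.

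The three cases then read off directly. If $\tA = a$ with $a \in \TypeVariable \cup \TypeConstant$, only the reflexivity clause matches a pair whose left component has this shape, forcing $\tB = a$. If $\tA = \datatype{\tD}{\tA'}$, only the compound clause applies, so $\tB = \datatype{\tD'}{\tB'}$ and the clause's premises give exactly $\pair{\tD}{\tD'}, \pair{\tA'}{\tB'} \in \mathbin{\subtypeco}$, i.e.\ $\tD \subtypeco \tD'$ and $\tA' \subtypeco \tB'$. If $\tA = \functype{\tA'}{\tA''}$, only the function clause applies, yielding $\tB = \functype{\tB'}{\tB''}$; here one must take care to respect the contravariance recorded in the clause, which pairs the domains in reversed order, so the premises are $\pair{\tB'}{\tA'}, \pair{\tA''}{\tB''} \in \mathbin{\subtypeco}$, that is $\tB' \subtypeco \tA'$ and $\tA'' \subtypeco \tB''$.

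There is essentially no obstacle beyond the bookkeeping: the argument is the exact analogue of Lem.~\ref{lem:equalityIsInvertible}, differing only in the contravariant placement of the domain components in the function case. The single point that genuinely requires the hypothesis is the exclusion of the union clause, which is precisely where the assumption that both $\tA$ and $\tB$ are non-union is used.
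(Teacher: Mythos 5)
Your proof is correct and matches the paper's intent exactly: the paper leaves this lemma without an explicit proof, treating it (like the analogous Lem.~\ref{lem:equalityIsInvertible}) as immediate from the definition because only one clause of $\Phisubtypeco$ applies in each case. Your write-up simply spells out that argument — one unfolding of $\nu\Phisubtypeco$, exclusion of the union clause via the side condition $n+m>2$, and reading off the premises of the remaining clause, with the contravariance in the function case handled correctly.
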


\begin{remark}
In each of the three items of Lem.~\ref{lem:subtypingIsInvertible} the roles of
$\tA$ and $\tB$ can be reversed. 
\end{remark}


\begin{lemma}
\label{lem:eqImpliesSub}
$\tA \eqtypeco \tB \implies \tA \subtypeco \tB$.
\end{lemma}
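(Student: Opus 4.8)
The plan is to proceed coinductively, exactly as in the proofs that $\eqtypeco$ is an equivalence relation and that $\subtypeco$ is a preorder. Concretely, I would show that the relation $\eqtypeco$ is itself $\Phisubtypeco$-dense, that is $\eqtypeco \subseteq \Phisubtypeco(\eqtypeco)$, and then invoke the coinductive principle to conclude $\eqtypeco \subseteq \nu\Phisubtypeco = \subtypeco$.

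To establish density, I would take an arbitrary pair $\pair{\tA}{\tB} \in \eqtypeco = \Phieqtypeco(\eqtypeco)$ and show $\pair{\tA}{\tB} \in \Phisubtypeco(\eqtypeco)$ by analysing which clause of $\Phieqtypeco$ places $\pair{\tA}{\tB}$ in $\eqtypeco$. There are four cases, matching the four clauses in the definition of $\Phieqtypeco$. In the atom case $\tA = \tB = a$, the pair lies in $\Phisubtypeco(\eqtypeco)$ by the reflexive clause (rule $\ruleSubcoRefl$). In the compound case $\tA = \datatype{\tD}{\tA'}$, $\tB = \datatype{\tD'}{\tB'}$ with $\pair{\tD}{\tD'}, \pair{\tA'}{\tB'} \in \eqtypeco$, these two premises are exactly what the $\idatatype$-clause of $\Phisubtypeco$ requires (rule $\ruleSubcoComp$). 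In the union case $\tA = \maxuniontype{i \in 1..n}{\tA_i}$, $\tB = \maxuniontype{j \in 1..m}{\tB_j}$, the equivalence provides both a function $f$ with $\pair{\tA_i}{\tB_{f(i)}} \in \eqtypeco$ and a function $g$ in the opposite direction; the subtyping clause ($\ruleSubcoUnion$) demands only the former, so it is satisfied, and the side condition $n+m>2$ carries over verbatim.

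The one case requiring a genuine observation is the arrow case, since subtyping is contravariant in the domain. Here $\tA = \functype{\tA_1}{\tA_2}$ and $\tB = \functype{\tB_1}{\tB_2}$ with $\pair{\tA_1}{\tB_1}, \pair{\tA_2}{\tB_2} \in \eqtypeco$ by rule $\ruleEqcoFunc$, whereas the $\ifunctype$-clause of $\Phisubtypeco$ (rule $\ruleSubcoFunc$) requires $\pair{\tB_1}{\tA_1} \in \eqtypeco$ together with $\pair{\tA_2}{\tB_2} \in \eqtypeco$. The covariant premise is immediate; for the contravariant one I would invoke the symmetry of $\eqtypeco$ (already established in the lemma stating that $\eqtypeco$ is an equivalence relation) to turn $\pair{\tA_1}{\tB_1} \in \eqtypeco$ into $\pair{\tB_1}{\tA_1} \in \eqtypeco$. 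With this, $\pair{\tA}{\tB} \in \Phisubtypeco(\eqtypeco)$ in every case, density holds, and the coinductive principle yields the claim. Thus the symmetry of $\eqtypeco$ is the only nontrivial ingredient; everything else is a direct matching of premises between the two functionals, which is why I expect no real obstacle here.
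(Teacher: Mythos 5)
Your proposal is correct and follows essentially the same route as the paper: both show that $\eqtypeco$ is $\Phisubtypeco$-dense by case analysis on the clause of $\Phieqtypeco$ (equivalently, on maximal union decompositions), discarding the backward function $g$ in the union case and invoking symmetry of $\eqtypeco$ to supply the contravariant premise of $\ruleSubcoFunc$. Nothing is missing.
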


\begin{proof}
We show that $\mathbin{\eqtypeco} = \Phieqtypeco\!\left(\eqtypeco\right)$ is
$\Phisubtypeco$-dense. Let $\pair{\tA}{\tB} \in \mathbin{\eqtypeco}$. By
Rem.~\ref{rem:maximalUnionTypes} we can consider maximal union types $$
\begin{array}{r@{\quad\text{with}\quad}l}
\tA = \maxuniontype{i \in 1..n}{\tA_i} & \tA_i \neq \iuniontype, i \in 1..n \\
\tB = \maxuniontype{j \in 1..m}{\tB_j} & \tB_j \neq \iuniontype, j \in 1..m
\end{array} $$ and we have two separate cases to analyze:
\begin{enumerate}
  \item If $n = m = 1$, then both $\tA$ and $\tB$ are non-union types. Now we
  proceed by analyzing the shape of $\tA$:
  \begin{itemize}
    \item $\tA = a$. Then, by definition of $\Phieqtypeco$, $\tB = a$ and the
    result is immediate since $\pair{a}{a} \in
    \Phisubtypeco\!\left(\eqtypeco\right)$ for every $a \in \TypeVariable \cup
    \TypeConstant$.
    
    \item $\tA = \datatype{\tD}{\tA'}$. Again, by definition of $\Phieqtypeco$,
    we have $\tB = \datatype{\tD'}{\tB'}$ with $\pair{\tD}{\tD'},
    \pair{\tA'}{\tB'} \in \mathbin{\eqtypeco}$. Then we conclude by definition
    of $\Phisubtypeco$, $\pair{\datatype{\tD}{\tA'}}{\datatype{\tD'}{\tB'}} \in
    \Phisubtypeco\!\left(\eqtypeco\right)$.
    
    \item $\tA = \functype{\tA'}{\tA''}$. Similarly, $\tB =
    \functype{\tB'}{\tB''}$ with $ \pair{\tA'}{\tB'}, \pair{\tA''}{\tB''} \in
    \mathbin{\eqtypeco}$. By symmetry $\pair{\tB'}{\tA'} \in
    \mathbin{\eqtypeco}$ and we conclude $\pair{\tA}{\tB} \in
    \Phisubtypeco\!\left(\eqtypeco\right)$.
  \end{itemize}
  
  \item If not (\ie $n + m > 2$), rule $\ruleEqcoUnion$ applies. Then $$
\begin{array}{r@{\quad\text{s.t.}\quad}c@{\quad\text{for every}\quad}l}
\exists f : 1..n \to 1..m & \pair{\tA_i}{\tB_{f(i)}} \in \mathbin{\eqtypeco} & i \in 1..n \\
\exists g : 1..m \to 1..n & \pair{\tA_{g(j)}}{\tB_j} \in \mathbin{\eqtypeco} & j \in 1..m
\end{array} $$ Thus, we conclude with the same function $f$, $\pair{A}{B}
  \in \Phisubtypeco\!\left(\eqtypeco\right)$.
\end{enumerate}
\end{proof}

To prove the correspondence of the coinductive formulation with the inductive
approach, it is convenient to work with finite trees (types). Thus, we
introduce a characterisation of the equivalence and subtyping relations in
terms of finite truncations of infinite trees.

We denote with $\card{\iuniontype}{\tA}$ the maximal number of adjacent union
type nodes, starting from the root of $\tA$: $$
\card{\iuniontype}{A} \eqdef \left\{
\begin{array}{l@{\quad\text{if}\quad}l}
0                                                         & \tA \neq \iuniontype \\
1 + \card{\iuniontype}{\tA_1} + \card{\iuniontype}{\tA_2} & \tA = \tA_1 \iuniontype \tA_2
\end{array}\right.
$$ Recall that, by definition of $\Tree$, a type cannot consist of infinitely
many consecutive occurrences of $\iuniontype$. Thus, the previous inductive
definition is well-founded, as well as the following:

\begin{definition}
\label{def:treeCut}
The \emphdef{truncation} of a tree $\tA$ at depth $k \in \Natural$ (notation
$\cut{\tA}{k}$) is defined inductively\footnote{Using the lexicographical
extension of the standard order to $\pair{k}{\card{\iuniontype}{\tA}}$.} as
follows: $$
\begin{array}{r@{\quad\eqdef\quad}l@{\quad}l}
\cut{\tA}{0}                          & \bullet \\
\cut{a}{k+1}                          & a                                   & \text{for $a \in \TypeVariable \cup \TypeConstant$} \\
\cut{(\tA_1 \star \tA_2)}{k+1}        & \cut{\tA_1}{k} \star \cut{\tA_2}{k} & \text{for $\star \in \set{\idatatype, \ifunctype}$} \\
\cut{(\uniontype{\tA_1}{\tA_2})}{k+1} & \uniontype{\cut{\tA_1}{k+1}}{\cut{\tA_2}{k+1}}
\end{array} $$ where $\bullet \in \TypeConstant$ is a distinguished type
constant used to identify the nodes where the tree was truncated.
\end{definition}

\begin{remark}
\label{rem:cutMaximalUnionTypes}
Given a maximal union type $\maxuniontype{i \in 1..n}{\tA_i}$, immediately from
the definition we have $\cut{(\maxuniontype{i \in 1..n}{\tA_i})}{k+1} =
\maxuniontype{i \in 1..n}{(\cut{\tA_i}{k+1})}$.
\end{remark}

\begin{lemma}
\label{lem:cutEquivalenceCo}
$\forall k \in \Natural.\cut{\tA}{k} \eqtypeco \cut{\tB}{k}$ iff $\tA \eqtypeco \tB$.
\end{lemma}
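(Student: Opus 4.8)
The plan is to prove both directions separately, exploiting the coinductive definition of $\eqtypeco$ together with the structural behaviour of truncation from Definition~\ref{def:treeCut}. The statement is a biconditional relating equivalence of all finite truncations to equivalence of the full (possibly infinite) trees, so I expect the forward direction (all truncations equivalent $\implies$ trees equivalent) to be the substantial one, established by exhibiting a $\Phieqtypeco$-dense set and invoking the coinductive principle; the backward direction should follow by induction on $k$ using the fact that $\eqtypeco$ respects the structural decomposition performed by $\cut{\cdot}{k}$.

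**Backward direction** ($\tA \eqtypeco \tB \implies \forall k.\ \cut{\tA}{k} \eqtypeco \cut{\tB}{k}$). First I would proceed by induction on $k$, appealing to Remark~\ref{rem:cutMaximalUnionTypes} to commute truncation past maximal unions. The base case $k=0$ is immediate since $\cut{\tA}{0} = \bullet = \cut{\tB}{0}$ and $\bullet \eqtypeco \bullet$ by \ruleEqcoRefl. For $k+1$, I would write both trees as maximal union types $\tA = \maxuniontype{i \in 1..n}{\tA_i}$ and $\tB = \maxuniontype{j \in 1..m}{\tB_j}$ and split on whether $n=m=1$ or $n+m>2$, mirroring the case analysis used throughout the preceding lemmas. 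In the non-union case I use Lemma~\ref{lem:equalityIsInvertible} to invert $\tA \eqtypeco \tB$ into equivalences of the immediate subtrees, apply the induction hypothesis at depth $k$, and reassemble via \ruleEqcoComp or \ruleEqcoFunc; in the union case the witnessing functions $f,g$ from \ruleEqcoUnion transfer directly to the truncated components since truncation acts componentwise on unions, so the same $f,g$ certify $\cut{\tA}{k+1} \eqtypeco \cut{\tB}{k+1}$.

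**Forward direction** ($\forall k.\ \cut{\tA}{k} \eqtypeco \cut{\tB}{k} \implies \tA \eqtypeco \tB$). This is where the real work lies. I would define the candidate relation
$$\S \eqdef \set{\pair{\tA}{\tB} \mathrel| \forall k \in \Natural.\ \cut{\tA}{k} \eqtypeco \cut{\tB}{k}}$$
and show $\S$ is $\Phieqtypeco$-dense, \ie $\S \subseteq \Phieqtypeco(\S)$, whence $\S \subseteq \nu\Phieqtypeco = \mathbin{\eqtypeco}$ by the coinductive principle. Given $\pair{\tA}{\tB} \in \S$, I again split $\tA,\tB$ into maximal union types and argue that the hypothesis at every depth forces the two trees to share their outermost structure: if $\cut{\tA}{k} \eqtypeco \cut{\tB}{k}$ holds for \emph{all} $k$, then in particular at $k=1$ (and using Lemma~\ref{lem:equalityIsInvertible} on the truncations) the root symbols must agree, and for a compound $\tA = \datatype{\tD}{\tA'}$ the hypotheses specialise to $\forall k.\ \cut{\tD}{k}\eqtypeco\cut{\tD'}{k}$ and $\forall k.\ \cut{\tA'}{k}\eqtypeco\cut{\tB'}{k}$, placing the subtree pairs back in $\S$ so that $\pair{\tA}{\tB} \in \Phieqtypeco(\S)$; the function case is analogous.

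**The main obstacle** will be the union case of the forward direction. From $\forall k.\ \cut{\tA}{k} \eqtypeco \cut{\tB}{k}$ I must extract \emph{single} witnessing functions $f : 1..n \to 1..m$ and $g : 1..m \to 1..n$ matching the union components, whereas a priori the truncation hypothesis only supplies, for each depth $k$, functions $f_k,g_k$ that may vary with $k$. The delicate point is that the number $n$ of maximal union summands is fixed and finite (unions do not contribute to depth, so a maximal union type has finitely many non-union components), and each component is itself non-union; hence each $\tA_i$ is matched at every depth to \emph{some} $\tB_{f_k(i)}$, and since the index set $1..m$ is finite one can, by a pigeonhole/stabilisation argument, choose an index hit infinitely often and verify that $\tA_i \eqtypeco \tB_{f(i)}$ holds at all depths for that fixed target — placing the component pair in $\S$. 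Making this stabilisation argument precise (ensuring the chosen $f$ works uniformly for all components simultaneously, and symmetrically for $g$) is the technical crux; once it is in place, the matched component pairs land in $\S$ and \ruleEqcoUnion (as encoded in $\Phieqtypeco$) closes the case.
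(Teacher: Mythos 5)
Your proposal follows the same core strategy as the paper: the substantive direction (all truncations equivalent implies $\tA \eqtypeco \tB$) is handled in both by showing that $\S = \set{\pair{\tA}{\tB} \mathrel| \forall k \in \Natural.\ \cut{\tA}{k} \eqtypeco \cut{\tB}{k}}$ is $\Phieqtypeco$-dense and invoking the coinduction principle, with the same case split into non-union roots (inverted via Lem.~\ref{lem:equalityIsInvertible}) and maximal unions. Two points of divergence are worth recording. First, for the easy direction the paper does not induct on $k$ but instead exhibits a second dense set, $\set{\pair{\cut{\tA}{k}}{\cut{\tB}{k}} \mathrel| \tA \eqtypeco \tB, k \in \Natural}$; your induction on $k$ also works, but note that truncation does not decrease $k$ at union nodes (Def.~\ref{def:treeCut} sends $k+1$ to $k+1$ there), so your inductive step must be organised lexicographically on $\pair{k}{\card{\iuniontype}{\tA}}$, exactly as the truncation itself is. Second, and more interestingly, the ``main obstacle'' you isolate --- that $\ruleEqcoUnion$ applied at each depth $k$ only yields depth-dependent witnesses $f_k, g_k$, whereas a single uniform $f$ is needed to place the component pairs in $\S$ --- is real, and the paper's own proof silently elides it: it extracts $f$ and $g$ at one depth and then asserts the resulting component equivalences ``for every $k$''. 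Your pigeonhole/stabilisation repair is sound: there are finitely many candidate pairs $(f, g)$, so some pair recurs for infinitely many $k$, and equivalence of truncations at depth $k' \geq k$ implies equivalence at depth $k$ (via the identity $\cut{(\cut{\tA}{k'})}{k} = \cut{\tA}{k}$ together with the already-established easy direction applied to the finite trees $\cut{\tA}{k'}$ and $\cut{\tB}{k'}$), which propagates the recurring witnesses to all depths simultaneously for all components. So your proposal is correct and, on the union case, more rigorous than the printed proof.
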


\begin{proof}
$\Rightarrow)$ We show that $\S \eqdef \set{\pair{\tA}{\tB} \mathrel| \forall
k \in \Natural. \cut{\tA}{k} \eqtypeco \cut{\tB}{k}}$ is $\Phieqtypeco$-dense.
Let $\pair{\tA}{\tB} \in \S$. Then, for every $k \in \Natural$ we have
$\cut{\tA}{k} \eqtypeco \cut{\tB}{k}$. Consider maximal union types $$
\begin{array}{r@{\quad\text{with}\quad}l}
\tA = \maxuniontype{i \in 1..n}{\tA_i} & \tA_i \neq \iuniontype, i \in 1..n \\
\tB = \maxuniontype{j \in 1..m}{\tB_j} & \tB_j \neq \iuniontype, j \in 1..m
\end{array} $$
\begin{enumerate}
  \item If $n = m = 1$ (\ie $\tA, \tB \neq \iuniontype$), we proceed by
  analyzing the shape of $\tA$:
  \begin{itemize}
    \item $\tA = a$. Then, $\cut{\tA}{k} = a$ for every $k > 0$ and, by
    Lem.~\ref{lem:equalityIsInvertible}, $\cut{\tB}{k} = a$. Hence, $\tB = a$
    and we conclude directly from the definition of $\Phieqtypeco$,
    $\pair{a}{a} \in \Phieqtypeco\!\left(\S\right)$.
    
    \item $\tA = \datatype{\tD}{\tA'}$. Similarly, we have $\cut{\tA}{k} =
    \datatype{\cut{\tD}{k-1}}{\cut{\tA'}{k-1}}$ for every $k > 0$. By
    Lem.~\ref{lem:equalityIsInvertible} once again, we get $\cut{\tB}{k} =
    \datatype{\tD'_k}{\tB'_k}$ with $\cut{\tD}{k-1} \eqtypeco \tD'_k$ and
    $\cut{\tA'}{k-1} \eqtypeco \tB'_k$. Note that for every $k$ we have
    different subtrees $\tD'_k$ and $\tB'_k$ but, since
    Lem.~\ref{lem:equalityIsInvertible} refers to tree equality (not
    equivalence) when determining the shape of $\tB$, it is immediate to see
    from the definition of the truncation that $\tB = \datatype{\tD'}{\tB'}$
    with $\tD'_k = \cut{\tD'}{k-1}$ and $\tB'_k = \cut{\tB'}{k-1}$ for every
    $k > 0$. Hence, $\cut{\tD}{k-1} \eqtypeco \cut{\tD'}{k-1}$ and
    $\cut{\tA'}{k-1} \eqtypeco \cut{\tB'}{k-1}$ for every $k > 0$. Then, by
    definition of $\S$, $\pair{\tD}{\tD'}, \pair{\tA'}{\tB'} \in \S$ and we
    conclude $\pair{\datatype{\tD}{\tA'}}{\datatype{\tD'}{\tB'}} \in
    \Phieqtypeco\!\left(\S\right)$.
    
    \item $\tA = \functype{\tA'}{\tA''}$. Analysis for this case is similar to
    the previous one. From $\cut{\tA}{k} =
    \functype{\cut{\tA'}{k-1}}{\cut{\tA''}{k-1}}$ we get $\tB =
    \functype{\tB'}{\tB''}$ with $\cut{\tA'}{k-1} \eqtypeco \cut{\tB'}{k-1}$
    and $\cut{\tA''}{k-1} \eqtypeco \cut{\tB''}{k-1}$ for every $k > 0$. Then
    we have $\pair{\tA'}{\tB'}, \pair{\tA''}{\tB''} \in \S$ and conclude
    $\pair{\functype{\tA'}{\tA''}}{\functype{\tB'}{\tB''}} \in
    \Phieqtypeco\!\left(\S\right)$.
  \end{itemize}
  
  \item If $n + m > 2$ we have $\cut{\tA}{k} =
  \maxuniontype{i \in 1..n}{(\cut{\tA_i}{k})}$ and $\cut{\tB}{k} =
  \maxuniontype{j \in 1..m}{(\cut{\tB_j}{k})}$ for every $k > 0$. From
  $\cut{\tA}{k} \eqtypeco \cut{\tB}{k}$, by $\ruleEqcoUnion$, we get $$
\begin{array}{r@{\quad\text{s.t.}\quad}c@{\quad\text{for every}\quad}l}
\exists f : 1..n \to 1..m & \cut{\tA_i}{k} \eqtypeco \cut{\tB_{f(i)}}{k} & i \in 1..n \\
\exists g : 1..m \to 1..n & \cut{\tA_{g(j)}}{k} \eqtypeco \cut{\tB_j}{k} & j \in 1..m
\end{array} $$
  Since $\cut{\tC}{0} = \bullet$ for every $\tC \in \Tree$, we have
  $\cut{\tA_i}{0} \eqtypeco \cut{\tB_{f(i)}}{0}$ and $\cut{\tA_{g(j)}}{0}
  \eqtypeco \cut{\tB_j}{0}$ by reflexivity. Thus, $\cut{\tA_i}{k} \eqtypeco
  \cut{\tB_{f(i)}}{k}$ and $\cut{\tA_{g(j)}}{k} \eqtypeco \cut{\tB_j}{k}$ for
  every $k \in \Natural$. Then, by definition of $\S$,
  $\pair{\tA_i}{\tB_{f(i)}}, \pair{\tA_{g(j)}}{\tB_j} \in \S$ for every $i \in
  1..n$, $j \in 1..m$. Finally, we conclude $\pair{\tA}{\tB} \in
  \Phieqtypeco\!\left(\S\right)$.
\end{enumerate}

$\Leftarrow)$ For this part of the proof we show that the converse relation $\S
\eqdef \set{\pair{\cut{\tA}{k}}{\cut{\tB}{k}} \mathrel| \tA \eqtypeco \tB, k
\in \Natural}$ is $\Phieqtypeco$-dense. Let $\pair{\cut{\tA}{k}}{\cut{\tB}{k}}
\in \S$. If $k = 0$, by definition of the truncation, $\cut{\tA}{k} = \bullet =
\cut{\tB}{k}$ and trivially $\pair{\bullet}{\bullet} \in
\Phieqtypeco\left(\S\right)$. We analyze next the cases where $k > 0$ given
that, by definition of $\S$, $\tA \eqtypeco \tB$. Once again we consider
maximal union types $$
\begin{array}{r@{\quad\text{with}\quad}l}
\tA = \maxuniontype{i \in 1..n}{\tA_i} & \tA_i \neq \iuniontype, i \in 1..n \\
\tB = \maxuniontype{j \in 1..m}{\tB_j} & \tB_j \neq \iuniontype, j \in 1..m
\end{array} $$ and analyze separately the cases where both $\tA$ and $\tB$ ar
non-union types.
\begin{enumerate}
  \item If $n = m = 1$ we a look at the shape of $A$:
  \begin{itemize}
    \item $\tA = a$. By Lem.~\ref{lem:equalityIsInvertible}, $\tB = a$ and
    $\cut{a}{k} = a$ for every $k > 0$. Then we conclude by definition of
    $\Phieqtypeco$, $\pair{a}{a} \in \Phieqtypeco\left(\S\right)$.
    
    \item $\tA = \datatype{\tD}{\tA'}$. By Lem.~\ref{lem:equalityIsInvertible},
    $\tB = \datatype{\tD'}{\tB'}$ with $\tD \eqtypeco \tD'$ and $\tA' \eqtypeco
    \tB'$. Then, by definition of $\S$,
    $\pair{\cut{\tD}{k-1}}{\cut{\tD'}{k-1}},
    \pair{\cut{\tA'}{k-1}}{\cut{\tB'}{k-1}} \in \S$ and we conclude
    $\pair{\cut{\tA}{k}}{\cut{\tB}{k}} =
    \pair{\datatype{\cut{\tD}{k-1}}{\cut{\tA'}{k-1}}}{\datatype{\cut{\tD'}{k-1}}{\cut{\tB'}{k-1}}}
    \in \Phieqtypeco\left(\S\right)$.
    
    \item $\tA = \functype{\tA'}{\tA''}$. Similarly to the previous case, we
    have $\tB = \functype{\tB'}{\tB''}$ with $\tA' \eqtypeco \tB'$ and $\tA''
    \eqtypeco \tB''$. Then $\pair{\cut{\tA'}{k-1}}{\cut{\tB'}{k-1}},
    \pair{\cut{\tA''}{k-1}}{\cut{\tB''}{k-1}} \in \S$ and
    $\pair{\cut{\tA}{k}}{\cut{\tB}{k}} =
    \pair{\functype{\cut{\tA'}{k-1}}{\cut{\tA''}{k-1}}}{\functype{\cut{\tB'}{k-1}}{\cut{\tB''}{k-1}}}
    \in \Phieqtypeco\left(\S\right)$.
  \end{itemize}
  
  \item If $n + m > 2$, by $\ruleEqcoUnion$ we have $$
\begin{array}{r@{\quad\text{s.t.}\quad}c@{\quad\text{for every}\quad}l}
\exists f : 1..n \to 1..m & \tA_i \eqtypeco \tB_{f(i)} & i \in 1..n \\
\exists g : 1..m \to 1..n & \tA_{g(j)} \eqtypeco \tB_j & j \in 1..m
\end{array} $$
  Then, by definition of $\S$, $\pair{\cut{\tA_i}{k}}{\cut{\tB_{f(i)}}{k}},
  \pair{\cut{\tA_{g(j)}}{k}}{\cut{\tB_j}{k}} \in \S$ for every $k > 0$. Thus,
  we conclude by resorting to Rem.~\ref{rem:cutMaximalUnionTypes},
  $\pair{\cut{\tA}{k}}{\cut{\tB}{k}} \in \Phieqtypeco\left(\S\right)$.
\end{enumerate}
\end{proof}

\begin{lemma}
\label{lem:cutSubtypingCo}
$\forall k \in \Natural.\cut{\tA}{k} \subtypeco \cut{\tB}{k}$ iff $\tA \subtypeco \tB$.
\end{lemma}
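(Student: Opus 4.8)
The plan is to prove both implications by coinduction, following the structure of the proof of Lemma~\ref{lem:cutEquivalenceCo} almost verbatim, with $\subtypeco$ in place of $\eqtypeco$, the functional $\Phisubtypeco$ in place of $\Phieqtypeco$, and invertibility of subtyping (Lem.~\ref{lem:subtypingIsInvertible}) in place of invertibility of equality (Lem.~\ref{lem:equalityIsInvertible}). The two genuine departures from the equivalence argument are: (i) in the arrow case the relation is contravariant in the domain, so item~3 of Lem.~\ref{lem:subtypingIsInvertible} will force me to reverse one of the recovered pairs; and (ii) the union clause of $\Phisubtypeco$ mentions a single function $f : 1..n \to 1..m$ (no companion $g$), so only the $\ruleSubcoUnion$-shaped witness needs to be produced.

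For the $(\Rightarrow)$ direction I would set $\S \eqdef \set{\pair{\tA}{\tB} \mathrel| \forall k \in \Natural.\ \cut{\tA}{k} \subtypeco \cut{\tB}{k}}$ and show $\S$ is $\Phisubtypeco$-dense; the coinductive principle then yields $\S \subseteq \nu\Phisubtypeco = \mathbin{\subtypeco}$. Given $\pair{\tA}{\tB} \in \S$, I write $\tA = \maxuniontype{i \in 1..n}{\tA_i}$ and $\tB = \maxuniontype{j \in 1..m}{\tB_j}$ as maximal union types (Rem.~\ref{rem:maximalUnionTypes}) and split on $n = m = 1$ versus $n + m > 2$. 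In the non-union case I analyse the shape of $\tA$: if $\tA = a$ then $\cut{\tA}{k} = a$ for $k > 0$ and item~1 of Lem.~\ref{lem:subtypingIsInvertible} gives $\cut{\tB}{k} = a$, whence $\tB = a$; if $\tA = \datatype{\tD}{\tA'}$ then $\cut{\tA}{k} = \datatype{\cut{\tD}{k-1}}{\cut{\tA'}{k-1}}$ and item~2 fixes the shape $\tB = \datatype{\tD'}{\tB'}$, with the subtrees pinned down because invertibility determines the shape of $\tB$ up to tree equality (not merely equivalence), exactly as in the equivalence proof, so that $\cut{\tD}{k-1} \subtypeco \cut{\tD'}{k-1}$ and $\cut{\tA'}{k-1} \subtypeco \cut{\tB'}{k-1}$ hold for every $k > 0$, giving $\pair{\tD}{\tD'}, \pair{\tA'}{\tB'} \in \S$. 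The arrow case $\tA = \functype{\tA'}{\tA''}$ is the same but contravariant: item~3 yields $\tB = \functype{\tB'}{\tB''}$ with $\cut{\tB'}{k-1} \subtypeco \cut{\tA'}{k-1}$ and $\cut{\tA''}{k-1} \subtypeco \cut{\tB''}{k-1}$, so I record the \emph{reversed} pair $\pair{\tB'}{\tA'} \in \S$ together with $\pair{\tA''}{\tB''} \in \S$. In each subcase $\pair{\tA}{\tB} \in \Phisubtypeco(\S)$ follows.

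The remaining work is a routine transcription. For the $(\Leftarrow)$ direction I take $\S \eqdef \set{\pair{\cut{\tA}{k}}{\cut{\tB}{k}} \mathrel| \tA \subtypeco \tB,\ k \in \Natural}$, dispatch $k = 0$ by $\bullet \subtypeco \bullet$ (rule $\ruleSubcoRefl$, since $\bullet \in \TypeConstant$), and for $k > 0$ invert $\tA \subtypeco \tB$ and push the truncation through the recovered constructors, using that truncation commutes with maximal unions (Rem.~\ref{rem:cutMaximalUnionTypes}) in the union subcase. I expect the only real obstacle to be the union subcase of $(\Rightarrow)$: applying $\ruleSubcoUnion$ to $\cut{\tA}{k} \subtypeco \cut{\tB}{k}$ extracts, for \emph{each} depth $k$, a witnessing function with $\cut{\tA_i}{k} \subtypeco \cut{\tB_{f(i)}}{k}$, whereas membership in $\S$ demands a single $f$ valid at every depth. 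As in the proof of Lemma~\ref{lem:cutEquivalenceCo}, this is resolved using finiteness of the index set $1..m$ (so that one function can be chosen uniformly in $k$) together with the triviality of the relation at depth $0$; with that single $f$ fixed, $\pair{\tA_i}{\tB_{f(i)}} \in \S$ for all $i$ and hence $\pair{\tA}{\tB} \in \Phisubtypeco(\S)$.
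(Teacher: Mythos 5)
Your proposal is correct and follows essentially the same route as the paper's proof: the same two $\Phisubtypeco$-dense sets, the same split into non-union shapes (via Lem.~\ref{lem:subtypingIsInvertible}, with the contravariant reversal in the arrow case) versus the maximal-union case, and the same treatment of truncation at depth~$0$. If anything you are slightly more explicit than the paper about why a single witnessing function $f$ can be chosen uniformly in $k$ in the union subcase, a point the paper's proof passes over silently.
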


\begin{proof}
$\Rightarrow)$ Similarly to the previous lemma, we prove this part by showing
that $\S \eqdef \set{\pair{\tA}{\tB} \mathrel| \forall k \in \Natural.
\cut{\tA}{k} \subtypeco \cut{\tB}{k}}$ is $\Phisubtypeco$-dense. By hypothesis
we have $\cut{\tA}{k} \subtypeco \cut{\tB}{k}$ for every $k \in \Natural$. As
before we consider maximal union types and analyze separately the case for
non-union types $$
\begin{array}{r@{\quad\text{with}\quad}l}
\tA = \maxuniontype{i \in 1..n}{\tA_i} & \tA_i \neq \iuniontype, i \in 1..n \\
\tB = \maxuniontype{j \in 1..m}{\tB_j} & \tB_j \neq \iuniontype, j \in 1..m
\end{array} $$
\begin{enumerate}
  \item If $n = m = 1$ (\ie $\tA, \tB \neq \iuniontype$), we proceed by
  analyzing the shape of $\tA$:
  \begin{itemize}
    \item $\tA = a$. Then, $\cut{\tA}{k} = a$ for every $k > 0$ and, by
    Lem.~\ref{lem:subtypingIsInvertible}, $\cut{\tB}{k} = a$. Hence, $\tB = a$
    and we conclude directly from the definition of $\Phisubtypeco$,
    $\pair{a}{a} \in \Phisubtypeco\!\left(\S\right)$.
    
    \item $\tA = \datatype{\tD}{\tA'}$. Similarly, we have $\cut{\tA}{k} =
    \datatype{\cut{\tD}{k-1}}{\cut{\tA'}{k-1}}$ for every $k > 0$. By
    Lem.~\ref{lem:subtypingIsInvertible} once again, we get $\cut{\tB}{k} =
    \datatype{\tD'_k}{\tB'_k}$ with $\cut{\tD}{k-1} \subtypeco \tD'_k$ and
    $\cut{\tA'}{k-1} \subtypeco \tB'_k$. As in the previous lemma, in this case
    we have different subtrees $\tD'_k$ and $\tB'_k$ for every $k$ but, by
    resorting to tree equality on Lem.~\ref{lem:subtypingIsInvertible} and the
    definition of the truncation, we can assure that $\tB =
    \datatype{\tD'}{\tB'}$ with $\tD'_k = \cut{\tD'}{k-1}$ and $\tB'_k =
    \cut{\tB'}{k-1}$ for every $k > 0$. Hence, $\cut{\tD}{k-1} \subtypeco
    \cut{\tD'}{k-1}$ and $\cut{\tA'}{k-1} \subtypeco \cut{\tB'}{k-1}$ for every
    $k > 0$. Then, by definition of $\S$, $\pair{\tD}{\tD'}, \pair{\tA'}{\tB'}
    \in \S$ and we conclude $\pair{\datatype{\tD}{\tA'}}{\datatype{\tD'}{\tB'}}
    \in \Phisubtypeco\!\left(\S\right)$.
    
    \item $\tA = \functype{\tA'}{\tA''}$. Analysis for this case is similar to
    the previous one. From $\cut{\tA}{k} =
    \functype{\cut{\tA'}{k-1}}{\cut{\tA''}{k-1}}$ we get $\tB =
    \functype{\tB'}{\tB''}$ with $\cut{\tB'}{k-1} \subtypeco \cut{\tA'}{k-1}$
    and $\cut{\tA''}{k-1} \subtypeco \cut{\tB''}{k-1}$ for every $k > 0$. Note
    that, by Lem.~\ref{lem:subtypingIsInvertible}, subtyping order on the
    domains is inverted. Then we have $\pair{\tB'}{\tA'}, \pair{\tA''}{\tB''}
    \in \S$ and conclude $\pair{\functype{\tA'}{\tA''}}{\functype{\tB'}{\tB''}}
    \in \Phisubtypeco\!\left(\S\right)$.
  \end{itemize}
  
  \item If $n + m > 2$ we have $\cut{\tA}{k} = \maxuniontype{i \in
  1..n}{(\cut{\tA_i}{k})}$ and $\cut{\tB}{k} = \maxuniontype{j \in
  1..m}{(\cut{\tB_j}{k})}$ for every $k > 0$. From $\cut{\tA}{k} \subtypeco
  \cut{\tB}{k}$, by $\ruleSubcoUnion$, we get $$
\begin{array}{r@{\quad\text{s.t.}\quad}c@{\quad\text{for every}\quad}l}
\exists f : 1..n \to 1..m & \cut{\tA_i}{k} \subtypeco \cut{\tB_{f(i)}}{k} & i \in 1..n \\
\exists g : 1..m \to 1..n & \cut{\tA_{g(j)}}{k} \subtypeco \cut{\tB_j}{k} & j \in 1..m
\end{array} $$
  Since $\cut{\tC}{0} = \bullet$ for every $\tC \in \Tree$, we also have
  $\cut{\tA_i}{0} \subtypeco \cut{\tB_{f(i)}}{0}$ and $\cut{\tA_{g(j)}}{0}
  \subtypeco \cut{\tB_j}{0}$ by reflexivity. Thus, $\cut{\tA_i}{k} \subtypeco
  \cut{\tB_{f(i)}}{k}$ and $\cut{\tA_{g(j)}}{k} \subtypeco \cut{\tB_j}{k}$ for
  every $k \in \Natural$. Then, by definition of $\S$,
  $\pair{\tA_i}{\tB_{f(i)}}, \pair{\tA_{g(j)}}{\tB_j} \in \S$ for every $i \in
  1..n$, $j \in 1..m$. Finally, we conclude $\pair{\tA}{\tB} \in
  \Phisubtypeco\!\left(\S\right)$.
\end{enumerate}

$\Leftarrow)$ As before, we define $\S \eqdef
\set{\pair{\cut{\tA}{k}}{\cut{\tB}{k}} \mathrel| \tA \subtypeco \tB, k \in
\Natural}$ and show that is $\Phisubtypeco$-dense to prove this part of the
lemma. Again, if $k = 0$ the result is immediate, so lets focus on the case
where $k > 0$.

Let $\tA \subtypeco \tB$. We assume, without loss of generality, $\tA =
\maxuniontype{i \in 1..n}{\tA_i}$ and $\tB = \maxuniontype{j \in 1..m}{\tB_j}$
are maximal union types.

If $n + m > 2$ it is the case of $\ruleSubcoUnion$ and we have $\exists f :
1..n \to 1..m$ such that $\tA_i \subtypeco \tB_{f(i)}$ for every $i \in 1..n$.
Then, by definition we have $\pair{\cut{\tA_i}{k}}{\cut{\tB_{f(i)}}{k}} \in \S$
and conclude $\pair{\cut{\tA}{k}}{\cut{\tB}{k}} \in
\Phisubtypeco\left(\S\right)$.

On the other hand, if $n = 1 = m$ we analyze the form of $\tA$:
\begin{enumerate}
  \item $\tA = a$. By Lem.~\ref{lem:subtypingIsInvertible} we have $\tB = a$
  and the result is immediate.
  
  \item $\tA = \datatype{\tD}{\tA'}$. By Lem.~\ref{lem:subtypingIsInvertible},
  $\tB = \datatype{\tD'}{\tB'}$ with $\tD \subtypeco \tD'$ and $\tA' \subtypeco
  \tB'$. Then we have $\pair{\cut{\tD}{k-1}}{\cut{\tD'}{k-1}},
  \pair{\cut{\tA'}{k-1}}{\cut{\tB'}{k-1}} \in \S$ for every $k > 0$, and
  conclude by definition of $\Phisubtypeco$, $\pair{\cut{\tA}{k}}{\cut{\tB}{k}}
  \in \Phisubtypeco\left(\S\right)$.
  
  \item $\tA = \functype{\tA'}{\tA''}$. Similarly to the previous case we have
  $\tB = \functype{\tB'}{\tB''}$ with $\tB' \subtypeco \tA'$ and $\tA''
  \subtypeco \tB''$. Then we conclude by definition of $\S$ and $\Phisubtypeco$
  that $\pair{\cut{\tA}{k}}{\cut{\tB}{k}} =
  \pair{\functype{\cut{\tA'}{k-1}}{\cut{\tA''}{k-1}}}{\functype{\cut{\tB'}{k-1}}{\cut{\tB''}{k-1}}}
  \in \Phisubtypeco\left(\S\right)$.
\end{enumerate}
\end{proof}

\subsubsection{Correspondence between \texorpdfstring{$\mu$}{u}-types and infinite types}

Contractive $\mu$-types
characterize~\cite{journals/tcs/Courcelle83,DBLP:journals/toplas/AmadioC93,DBLP:journals/fuin/BrandtH98,Pierce:2002:TPL:509043}
a proper subset of $\Tree$ known as the \emphdef{regular trees} (trees whose
set of distinct subtrees is finite) and denoted $\TreeRegular$.
Given a contractive $\mu$-type $A$, $\toBTree{A}$ is the regular tree obtained
by completely unfolding all occurrences of $\rectype{V}{B}$ in $A$.
Def.~\ref{def:treeFunction} below extends that of~\cite{Pierce:2002:TPL:509043}
to union and data types. It is well-founded, relying on the lexicographical
extension of the standard order to $\pair{\length{\pi}}{\card{\irectype}{A}}$,
where $\card{\irectype}{A}$ is the number of occurrences of the $\irectype$
type constructor at the head position of $A$.

\begin{definition}
\label{def:treeFunction}
The function $\toBTree{\bullet} : \Type \to \TreeRegular$, mapping $\mu$-types to types, is defined inductively as follows: $$
\begin{array}{rcll}
\toBTree{a}(\epsilon)             & \eqdef & a \\
\toBTree{A_1 \star A_2}(\epsilon) & \eqdef & \star           & \quad \text{for $\star \in \set{\idatatype, \ifunctype, \iuniontype}$} \\
\toBTree{A_1 \star A_2}(i\pi)     & \eqdef & \toBTree{A_i}(\pi) & \quad \text{for $\star \in \set{\idatatype, \ifunctype, \iuniontype}$} \\
\toBTree{\rectype{V}{A}}(\pi)     & \eqdef & \toBTree{\substitute{V}{\rectype{V}{A}}{A}}(\pi) \\
\end{array} $$
\end{definition}

Commutation of $\toBTree{\bullet}$ with substitutions is as expected. 

\begin{lemma}
\label{lem:substitutionOfTrees}
$\toBTree{\substitute{V}{B}{A}} = \substitute{V}{\toBTree{B}}{\toBTree{A}}$.
\end{lemma}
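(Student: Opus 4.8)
The plan is to prove the identity position by position: for every position $\pi$ I will show that $\toBTree{\substitute{V}{B}{A}}(\pi)$ and $(\substitute{V}{\toBTree{B}}{\toBTree{A}})(\pi)$ are either both undefined or both defined with the same label. The induction is on the lexicographic pair $\pair{\length{\pi}}{\card{\irectype}{A}}$, that is, exactly the measure that makes Def.~\ref{def:treeFunction} well-founded; this choice lets me mirror the clauses of $\toBTree{\bullet}$ and reuse the positional characterisation of tree substitution given by Lem.~\ref{lem:treeSubstitution}. I then proceed by case analysis on the shape of the $\mu$-type $A$.

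For the base cases $A = a$ with $a \in \TypeVariable \cup \TypeConstant$ I split on whether $a = V$. If $a = V$ then $\substitute{V}{B}{A} = B$ on the left, while $\toBTree{V}$ is the single-node tree $V$, so Lem.~\ref{lem:treeSubstitution}(i) gives $\substitute{V}{\toBTree{B}}{\toBTree{V}} = \toBTree{B}$ and both sides equal $\toBTree{B}$. If $a \neq V$ then $\substitute{V}{B}{A} = a$ and Lem.~\ref{lem:treeSubstitution}(ii) gives $\substitute{V}{\toBTree{B}}{a} = a$, so both sides are the single node $a$. For the constructor case $A = A_1 \star A_2$ with $\star \in \set{\idatatype, \ifunctype, \iuniontype}$, $\mu$-type substitution distributes over $\star$, so at $\pi = \epsilon$ both sides carry root label $\star$, and at $\pi = i\pi'$ I invoke the induction hypothesis on $A_i$ at the shorter position $\pi'$ (the first component of the measure strictly decreases) and reassemble using Lem.~\ref{lem:treeSubstitution}(iii), which states that tree substitution commutes with $\star$.

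The recursive case $A = \rectype{W}{A'}$ is the one requiring care, and is the main obstacle. Adopting the variable convention I assume $W \neq V$ and $W \notin \fv{B}$, so that $\substitute{V}{B}{A} = \rectype{W}{\substitute{V}{B}{A'}}$. Unfolding the $\mu$ clause of Def.~\ref{def:treeFunction} on the left and applying the standard composition-of-substitutions identity for $\mu$-types (valid precisely under these side conditions) rewrites $\toBTree{\substitute{V}{B}{A}}(\pi)$ as $\toBTree{\substitute{V}{B}{A''}}(\pi)$, where $A'' \eqdef \substitute{W}{\rectype{W}{A'}}{A'}$ is the one-step unfolding of $A$. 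Crucially, contractivity guarantees $\card{\irectype}{A''} < \card{\irectype}{A}$ — this is the very fact underlying the well-foundedness of Def.~\ref{def:treeFunction} — so the induction hypothesis applies at the \emph{same} position $\pi$ with a strictly smaller measure, yielding $(\substitute{V}{\toBTree{B}}{\toBTree{A''}})(\pi)$. Finally, the $\mu$ clause of Def.~\ref{def:treeFunction} gives $\toBTree{A} = \toBTree{A''}$ as trees, whence $(\substitute{V}{\toBTree{B}}{\toBTree{A''}})(\pi) = (\substitute{V}{\toBTree{B}}{\toBTree{A}})(\pi)$, closing the case. The two technical points I must be careful about are the side conditions in the substitution-composition step and the strict decrease of $\card{\irectype}{\cdot}$ under unfolding; both are standard consequences of the variable convention and of contractivity, respectively.
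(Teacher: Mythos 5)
Your proof is correct, and it reaches the result by a somewhat more direct route than the paper. The paper first reduces the statement to $\forall k \in \Natural.\ \cut{\toBTree{\substitute{V}{B}{A}}}{k} = \cut{(\substitute{V}{\toBTree{B}}{\toBTree{A}})}{k}$ and inducts on the lexicographic pair $\pair{h(\cut{\toBTree{\substitute{V}{B}{A}}}{k})}{\card{\irectype\iuniontype}{A}}$, where $h$ is the height of the truncated tree; it therefore needs Lem.~\ref{lem:cutEquivalenceCo} to transfer the truncated statement back to the full trees, and it must count leading occurrences of $\iuniontype$ as well as $\irectype$ in the second component, because truncation does not decrease the depth index when passing through a union node. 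Your position-wise induction on $\pair{\length{\pi}}{\card{\irectype}{A}}$ sidesteps both complications: the interpretation $\toBTree{\bullet}$ consumes one position symbol at every binary constructor, union included, so the union case is handled uniformly with $\idatatype$ and $\ifunctype$, and no truncation machinery is needed. The substantive content is the same in both proofs --- pushing tree substitution through constructors via Lem.~\ref{lem:treeSubstitution}, unfolding $\rectype{W}{A'}$ once, commuting the two $\mu$-type substitutions under the freshness conditions on $W$, and using contractivity to obtain the strict decrease of the leading-$\irectype$ count. The one step worth spelling out in a full write-up is the substitution-composition identity $\substitute{W}{\rectype{W}{\substitute{V}{B}{A'}}}{(\substitute{V}{B}{A'})} = \substitute{V}{B}{(\substitute{W}{\rectype{W}{A'}}{A'})}$, which the paper also uses silently; it is routine, but it is precisely where the freshness of $W$ with respect to $V$ and $\fv{B}$ is consumed.
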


\begin{proof}
We actualy prove the equivalente result $$\forall{k \in \Natural}.
\cut{\toBTree{\substitute{V}{B}{A}}}{k} =
\cut{(\substitute{V}{\toBTree{B}}{\toBTree{A}})}{k}$$ and conclude by
reflexivity of $\eqtypeco$ and Lem.~\ref{lem:cutEquivalenceCo}.

The proof is by induction on the lexicographical extension of the standard
order to
$\pair{h(\cut{\toBTree{\substitute{V}{B}{A}}}{k})}{\card{\irectype\iuniontype}{A}}$,
where $h : \TreeFinite \to \Natural$ is the height function for finite trees
and $\card{\irectype\iuniontype}{A}$ is the number of occurrences of both
$\irectype$ and $\iuniontype$ at the head of $A$.

We proceed by analyzing the possible forms of $A$ and assuming $k > 0$ since
the result for that case is immediate.
\begin{itemize}
  \item $A = V$: then $\cut{\toBTree{\substitute{V}{B}{V}}}{k} =
  \cut{\toBTree{B}}{k} = \cut{(\substitute{V}{\toBTree{B}}{V})}{k}$ by
  Lem.~\ref{lem:treeSubstitution}.
  
  \item $A = a \neq V$: then $\cut{\toBTree{\substitute{V}{B}{a}}}{k} =
  \cut{\toBTree{a}}{k} = a = \cut{(\substitute{V}{\toBTree{B}}{a})}{k}$ by
  definition of the interpretation and Lem.~\ref{lem:treeSubstitution}.
  
  \item $A = \datatype{D}{A'}$: then $$\kern-3em
\begin{array}{r@{\quad=\quad}l@{\quad}l}
\cut{\toBTree{\substitute{V}{B}{A}}}{k} & \cut{\toBTree{\datatype{\substitute{V}{B}{D}}{\substitute{V}{B}{A'}}}}{k} \\
                                        & \datatype{\cut{\toBTree{\substitute{V}{B}{D}}}{k-1}}{\cut{\toBTree{\substitute{V}{B}{A'}}}{k-1}} & \text{by Def.~\ref{def:treeFunction} and~\ref{def:treeCut}} \\
                                        & \datatype{\cut{(\substitute{V}{\toBTree{B}}{\toBTree{D}})}{k-1}}{\cut{(\substitute{V}{\toBTree{B}}{\toBTree{A'}})}{k-1}} & \text{by IH} \\
                                        & \cut{(\datatype{\substitute{V}{\toBTree{B}}{\toBTree{D}}}{\substitute{V}{\toBTree{B}}{\toBTree{A'}}})}{k} & \text{by Def.~\ref{def:treeCut}} \\
                                        & \cut{(\substitute{V}{\toBTree{B}}{\toBTree{\datatype{D}{A'}}})}{k} & \text{by Lem.~\ref{lem:treeSubstitution} and Def.~\ref{def:treeFunction}}
\end{array} $$
  
  \item $A = \functype{A'}{A''}$: this case is similar to the previous one.
  
  \item $A = \uniontype{A_1}{A_2}$: analysis for this case is similar to the
  previous ones but notice that we get the same $k$ when resorting to
  Def.~\ref{def:treeCut} (instead of $k-1$) before applying the inductive
  hypothesis. However, we are in conditions to apply it anyway since
  $$h(\cut{\toBTree{\substitute{V}{B}{A}}}{k}) \geq
  h(\cut{\toBTree{\substitute{V}{B}{A_i}}}{k}) \quad\text{but}\quad
  \card{\irectype\iuniontype}{A} > \card{\irectype\iuniontype}{A_i}$$ Hence, it
  is safe to conclude $\cut{\toBTree{\substitute{V}{B}{A}}}{k} =
  \cut{(\substitute{V}{\toBTree{B}}{\toBTree{A}})}{k}$.
  
  \item $A = \rectype{W}{A'}$: without loss of generality we can assume
  $\rename{V}{B} \avoids W$\footnote{We use the predicate $\sigma \avoids V$ to
  mean that there is no collition at all between $V$ and the variables in
  $\sigma$ (\ie $V \notin \dom{\sigma} \cap (\bigcup_{x \in \dom{\sigma}}
  \fv{\sigma x})$).}. Then $$
\begin{array}{r@{\quad=\quad}l@{\quad}l}
\cut{\toBTree{\substitute{V}{B}{A}}}{k} & \cut{\toBTree{\rectype{W}{\substitute{V}{B}{A'}}}}{k} \\
                                        & \cut{\toBTree{\substitute{W}{\rectype{W}{\substitute{V}{B}{A'}}}{\substitute{V}{B}{A'}}}}{k} & \text{by Def.~\ref{def:treeFunction}} \\
                                        & \cut{\toBTree{\substitute{V}{B}{\substitute{W}{A}{A'}}}}{k} \\
                                        & \cut{(\substitute{V}{\toBTree{B}}{\toBTree{\substitute{W}{A}{A'}}})}{k} & \text{by IH} \\
                                        & \cut{(\substitute{V}{\toBTree{B}}{\toBTree{A}})}{k} & \text{by Def.~\ref{def:treeFunction}}
\end{array} $$ Here we are in condition to apply the indutive hypothesis since
  $\card{\irectype\iuniontype}{A} >
  \card{\irectype\iuniontype}{\substitute{W}{A}{A'}} $ by contractiveness.
\end{itemize}
\end{proof}

The finite unfolding of a contractive $\mu$-type $A$ consists of recursively
replacing all occurrences of a bounded variable $V$ by $A$ itself a finite
number of times. We formalize a slightly more general variation of this idea in
the following lemma and prove its relation with $\toBTree{A}$.

\begin{lemma}
\label{lem:cutFiniteUnfolding}
Let $A = \rectype{V}{A'}$, $B$ any other $\mu$-type and $\sigma$ a
substitution. Define $$\unfoldf{A}{\sigma}{0} \eqdef B \qquad
\unfoldf{A}{\sigma}{n+1} \eqdef (\sigma \uplus
\rename{V}{\unfoldf{A}{\sigma}{n}})A'$$ Then, $\forall k \in \Natural.
\cut{\toBTree{\unfoldf{A}{\sigma}{k}}}{k} \eqtypeco \cut{\toBTree{\sigma
A}}{k}$.
\end{lemma}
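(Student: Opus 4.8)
The plan is to transport the statement to the level of infinite trees and prove it there by induction on $k$. First I would normalise both sides. Assuming without loss of generality that $\sigma \avoids V$, we have $\sigma A = \rectype{V}{\sigma A'}$; writing $t \eqdef \sigma A'$, a direct computation (using $V \notin \dom{\sigma}$ and $\sigma \avoids V$, so the outer $\sigma$ never touches $V$) shows $\unfoldf{A}{\sigma}{n+1} = \substitute{V}{\unfoldf{A}{\sigma}{n}}{t}$. Setting $\theta \eqdef \toBTree{t}$ and $\tau \eqdef \toBTree{\sigma A}$, Lemma~\ref{lem:substitutionOfTrees} (commutation of $\toBTree{\bullet}$ with substitution) together with Def.~\ref{def:treeFunction} (unfolding the head $\irectype$) yields $\toBTree{\unfoldf{A}{\sigma}{n+1}} = \substitute{V}{\toBTree{\unfoldf{A}{\sigma}{n}}}{\theta}$ and $\tau = \substitute{V}{\tau}{\theta}$. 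Furthermore, contractiveness of $A = \rectype{V}{A'}$ (inherited by $\rectype{V}{t}$ since $\sigma \avoids V$) guarantees that $V$ is \emph{guarded} in $\theta$, \ie every occurrence of $V$ in $\theta$ lies under a $\idatatype$ or a $\ifunctype$. The goal reduces to $\cut{\toBTree{\unfoldf{A}{\sigma}{k}}}{k} \eqtypeco \cut{\tau}{k}$ for every $k$.

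Next I would isolate two auxiliary facts about truncations. The first is \emph{monotonicity}: if $\cut{\tA}{k} \eqtypeco \cut{\tB}{k}$ then $\cut{\tA}{k-1} \eqtypeco \cut{\tB}{k-1}$; this follows from the $(\Leftarrow)$ direction of Lemma~\ref{lem:cutEquivalenceCo} applied to $\cut{\tA}{k} \eqtypeco \cut{\tB}{k}$, together with the nesting identity $\cut{(\cut{\tC}{k})}{k-1} = \cut{\tC}{k-1}$. The second is a \emph{depth-preserving substitution congruence}: for every tree $\psi$ and all $\tA_1, \tA_2$ with $\cut{\tA_1}{k} \eqtypeco \cut{\tA_2}{k}$, one has $\cut{\substitute{V}{\tA_1}{\psi}}{k} \eqtypeco \cut{\substitute{V}{\tA_2}{\psi}}{k}$. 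I would prove this by induction on $\pair{k}{\card{\iuniontype}{\psi}}$, analysing the root of $\psi$: the case $\psi = V$ is exactly the hypothesis; $\psi = a \neq V$ is immediate by $\ruleEqcoRefl$; for $\psi = \psi_1 \star \psi_2$ with $\star \in \set{\idatatype, \ifunctype}$ truncation drops to depth $k-1$ on each argument, so the induction hypothesis applies after invoking monotonicity and we recombine with $\ruleEqcoComp$ or $\ruleEqcoFunc$; and for $\psi = \uniontype{\psi_1}{\psi_2}$ truncation keeps depth $k$ while $\card{\iuniontype}{\psi}$ strictly decreases, so the induction hypothesis applies componentwise and we conclude with $\ruleEqcoUnion$.

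With these in place the claim follows by induction on $k$. The base case $k = 0$ is trivial since $\cut{\tC}{0} = \bullet$ for every $\tC$. For the step, the normalisation gives $\cut{\toBTree{\unfoldf{A}{\sigma}{k+1}}}{k+1} = \cut{\substitute{V}{\toBTree{\unfoldf{A}{\sigma}{k}}}{\theta}}{k+1}$ and $\cut{\tau}{k+1} = \cut{\substitute{V}{\tau}{\theta}}{k+1}$, while the induction hypothesis supplies $\cut{\toBTree{\unfoldf{A}{\sigma}{k}}}{k} \eqtypeco \cut{\tau}{k}$. It remains to see that substituting into $\theta$ \emph{gains exactly one level} of agreement, precisely because $V$ is guarded. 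I would establish this by a secondary induction on $\card{\iuniontype}{\theta}$, on the root of $\theta$: if the root is a $\idatatype$ or $\ifunctype$, the $V$-occurrences sit beneath it, $\cut{\cdot}{k+1}$ exposes the two arguments truncated at depth $k$, and the depth-preserving congruence (with substituends $\toBTree{\unfoldf{A}{\sigma}{k}}$ and $\tau$, which agree at depth $k$) settles each argument, recombined by $\ruleEqcoComp$ or $\ruleEqcoFunc$; if the root is $\iuniontype$, then $V$ remains guarded in each component, $\cut{\cdot}{k+1}$ keeps depth $k+1$ on both, and the secondary induction hypothesis applies to the components of strictly smaller $\card{\iuniontype}{\cdot}$, recombined by $\ruleEqcoUnion$; the root cannot be $V$ by guardedness. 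This gives $\cut{\toBTree{\unfoldf{A}{\sigma}{k+1}}}{k+1} \eqtypeco \cut{\tau}{k+1}$, completing the induction.

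The main obstacle is the interaction between the recursion/union structure and the depth bookkeeping: a single naive induction loops in the $\irectype$ case, because unfolding $\tau = \substitute{V}{\tau}{\theta}$ seems to require the very agreement at depth $k+1$ that one is trying to prove. The resolution is the split into two regimes --- a \emph{depth-preserving} congruence used when descending through a constructor (where no guardedness of the subterm is available but the depth has already dropped), and the guarded root analysis that converts depth-$k$ agreement of the substituends into depth-$(k+1)$ agreement of the whole. Contractiveness is used exactly once, to guarantee guardedness of $V$ in $\theta$, which is what licenses the single gained level.
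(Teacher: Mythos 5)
Your proof is correct, and its skeleton coincides with the paper's: both proceed by induction on $k$, use Lem.~\ref{lem:substitutionOfTrees} to rewrite $\toBTree{\unfoldf{A}{\sigma}{k}}$ and $\toBTree{\sigma A}$ as substitutions of the form $\substitute{V}{\cdot}{\toBTree{\sigma A'}}$, and exploit contractiveness (guardedness of $V$ under $\idatatype$ or $\ifunctype$) to convert depth-$(k-1)$ agreement of the substituends into depth-$k$ agreement of the whole. Where you differ is in how that conversion is discharged. The paper reuses global machinery already in place: since $V$ occurs in $\toBTree{\sigma A'}$ only at depth $\geq 1$, one has $\cut{(\substitute{V}{\tB}{\toBTree{\sigma A'}})}{k} = \cut{(\substitute{V}{\cut{\tB}{k-1}}{\toBTree{\sigma A'}})}{k}$, so the inductive hypothesis can be injected via Lem.~\ref{lem:substitutionOfEqtypesCo} (substitution preserves $\eqtypeco$ on whole trees) and then pushed back to truncations with the $(\Leftarrow)$ direction of Lem.~\ref{lem:cutEquivalenceCo}. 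You instead prove two bespoke finitary sublemmas --- monotonicity of truncation agreement and the depth-preserving substitution congruence --- plus the guarded-root analysis, all by nested inductions on $\pair{k}{\card{\iuniontype}{\cdot}}$. Your route is more self-contained, never appealing to the coinductive Lem.~\ref{lem:substitutionOfEqtypesCo}, at the cost of redoing at the level of truncations the union-flattening bookkeeping for $\ruleEqcoUnion$ that the paper's lemma already encapsulates (and your guarded-root case split silently omits the trivial case $\theta = a \neq V$, which should be stated). Both arguments are sound; the paper's is shorter given the lemmas it already has, while yours makes the ``one level gained per unfolding'' mechanism fully explicit.
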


\begin{proof}
By induction on $k$. We assume without loss of generality that $\sigma \avoids
V$.
\begin{itemize}
  \item $k = 0$. Then $\cut{\toBTree{B}}{0} = \bullet = \cut{\toBTree{\sigma
  A}}{0}$ by definition of the truncation.
  
  \item $k > 0$. By inductive hypothesis we have
  $\cut{\toBTree{\unfoldf{A}{\sigma}{k-1}}}{k-1} \eqtypeco
  \cut{\toBTree{\sigma A}}{k-1}$. Moreover, since $A = \rectype{V}{A'}$ is
  contractive, the first appearance of $V$ in $A'$ is at depth $n > 1$. So we
  have $k \leq k - 1 + n$ and, by Lem.~\ref{lem:substitutionOfEqtypesCo}
  and~\ref{lem:substitutionOfTrees}, we may conclude $$
\begin{array}{rcl@{\quad}l}
\cut{\toBTree{\unfoldf{A}{\sigma}{k}}}{k} & =         & \cut{\toBTree{(\sigma \uplus \rename{V}{\unfoldf{A}{\sigma}{k-1}})A'}}{k} \\
                                          & =         & \cut{(\substitute{V}{\toBTree{\unfoldf{A}{\sigma}{k-1}}}{\toBTree{\sigma A'}})}{k}            & \text{by Lem.~\ref{lem:substitutionOfTrees}} \\
                                          & =         & \cut{(\substitute{V}{\cut{\toBTree{\unfoldf{A}{\sigma}{k-1}}}{k-1}}{\toBTree{\sigma A'}})}{k} & k \leq k - 1 + n \\
                                          & \eqtypeco & \cut{(\substitute{V}{\cut{\toBTree{\sigma A}}{k-1}}{\toBTree{\sigma A'}})}{k}                 & \text{by Lem.~\ref{lem:substitutionOfEqtypesCo}} \\
                                          & =         & \cut{(\substitute{V}{\toBTree{\sigma A}}{\toBTree{\sigma A'}})}{k}                            & k \leq k - 1 + n \\
                                          & =         & \cut{\toBTree{(\sigma \uplus \rename{V}{\sigma A})A'}}{k}                                  & \quad\text{by Lem.~\ref{lem:substitutionOfTrees}} \\
                                          & =         & \cut{\toBTree{\sigma A}}{k}
\end{array} $$
\end{itemize}
\end{proof}

\begin{remark}
\label{rem:cutFiniteUnfolding}
It follows immediately from the previous result that for every $n \geq k$,
$\cut{\toBTree{\unfoldf{A}{\sigma}{n}}}{k} \eqtypeco
\cut{\toBTree{\sigma A}}{k}$.
\end{remark}

One of the main results of this section is the correspondence between the
equivalence relations $\eqtypemu$ and $\eqtypeco$ via the function
$\toBTree{\bullet}$. It follows from the lemma below that relates two
$\mu$-equivalent types with the truncation of their respective trees:

\begin{lemma}
\label{lem:cutEquivalenceMu}
$A \eqtypemu B$ iff $\forall k \in \Natural.\cut{\toBTree{A}}{k} \eqtypeco
\cut{\toBTree{B}}{k}$.
\end{lemma}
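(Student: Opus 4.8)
The plan is to first collapse the right-hand side to a single coinductive statement: by Lem.~\ref{lem:cutEquivalenceCo}, the condition $\forall k \in \Natural.\,\cut{\toBTree{A}}{k} \eqtypeco \cut{\toBTree{B}}{k}$ holds iff $\toBTree{A} \eqtypeco \toBTree{B}$. So the lemma reduces to proving $A \eqtypemu B \iff \toBTree{A} \eqtypeco \toBTree{B}$, i.e. that the inductive axiomatisation $\eqtypemu$ is sound and complete for tree equivalence $\eqtypeco$.

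For soundness ($\Rightarrow$) I would induct on the derivation of $A \eqtypemu B$. The equivalence-relation rules \ruleEqmuRefl, \ruleEqmuSymm, \ruleEqmuTrans follow because $\eqtypeco$ is itself an equivalence relation; the congruence rules \ruleEqmuFunc, \ruleEqmuComp, \ruleEqmuUnion follow from Def.~\ref{def:treeFunction}, which sends $\functype{A}{B}$, $\datatype{D}{A}$ and $\uniontype{A}{B}$ to trees with the expected root and children, together with the matching coinductive rules \ruleEqcoFunc, \ruleEqcoComp and \ruleEqcoUnion; and \ruleEqmuUnionIdem, \ruleEqmuUnionComm, \ruleEqmuUnionAssoc are discharged by exhibiting the witnessing functions $f,g$ in \ruleEqcoUnion, since that rule already builds in associativity, commutativity and idempotence of $\iuniontype$. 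The rule \ruleEqmuFold is immediate: $\toBTree{\rectype{V}{A}}$ and $\toBTree{\substitute{V}{\rectype{V}{A}}{A}}$ are literally the same tree by the last clause of Def.~\ref{def:treeFunction}, so reflexivity applies.

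The real work in the soundness direction is \ruleEqmuContr (and, similarly, \ruleEqmuRec). Here the inductive hypothesis gives $\toBTree{A} \eqtypeco \toBTree{\substitute{V}{A}{B}}$, which by Lem.~\ref{lem:substitutionOfTrees} equals $\substitute{V}{\toBTree{A}}{\toBTree{B}}$; I would then argue that $\toBTree{A}$ is a solution of the contractive fixpoint equation defining $\toBTree{\rectype{V}{B}}$ and that such solutions are unique up to $\eqtypeco$. Concretely I would prove $\forall k.\,\cut{\toBTree{A}}{k} \eqtypeco \cut{\toBTree{\rectype{V}{B}}}{k}$ by induction on $k$, unfolding the right-hand side through the finite unfoldings of Lem.~\ref{lem:cutFiniteUnfolding} and Rem.~\ref{rem:cutFiniteUnfolding} and pushing the equivalence under the substitution with Lem.~\ref{lem:substitutionOfEqtypesCo}; contractiveness guarantees that the $k$-th truncation depends only on strictly shallower data, so the induction closes. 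This fixpoint-uniqueness argument is the first genuine obstacle.

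For completeness ($\Leftarrow$) I would follow the coinductive-axiomatisation technique of Brandt--Henglein~\cite{DBLP:journals/fuin/BrandtH98}: reconstruct a finite $\eqtypemu$-derivation from $\toBTree{A}\eqtypeco\toBTree{B}$ by recursion guided by the tree structure, carrying a finite set of already-assumed pairs. At a $\idatatype$- or $\ifunctype$-node one recurses on the children using invertibility of $\eqtypeco$ (Lem.~\ref{lem:equalityIsInvertible}); at a union one decomposes both sides into maximal unions and realigns their summands using the functions $f,g$ supplied by \ruleEqcoUnion together with Lem.~\ref{lem:unionEquivalence} (and Lems.~\ref{lem:unionIsAssociative}--\ref{lem:unionIsIdempotent}); and whenever a recursive type is met, \ruleEqmuFold unfolds it while a repeated pair is closed off by \ruleEqmuContr. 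Termination, and hence the finiteness of the reconstructed derivation, rests on the fact that a contractive $\mu$-type has only finitely many distinct subexpressions reachable by unfolding, so only finitely many pairs can ever arise. I expect this completeness direction, and in particular the bookkeeping needed to fold the recursion correctly with \ruleEqmuContr in the presence of unions, to be the main difficulty of the proof.
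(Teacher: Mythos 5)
Your $\Rightarrow$ direction, including the preliminary collapse of the $\forall k$ condition to $\toBTree{A}\eqtypeco\toBTree{B}$ via Lem.~\ref{lem:cutEquivalenceCo}, is essentially the paper's argument: the paper also inducts on the derivation, handles the union axioms by exhibiting the witnessing functions in $\ruleEqcoUnion$, treats $\ruleEqmuFold$ by literal equality of the interpreted trees, and discharges $\ruleEqmuRec$ and $\ruleEqmuContr$ by exactly your fixpoint-uniqueness argument, made concrete through the finite unfoldings of Lem.~\ref{lem:cutFiniteUnfolding} together with Lem.~\ref{lem:substitutionOfTrees} and Lem.~\ref{lem:substitutionOfEqtypesCo}. (The paper keeps everything at the level of truncations and only assembles Lem.~\ref{lem:cutEquivalenceCo} with the present lemma in Prop.~\ref{prop:eqtypeSoundnessAndCompleteness}, but that difference is cosmetic.)

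The gap is in your $\Leftarrow$ direction. The system of Fig.~\ref{fig:equivalenceSchemesMu} has no hypothetical judgements: $\ruleEqmuContr$ is the Amadio--Cardelli unique-fixed-point rule, whose premise $A \eqtypemu \substitute{V}{A}{B}$ is an ordinary derivable judgement about a \emph{different} pair, not a discharged assumption. Consequently ``carrying a finite set of already-assumed pairs'' and ``closing off a repeated pair by $\ruleEqmuContr$'' is not an available inference step; to apply $\ruleEqmuContr$ at a binder $\rectype{V}{B}$ you must have organised the recursion so that you actually hold a derivation of $A \eqtypemu \substitute{V}{A}{B}$. Converting a Brandt--Henglein assumption-style reconstruction into such derivations is itself a theorem (essentially the equivalence of the two axiomatisations), not bookkeeping, and your sketch does not supply it. The paper takes a different route that sidesteps the memo set entirely: it first normalises $B$ so that $\card{\irectype}{B}=0$ (every $\mu$-type is equivalent, with identical tree, to one with no head occurrences of $\irectype$), and then argues by a well-founded induction on $\pair{h(\cut{\toBTree{A}}{k})}{\card{\irectype}{A}}$ with a case analysis on the form of $A$, using invertibility of $\eqtypeco$ (Lem.~\ref{lem:equalityIsInvertible}) to read off the shape of $B$, Lem.~\ref{lem:unionEquivalence} to realign union summands, and only $\ruleEqmuFold$ plus $\ruleEqmuTrans$ to unfold a leading $\mu$ in $A$; $\ruleEqmuContr$ is not invoked in that direction at all. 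You should either adopt that decomposition, or, if you keep the reconstruction with assumptions, explicitly prove that the guarded assumption rule is admissible for $\eqtypemu$ via $\ruleEqmuContr$ before relying on it.
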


\begin{proof}
$\Rightarrow)$ This part of the proof is by induction on $A \eqtypemu B$
analyzing the last rule applied. Note that $\cut{\toBTree{A}}{0} = \bullet =
\cut{\toBTree{B}}{0}$ by definition of the truncation, so we only analyze the
cases where $k > 0$.
\begin{itemize}
  \item $\ruleEqmuRefl$: then $B = A$ and we conclude by reflexivity of
  $\eqtypeco$, $\cut{\toBTree{A}}{k} \eqtypeco \cut{\toBTree{A}}{k}$ for every
  $k > 0$.

  \item $\ruleEqmuTrans$: then $A \eqtypemu C$ and $C \eqtypemu B$. By
  inductive hypothesis $\cut{\toBTree{A}}{k} \eqtypeco \cut{\toBTree{C}}{k}$
  and $\cut{\toBTree{C}}{k} \eqtypeco \cut{\toBTree{B}}{k}$ for every $k > 0$.
  Then we conclude by transitivity of $\eqtypeco$.
  
  \item $\ruleEqmuSymm$: then $B \eqtypemu A$. By inductive hypothesis
  $\cut{\toBTree{B}}{k} \eqtypeco \cut{\toBTree{A}}{k}$ for every $k > 0$ and
  we conclude by symmetry of $\eqtypeco$.
  
  \item $\ruleEqmuFunc$: then $A = \functype{A'}{A''}, B = \functype{B'}{B''}$
  with $A' \eqtypemu B'$ and $A'' \eqtypemu B''$. By inductive hypothesis
  $\cut{\toBTree{A'}}{k} \eqtypeco \cut{\toBTree{B'}}{k}$ and
  $\cut{\toBTree{A''}}{k} \eqtypeco \cut{\toBTree{B''}}{k}$ for every $k > 0$.
  Then $$\cut{\toBTree{A}}{k} =
  \functype{\cut{\toBTree{A'}}{k-1}}{\cut{\toBTree{A''}}{k-1}} \eqtypeco
  \functype{\cut{\toBTree{B'}}{k-1}}{\cut{\toBTree{B''}}{k-1}} =
  \cut{\toBTree{B}}{k}$$
  
  \item $\ruleEqmuComp$: then $A = \datatype{D}{A'}, B = \datatype{D'}{B'}$
  with $A' \eqtypemu B'$ and $A'' \eqtypemu B''$. This case is similar to the
  previous one. We conclude directly from the inductive hypothesis and the
  definition of the truncation $$\cut{\toBTree{\datatype{D}{A'}}}{k} \eqtypeco
  \cut{\toBTree{\datatype{D'}{B'}}}{k}$$
  
  \item $\ruleEqmuUnionIdem$: then $A = \uniontype{B}{B}$. In this case we need
  to take into account that $B$ may be a union type as well and, when working
  with $\eqtypeco$, we must consider maximal union types. Let
  $\cut{\toBTree{A}}{k} = \maxuniontype{i \in 1..n}{\tA_i}$ and
  $\cut{\toBTree{B}}{k} = \maxuniontype{j \in 1..m}{\tB_j}$ with $\tA_j, \tB_j
  \neq \iuniontype$. It is immedate to see from the equality above that
  $n = 2*m$ and $\tA_j = \tA_{2*j} = \tB_j$ for every $j \in 1..m$. Finally we
  conclude by reflexivity of $\eqtypeco$ and $\ruleEqcoUnion$ $$
\begin{array}{rcl}
\cut{\toBTree{A}}{k} & =         & \maxuniontype{i \in 1..n}{\tA_i} \\
                     & =         & \uniontype{(\maxuniontype{j \in 1..m}{\tB_j})}{(\maxuniontype{j \in 1..m}{\tB_j})} \\
                     & \eqtypeco & \maxuniontype{j \in 1..m}{\tB_j} \\
                     & =         & \cut{\toBTree{B}}{k}
\end{array} $$
  
  \item $\ruleEqmuUnionComm$: then $A = \uniontype{C_1}{C_2}$ and $B =
  \uniontype{C_2}{C_1}$. As in the previous case consider $\cut{A}{k} =
  \maxuniontype{i \in 1..n}{\tA_i}$ and $\cut{B}{k} = \maxuniontype{j \in
  1..m}{\tB_j}$ with $\tA_i, \tB_j \neq \iuniontype$. Here $n = m > 1$, hence
  $n + m > 2$. Moreover, assuming $\tA_k$ is the last component of $C_1$ ($k
  \in 1..(n-1)$), we have $\tA_i = \tB_{i+k}$ if $i \leq n-k$, and $\tA_i =
  \tB_{i-(n-k)}$ if $i > n-k$. Thus, we conclude by reflexivity of $\eqtypeco$
  and $\ruleEqcoUnion$, $\cut{\toBTree{A}}{k} \eqtypeco \cut{\toBTree{B}}{k}$.
  
  \item $\ruleEqmuUnionAssoc$: then $A =
  \uniontype{C_1}{(\uniontype{C_2}{C_3})}$ and $B =
  \uniontype{(\uniontype{C_1}{C_2})}{C_3}$. Considering maximal union types as
  before we have $\cut{A}{k} = \maxuniontype{i \in 1..n}{\tA_i}$ and
  $\cut{B}{k} = \maxuniontype{j \in 1..m}{\tB_j}$ with $\tA_i, \tB_j \neq
  \iuniontype$ and $n = m > 2$. In this case we may conclude by resorting to
  the identity function in $1..n$, since $\tA_i = \tB_i$. Thus, by reflexivity
  and $\ruleEqcoUnion$, $\cut{\toBTree{A}}{k} \eqtypeco \cut{\toBTree{B}}{k}$.
  
  \item $\ruleEqmuUnion$: then $A = \uniontype{A_1}{A_2}, B =
  \uniontype{B_1}{B_2}$ with $A_1 \eqtypemu B_1$ and $A_2 \eqtypemu B_2$. By
  inductive hypothesis $\cut{\toBTree{A_1}}{k} \eqtypeco
  \cut{\toBTree{B_1}}{k}$ and $\cut{\toBTree{A_2}}{k} \eqtypeco
  \cut{\toBTree{B_2}}{k}$ for every $k \in \Natural$. Assume, without loss of
  generality $$
\begin{array}{r@{\quad\text{with}\quad}l}
\cut{\toBTree{A_1}}{k} = \maxuniontype{i \in 1..n}{\tA_i} & \tA_i \neq \iuniontype, i \in 1..n \\
\cut{\toBTree{B_1}}{k} = \maxuniontype{j \in 1..m}{\tB_j} & \tB_j \neq \iuniontype, j \in 1..m
\end{array} $$

  If $n + m > 2$, there exists $f : 1..n \to 1..m$, $g : 1..m \to 1..n$ such
  that $\tA_i \eqtypeco \tB_{f(i)}$ and $\tA_{g(j)} \eqtypeco \tB_j$. If not
  (\ie $n = m = 1$), we simply take $f = g = \mathit{id}$.
  
  Likewise, for $A_2$ and $B_2$ we have $$
\begin{array}{r@{\quad\text{with}\quad}l}
\cut{\toBTree{A_2}}{k} = \maxuniontype{i \in 1..n'}{\tA'_i} & \tA'_i \neq \iuniontype, i \in 1..n' \\
\cut{\toBTree{B_2}}{k} = \maxuniontype{j \in 1..m'}{\tB'_j} & \tB'_j \neq \iuniontype, j \in 1..m'
\end{array} $$ and there exists $f' : 1..n' \to 1..m'$, $g' : 1..m' \to 1..n'$
  such that $\tA'_i \eqtypeco \tB'_{f'(i)}$ and $\tA'_{g'(j)} \eqtypeco
  \tB'_j$.

  Finally, since $(n + n' + m + m') > 2$, we can apply $\ruleEqcoUnion$ to
  conclude $$
\begin{array}{rcl}
\cut{\toBTree{A}}{k} & =         & \uniontype{\cut{\toBTree{A_1}}{k}}{\cut{\toBTree{A_2}}{k}} \\
                     & =         & \uniontype{(\maxuniontype{i \in 1..n}{\tA_i})}{(\maxuniontype{i \in 1..n'}{\tA'_i})} \\
                     & \eqtypeco & \uniontype{(\maxuniontype{j \in 1..m}{\tB_j})}{(\maxuniontype{j \in 1..m'}{\tB'_j})} \\
                     & =         & \uniontype{\cut{\toBTree{B_1}}{k}}{\cut{\toBTree{B_2}}{k}} \\
                     & =         & \cut{\toBTree{B}}{k}
\end{array} $$
  
  \item $\ruleEqmuRec$: then $A = \rectype{V}{A'}, B = \rectype{V}{B'}$ with
  $A' \eqtypemu B'$. By inductive hypothesis $\cut{\toBTree{A'}}{k} \eqtypeco
  \cut{\toBTree{B'}}{k}$ and, by Lem.~\ref{lem:cutEquivalenceCo}, $\toBTree{A'}
  \eqtypeco \toBTree{B'}$.
  
  Now we consider the definition of $A_\sigma^n$ and $B_\sigma^n$ as in
  Lem.~\ref{lem:cutFiniteUnfolding} with $A_\sigma^0 \eqdef \sigma A'$ and
  $B_\sigma^0 \eqdef \sigma B'$. We claim that $\toBTree{A_{id}^n} \eqtypeco
  \toBTree{B_{id}^n}$ for every $n \in \Natural$. To prove this we proceed by
  induction on $n$
  \begin{itemize}
    \item $n = 0$. Then we have $\toBTree{A_{id}^0} = \toBTree{A'} \eqtypeco
    \toBTree{B'} = \toBTree{B_{id}^0}$ that holds by hypothesis.
    
    \item $n > 0$. By reflexivity
    $\substitute{V}{\toBTree{A_{id}^{n-1}}}{\toBTree{A'}} \eqtypeco
    \substitute{V}{\toBTree{A_{id}^{n-1}}}{\toBTree{A'}}$. Also, by inductive
    hypothesis, $\toBTree{A_{id}^{n-1}} \eqtypeco \toBTree{B_{id}^{n-1}}$ and,
    by hypothesis, $\toBTree{A'} \eqtypeco \toBTree{B'}$. Then we can apply
    Lem.~\ref{lem:substitutionOfEqtypesCo} and~\ref{lem:substitutionOfTrees},
    and conclude $$\toBTree{A_{id}^n} =
    \substitute{V}{\toBTree{A_{id}^{n-1}}}{\toBTree{A'}} \eqtypeco
    \substitute{V}{\toBTree{B_{id}^{n-1}}}{\toBTree{B'}} = \toBTree{B_{id}^n}$$
  \end{itemize}
  
  Finally, by Lem.~\ref{lem:cutEquivalenceCo}, $\cut{\toBTree{A_{id}^n}}{k}
  \eqtypeco \cut{\toBTree{B_{id}^n}}{k}$ for every $k, n \in \Natural$. Thus we
  conclude by Lem.~\ref{lem:cutFiniteUnfolding} $$\cut{\toBTree{A}}{k}
  \eqtypeco \cut{\toBTree{A_{id}^k}}{k} \eqtypeco \cut{\toBTree{B_{id}^k}}{k}
  \eqtypeco \cut{\toBTree{B}}{k}$$
  
  \item $\ruleEqmuFold$: then $A = \rectype{V}{A'}$ and $B =
  \substitute{V}{\rectype{V}{A'}}{A'}$. The result is immediate by definition
  of the interpretation, $\toBTree{A} = \toBTree{\rectype{V}{A'}} =
  \toBTree{\substitute{V}{\rectype{V}{A'}}{A'}} = \toBTree{B}$. Then
  $\cut{\toBTree{A}}{k} \eqtypeco \cut{\toBTree{B}}{k}$ for every $k \in
  \Natural$ by reflexivity of $\eqtypeco$.
  
  \item $\ruleEqmuContr$: then $B = \rectype{V}{B'}$ is contractive and $A
  \eqtypemu \substitute{V}{A}{B'}$. By inductive hypothesis and
  Lem.~\ref{lem:cutEquivalenceCo}, $\toBTree{A} \eqtypeco
  \toBTree{\substitute{V}{A}{B'}}$.
  
  As in the previous case we consider $B_\sigma^n$ from
  Lem.~\ref{lem:cutFiniteUnfolding}, this time with $B_\sigma^0 \eqdef \sigma
  A$. Now we show $\toBTree{A} \eqtypeco \toBTree{B_{id}^n}$ for every $n \in
  \Natural$, by induction on $n$
  
  \begin{itemize}
    \item $n = 0$. This case is immediate since $\toBTree{B_{id}^0} =
    \toBTree{A}$ by definition.
    
    \item $n > 0$. Then, by definition and Lem.~\ref{lem:substitutionOfTrees},
    $\toBTree{B_{id}^n} =
    \substitute{V}{\toBTree{B_{id}^{n-1}}}{\toBTree{B'}}$. By inductive
    hypothesis we know $\toBTree{A} \eqtypeco \toBTree{B_{id}^{n-1}}$ and, by
    Lem.~\ref{lem:substitutionOfEqtypesCo}, $\toBTree{B_{id}^n} \eqtypeco
    \substitute{V}{\toBTree{A}}{\toBTree{B'}}$. Finally we conclude by applying
    Lem.~\ref{lem:substitutionOfTrees} and transitivity of $\eqtypeco$ with
    hypothesis $\toBTree{A} \eqtypeco \toBTree{\substitute{V}{A}{B'}}$
    $$\toBTree{B_{id}^n} \eqtypeco \toBTree{\substitute{V}{A}{B'}} \eqtypeco
    \toBTree{A}$$
  \end{itemize}
  
  Then, by Lem.~\ref{lem:cutEquivalenceCo}, $\cut{\toBTree{A}}{k} \eqtypeco
  \cut{\toBTree{B_{id}^n}}{k}$ for every $k, n \in \Natural$. On the other
  hand, by Lem.~\ref{lem:cutFiniteUnfolding}, we know
  $\cut{\toBTree{B_{id}^k}}{k} \eqtypeco \cut{\toBTree{B}}{k}$. Thus, we
  conclude $$\cut{\toBTree{A}}{k} \eqtypeco \cut{\toBTree{B_{id}^k}}{k}
  \eqtypeco \cut{\toBTree{B}}{k}$$
\end{itemize}

$\Leftarrow)$ Let $\cut{\toBTree{A}}{k} \eqtypeco \cut{\toBTree{B}}{k}$ for
every $k \in \Natural$. Given $B = \rectype{V}{B'}$ it is immediate to see that
$\toBTree{\rectype{V}{B'}} = \toBTree{\substitute{V}{B}{B'}}$ while $B
\eqtypemu \substitute{V}{B}{B'}$, by definition of the interpretation and
$\ruleEqmuFold$ respectively. Moreover, since $\mu$-types are contractive, we
can assure that $\card{\irectype}{\substitute{V}{B}{B'}} <
\card{\irectype}{B}$. By a simple induction on $\card{\irectype}{B}$ we can
prove that for every $B \in \Type$ there exists $C \in \Type$ such that
$\card{\irectype}{C} = 0$, $B \eqtypemu C$ and $\toBTree{B} = \toBTree{C}$. It
is important to note that we are resorting to tree equality on this argument.
Thus, without loss of generality, we consider during the proof only the cases
where $\card{\irectype}{B} = 0$.

This proof is by induction on the lexicographical extension of the standard
order to $\pair{h(\cut{\toBTree{A}}{k})}{\card{\irectype}{A}}$, where $h :
\TreeFinite \to \Natural$ is the height function for finite trees. We proceed
by analyzing the possible forms of $A$.

Given $A, B \in \Type$ we can assume $$
\begin{array}{r@{\quad\text{with}\quad}l}
\toBTree{A} = \maxuniontype{i \in 1..n}{\tA_i} & \tA_i \neq \iuniontype, i \in 1..n \\
\toBTree{B} = \maxuniontype{j \in 1..m}{\tB_j} & \tB_j \neq \iuniontype, j \in 1..m
\end{array} $$ by Rem.~\ref{rem:maximalUnionTypes}. Moreover, since
$\card{\irectype}{B} = 0$ and by definition of the interpretation, we have
$B = \maxuniontype{j \in 1..m}{B_j}$ with $\toBTree{B_j} = \tB_j$ for every $j
\in 1..m$ (note that $B_j$ is a non-union type for every $j \in 1..m$).

Then, we can divide this proof in two cases, either
\begin{inparaenum}[(i)]
  \item $A$ and $B$ are both non-union types and thus $n = m = 1$; or
  \item at least one of them is a union type (\ie $n + m > 2$).
\end{inparaenum}

\begin{enumerate}[(i)]
  \item If $n = m = 1$. Here we analyze the shape of $A$:
  \begin{itemize}
    \item $A = a$. Then $\cut{\toBTree{A}}{k} = a$ for every $k > 0$ and, by
    Lem.~\ref{lem:equalityIsInvertible}, $\cut{\toBTree{B}}{k} = \cut{\tB_1}{k}
    = a$. Thus, by definition of the interpretation and tree truncation with
    the assumption $\card{\irectype}{B} = 0$, we have $B = a$ and conclude with
    $\ruleEqmuRefl$.
    
    \item $A = \datatype{D}{A'}$. Here we have $\cut{\toBTree{A}}{k} =
    \datatype{\cut{\toBTree{D}}{k-1}}{\cut{\toBTree{A'}}{k-1}}$ for every $k >
    0$ and, by Lem.~\ref{lem:equalityIsInvertible} once again,
    $\cut{\toBTree{B}}{k} = \datatype{\tB'_k}{\tB''_k}$ with
    $\cut{\toBTree{D}}{k-1} \eqtypeco \tB'_k$ and $\cut{\toBTree{A'}}{k-1}
    \eqtypeco \tB''_k$. With a similar analysis to the one made in
    Lem.~\ref{lem:cutEquivalenceCo}, by definition of the interpretation and
    tree truncation with the assumption $\card{\irectype}{B} = 0$, we can
    assure that $B = \datatype{D'}{B'}$ such that $\tB'_k =
    \cut{\toBTree{D'}}{k-1}$ and $\tB''_k = \cut{\toBTree{B'}}{k-1}$ for every
    $k > 0$. Then, we have $\cut{\toBTree{D}}{k-1} \eqtypeco
    \cut{\toBTree{D'}}{k-1}$ and $\cut{\toBTree{A'}}{k-1} \eqtypeco
    \cut{\toBTree{B'}}{k-1}$ and we can apply the inductive hypothesis to get
    $D \eqtypemu D'$ and $A' \eqtypemu B'$. Finally we conclude by
    $\ruleEqmuComp$, $\datatype{D}{A'} \eqtypemu \datatype{D'}{B'}$.

    \item $A = \functype{A'}{A''}$. Analysis for this case is similar to the
    previous one. From $\cut{\toBTree{A}}{k} =
    \functype{\cut{\toBTree{A'}}{k-1}}{\cut{\toBTree{A''}}{k-1}}$ we get $B =
    \functype{B'}{B''}$ with $\cut{\toBTree{A'}}{k-1} \eqtypeco
    \cut{\toBTree{B'}}{k-1}$ and $\cut{\toBTree{A''}}{k-1} \eqtypeco
    \cut{\toBTree{B''}}{k-1}$ for every $k > 0$. Then, by inductive hypothesis
    $A' \eqtypemu B'$ and $A'' \eqtypemu B''$. Thus we conclude with
    $\ruleEqmuFunc$, $\functype{A'}{A''} \eqtypemu \functype{B'}{B''}$.

    \item $A = \rectype{V}{A'}$ with $A'$ a non-union type. By definition of
    the interpretation we have $\cut{\toBTree{A}}{k} =
    \cut{\toBTree{\substitute{V}{A}{A'}}}{k} \eqtypeco \cut{\toBTree{B}}{k}$.
    Here we may apply the inductive hypothesis as
    $\card{\irectype}{\substitute{V}{A}{A'}} < \card{\irectype}{A}$. Then,
    $\substitute{V}{\rectype{V}{A'}}{A'} \eqtypemu B$. On the other hand,
    $\rectype{V}{A'} \eqtypemu \substitute{V}{\rectype{V}{A'}}{A'}$ by
    $\ruleEqmuFold$. Finally we conclude with $\ruleEqmuTrans$,
    $\rectype{V}{A'} \eqtypemu B$.
  \end{itemize}
  
  \item If $n + m > 2$. Then the last rule applied to derive
  $\cut{\toBTree{A}}{k} \eqtypeco \cut{\toBTree{B}}{k}$ is necessarily
  $\ruleEqcoUnion$. Then, there exists $f : 1..n \to 1..m, g : 1..m \to 1..n$
  such that $\cut{\tA_i}{k} \eqtypeco \cut{\toBTree{B_{f(i)}}}{k}$ and
  $\cut{\tA_{g(j)}}{k} \eqtypeco \cut{\toBTree{B_j}}{k}$ for every $i \in 1..n,
  j \in 1..m$.
  
  If $\card{\irectype}{A} \neq 0$, then $A = \rectype{V}{A'}$, $\toBTree{A} =
  \toBTree{\substitute{V}{A}{A'}}$ by definition and
  $\card{\irectype}{\substitute{V}{A}{A'}} < \card{\irectype}{A}$ by
  contractivity. Thus we can conclude directly from the inductive hypothesis
  with $\ruleEqmuFold$ and $\ruleEqmuTrans$ as before.
  
  If $\card{\irectype}{A} = 0$, by definition of the interpretation we have $A
  = \maxuniontype{i \in 1..n}{A_i}$ with $\toBTree{A_i} = \tA_i$ for every $i
  \in 1..n$. Hence, $\cut{\toBTree{A_i}}{k} \eqtypeco
  \cut{\toBTree{B_{f(i)}}}{k}$ and $\cut{\toBTree{A_{g(j)}}}{k} \eqtypeco
  \cut{\toBTree{B_j}}{k}$.
  
  Moreover, since $\tA_i, \tB_j \neq \iuniontype$, we are in the same situation
  as case (i) of this proof, so we can assure $A_i \eqtypemu B_{f(i)}$ and
  $A_{g(j)} \eqtypemu B_j$ for every $i \in 1..n, j \in 1..m$.
  
  Finally, we are under the hypothesis of Lem.~\ref{lem:unionEquivalence}, thus
  we conclude $\maxuniontype{i \in 1..n}{A_i} \eqtypemu \maxuniontype{j \in
  1..m}{B_j}$.
\end{enumerate}
\end{proof}

\begin{proposition}
\label{prop:eqtypeSoundnessAndCompleteness}
$A \eqtypemu B$ iff $\toBTree{A} \eqtypeco \toBTree{B}$.
\end{proposition}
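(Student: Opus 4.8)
The plan is to obtain this statement as an immediate corollary of the two truncation characterisations established above, namely Lem.~\ref{lem:cutEquivalenceMu} and Lem.~\ref{lem:cutEquivalenceCo}. Both lemmas pivot on the very same intermediate condition---that the two trees agree, up to infinite-type equivalence, at every finite depth---so the proof amounts to chaining their biconditionals by transitivity of ``iff''.

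First I would invoke Lem.~\ref{lem:cutEquivalenceMu}, which states precisely that $A \eqtypemu B$ holds if and only if $\cut{\toBTree{A}}{k} \eqtypeco \cut{\toBTree{B}}{k}$ for every $k \in \Natural$. Next I would apply Lem.~\ref{lem:cutEquivalenceCo} with $\tA$ taken to be $\toBTree{A}$ and $\tB$ taken to be $\toBTree{B}$; this is legitimate since $A$ and $B$ are contractive $\mu$-types, so $\toBTree{A}, \toBTree{B} \in \TreeRegular \subseteq \Tree$, which is exactly the domain on which Lem.~\ref{lem:cutEquivalenceCo} operates. That lemma yields that the family of truncation equivalences holds for all $k$ precisely when $\toBTree{A} \eqtypeco \toBTree{B}$. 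Composing the two equivalences gives $A \eqtypemu B$ iff $\toBTree{A} \eqtypeco \toBTree{B}$, as required.

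Since the conclusion follows by a single transitive step, there is no genuine obstacle remaining at this stage: all the difficulty has already been absorbed into the two supporting lemmas. The heavy lifting lives in the forward direction of Lem.~\ref{lem:cutEquivalenceMu}---the induction on the derivation of $\eqtypemu$, whose delicate cases are the recursion rules \ruleEqmuRec\ and \ruleEqmuContr, dispatched through the finite-unfolding machinery of Lem.~\ref{lem:cutFiniteUnfolding}---and in the coinductive $\Phieqtypeco$-density arguments underlying both directions of Lem.~\ref{lem:cutEquivalenceCo}. Consequently I expect the only point worth stating explicitly in the write-up to be the well-definedness check on the interpretation $\toBTree{\bullet}$, ensuring that Lem.~\ref{lem:cutEquivalenceCo} indeed applies to the regular trees $\toBTree{A}$ and $\toBTree{B}$.
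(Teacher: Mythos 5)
Your proposal is correct and matches the paper's proof exactly: the paper also obtains the result by chaining Lem.~\ref{lem:cutEquivalenceMu} with Lem.~\ref{lem:cutEquivalenceCo} through the common intermediate condition $\forall k \in \Natural.\,\cut{\toBTree{A}}{k} \eqtypeco \cut{\toBTree{B}}{k}$. The remark about $\toBTree{A},\toBTree{B}$ being regular trees is a harmless extra observation that the paper leaves implicit.
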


\begin{proof}
This proposition follows from previous results shown on
Lem.~\ref{lem:cutEquivalenceCo} and~\ref{lem:cutEquivalenceMu}: $A \eqtypemu B$
iff $\forall k \in \Natural.\cut{\toBTree{A}}{k} \eqtypeco
\cut{\toBTree{B}}{k}$ iff $\toBTree{A} \eqtypeco \toBTree{B}$.
\end{proof}

To prove the correspondence between the subtyping relations we need to verify
that all variable assumptions in the subtyping context can be substituted by
convenient $\mu$-types before applying $\toBTree{\bullet}$.

\begin{lemma}
\label{lem:subtypeSoundnessWithHypothesis}
Let $\Sigma = \set{V_i \subtypemu W_i}_{i \in 1..n}$ be a subtyping context
and $\sigma$ a substitution such that $\dom{\sigma} = \set{V_i, W_i}_{i \in
1..n}$, $\sigma(V_i) = A_i$ and $\sigma(W_i) = B_i$ with $\dom{\sigma} \cap
\fv{\set{A_i, B_i}_{i \in 1..n}} = \varnothing$, $\toBTree{A_i} \subtypeco
\toBTree{B_i}$ and $A_i, B_i \in \Type$ for every $i \in 1..n$.
If $\sequTE{\Sigma}{A \subtypemu B}$, then $\toBTree{\sigma A} \subtypeco
\toBTree{\sigma B}$.
\end{lemma}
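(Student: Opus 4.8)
The plan is to induct on the derivation of $\sequTE{\Sigma}{A \subtypemu B}$, proving the statement uniformly over all admissible $\Sigma$ and $\sigma$ so that the induction hypothesis may be reinstantiated with different substitutions in the recursive case. Before starting, I would record an auxiliary commutation fact extending Lem.~\ref{lem:substitutionOfTrees} to the simultaneous substitution $\sigma$: writing $\hat{\sigma}$ for the tree substitution sending each $V_i \mapsto \toBTree{A_i}$ and $W_i \mapsto \toBTree{B_i}$, one has $\toBTree{\sigma A} = \hat{\sigma}\,\toBTree{A}$. This holds by iterating Lem.~\ref{lem:substitutionOfTrees} over the finite domain of $\sigma$; the hypothesis $\dom{\sigma} \cap \fv{\set{A_i,B_i}_i} = \varnothing$ guarantees the single-variable substitutions do not capture and may be sequenced in any order.

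Most cases are then immediate from the matching coinductive rule and the induction hypothesis. For \ruleSubmuRefl\ we use reflexivity of $\subtypeco$; for \ruleSubmuHyp\ the pair is some $V_i \subtypemu W_i \in \Sigma$ and the goal $\toBTree{\sigma V_i} \subtypeco \toBTree{\sigma W_i}$ is precisely the hypothesis $\toBTree{A_i} \subtypeco \toBTree{B_i}$; \ruleSubmuTrans\ uses transitivity of the preorder $\subtypeco$. The cases \ruleSubmuComp\ and \ruleSubmuFunc\ follow by pushing $\sigma$ through the constructor (so that, e.g., $\toBTree{\sigma(\datatype{D}{A})} = \datatype{\toBTree{\sigma D}}{\toBTree{\sigma A}}$) and applying \ruleSubcoComp\ resp.\ \ruleSubcoFunc, the latter with domains swapped to account for contravariance. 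The rules \ruleSubmuUnionL, \ruleSubmuUnionRL\ and \ruleSubmuUnionRR\ are handled by writing the relevant trees in maximal union form and assembling the index functions supplied by the induction hypothesis into a single function feeding \ruleSubcoUnion, exactly as in the union arguments of Lem.~\ref{lem:eqImpliesSub} and Lem.~\ref{lem:cutSubtypingCo}. The rule \ruleSubmuEq\ is slightly different: from $A \eqtypemu B$ we get $\toBTree{A} \eqtypeco \toBTree{B}$ by Prop.~\ref{prop:eqtypeSoundnessAndCompleteness}, then $\toBTree{\sigma A} \eqtypeco \toBTree{\sigma B}$ by the commutation fact together with a repeated use of Lem.~\ref{lem:substitutionOfEqtypesCo} (substituting the same trees $\toBTree{A_i},\toBTree{B_i}$ into equivalent trees), and finally $\toBTree{\sigma A} \subtypeco \toBTree{\sigma B}$ by Lem.~\ref{lem:eqImpliesSub}.

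The real obstacle is \ruleSubmuRec, where $\rectype{V}{A} \subtypemu \rectype{W}{B}$ is derived from $\sequTE{\Sigma, V \subtypemu W}{A \subtypemu B}$ with $W \notin \fv{A}$ and $V \notin \fv{B}$. Assuming, by $\alpha$-renaming, that $V,W$ are fresh for $\sigma$ and its images, we have $\sigma(\rectype{V}{A}) = \rectype{V}{\sigma A}$ and similarly for $B$; write $\treeFont{R} = \toBTree{\sigma(\rectype{V}{A})}$ and $\treeFont{S} = \toBTree{\sigma(\rectype{W}{B})}$. A direct attempt to apply the induction hypothesis with $V \mapsto \sigma(\rectype{V}{A})$ and $W \mapsto \sigma(\rectype{W}{B})$ is circular, since discharging the new context entry $V \subtypemu W$ would already demand $\treeFont{R} \subtypeco \treeFont{S}$. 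I break the circularity with finite unfoldings. Fix a closed contractive base $T_0$ (say a type constant) and set $R_n = \unfoldf{(\rectype{V}{A})}{\sigma}{n}$ and $S_n = \unfoldf{(\rectype{W}{B})}{\sigma}{n}$ with base $T_0$, as in Lem.~\ref{lem:cutFiniteUnfolding}. First I prove, by an inner induction on $n$, that $\toBTree{R_n} \subtypeco \toBTree{S_n}$: the base case is reflexivity since $R_0 = T_0 = S_0$; for the step, the substitution $\tau_n = \sigma \uplus \rename{V}{R_n} \uplus \rename{W}{S_n}$ satisfies $\tau_n A = R_{n+1}$ and $\tau_n B = S_{n+1}$ (using $W \notin \fv{A}$ and $V \notin \fv{B}$) and meets the admissibility conditions for the context $\Sigma, V \subtypemu W$ — the only new requirement $\toBTree{\tau_n V} \subtypeco \toBTree{\tau_n W}$, i.e.\ $\toBTree{R_n} \subtypeco \toBTree{S_n}$, being exactly the inner induction hypothesis — so the outer induction hypothesis yields $\toBTree{R_{n+1}} \subtypeco \toBTree{S_{n+1}}$.

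It then remains to transfer this to $\treeFont{R} \subtypeco \treeFont{S}$. From $\toBTree{R_n} \subtypeco \toBTree{S_n}$ and Lem.~\ref{lem:cutSubtypingCo} I obtain $\cut{\toBTree{R_k}}{k} \subtypeco \cut{\toBTree{S_k}}{k}$ for every $k$; by Lem.~\ref{lem:cutFiniteUnfolding} these truncations are $\eqtypeco$-equivalent to $\cut{\treeFont{R}}{k}$ and $\cut{\treeFont{S}}{k}$ respectively, so by symmetry of $\eqtypeco$, Lem.~\ref{lem:eqImpliesSub} and transitivity of $\subtypeco$ I conclude $\cut{\treeFont{R}}{k} \subtypeco \cut{\treeFont{S}}{k}$ for all $k$, whence $\treeFont{R} \subtypeco \treeFont{S}$ by the converse direction of Lem.~\ref{lem:cutSubtypingCo}. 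The points demanding care are the freshness/disjointness bookkeeping that keeps each $\tau_n$ admissible (closedness of $T_0$ and $\dom{\sigma} \cap \fv{\set{A_i,B_i}_i} = \varnothing$ ensure none of $V_i,W_i,V,W$ occurs free in $R_n,S_n$, and that $R_n,S_n$ stay contractive), and the interleaving of the two inductions, which is sound because each $\tau_n$ is fed only to the strictly smaller premise derivation.
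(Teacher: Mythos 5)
Your proposal is correct and follows essentially the same route as the paper's proof: induction on the derivation with the statement quantified over all admissible $\Sigma,\sigma$, the straightforward cases dispatched by the matching coinductive rules, and the $\ruleSubmuRec$ case broken via finite unfoldings $\unfoldf{A}{\sigma}{n}$ with a constant base, an inner induction feeding the extended substitution $\sigma\uplus\rename{V}{\unfoldf{A}{\sigma}{n}}\uplus\rename{W}{\unfoldf{B}{\sigma}{n}}$ to the outer induction hypothesis, and the transfer to $\toBTree{\sigma A}\subtypeco\toBTree{\sigma B}$ through Lem.~\ref{lem:cutFiniteUnfolding}, Lem.~\ref{lem:eqImpliesSub} and Lem.~\ref{lem:cutSubtypingCo}. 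The only cosmetic divergence is in \ruleSubmuEq, where you commute $\sigma$ at the tree level via Lem.~\ref{lem:substitutionOfEqtypesCo} rather than using congruence of $\eqtypemu$ on $\mu$-types first; both are sound.
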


\begin{proof}
By induction on $\sequTE{\Sigma}{A \subtypemu B}$ analyzing the last rule
applied.
\begin{itemize}
  \item $\ruleSubmuRefl$: $A = B$ and the result is immediate by reflexivity of
  $\subtypeco$.
  
  \item $\ruleSubmuTrans$: $\sequTE{\Sigma}{A \subtypemu C}$ and
  $\sequTE{\Sigma}{C \subtypemu B}$ for some $C \in \Type$. By inductive
  hypothesis $\toBTree{\sigma A} \subtypeco \toBTree{\sigma C}$ and
  $\toBTree{\sigma C} \subtypeco \toBTree{\sigma B}$ for every $\sigma$
  satisfying the hypothesis of the lemma. Then we conclude by transitivity of
  $\subtypeco$.
  
  \item $\ruleSubmuHyp$: $A = V$ and $B = W$ with $\Sigma = \Sigma', V
  \subtypemu W$. Then $\sigma A = A_n$, $\sigma B = B_n$ and the result is
  immediate since, by hypothesis of the lemma, $\toBTree{A_i} \subtypeco
  \toBTree{B_i}$ for every $i \in 1..n$.
  
  \item $\ruleSubmuEq$: $\sequTE{}{A \eqtypemu B}$ and, since $\eqtypemu$ is a
  congruence, we have $\sequTE{}{\sigma A \eqtypemu \sigma B}$ for every
  substitution. So we can take $\sigma$ satisfying the hypothesis of the lemma.
  Then, by Prop.~\ref{prop:eqtypeSoundnessAndCompleteness}, $\toBTree{\sigma A}
  \eqtypeco \toBTree{\sigma B}$ and we conclude by Lem.~\ref{lem:eqImpliesSub},
  $\toBTree{\sigma A} \subtypeco \toBTree{\sigma B}$.
  
  \item $\ruleSubmuFunc$: $A = \functype{A'}{A''}$ and $B = \functype{B'}{B''}$
  with $\sequTE{\Sigma}{B' \subtypemu A'}$ and $\sequTE{\Sigma}{A'' \subtypemu
  B''}$. By inductive hypothesis we have $\toBTree{\sigma B'} \subtypeco
  \toBTree{\sigma A'}$ and $\toBTree{\sigma A''} \subtypeco \toBTree{\sigma
  B''}$. Then $$
\begin{array}{rcl}
\toBTree{\sigma A} & =          & \toBTree{\functype{\sigma A'}{\sigma A''}} \\
                   & =          & \functype{\toBTree{\sigma A'}}{\toBTree{\sigma A''}} \\
                   & \subtypeco & \functype{\toBTree{\sigma B'}}{\toBTree{\sigma B''}} \\
                   & =          & \toBTree{\functype{\sigma B'}{\sigma B''}} \\
                   & =          & \toBTree{\sigma B}
\end{array} $$
  
  \item $\ruleSubmuComp$: $A = \datatype{D}{A'}$ and $B = \datatype{D'}{B'}$
  with $\sequTE{\Sigma}{D \subtypemu D'}$ and $\sequTE{\Sigma}{A' \subtypemu
  B'}$. Similarly to the previous case we conclude from the inductive
  hypothesis that $\toBTree{\datatype{\sigma D}{\sigma A'}} \subtypeco
  \toBTree{\datatype{\sigma D'}{\sigma B'}}$.
  
  \item $\ruleSubmuUnionL$: $A = \uniontype{A'}{A''}$ with $\sequTE{\Sigma}{A'
  \subtypemu B}$ and $\sequTE{\Sigma}{A'' \subtypemu B}$. By inductive
  hypothesis $\toBTree{\sigma A'} \subtypeco \toBTree{\sigma B}$ and
  $\toBTree{\sigma A''} \subtypeco \toBTree{\sigma B}$. Let $$
\begin{array}{r@{\ =\ }l@{\qquad}r@{\ \neq\ }l}
\toBTree{\sigma A'}  & \maxuniontype{i \in 1..m}{\tA'_i}   & \tA'_i  & \iuniontype \\
\toBTree{\sigma A''} & \maxuniontype{j \in 1..m'}{\tA''_j} & \tA''_j & \iuniontype \\
\toBTree{\sigma B}   & \maxuniontype{k \in 1..l}{\tB_k}    & \tB_k   & \iuniontype
\end{array} $$
Now we need to consider the following situations:
  \begin{enumerate}
    \item $m = m' = l = 1$. Then we conclude directly from the inductive
    hypothesis by applying $\ruleSubcoUnion$,
    $\uniontype{\toBTree{\sigma A'}}{\toBTree{\sigma A''}} =
    \uniontype{\tA'_1}{\tA''_1} \subtypeco \tB_1 = \toBTree{\sigma B}$.
    
    \item $m + l > 2$. Then there exists $f : 1..m \to 1..l$ such that $\tA'_i
    \subtypeco \tB_{f(i)}$ and there are two possible cases:
    \begin{enumerate}
      \item $m' = l = 1$. Then $\tA'_i \subtypeco \tB_1$ (\ie $f$ is a constant
      function) and $\tA''_1 \subtypeco \tB_1$. Then we conclude by
      $\ruleSubcoUnion$ $$\uniontype{\toBTree{\sigma A'}}{\toBTree{\sigma A''}}
      = \uniontype{(\maxuniontype{i \in 1..m}{\tA'_i})}{\tA''_1} \subtypeco
      \tB_1 = \toBTree{\sigma B}$$
      
      \item $m' + l > 2$. Then there exists $g : 1..m' \to 1..l$ such that
      $\tA''_j \subtypeco \tB_{g(j)}$. Once again we conclude by
      $\ruleSubcoUnion$ $$
\begin{array}{rcl}
\uniontype{\toBTree{\sigma A'}}{\toBTree{\sigma A''}} & =          & \uniontype{(\maxuniontype{i \in 1..m}{\tA'_i})}{(\maxuniontype{j \in 1..m'}{\tA''_j})} \\
                                                      & \subtypeco & \maxuniontype{k \in 1..l}{\tB_k} \\
                                                      & =          & \toBTree{\sigma B}
\end{array} $$
    \end{enumerate}
    
    \item The only case left to analyze is $m = l = 1$ and $m' + l > 2$ that
    are similar to one where $m' = l = 1$ and $m + l > 2$.
  \end{enumerate}
  So we conclude that $\toBTree{\sigma A} =
  \uniontype{\toBTree{\sigma A'}}{\toBTree{\sigma A''}} \subtypeco
  \toBTree{\sigma B}$.
  
  \item $\ruleSubmuUnionRL$: $B = \uniontype{B'}{B''}$ with $\sequTE{\Sigma}{A
  \subtypemu B'}$. By inductive hypothesis $\toBTree{\sigma A} \subtypeco
  \toBTree{\sigma B'}$. Let $$
\begin{array}{r@{\ =\ }l@{\qquad}r@{\ \neq\ }l}
\toBTree{\sigma A}   & \maxuniontype{i \in 1..m}{\tA_i}    & \tA_i   & \iuniontype \\
\toBTree{\sigma B'}  & \maxuniontype{j \in 1..l}{\tB'_j}   & \tB'_j  & \iuniontype \\
\toBTree{\sigma B''} & \maxuniontype{k \in 1..l'}{\tB''_k} & \tB''_k & \iuniontype
\end{array} $$
Here there are two possible situations:
  \begin{enumerate}
    \item $m = l = 1$. Then $\tA_1 \subtypeco \tB'_1$ and we conclude by
    $\ruleSubcoUnion$ $$\toBTree{\sigma A} = \tA_1 \subtypeco
    \uniontype{\tB'_1}{(\maxuniontype{k \in 1..l'}{\tB''_k})} =
    \uniontype{\toBTree{\sigma B'}}{\toBTree{\sigma B''}}$$
    \item $m + l > 2$. Then there exists $f : 1..m \to 1..l$ such that $\tA_i
    \subtypeco \tB'_{f(i)}$. We are again in a situation where all the
    conditions for $\ruleSubcoUnion$ hold $$
\begin{array}{rcl}
\toBTree{\sigma A} & =          & \maxuniontype{i \in 1..m}{\tA_i} \\
                   & \subtypeco & \uniontype{(\maxuniontype{j \in 1..l}{\tB'_j})}{(\maxuniontype{k \in 1..l'}{\tB''_k})} \\
                   & =          & \uniontype{\toBTree{\sigma B'}}{\toBTree{\sigma B''}}
\end{array} $$
  \end{enumerate}
  So we conclude that $\toBTree{\sigma A} \subtypeco \uniontype{\toBTree{\sigma B'}}{\toBTree{\sigma B''}} = \toBTree{\sigma B}$.
  
  \item $\ruleSubmuUnionRR$: this case is similar to the previous one, with
  $B = \uniontype{B'}{B''}$ and $\sequTE{\Sigma}{A \subtypemu B''}$.
  
  \item $\ruleSubmuRec$: $A = \rectype{V}{A'}, B = \rectype{W}{B'}$ with
  $\sequTE{\Sigma, V \subtypemu W}{A' \subtypemu B'}, W \notin \fv{A'}$ and $V
  \notin \fv{B'}$. Let $\sigma$ be a substitution satisfying the hypothesis of
  the lemma $$
 \begin{array}{cl}
(1) & \dom{\sigma} = \set{V_i, W_i}_{i \in 1..n} \\
(2) & \sigma(V_i) = A_i \text{ and } \sigma(W_i) = B_i \\
(3) & \set{V_i, W_i}_{i \in 1..n} \cap \fv{\set{A_i, B_i}_{i \in 1..n}} = \varnothing \\
(4) & \toBTree{A_i} \subtypeco \toBTree{B_i}
\end{array} $$
  Now consider $\unfoldf{A}{\sigma}{m}$ and $\unfoldf{B}{\sigma}{m}$ as in
  Lem.~\ref{lem:cutFiniteUnfolding}, recall $$
\begin{array}{rcl}
\unfoldf{A}{\sigma}{0} & \eqdef & \bullet \\
\unfoldf{B}{\sigma}{0} & \eqdef & \bullet
\end{array}
\hspace{.5cm}
\begin{array}{rcl}
\unfoldf{A}{\sigma}{m+1} & \eqdef & (\sigma \uplus \rename{V}{\unfoldf{A}{\sigma}{m}})A' \\
\unfoldf{B}{\sigma}{m+1} & \eqdef & (\sigma \uplus \rename{W}{\unfoldf{B}{\sigma}{m}})B'
\end{array} $$ and also the substitution $\sigma_m = (\sigma \uplus
  \rename{V}{\unfoldf{A}{\sigma}{m}} \uplus
  \rename{W}{\unfoldf{B}{\sigma}{m}})$ for each $m \in \Natural$. Notice that
  $\sigma_m A' = \unfoldf{A}{\sigma}{m+1}$ since $W \notin \fv{A'}$. Similarly,
  $\sigma_m B' = \unfoldf{B}{\sigma}{m+1}$.
  
  It is immediate to see that $\sigma_0$ satisfies the hypothesis of the lemma
  for the extended context $\Sigma, V \subtypemu W$, taking $A_{n+1} =
  \unfoldf{A}{\sigma}{0} = \bullet = \unfoldf{B}{\sigma}{0} = B_{n+1}$. This
  allow us to apply the inductive hypothesis and conclude that
  $\toBTree{\unfoldf{A}{\sigma}{1}} = \toBTree{\sigma_0 A'} \subtypeco
  \toBTree{\sigma_0 B'} = \toBTree{\unfoldf{B}{\sigma}{1}}$, and once again we
  are under the hypothesis of the lemma, this time with $\sigma_1$. Thus,
  directly from the inductive hypothesis (applied as many times as needed) we
  have $\toBTree{\unfoldf{A}{\sigma}{m}} \subtypeco
  \toBTree{\unfoldf{B}{\sigma}{m}}$ for every $m \in \Natural$.
  
  Then, by Lem.~\ref{lem:cutSubtypingCo},
  $\cut{\toBTree{\unfoldf{A}{\sigma}{m}}}{k} \subtypeco
  \cut{\toBTree{\unfoldf{B}{\sigma}{m}}}{k}$ for every $k \in \Natural$.
  Moreover, by Lem.~\ref{lem:cutFiniteUnfolding} we have
  $\cut{\toBTree{\sigma A}}{k} \eqtypeco
  \cut{\toBTree{\unfoldf{A}{\sigma}{k}}}{k}$ and
  $\cut{\toBTree{\unfoldf{B}{\sigma}{k}}}{k} \eqtypeco \cut{\toBTree{\sigma
  B}}{k}$. Finally, by Lem.~\ref{lem:eqImpliesSub} and transitivity of
  subtyping we get $\cut{\toBTree{\sigma A}}{k} \subtypeco \cut{\toBTree{\sigma
  B}}{k}$ and conclude with Lem.~\ref{lem:cutSubtypingCo}.
\end{itemize}
\end{proof}

Finally, as mentioned above, the following proposition and
Lem.~\ref{lem:subtypingIsInvertible} allows us to prove
Prop.~\ref{prop:subtypingIsInvertible}.

\begin{proposition}
\label{prop:subtypeSoundnessAndCompleteness}
$A \subtypemu B$ iff $\toBTree{A} \subtypeco \toBTree{B}$.
\end{proposition}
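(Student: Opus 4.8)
The plan is to establish the two implications separately, exactly mirroring the way Prop.~\ref{prop:eqtypeSoundnessAndCompleteness} was obtained from Lem.~\ref{lem:cutEquivalenceCo} and Lem.~\ref{lem:cutEquivalenceMu}. For subtyping the coinductive ``truncation'' half is already available as Lem.~\ref{lem:cutSubtypingCo}, so the work splits into (i)~a soundness half, which the paper has essentially packaged as Lem.~\ref{lem:subtypeSoundnessWithHypothesis}, and (ii)~a completeness half, for which I would prove a subtyping analogue of the $\Leftarrow$ direction of Lem.~\ref{lem:cutEquivalenceMu}.

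For the forward implication ($A\subtypemu B \Rightarrow \toBTree{A}\subtypeco\toBTree{B}$) I would simply instantiate Lem.~\ref{lem:subtypeSoundnessWithHypothesis} with the empty subtyping context $\Sigma=\varnothing$ and the empty (identity) substitution $\sigma$. All side conditions on $\sigma$ are then vacuous, and since $\sequTE{\varnothing}{A\subtypemu B}$ is exactly $A\subtypemu B$, the lemma yields $\toBTree{A}\subtypeco\toBTree{B}$ directly.

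For the converse ($\toBTree{A}\subtypeco\toBTree{B}\Rightarrow A\subtypemu B$) I would first apply Lem.~\ref{lem:cutSubtypingCo} to replace the hypothesis by $\forall k\in\Natural.\,\cut{\toBTree{A}}{k}\subtypeco\cut{\toBTree{B}}{k}$, and then prove the implication ``$\forall k.\,\cut{\toBTree{A}}{k}\subtypeco\cut{\toBTree{B}}{k}\Rightarrow A\subtypemu B$'' by the same recipe as the $\Leftarrow$ part of Lem.~\ref{lem:cutEquivalenceMu}. Concretely: normalise $B$ so that $\card{\irectype}{B}=0$ (possible by \ruleEqmuFold together with \ruleSubmuEq and \ruleSubmuTrans, since tree equality is preserved), then induct on $\pair{h(\cut{\toBTree{A}}{k})}{\card{\irectype}{A}}$ and analyse the head of $A$. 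In the atom, compound and function cases I would invoke Lem.~\ref{lem:subtypingIsInvertible} (instead of Lem.~\ref{lem:equalityIsInvertible}) on the trees to extract componentwise coinductive subtyping, recurse via the induction hypothesis, and reassemble with \ruleSubmuComp and \ruleSubmuFunc --- the only new ingredient relative to the equivalence proof being that in the function case the domains are contravariant, so the induction hypothesis is applied to the reversed pair. The union case is handled by an assembly step analogous to Lem.~\ref{lem:unionEquivalence}, now built from \ruleSubmuUnionL, \ruleSubmuUnionRL and \ruleSubmuUnionRR (the very shape already appearing inside the \ruleSubmuUnionL case of the proof of Lem.~\ref{lem:subtypeSoundnessWithHypothesis}). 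When $A$ has a recursive type at its head I would peel it via \ruleEqmuFold, which strictly decreases $\card{\irectype}{A}$ while leaving $\toBTree{A}$ unchanged, and close with \ruleSubmuEq and \ruleSubmuTrans (using Lem.~\ref{lem:eqImpliesSub} where appropriate).

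The hard part will be the recursive case: unlike equivalence, subtyping of two genuinely cyclic types (\eg $\rectype{V}{\functype{V}{V}}\subtypemu\rectype{W}{\functype{W}{W}}$) can lead the head-peeling/descent scheme back to an already-visited pair, so the naive lexicographic measure does not by itself terminate, and one must keep the contravariant bookkeeping straight. I expect to resolve this exactly as the inductive system is designed for: when both heads are recursive I would apply \ruleSubmuRec, recording the hypothesis $V\subtypemu W$ in $\Sigma$, and discharge any back-edge to a previously seen pair with \ruleSubmuHyp, appealing to regularity of $\toBTree{A}$ and $\toBTree{B}$ (finitely many distinct subtrees) to guarantee that only finitely many such hypotheses are ever needed and hence that the constructed $\subtypemu$-derivation is finite. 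Finally, the proposition follows by chaining the two halves, just as $A\eqtypemu B$ iff $\forall k.\,\cut{\toBTree{A}}{k}\eqtypeco\cut{\toBTree{B}}{k}$ iff $\toBTree{A}\eqtypeco\toBTree{B}$ yielded Prop.~\ref{prop:eqtypeSoundnessAndCompleteness}.
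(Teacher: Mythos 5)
Your forward direction is exactly the paper's: instantiate Lem.~\ref{lem:subtypeSoundnessWithHypothesis} with $\Sigma=\varnothing$ and the identity substitution. Your backward direction also follows the paper's script --- reduce to truncations via Lem.~\ref{lem:cutSubtypingCo}, normalise $B$ so that $\card{\irectype}{B}=0$, induct on $\pair{h(\cut{\toBTree{A}}{k})}{\card{\irectype}{A}}$, invert with Lem.~\ref{lem:subtypingIsInvertible}, and reassemble with \ruleSubmuComp, \ruleSubmuFunc{} and the union rules --- up to the recursive case, where you genuinely part ways with the paper. The paper treats $A=\rectype{V}{A'}$ purely by unfolding with \ruleEqmuFold, \ruleSubmuEq{} and \ruleSubmuTrans{} and lets the lexicographic measure absorb the rest; it never invokes \ruleSubmuRec{} or \ruleSubmuHyp{} in this direction. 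You instead propose the Amadio--Cardelli-style mechanism: introduce the assumption $V\subtypemu W$ with \ruleSubmuRec, close back-edges with \ruleSubmuHyp, and bound the number of assumptions by regularity of the trees.

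Your instinct here is defensible: for a pair such as $\rectype{X}{\functype{C}{X}}\subtypemu\rectype{Y}{\functype{C'}{Y}}$ with $C'\subtypemu C$ strict, head-unfolding followed by descent into the codomain regenerates the original pair, whose interpretation is the same tree (hence the same truncation heights) and whose $\card{\irectype}{}$ count is back where it started, so the measure does not visibly decrease and it is hard to see how a derivation could avoid \ruleSubmuRec{} and \ruleSubmuHyp{} altogether. But be aware that what you have written for this case is a plan rather than a proof, and it leaves the genuinely hard points open: once you normalise $B$ to $\card{\irectype}{B}=0$, $B$ is no longer $\mu$-headed, so \ruleSubmuRec{} cannot fire as stated --- you must decide when to retain both binders and when to unfold, possibly re-folding with \ruleEqmuContr; the back-edge discharged by \ruleSubmuHyp{} must land on the exact pair of bound variables, so the finiteness argument has to be carried out on pairs of $\mu$-type occurrences rather than on pairs of subtrees (distinct $\mu$-types denote the same regular tree); and the interplay with the union rules, where $\ruleSubcoUnion$ chooses an arbitrary map $f$, needs its own bookkeeping. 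None of this is supplied by your sketch (nor, it must be said, by the paper's own two-line treatment of this case), so the recursive case remains the part you would actually have to write out.
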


\begin{proof}
$\Rightarrow)$ This part of the proof follows directly from
Lem.~\ref{lem:subtypeSoundnessWithHypothesis}, taking $\Sigma$ an empty
subtyping context and thus $\sigma$ results in the identity substitution. Hence
from $A \subtypemu B$ we get $\toBTree{A} \subtypeco \toBTree{B}$.

$\Leftarrow)$ For the converse we prove the equivalent result: if $\forall k
\in \Natural.\cut{\toBTree{A}}{k} \subtypeco \cut{\toBTree{B}}{k}$ then, $A
\subtypemu B$. And finally conclude by Lem.~\ref{lem:cutSubtypingCo}.

Let $\cut{\toBTree{A}}{k} \subtypeco \cut{\toBTree{B}}{k}$ for every $k \in
\Natural$. As in the proof for Lem.~\ref{lem:cutEquivalenceMu}, we only
consider the cases where $\card{\irectype}{B} = 0$ and proceed by induction on
the lexicographical extension of the standard order to
$\pair{h(\cut{\toBTree{A}}{k})}{\card{\irectype}{A}}$, analyzing the possible
forms of $A$.\
\begin{itemize}
  \item $A = a$. By definition of of the interpretation and tree truncation we
  have $\cut{\toBTree{A}}{k} = a$ for every $k > 0$. Now, by definition of
  $\subtypeco$, only two rules apply:
  \begin{itemize}
    \item $\ruleSubcoRefl$: in this case we have $\cut{\toBTree{B}}{k} = a =
    B$, by definition of the interpretation, and we conclude with
    $\ruleSubmuRefl$.
    
    \item $\ruleSubcoUnion$: by definition of the interpretation once again, we
    have $B = \maxuniontype{1 \in 1..n}{B_i}$ and $$\cut{\toBTree{a}}{k}
    \subtypeco \maxuniontype{i \in 1..n}{\cut{\toBTree{B_i}}{k}}$$ with
    $\cut{\toBTree{a}}{k} \subtypeco \cut{\toBTree{B_j}}{k} \neq \iuniontype$
    for some $j \in 1..n$, $n > 1$. Now the only applicable rule is
    $\ruleSubcoRefl$, thus $\cut{\toBTree{B_j}}{k} = a = B_j$. Then, by
    $\ruleSubmuRefl$, $\ruleSubmuUnionRL$ and $\ruleSubmuUnionRR$, we conclude
    $A \subtypemu \maxuniontype{i \in 1..n}{B_i}$.
  \end{itemize}
  
  \item $A = \datatype{D}{A'}$. As before, by definition of the interpretation
  and tree truncation with $k > 0$, $\cut{\toBTree{A}}{k} =
  \datatype{\cut{\toBTree{D}}{k-1}}{\cut{\toBTree{A'}}{k-1}} \subtypeco
  \cut{\toBTree{B}}{k}$. The only two possible cases here are:
  \begin{itemize}
    \item $\ruleSubcoComp$: by definition of the interpretation and tree
    truncation once again, we have $B = \datatype{D'}{B'}$ with
    $\cut{\toBTree{D}}{k-1} \subtypeco \cut{\toBTree{D'}}{k-1}$ and
    $\cut{\toBTree{A'}}{k-1} \subtypeco \cut{\toBTree{B'}}{k-1}$. Then, by
    inductive hypotesis, $D \subtypemu D'$ and $A' \subtypemu B'$. Finally we
    conclude by $\ruleSubmuComp$, $\datatype{D}{A'} \subtypemu
    \datatype{D'}{B'}$.
    
    \item $\ruleSubcoUnion$: with a similar analysis as the case
    $\ruleSubcoUnion$ for $A = a$, we have $B = \maxuniontype{i \in 1..n}{B_i}$
    and $$\cut{\toBTree{\datatype{D}{A'}}}{k} \subtypeco \maxuniontype{i \in
    1..n}{\cut{\toBTree{B_i}}{k}}$$ with $\cut{\toBTree{\datatype{D}{A'}}}{k}
    \subtypeco \cut{\toBTree{B_j}}{k} \neq \iuniontype$ for some $j \in 1..n$,
    $n > 1$. Then, by definition of $\subtypeco$, it is necessarily the case
    $B_j = \datatype{D'}{B'}$ with $\cut{\toBTree{D}}{k} \subtypeco
    \cut{\toBTree{D'}}{k}$ and $\cut{\toBTree{A'}}{k} \subtypeco
    \cut{\toBTree{B'}}{k}$. Now, as in the previous case, we have
    $\datatype{D}{A'} \subtypemu B_j$ by inductive hypothesis. Finally, with
    $\ruleSubmuUnionRL$ and $\ruleSubmuUnionRR$, we conclude $\datatype{D}{A'}
    \subtypemu \maxuniontype{i \in 1..n}{B_i}$.
  \end{itemize}
  
  \item $A = \functype{A'}{A''}$. The only two applicable rules here are
  $\ruleSubcoFunc$ and $\ruleSubcoUnion$. Both cases are similar to the ones
  exposed for $\idatatype$, concluding directly from the inductive hypothesis
  and the application of $\ruleSubmuFunc$ in the former while
  $\ruleSubmuUnionRL$ and $\ruleSubmuUnionRR$ are used in the latter.
  
  \item $A = \maxuniontype{i \in 1..n}{A_i}$ with $A_i$ a non-union type for
  every $i \in 1..n$, $n > 1$. This case is slightly simpler than the others as
  the only applicable rule is $\ruleSubcoUnion$. Let $B = \maxuniontype{j \in
  1..m}{B_j}$ with $B_j$ a non-union type for $j \in 1..m$. Note that $m$ is
  not necessarily greater then 1. By definition of the interpretation and tree
  truncation we have, from $\ruleSubcoUnion$, $\exists f : 1..n \to 1..m$ such
  that $\cut{\toBTree{A_i}}{k} \subtypeco \cut{\toBTree{B_{f(i)}}}{k}$ for
  every $i \in 1..n$. Then, by inductive hypothesis, $A_i \subtypemu B_{f(i)}$
  for every $i \in 1..n$. Now, by properly applying $\ruleSubmuUnionRL$ and
  $\ruleSubmuUnionRR$ on each case, we get $A_i \subtypemu B$ for every $i \in
  1..n$. Finally we conclude by multiple applications of $\ruleSubmuUnionL$,
  $\maxuniontype{i \in 1..n}{A_i}$.
  
  \item $A = \rectype{V}{A'}$. Then $\cut{\toBTree{A}}{k} =
  \cut{\toBTree{\substitute{V}{\rectype{V}{A'}}{A'}}}{k} \subtypeco
  \cut{\toBTree{B}}{k}$. By inductive hypothesis, with
  $\card{\irectype}{\substitute{V}{A}{A'}} < \card{\irectype}{A}$, we have
  $\substitute{V}{\rectype{V}{A'}}{A'} \subtypemu B$. On the other hand, by
  $\ruleEqmuFold$ and $\ruleSubmuEq$, we get $\rectype{V}{A'} \subtypemu
  \substitute{V}{\rectype{V}{A'}}{A'}$ and we conclude by $\ruleSubmuTrans$,
  $\rectype{V}{A'} \subtypemu B$.
\end{itemize}
\end{proof}

\begin{proposition}
\label{prop:subtypingIsInvertible}
\begin{enumerate}
  \item If $\datatype{D}{A} \subtypemu \datatype{D'}{A'}$, then $D \subtypemu
  D'$ and $A \subtypemu A'$.
  
  \item If $\functype{A}{B} \subtypemu \functype{A'}{B'}$, then $A' \subtypemu
  A$ and $B \subtypemu B'$.
\end{enumerate}
\end{proposition}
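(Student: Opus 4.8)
The plan is to derive this proposition as an immediate corollary of the soundness-and-completeness of the tree interpretation (Prop.~\ref{prop:subtypeSoundnessAndCompleteness}) together with the coinductive invertibility of subtyping on non-union trees (Lem.~\ref{lem:subtypingIsInvertible}). The overall strategy is to transport each $\mu$-subtyping statement to infinite trees, invert it there where invertibility is already established coinductively, and then transport the resulting components back to $\mu$-types.

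For item (1), I would start from $\datatype{D}{A} \subtypemu \datatype{D'}{A'}$ and apply the forward direction of Prop.~\ref{prop:subtypeSoundnessAndCompleteness} to obtain $\toBTree{\datatype{D}{A}} \subtypeco \toBTree{\datatype{D'}{A'}}$. The key observation is that $\toBTree{\bullet}$ commutes with the compound constructor: by Def.~\ref{def:treeFunction} we have $\toBTree{\datatype{D}{A}} = \datatype{\toBTree{D}}{\toBTree{A}}$ and likewise on the right, and moreover these trees carry root symbol $\idatatype \neq \iuniontype$, hence are non-union types. I can therefore invoke Lem.~\ref{lem:subtypingIsInvertible}(2), yielding $\toBTree{D} \subtypeco \toBTree{D'}$ and $\toBTree{A} \subtypeco \toBTree{A'}$. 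A final appeal to the converse direction of Prop.~\ref{prop:subtypeSoundnessAndCompleteness} on each of these then gives $D \subtypemu D'$ and $A \subtypemu A'$, as required.

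Item (2) is handled identically, using that $\toBTree{\functype{A}{B}} = \functype{\toBTree{A}}{\toBTree{B}}$ is again a non-union tree (root $\ifunctype$) and applying Lem.~\ref{lem:subtypingIsInvertible}(3); the only difference is that this item inverts the arrow \emph{contravariantly}, so I obtain $\toBTree{A'} \subtypeco \toBTree{A}$ and $\toBTree{B} \subtypeco \toBTree{B'}$, which translate back through Prop.~\ref{prop:subtypeSoundnessAndCompleteness} to $A' \subtypemu A$ and $B \subtypemu B'$.

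There is essentially no hard part left, since the heavy lifting has been carried out by Prop.~\ref{prop:subtypeSoundnessAndCompleteness} (which bundles Lem.~\ref{lem:cutSubtypingCo} and Lem.~\ref{lem:subtypeSoundnessWithHypothesis}) and by the coinductive invertibility Lem.~\ref{lem:subtypingIsInvertible}. The only point requiring care is confirming that the images under $\toBTree{\bullet}$ are genuinely non-union trees, so that Lem.~\ref{lem:subtypingIsInvertible} applies; this is immediate from the clause $\toBTree{A_1 \star A_2}(\epsilon) = \star$ of Def.~\ref{def:treeFunction}, which fixes the root symbol as $\idatatype$ (resp.\ $\ifunctype$) rather than $\iuniontype$.
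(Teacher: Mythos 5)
Your proposal is correct and follows exactly the route the paper takes: its proof of Prop.~\ref{prop:subtypingIsInvertible} is precisely the one-line combination of Lem.~\ref{lem:subtypingIsInvertible} and Prop.~\ref{prop:subtypeSoundnessAndCompleteness}, which you have simply spelled out in full (including the check, via Def.~\ref{def:treeFunction}, that the images under $\toBTree{\bullet}$ are non-union trees). No gaps.
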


\begin{proof}
This result follows immediately from Lem.~\ref{lem:subtypingIsInvertible} and
Prop.~\ref{prop:subtypeSoundnessAndCompleteness}.
\end{proof}

\subsubsection{Further properties on \texorpdfstring{$\mu$}{u}-types}

We conclude the section with a simple but useful result on the preservation of
the structure of non-union types by means of subtyping. Define the set of
\emphdef{union contexts} as the expressions generated by the following grammar
$$\U ::= \Box \mathrel| \uniontype{\U}{A} \mathrel| \uniontype{A}{\U}$$

\begin{lemma}
\label{lem:noMuOnHeadPosition}
For every type $A \in \Type$ there exists $A' \in \Type$ such that $A \eqtypemu
A'$ and $\card{\irectype}{A'} = 0$. Moreover, if $\card{\irectype}{A} = 0$ then
$A$ and $A'$ have the same outermost type constructor.
\end{lemma}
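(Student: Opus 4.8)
The plan is to argue by induction on $\card{\irectype}{A}$, the number of recursion binders occurring along the head spine of $A$ (this measure is well-founded as it takes values in $\Natural$). The induction will produce $A'$ together with a derivation of $A \eqtypemu A'$ built solely from $\ruleEqmuRefl$, $\ruleEqmuFold$ and $\ruleEqmuTrans$.

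In the base case $\card{\irectype}{A} = 0$ I would simply take $A' \eqdef A$: then $A \eqtypemu A'$ holds by $\ruleEqmuRefl$, the count $\card{\irectype}{A'} = 0$ is immediate, and $A$ and $A'$ trivially share the same outermost type constructor. This single choice settles both the main claim and the \emph{moreover} clause, since the latter is asserted only under the hypothesis $\card{\irectype}{A} = 0$. For the inductive step $\card{\irectype}{A} > 0$, the head of $A$ must be a recursion binder, so $A = \rectype{V}{B}$ for some variable $V$ and body $B$. I would unfold this head $\mu$ by $\ruleEqmuFold$, obtaining $A \eqtypemu \substitute{V}{\rectype{V}{B}}{B}$, then invoke the inductive hypothesis on $\substitute{V}{\rectype{V}{B}}{B}$ to obtain $A' \in \Type$ with $\card{\irectype}{A'} = 0$ and $\substitute{V}{\rectype{V}{B}}{B} \eqtypemu A'$, and finally conclude $A \eqtypemu A'$ by $\ruleEqmuTrans$.

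The step that makes the induction go through, and the one I expect to be the main obstacle, is the strict decrease $\card{\irectype}{\substitute{V}{\rectype{V}{B}}{B}} < \card{\irectype}{A}$; this is exactly where contractiveness is essential. Writing the head spine of $B$ as a maximal chain of binders $\rectype{W_1}{\ldots \rectype{W_k}{C}}$ terminating in a core $C$ whose head is not a $\mu$, contractiveness of $\rectype{V}{B}$ forces every occurrence of $V$ in $B$ to lie strictly below a $\idatatype$ or $\ifunctype$ node. Since $C$ is reached through $\mu$-binders only, this rules out $C = V$, so the head symbol of $C$ is either a genuine constructor or an atom distinct from $V$. Hence substituting $\rectype{V}{B}$ for $V$ modifies only subterms sitting strictly below the topmost $\idatatype$/$\ifunctype$ node, all of which are off the head spine, so the spine of $B$ is left intact and $\card{\irectype}{\substitute{V}{\rectype{V}{B}}{B}} = \card{\irectype}{B} = \card{\irectype}{A} - 1$. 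The only delicate bookkeeping I anticipate is handling variable capture when pushing the substitution under the binders $W_1, \ldots, W_k$, which is dealt with by the usual $\alpha$-renaming convention and does not affect the spine count.
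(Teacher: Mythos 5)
Your proof is correct and follows essentially the same route as the paper's: induction on $\card{\irectype}{A}$, taking $A'=A$ in the base case and otherwise unfolding the head binder via $\ruleEqmuFold$, applying the inductive hypothesis, and closing with $\ruleEqmuTrans$. The only difference is that you spell out why contractiveness guarantees the strict decrease $\card{\irectype}{\substitute{V}{\rectype{V}{B}}{B}} < \card{\irectype}{A}$, a step the paper simply asserts; your elaboration is accurate.
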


\begin{proof}
By induction in $\card{\irectype}{A}$.
\begin{itemize}
  \item $\card{\irectype}{A} = 0$: the result is immediate taking $A' = A$.
  Notice that the second part of the statement holds trivially.
  
  \item $\card{\irectype}{A} > 0$: then $A = \rectype{V}{A''}$ and by rule
  $\ruleEqmuFold$ $A \eqtypemu \substitute{V}{A}{A''}$. Since $\mu$-types are
  contractive we have $\card{\irectype}{\substitute{V}{A}{A''}} <
  \card{\irectype}{A}$. Then, by inductive hypothesis, there exists $A' \in
  \Type$ such that $A \eqtypemu A'$, $\card{\irectype}{A'} = 0$. Finally we
  conclude by rule $\ruleEqmuTrans$.
\end{itemize}
\end{proof}

\begin{lemma}\label{lem:supertypesOfNonUnionTypes}
If $\U[A] \subtypemu B$ and $A$ is a non-union type, then there exists a
non-union type $A' \in \Type$ such that
\begin{inparaenum}[(i)]
  \item $B \eqtypemu \U'[A']$;
  \item $A \subtypemu A'$; and
  \item $A$ and $A'$ have the same outermost type constructor.
\end{inparaenum}
\end{lemma}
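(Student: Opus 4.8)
The plan is to transport the hypothesis into the tree model, where the union subtyping rule \ruleSubcoUnion\ lets me isolate a single non-union component of $\toBTree{B}$ that dominates $\toBTree{A}$, apply invertibility to pin down its head, and then pull that component back to a $\mu$-type $A'$.

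First I would normalise heads. By Lem.~\ref{lem:noMuOnHeadPosition} I may assume $\card{\irectype}{A}=0$, so that the outermost constructor of $A$ is a genuine type constructor, equal to the root of $\toBTree{A}$; this replacement is harmless since $A\eqtypemu A_0$ propagates through the union context by congruence of $\eqtypemu$, and $\U[A_0]\subtypemu B$ then follows by \ruleSubmuEq\ and \ruleSubmuTrans. Likewise I replace $B$ by $\hat B\eqtypemu B$ with $\card{\irectype}{\hat B}=0$ and write it in maximal union form $\hat B=\maxuniontype{j \in 1..m}{B_j}$ with each $B_j$ non-union, so that $\toBTree{\hat B}=\maxuniontype{j \in 1..m}{\toBTree{B_j}}$ and every $\toBTree{B_j}$ is a non-union tree. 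Since $A$ is non-union with no head $\irectype$, $\toBTree{A}$ is itself non-union and hence occurs as one of the maximal union components $\tC_{i_0}$ of $\toBTree{\U[A]}=\maxuniontype{i \in 1..n}{\tC_i}$.

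Now by Prop.~\ref{prop:subtypeSoundnessAndCompleteness} the hypothesis yields $\toBTree{\U[A]}\subtypeco\toBTree{\hat B}$. Applying \ruleSubcoUnion\ when $n+m>2$, or reading off the relation directly when $n=m=1$, produces a function $f:1..n\to1..m$ with $\tC_i\subtypeco\toBTree{B_{f(i)}}$ for all $i$; instantiating at $i_0$ and setting $j_0=f(i_0)$ gives $\toBTree{A}\subtypeco\toBTree{B_{j_0}}$ between two non-union trees. Invertibility, Lem.~\ref{lem:subtypingIsInvertible}, then forces $\toBTree{B_{j_0}}$ to share the root symbol of $\toBTree{A}$. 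I then pull $B_{j_0}$ back to a $\mu$-type by unfolding its head recursions via \ruleEqmuFold: this gives a non-union $A'\in\Type$ with $\card{\irectype}{A'}=0$, $B_{j_0}\eqtypemu A'$ and $\toBTree{A'}=\toBTree{B_{j_0}}$, contractiveness guaranteeing that each unfolding strictly lowers $\card{\irectype}{\cdot}$ while keeping the type non-union.

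Finally I would verify the three conclusions. For (iii), the outermost constructor of $A'$ is the root of $\toBTree{A'}=\toBTree{B_{j_0}}$, which equals the root of $\toBTree{A}$, i.e.\ the outermost constructor of $A$. For (ii), $\toBTree{A}\subtypeco\toBTree{A'}$ gives $A\subtypemu A'$ by Prop.~\ref{prop:subtypeSoundnessAndCompleteness}. For (i), replacing the summand $B_{j_0}$ by the equivalent $A'$ inside $\hat B$ (using congruence of $\eqtypemu$, e.g.\ Lem.~\ref{lem:unionEquivalence}, together with Lems.~\ref{lem:unionIsAssociative}--\ref{lem:unionIsCommutative} to exhibit the remaining summands as a union context $\U'$) yields $B\eqtypemu\hat B\eqtypemu\U'[A']$. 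The main obstacle I expect is the bookkeeping in this pull-back: ensuring the located component is genuinely a non-union $\mu$-type with no head recursion, so that clause (iii) speaks about a real constructor rather than $\irectype$, and treating the degenerate cases $n=1$ and $m=1$ uniformly so that the index $j_0$ is always delivered.
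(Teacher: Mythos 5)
Your proposal is correct in substance and follows the same overall route as the paper: transport the hypothesis to trees via Prop.~\ref{prop:subtypeSoundnessAndCompleteness}, use $\ruleSubcoUnion$ to locate a single non-union component of $\toBTree{B}$ sitting above $\toBTree{A}$, invoke Lem.~\ref{lem:subtypingIsInvertible} to match the head symbol, and pull that component back to a $\mu$-type $A'$. Two differences are worth recording. First, you bypass the paper's induction on $\U$ by observing that $\toBTree{A}$ is itself one of the maximal union components of $\toBTree{\U[A]}$ and reading off $j_0=f(i_0)$ directly; this is legitimate and arguably cleaner than peeling the context one union at a time. Second, and this is the step you should tighten: your pull-back is \emph{syntactic}, resting on the claim that one application of Lem.~\ref{lem:noMuOnHeadPosition} lets you write $\hat B=\maxuniontype{j \in 1..m}{B_j}$ with every $B_j$ a non-union $\mu$-type whose tree is non-union. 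That does not follow: $\card{\irectype}{\hat B}=0$ only controls the head of $\hat B$, and a syntactic summand may still have the form $\rectype{V}{C}$ with $C$ a union (for instance $\hat B=\uniontype{\consttype{c}}{\rectype{\alpha}{(\uniontype{\consttype{d}}{(\datatype{\alpha}{\alpha})})}}$), in which case that summand is not a non-union $\mu$-type, its tree is a union, and the syntactic decomposition of $\hat B$ does not match the maximal union decomposition of $\toBTree{\hat B}$ on which $\ruleSubcoUnion$ operates. This is repairable --- iterate head-unfolding on the summands, with termination by contractiveness and a measure such as $\card{\irectype\iuniontype}{\cdot}$ --- but the paper sidesteps it entirely: each non-union tree component $\tB_j$ of $\toBTree{B}$ is a subtree of a regular tree, hence regular, hence equal to $\toBTree{C_j}$ for some $C_j\in\Type$, and $\maxuniontype{j \in 1..m}{C_j}\eqtypemu B$ then follows from Prop.~\ref{prop:eqtypeSoundnessAndCompleteness}; Lem.~\ref{lem:noMuOnHeadPosition} is applied only to the one chosen component to normalise its head. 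The remainder of your argument --- head agreement for (iii), transport of $\subtypeco$ back to $\subtypemu$ for (ii), and reassembly of the union context for (i) --- is sound.
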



\begin{proof}
By induction on the union context $\U$. Without loss of generality we can
assume $\card{\irectype}{A} = 0$, by Lem.~\ref{lem:noMuOnHeadPosition}.
\begin{itemize}
  \item $\U = \Box$. We have $A \subtypemu B$. By
  Prop.~\ref{prop:subtypeSoundnessAndCompleteness}, $\toBTree{A} \subtypeco
  \toBTree{B}$ where $\toBTree{A} \neq \iuniontype$ by hypothesis. Let
  $\toBTree{B} = \maxuniontype{i \in 1..n}{\tB_i}$ with $\tB_i \neq
  \iuniontype$ for $i \in i..n$. Note that $\tB_i$ is a subtree of the regular
  tree $\toBTree{B}$, thus it is regular too. Then, for every $i \in 1..n$
  there exists $C_i \in \Type$ such that $\toBTree{C_i} = \tB_i$. Moreover,
  taking $C = \maxuniontype{i \in 1..n}{C_i}$ we have $\toBTree{C} =
  \toBTree{B}$, hence $C \eqtypemu B$ by
  Prop.~\ref{prop:eqtypeSoundnessAndCompleteness}.
  \begin{itemize}
    \item If $n = 1$ (\ie $\toBTree{B} = B_1 \neq \iuniontype$) the only
    applicable rules are $\ruleSubcoRefl$, $\ruleSubcoFunc$ or
    $\ruleSubcoComp$, hence both trees have the same type constructor on the
    root. Applying Lem.~\ref{lem:noMuOnHeadPosition} on $B$ yields a type $A'$
    such that $B \eqtypemu A'$, thus proving the first item with $U' = \Box$.
    This type $A'$ has the same outermost type constructor as $B$, which we
    already saw is the same as $A$, hence proving item (iii). We are left to
    prove the second item. This follows from $A \subtypemu B$, $B \eqtypemu A'$
    by rules $\ruleEqmuTrans$ and $\ruleSubmuEq$.
    
    \item If $n > 1$, then the only applicable rule is $\ruleSubcoUnion$ and we
    have $\toBTree{A} \subtypeco \toBTree{C_j} = \tB_j \neq \iuniontype$ and,
    by Prop.~\ref{prop:subtypeSoundnessAndCompleteness}, $A \subtypemu C_j$ for
    some $j \in 1..n$. Note that both trees must have the same constructor in
    the root since neither of them is a union type
    (Lem.~\ref{lem:subtypingIsInvertible}).
    Then we take the union context $$\U' =
    \uniontype{\uniontype{\uniontype{\uniontype{C_1}{\ldots}}{\Box_j}}{\ldots}}{C_n}$$
    and, by Lem.~\ref{lem:noMuOnHeadPosition}, there exists $A' \in \Type$ such
    that $A' \eqtypemu C_j$, $\card{\irectype}{A'} = 0$ and has the same
    outermost type constructor than $C$. Finally, we have $$B \eqtypemu C
    \eqtypemu \U[A']$$ while $A \subtypemu A'$ and both have the same outermost
    type constructor.
  \end{itemize}

  \item $\U =
  \uniontype{\uniontype{\uniontype{\uniontype{C_1}{\ldots}}{\Box_k}}{\ldots}}{C_m}$
  with $m > 1$, where $C_k$ with $k \in 1..m$ is the position of $\Box$ within $\U$
  (\ie $C_k = A$ in $\U[A]$). We can assume without loss of generality that $C_j$
  is a non-union type for every $j \in 1..m$.
  
  From $\U[A] \subtypemu B$ and
  Prop.~\ref{prop:subtypeSoundnessAndCompleteness} we have $\toBTree{\U[A]}
  \subtypeco \toBTree{B}$. By definition $$\toBTree{\U[A]} =
  \uniontype{\uniontype{\uniontype{\uniontype{\toBTree{C_1}}{\ldots}}{\toBTree{A}}}{\ldots}}{\toBTree{C_m}}$$
  with $\toBTree{C_j} \neq \iuniontype$ for every $j \in 1..m$.
  
  Assume once again $\toBTree{B} = \maxuniontype{i \in 1..n}{\tB_i}$ with $B_i
  \neq \iuniontype$ for $i \in 1..n$. The only subtyping rule that applies
  here is $\ruleSubcoUnion$ since $m > 1$, hence $n + m > 2$. Then there exists
  $f : 1..m \to 1..n$ such that $\toBTree{C_j} \subtypeco \tB_{f(j)}$ for every
  $j \in 1..m$.

  Notice that $\U = \uniontype{\U''}{C_n}$ or $\U = \uniontype{C_1}{\U''}$ for
  some proper union context $\U''$. Hence, by construction $$\U'' =
  \uniontype{\uniontype{\uniontype{\uniontype{C_1}{\ldots}}{\Box_k}}{\ldots}}{C_{m-1}}
  \quad\text{or}\quad \U'' =
  \uniontype{\uniontype{\uniontype{\uniontype{C_2}{\ldots}}{\Box_k}}{\ldots}}{C_m}$$
  In either case, by rule $\ruleSubcoUnion$, we have $\toBTree{\U''[A]}
  \subtypeco \toBTree{B}$, hence $\U''[A] \subtypemu B$ by
  Prop.~\ref{prop:subtypeSoundnessAndCompleteness}.
  
  Finally, we can apply the inductive hypothesis to conclude that $B \eqtypemu
  \U'[A']$ with $A' \in \Type$ a non-union type such that $A \subtypemu A'$ and
  both have the same outermost type constructor. 
\end{itemize}
\end{proof}


\subsection{Typing Schemes}
\label{sec:typinsSchemes}

\begin{figure}[t] $$
\begin{array}{c}
\multicolumn{1}{l}{\textbf{Patterns}}
\\
\\
\Rule{\theta(x) = A}{\sequP{\theta}{\matchable{x} : A}}{\rulePMatch}
\quad
\Rule{}{\sequP{\theta}{\constterm{c} : \consttype{c}}}{\rulePConst} 
\quad
\Rule{\sequP{\theta}{p : D} \quad \sequP{\theta}{q : A}}{\sequP{\theta}{\dataterm{p}{q} : \datatype{D}{A}}}{\rulePComp}
\\
\\
\multicolumn{1}{l}{\textbf{Terms}}
\\
\\
\Rule{\Gamma(x) = A}{\sequT{\Gamma}{x : A}}{\ruleTVar}
\quad
\Rule{}{\sequT{\Gamma}{\constterm{c} : \consttype{c}}}{\ruleTConst}
\quad
\Rule{\sequT{\Gamma}{r : D} \quad \sequT{\Gamma}{u : A}}{\sequT{\Gamma}{\dataterm{r}{u} : \datatype{D}{A}}}{\ruleTComp} \\
\\
\Rule{
\begin{array}{c}
  \lista{\sequC{\theta_i}{p_i : A_i}}_{i \in 1..n} \text{ compatible}\vspace{.1cm}
  \\
  (\sequP{\theta_i}{p_i : A_i})_{i\in 1..n}
  \quad
  (\dom{\theta_i} = \fm{p_i})_{i\in 1..n}
  \quad
  (\sequT{\Gamma, \theta_i}{s_i : B})_{i\in 1..n}
\end{array}
     }
     {\sequT{\Gamma}{(\absterm{p_i}{s_i}{\theta_i})_{i \in 1..n} : \functype{\maxuniontype{i \in 1..n}{A_i}}{B}}}
     {\ruleTAbs}
\\
\\
\Rule{\sequT{\Gamma}{r : \functype{\maxuniontype{i \in 1..n}{A_i}}{B}}
      \quad
      \sequT{\Gamma}{u : A_k}
      \quad
      k \in 1..n
      }
      {\sequT{\Gamma}{\appterm{r}{u} : B}}
      {\ruleTApp} 
\quad
\Rule{\sequT{\Gamma}{s : A}
      \quad
      \sequTE{}{A \subtypemu A'}
     }
     {\sequT{\Gamma}{s : A'}}
     {\ruleTSubs}
\end{array} $$
\caption{Typing rules for patterns and terms}\label{fig:typingSchemesForPatternsAndTerms}
\end{figure}

A \emphdef{typing context} $\Gamma$ (or $\theta$) is a partial function from term variables to $\mu$-types; $\Gamma(x) = A$ means that $\Gamma$ maps $x$ to $A$.  We have two typing judgements, one for patterns $\sequP{\theta}{p : A}$ and one for terms $\sequT{\Gamma}{s : A}$.
Accordingly, we have two sets of typing rules: Fig.~\ref{fig:typingSchemesForPatternsAndTerms}, top and bottom. We write $\sequPDeriv{\theta}{p:A}$ to indicate that the typing judgement $\sequP{\theta}{p:A}$ is derivable (likewise for $\sequTDeriv{\Gamma}{s:A}$).
The typing schemes speak for themselves except for two of them which we now comment. The first is $\ruleTApp$.
Note that we do not require the $A_i$ to be non-union types. This allows examples such as (\ref{eq:terms:ejemploReduccion}) to be typable (the outermost instance of $\ruleTApp$ is with $n=1$ and $A_1=\bool=\uniontype{\consttype{true}}{\consttype{false}}$).
Regarding $\ruleTAbs$ it requests a number of conditions. First of all, each of the patterns $p_i$ must be typable under the typing context $\theta_i$, $i \in 1..n$. Also, the set of free matchables in each $p_i$ must be exactly the domain of $\theta_i$. Another condition, indicated by $(\sequT{\Gamma, \theta_i}{s_i : B})_{i\in 1..n}$, is that the bodies of each of the branches $s_i$, $i\in 1..n$, be typable under the context extended with the corresponding $\theta_i$. More noteworthy is the condition that the list $\lista{\sequC{\theta_i}{p_i : A_i}}_{i \in 1..n}$ be \emph{compatible}, which we now discuss in further detail.

\subsection{Compatibility}
\label{sec:compatibility}

Let us say that a pattern $p$ \emphdef{subsumes} a pattern $q$, written \matches{p}{q} if there exists a substitution $\sigma$
s.t. $\sigma p = q$.
Consider an abstraction $(\absterm{p}{s}{\theta} \icaseterm \absterm{q}{t}{\theta'})$ and two judgements $\sequP{\theta}{p : A}$ and $\sequP{\theta'}{q : B}$. We consider two cases depending on whether $p$ subsumes $q$ or not.

As already mentioned in example (\ref{eq:example:compatib:i}) of the introduction, if $p$ subsumes $q$, then the branch \absterm{q}{t}{\theta'} will never be evaluated since the argument will already match $p$. Indeed, for any term $u$ of type $B$ in matchable form, the application will reduce to $\match{p}{u}\,s$. Thus, in this case, in order to ensure SR we demand that $B \subtypemu A$.

Suppose $p$ does not subsume $q$ (\ie $\nmatches{p}{q}$). We analyze the cause of failure of subsumption in order to determine whether requirements on $A$ and $B$ must be put forward. In some cases no requirements are necessary. For example in: 
\begin{equation}
\matchable{f} \ifuncterm_{\set{f:\functype{A}{B}}}
\begin{array}[t]{cllll}
(          & \dataterm{\constterm{c}}{\matchable{z}} & \ifuncterm_{\set{z:A}} & {\dataterm{\constterm{c}}{(\appterm{f}{z})}} \\
\icaseterm & \dataterm{\constterm{d}}{\matchable{y}} & \ifuncterm_{\set{y:B}} & {\dataterm{\constterm{d}}{y}})
\end{array} 
\label{eq:example:compat:noOverlap}
\end{equation}
\noindent no relation between $A$ and $B$ is required since the branches are mutually disjoint. In
other cases, however, $A\subtypemu B$ is required; we seek to characterize them.
We focus on those cases where $p$ fails to subsume $q$, and $\pi\in\pos{p}\cap\pos{q}$ is an
offending position in both patterns. The following table exhaustively lists them:
\begin{center}
\begin{tabular}{l|c|c|l}
    & $\at{p}{\pi}$                         & $\at{q}{\pi}$        & \\
\hline
(a) & \multirow{3}{*}{\constterm{c}}        & \matchable{y}        & restriction required \\
(b) &                                       & \constterm{d}        & no overlapping ($\nmatches{q}{p}$) \\
(c) &                                       & $\appterm{q_1}{q_2}$ & no overlapping \\
\hline
(d) & \multirow{3}{*}{$\appterm{p_1}{p_2}$} & \matchable{y}        & restriction required \\
(e) &                                       & \constterm{d}        & no overlapping
\end{tabular}
\end{center}
\noindent In cases (b), (c) and (e), no extra condition on the types of $p$ and $q$ is necessary either,
since their respective sets of possible arguments are disjoint; example~(\ref{eq:example:compat:noOverlap})
corresponds to the first of these. The cases where $A$ and $B$ must be related are (a) and (d): for
those we require $B\subtypemu A$.
The first of these has already been illustrated in the introduction~(\ref{eq:example:compatib:i}), the second one is
illustrated as follows:
\begin{equation}
\matchable{f} \ifuncterm_{\set{f:\functype{D}{\functype{A}{C}}}} \matchable{g} \ifuncterm_{\set{g:\functype{B}{C}}}
\begin{array}[t]{clll}
(          & \dataterm{\matchable{x}}{\matchable{y}} & \ifuncterm_{\set{x:D,y:A}} & \appterm{\appterm{f}{x}}{y} \\
\icaseterm & \matchable{z}                           & \ifuncterm_{\set{z:B}}     & \appterm{g}{z})
\end{array}
\end{equation}
The problematic situation is when $B = \datatype{D'}{B'}$, \ie the type of $z$ is another
compound, which may have no relation at all with $\datatype{D}{A}$. Compatibility ensures $B\subtypemu \datatype{D}{A}$.

We now formalize these ideas. 

\begin{definition}
\label{def:constructorAtPosition}
Given a pattern $\sequP{\theta}{p:A}$ and $\pi\in\pos{p}$, we say $A$ \emph{admits a symbol
  $\odot$ (with $\odot\in \TypeVariable \cup \TypeConstant \cup \set{\ifunctype,\idatatype}$)}
\emph{at position} $\pi$ iff $\odot \in \lookup{A}{\pi}$, where: $$
\begin{array}{c}
\begin{array}{rcll}
\lookup{a}{\epsilon}                 & \eqdef & \set{a} \\
\lookup{(A_1 \star A_2)}{\epsilon}   & \eqdef & \set{\star},       & \star\in \set{\ifunctype, \idatatype} \\
\lookup{(A_1 \star A_2)}{i\pi}       & \eqdef & \lookup{A_i}{\pi}, & \star\in \set{\ifunctype, \idatatype}, i\in \set{1,2} \\
\lookup{(\uniontype{A_1}{A_2})}{\pi} & \eqdef & \lookup{A_1}{\pi} \cup \lookup{A_2}{\pi} \\
\lookup{(\rectype{V}{A'})}{\pi}      & \eqdef & \lookup{(\substitute{V}{\rectype{V}{A'}}{A'})}{\pi}
\end{array}
\end{array} $$
\end{definition}
Note that $\sequPDeriv{\theta}{p:A}$ and contractiveness of $A$, implies $\lookup{A}{\pi}$ is well-defined for $\pi\in\pos{p}$.

Whenever subsumption between two patterns fails, any mismatching position is a leaf in the
syntactic tree of one of the patterns. Otherwise, both of them would have a type application constructor in that
position and there would be no failure of subsumption.  

\begin{definition}
The \emph{maximal positions} in a set of positions $P$ are: $$\mpos{P} \eqdef
\set{\pi \in P \mathrel| \nexists \pi'\in P. \pi'=  \pi\pi''\land \pi''\neq\epsilon}$$

The \emph{mismatching positions} between two patterns are: $$\cpos{p}{q} \eqdef \set{\pi \mathrel| \pi \in \mpos{\pos{p} \cap \pos{q}} \land \nmatches{\at{p}{\pi}}{\at{q}{\pi}}}$$
\end{definition}

\begin{definition}
\label{def:compatibility}
We say $\sequC{\theta}{p : A}$ is \emph{compatible} with $\sequC{\theta'}{q :
  B}$, written $\compatible{\sequC{\theta}{p : A}}{\sequC{\theta'}{q :
  B}}$, iff the following two conditions hold:
\begin{enumerate}
 \item $\matches{p}{q} \implies B \subtypemu A$.
 \item $\nmatches{p}{q} \implies \left(\forall \pi \in \cpos{p}{q}.
\lookup{A}{\pi} \cap \lookup{B}{\pi} \neq \varnothing
\right) \implies B \subtype A$.
\end{enumerate}

A list of patterns $\lista{\sequC{\theta_i}{p_i : A_i}}_{i \in 1..n}$ is compatible if $\forall i, j \in 1..n. i < j \implies \compatible{\sequC{\theta_i}{p_i : A_i}}{\sequC{\theta_j}{p_j : A_j}}$.
\end{definition}

As a further example, suppose we wish to apply $\mathsf{upd}$ (\cf
(\ref{eq:intro:upd})) to data structures holding values of different types:
say \constterm{vl} prefixed values are numbers and $\constterm{vl2}$ prefixed
values are functions over numbers. Note that $\mathsf{upd}$ cannot be typed as
it stands. The reason is that the last branch would have to handle values of
functional type and hence would receive type
$\uniontype{\consttype{cons}}{\uniontype{\consttype{node}}{\uniontype{\consttype{nil}}{\uniontype{\consttype{vl2}}{(\functype{\nat}{\nat})}}}}$.
This fails to be a datatype due to the presence of the component of functional
type. As a consequence, $x\, y$ cannot be typed since it requires an
applicative type $\idatatype$. The remedy is to add an additional branch to
$\mathsf{upd}$ capable of handling values prefixed by $\constterm{vl2}$:
\begin{equation}
\mathsf{upd'} = \absterm{\matchable{f}}{\absterm{\matchable{g}}{
\begin{array}[t]{clll}
(          & \dataterm{\constterm{vl}}{\matchable{z}}  & \ifuncterm_{\set{z:A_1}}                 & \dataterm{\constterm{vl}}{(\appterm{f}{z})} \\
\icaseterm & \dataterm{\constterm{vl2}}{\matchable{z}} & \ifuncterm_{\set{z:\functype{A_2}{A_3}}} & \dataterm{\constterm{vl2}}{({g}\appterm{z})} \\
\icaseterm & \dataterm{\matchable{x}}{\matchable{y}}   & \ifuncterm_{\set{x:C, y:D}}              & \dataterm{(\appterm{\appterm{\mathsf{upd'}}{f}}{x})}{(\appterm{\appterm{\mathsf{upd'}}{f}}{y})} \\
\icaseterm & {\matchable{w}} & \ifuncterm_{\set{w:E}} & {w})
\end{array}
}{\set{g:\functype{(\functype{A_2}{A_3})}{B}}}
}{\set{f:\functype{A_1}{B}}}
\label{eq:intro:ejemploDelia:extendido}
\end{equation}
The type of $\mathsf{upd'}$ is 
$\functype{\functype{{(\functype{A_1}{B})}}{(\functype{(\functype{A_2}{A_3})}{B})}}{(\functype{F_{A_1,\functype{A_2}{A_3}}}{F_{B,B}})}$,
where $F_{X,Y}$ is
\begin{center}
$\rectype{\alpha}{\uniontype{\uniontype{\uniontype{(\datatype{\consttype{vl}}{X})}{(\datatype{\consttype{vl2}}{Y})}}{(\datatype{\alpha}{\alpha})}}{(\uniontype{\consttype{cons}}{\uniontype{\consttype{node}}{\consttype{nil}}})}}$
\end{center}
This is quite natural: the type system establishes a clear distinction between
semi-structured data, susceptible to path polymorphism, and ``unstructured''
data represented here by base and functional types.

\subsection{Basic Metatheory of Typing}
\label{sec:typingMetatheory}

We present some technical lemmas that will be useful in the proof of safety and
type-checking.

The following four lemmas are straightforward adaptations of the standard
Generation Lemma and Basis Lemma to our system, considering patterns and terms
separately.

\begin{lemma}[Generation Lemma for Patterns]
\label{lem:generationForPatterns}
Let $\theta$ be a typing context and $A$ a type.
\begin{enumerate}[(i)]
  \item If $\sequPDeriv{\theta}{\matchable{x} : A}$ then $x : A \in \theta$.
  \item If $\sequPDeriv{\theta}{\constterm{c} : A}$ then $A \eqtypemu \consttype{c}$.
  \item If $\sequPDeriv{\theta}{\dataterm{p}{q} : A}$ then $\exists D, A'$ such that $A \eqtypemu \datatype{D}{A'}$, $\sequPDeriv{\theta}{p : D}$ and $\sequPDeriv{\theta}{q : A'}$.
\end{enumerate}
\end{lemma}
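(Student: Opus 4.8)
The plan is to prove the Generation Lemma for Patterns by induction on the derivation of the typing judgement $\sequPDeriv{\theta}{p:A}$. Since there are only three pattern-typing rules (\rulePMatch, \rulePConst, \rulePComp), and each rule is syntax-directed on the shape of the pattern, the structure of the argument is essentially a case analysis on the last rule applied. For each of the three items I would observe that the form of the pattern (a matchable $\matchable{x}$, a constant $\constterm{c}$, or a compound $\dataterm{p}{q}$) uniquely determines which rule could have been used at the root of the derivation, and then read off the premises.

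First I would treat item (i). If $\sequPDeriv{\theta}{\matchable{x}:A}$, the only rule whose conclusion matches a pattern of the form $\matchable{x}$ is \rulePMatch, whose side condition is exactly $\theta(x)=A$; hence $x:A\in\theta$ directly. Next, for item (ii), a derivation ending in $\sequPDeriv{\theta}{\constterm{c}:A}$ can only conclude with \rulePConst, which forces $A=\consttype{c}$ syntactically, so in particular $A\eqtypemu\consttype{c}$ by \ruleEqmuRefl. Finally, for item (iii), a derivation of $\sequPDeriv{\theta}{\dataterm{p}{q}:A}$ must end with \rulePComp, whose premises are $\sequPDeriv{\theta}{p:D}$ and $\sequPDeriv{\theta}{q:A'}$ with conclusion type $\datatype{D}{A'}$; taking this $D$ and $A'$ gives $A=\datatype{D}{A'}$, hence $A\eqtypemu\datatype{D}{A'}$, and the two subderivations supply the required typings of $p$ and $q$.

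I should point out that, unlike the usual Generation Lemma for term systems, the pattern system as presented in Fig.~\ref{fig:typingSchemesForPatternsAndTerms} has \emph{no} subsumption rule and \emph{no} equivalence rule at the pattern level (\ruleTSubs\ applies only to terms). This is what makes the argument almost immediate: there is no way to insert a type-manipulation step between the structural rule and its conclusion, so the assigned type is determined up to the exact syntactic form dictated by the rule. Consequently the statements could in principle be strengthened to equalities rather than equivalences; the use of $\eqtypemu$ in items (ii) and (iii) is a harmless weakening that keeps the lemma uniform with its term-level counterpart and convenient for later invocations.

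I do not anticipate a genuine obstacle here, since the absence of non-structural pattern rules collapses the induction to a one-step case analysis. The only point requiring minor care is to confirm that each pattern shape is handled by a unique rule, i.e. that the three rules are mutually exclusive on their conclusions — which is evident because their conclusions are distinguished by the outermost syntactic constructor of the pattern. Thus the expected ``hard part'' is really just bookkeeping: stating the three cases cleanly and noting that the equivalences in (ii) and (iii) follow by reflexivity of $\eqtypemu$.
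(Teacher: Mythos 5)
Your proposal is correct and matches the paper's own (one-line) proof exactly: both rest on the observation that the pattern-typing rules are syntax-directed, so only one rule is applicable for each pattern shape, and the stated equivalences then follow by reflexivity of $\eqtypemu$. Your additional remarks about the absence of a subsumption rule at the pattern level are accurate and merely make explicit what the paper leaves implicit.
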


\begin{proof}
By simple analysis of the applicable rules for each term constructor. Note that
here there's only one applicable rule in each case.
\end{proof}

\begin{lemma}[Basis Lemma for Patterns]
\label{lem:basisForPatterns}
Let $\theta$ be a typing context, $p$ a pattern and $A$ a type such that $\sequPDeriv{\theta}{p : A}$.
\begin{enumerate}[(i)]
  \item Let $\Delta \supseteq \theta$ be another typing context, then $\sequPDeriv{\theta}{p : A}$.
  \item $\fm{p} \subseteq \dom{\theta}$.
  \item $\sequPDeriv{\res{\theta}{\fm{p}}}{p : A}$.
\end{enumerate}
\end{lemma}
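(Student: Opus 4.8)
The plan is to prove all three items by a single induction on the structure of the pattern $p$ (equivalently, on the derivation $\sequPDeriv{\theta}{p : A}$), exploiting the fact that the pattern typing system of Fig.~\ref{fig:typingSchemesForPatternsAndTerms} is syntax-directed: there is exactly one applicable rule per pattern form, and crucially no subsumption or equivalence rule at the pattern level. This means inversion is immediate (indeed it is precisely the Generation Lemma~\ref{lem:generationForPatterns}), so each case reduces to reading off the premises of the unique applicable rule and, where needed, invoking the induction hypothesis.

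For item (i), the weakening statement, I would show that if $\sequPDeriv{\theta}{p : A}$ and $\Delta \supseteq \theta$ then $\sequPDeriv{\Delta}{p : A}$. If $p = \matchable{x}$, then the derivation ends in \rulePConst's sibling \rulePMatch with side condition $\theta(x) = A$; since $\Delta \supseteq \theta$ extends $\theta$, we have $\Delta(x) = A$ and \rulePMatch applies under $\Delta$. If $p = \constterm{c}$, the rule \rulePConst carries no dependency on the context, so it reapplies verbatim. If $p = \dataterm{p_1}{p_2}$, inversion gives $A \eqtypemu \datatype{D}{A'}$ with $\sequPDeriv{\theta}{p_1 : D}$ and $\sequPDeriv{\theta}{p_2 : A'}$; applying the induction hypothesis to both subpatterns yields derivations under $\Delta$, and \rulePComp reassembles them.

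For item (ii) I would again induct on $p$: the matchable case gives $\fm{\matchable{x}} = \set{x}$ and the side condition of \rulePMatch forces $x \in \dom{\theta}$; the constant case is trivial as $\fm{\constterm{c}} = \varnothing$; and the compound case follows since $\fm{\dataterm{p_1}{p_2}} = \fm{p_1} \cup \fm{p_2}$, with both subpatterns typed under the \emph{same} $\theta$, so the two induction hypotheses combine directly. For item (iii), the restriction statement $\sequPDeriv{\res{\theta}{\fm{p}}}{p : A}$, the matchable and constant cases are immediate (the restricted context still assigns $A$ to $x$, resp. is empty and \rulePConst needs no hypotheses). The compound case is where the parts interlock: the induction hypothesis gives $\sequPDeriv{\res{\theta}{\fm{p_1}}}{p_1 : D}$ and $\sequPDeriv{\res{\theta}{\fm{p_2}}}{p_2 : A'}$, but these live over the smaller contexts $\res{\theta}{\fm{p_1}}$ and $\res{\theta}{\fm{p_2}}$, whereas \rulePComp requires a \emph{common} context. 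Since $\fm{p_i} \subseteq \fm{p}$ we have $\res{\theta}{\fm{p_i}} \subseteq \res{\theta}{\fm{p}}$, so item (i) lifts both subderivations to the common context $\res{\theta}{\fm{p}}$, after which \rulePComp yields $\sequPDeriv{\res{\theta}{\fm{p}}}{\dataterm{p_1}{p_2} : \datatype{D}{A'}}$.

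The only genuine (and still mild) subtlety I anticipate is this reliance of item (iii) on item (i) to reconcile the two restricted contexts in the compound case; everything else is a routine structural induction with no branching, since the absence of a pattern-level subsumption rule removes the usual difficulty of non-syntax-directed inversion. I would therefore prove the three items in the order (i), (ii), (iii) so that the weakening lemma is available when closing the compound case of the last item.
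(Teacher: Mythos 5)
Your proof is correct and follows exactly the route the paper takes (the paper's proof is just the one-line remark ``by induction on $p$ using the Generation Lemma for Patterns''). Your observation that item (iii)'s compound case needs item (i) to merge the two restricted contexts is a genuine detail the paper leaves implicit, and your handling of it is right.
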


\begin{proof}
The three cases are by induction on $p$ using the Generation Lemma for Patterns.
\end{proof}

\begin{lemma}[Generation Lemma]
\label{lem:generation}
Let $\Gamma$ be a typing context and $A$ a type.
\begin{enumerate}[(i)]
  \item If $\sequTDeriv{\Gamma}{x : A}$ then $\exists A'$ s.t. $A' \subtypemu A$ and $x : A' \in \Gamma$.
  \item If $\sequTDeriv{\Gamma}{\constterm{c} : A}$ then $\consttype{c} \subtypemu A$.
  \item If $\sequTDeriv{\Gamma}{\appterm{r}{u} : A}$ then:
  \begin{enumerate}[(a)]
    \item either $\exists D, A'$ s.t. $\datatype{D}{A'} \subtypemu A$, $\sequTDeriv{\Gamma}{r : D}$ and $\sequTDeriv{\Gamma}{u : A'}$;
    \item or $\exists A_1, \ldots, A_n, A',k\in 1..n$ s.t. $A' \subtypemu A$, $\sequTDeriv{\Gamma}{r : \functype{\maxuniontype{i \in 1..n}{A_i}}{A'}}$, and $\sequTDeriv{\Gamma}{u : A_k}$.
  \end{enumerate}
  \item If $\sequTDeriv{\Gamma}{(\absterm{p_i}{s_i}{\theta_i})_{i \in 1..n} : A}$ then $\exists A_1,
    \ldots, A_n, B$ s.t. $\functype{\maxuniontype{i \in 1..n}{A_i}}{B} \subtypemu A$, $\lista{\sequC{\theta_i}{p_i : A_i}}_{i \in 1..n}$ is compatible, $\dom{\theta_i} = \fm{p_i}$, $\sequPDeriv{\theta_i}{p_i : A_i}$ and $\sequTDeriv{\Gamma, \theta_i}{s_i : B}$ for every $i \in 1..n$.
\end{enumerate}
\end{lemma}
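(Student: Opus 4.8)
The plan is to proceed by induction on the derivation of the typing judgement, performing a case analysis on the last rule applied. For each of the four items the term under consideration has a fixed outermost shape, so only a restricted set of rules from Fig.~\ref{fig:typingSchemesForPatternsAndTerms} can conclude the judgement: the syntax-directed rule(s) matching that shape, together with the non-syntax-directed subsumption rule \ruleTSubs, which may always be the last step. The key observation that makes the induction go through is that \subtypemu\ is reflexive (\ruleSubmuRefl) and transitive (\ruleSubmuTrans), so a chain of subsumptions can be collapsed into the single existentially quantified subtyping obligation appearing in each item.

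Concretely, for item (i), if the last rule is \ruleTVar\ then $\Gamma(x)=A$ and we take $A'=A$, with $A'\subtypemu A$ by \ruleSubmuRefl; if the last rule is \ruleTSubs, then $\sequTDeriv{\Gamma}{x:A''}$ with $A''\subtypemu A$, the induction hypothesis yields some $A'\subtypemu A''$ with $x:A'\in\Gamma$, and \ruleSubmuTrans\ gives $A'\subtypemu A$. Items (ii) and (iv) follow the same two-case pattern, reading off the premises of \ruleTConst\ (resp.\ \ruleTAbs) in the base case and appealing to transitivity in the subsumption case; in (iv) the premises of \ruleTAbs\ directly supply compatibility of the list, the equalities $\dom{\theta_i}=\fm{p_i}$, and the pattern and body derivations, all of which survive a trailing \ruleTSubs\ verbatim since that rule alters only the overall type.

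The one item requiring slightly more care is (iii). Because application $\appterm{r}{u}$ and data compound $\dataterm{r}{u}$ are written with the same juxtaposition, the judgement $\sequTDeriv{\Gamma}{\appterm{r}{u}:A}$ may be concluded by either \ruleTComp\ or \ruleTApp\ (or \ruleTSubs). These two syntax-directed possibilities are exactly what produce the two alternatives (a) and (b): \ruleTComp\ gives $A=\datatype{D}{A'}$ with $\sequTDeriv{\Gamma}{r:D}$ and $\sequTDeriv{\Gamma}{u:A'}$, whence $\datatype{D}{A'}\subtypemu A$ by \ruleSubmuRefl, while \ruleTApp\ gives the functional shape with codomain $A'$ and $A'\subtypemu A$ by \ruleSubmuRefl. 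For a trailing \ruleTSubs\ we invoke the induction hypothesis, which returns one of (a) or (b) for the smaller type $A''$, and then push the new subtyping $A''\subtypemu A$ through by \ruleSubmuTrans\ in either alternative. There is no genuine obstacle here; the only subtlety is recognising that the shared surface syntax of applications and compounds forces the disjunction in (iii), and that transitivity of subtyping is precisely what keeps each statement closed under repeated subsumption.
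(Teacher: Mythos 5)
Your proposal is correct and follows essentially the same route as the paper: induction on the typing derivation with a case split on the last rule, using reflexivity of $\subtypemu$ in the syntax-directed base cases and transitivity to absorb a trailing \ruleTSubs{} via the induction hypothesis. Your observation that the shared juxtaposition syntax makes both \ruleTComp{} and \ruleTApp{} applicable to $\appterm{r}{u}$, yielding the two alternatives of item (iii), matches the paper's treatment exactly.
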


\begin{proof}
By induction on the derivation of $\sequT{\Gamma}{s : A}$ analyzing the last rule applied.
\begin{itemize}
  \item $\ruleTVar$: then $s = x$ with $x : A' \in \Gamma$. We take $A = A'$ and (i) holds by reflexivity of subtyping.

  \item $\ruleTConst$: then $s = \constterm{c}$ and $A = \consttype{c}$. Again by reflexivity we conclude that (ii) holds.

  \item $\ruleTComp$: then $s = \dataterm{r}{u}$ and $A = \datatype{D}{A'}$ with $\sequTDeriv{\Gamma}{r : D}$ and $\sequTDeriv{\Gamma}{u : A'}$. By reflexivity of subtyping we get $\datatype{D}{A'} \subtypemu A$ and conclude that (iii.a) holds.

  \item $\ruleTAbs$: then $s = (\absterm{p_i}{s_i}{\theta_i})_{i \in 1..n}$ and $A = \functype{\maxuniontype{i \in 1..n}{A_i}}{B}$ with $\dom{\theta_i} = \fm{p_i}$,  $\lista{\sequC{\theta_i}{p_i : A_i}}_{i \in 1..n}$ compatible, $\sequPDeriv{\theta_i}{p_i : A_i}$ and $\sequTDeriv{\Gamma, \theta_i}{s_i : B}$ for every $i \in 1..n$. Here (iv) holds by reflexivity of subtyping.

  \item $\ruleTApp$: then $s = \appterm{r}{u}$ with $\sequTDeriv{\Gamma}{r : \functype{\maxuniontype{i \in 1..n}{A_i}}{A}}$ and $\sequTDeriv{\Gamma}{u : A_k}$ for some $k \in 1..n$. We conclude reflexivity with $A' = A$ that (iii.b) holds.

  \item $\ruleTSubs$: then $\sequTDeriv{\Gamma}{s : A''}$ with $A'' \subtypemu A$. Now we analyze the form of the term $s$ to see which of the cases of the lemma holds for each term constructor:
  \begin{enumerate}[(i)]
    \item $s = x$. By inductive hypothesis $\exists A'$ such that $A' \subtypemu A''$ and $x : A' \in \Gamma$. Then, by transitivity of subtyping, $A' \subtypemu A''$ and we conclude that (i) holds.
    
    \item $s = \constterm{c}$. By inductive hypothesis $\consttype{c} \subtypemu A''$ and by transitivity of subtyping $\consttype{c} \subtypemu A$. Hence (ii) holds.
    
    \item $s = \appterm{r}{u}$. By inductive hypothesis we have two options:
    \begin{enumerate}[(a)]
      \item either $\exists D, A'$ such that $\datatype{D}{A'} \subtypemu A''$, $\sequTDeriv{\Gamma}{r : D}$ and $\sequTDeriv{\Gamma}{u : A'}$. By transitivity we have $\datatype{D}{A'} \subtypemu A$ and we are in the case that (iii.a) holds.
      
      \item or $\exists A_1, \ldots, A_n, A'$ such that $A \subtypemu A''$, $\sequTDeriv{\Gamma}{r : \functype{\maxuniontype{i \in 1..n}{A_i}}{A'}}$, and $\sequTDeriv{\Gamma}{u : A_k}$ for some $k \in 1..n$. Again by transitivity $A' \subtypemu A$ and (iii.b) holds.
    \end{enumerate}
    
    \item $s = (\absterm{p_i}{s_i}{\theta_i})_{i \in 1..n}$. By inductive hypothesis $\exists A_1, \ldots, A_n, B$ such that $\functype{\maxuniontype{i \in 1..n}{A_i}}{B} \subtypemu A''$, $\lista{\sequC{\theta_i}{p_i : A_i}}_{i \in 1..n}$ is compatible, $\dom{\theta_i} = \fm{p_i}$, $\sequPDeriv{\theta_i}{p_i : A_i}$ and $\sequTDeriv{\Gamma, \theta_i}{s_i : B}$ for every $i \in 1..n$. Then we conclude by transitivity of subtyping that $\functype{\maxuniontype{i \in 1..n}{A_i}}{B} \subtypemu A$ and (iv) holds.
  \end{enumerate}
\end{itemize}
\end{proof}

\begin{lemma}[Basis Lemma]
\label{lem:basis}
Let $\Gamma$ be a typing context, $s$ a term and $A$ a type such that $\sequTDeriv{\Gamma}{s : A}$.
\begin{enumerate}[(i)]
  \item Let $\Delta \supseteq \Gamma$ be another typing context, then $\sequTDeriv{\Delta}{s : A}$.
  \item $\fv{s} \subseteq \dom{\Gamma}$.
  \item $\sequTDeriv{\res{\Gamma}{\fv{s}}}{s : A}$.
\end{enumerate}
\end{lemma}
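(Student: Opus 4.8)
The plan is to establish the three items by induction, leaning on the Generation Lemma (Lem.~\ref{lem:generation}) and, where the matchables of a branch intervene, on its pattern counterparts (Lem.~\ref{lem:generationForPatterns} and~\ref{lem:basisForPatterns}). Items (i) and (ii) are proved by induction on the derivation of $\sequTDeriv{\Gamma}{s:A}$ (equivalently, by induction on $s$), analysing the last typing rule applied; item (iii) is then obtained by a further structural induction that invokes item (i) to re-weaken the contexts of subterms after restriction. Throughout, I adopt the usual freshness convention so that the matchables bound by each branch of an abstraction are disjoint from the domains of the ambient contexts $\Gamma$ and $\Delta$; this makes the context extensions below well defined as partial functions.

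For item (i), weakening, the base cases \ruleTVar\ and \ruleTConst\ are immediate, since $\Delta \supseteq \Gamma$ preserves the assignment $\Gamma(x)=A$ and the typing of constants is context-independent. The rules \ruleTComp, \ruleTApp\ and \ruleTSubs\ follow directly by applying the inductive hypothesis to their premises and reassembling the derivation (the subtyping side condition of \ruleTSubs\ is untouched). The only delicate case is \ruleTAbs: here each body is typed as $\sequTDeriv{\Gamma, \theta_i}{s_i:B}$, and I must produce $\sequTDeriv{\Delta, \theta_i}{s_i:B}$. By the freshness convention $\dom{\theta_i}$ is disjoint from $\dom{\Delta}$, so $\Delta, \theta_i \supseteq \Gamma, \theta_i$ and the inductive hypothesis applies; the pattern premises $\sequPDeriv{\theta_i}{p_i:A_i}$, the condition $\dom{\theta_i}=\fm{p_i}$ and the compatibility of $\lista{\sequC{\theta_i}{p_i:A_i}}_{i\in 1..n}$ do not mention $\Gamma$ and are carried over verbatim.

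For item (ii), the variable case uses Lem.~\ref{lem:generation}(i) to obtain $x:A'\in\Gamma$, whence $x\in\dom{\Gamma}$; the constant case is vacuous and the application case splits $\fv{\appterm{r}{u}}=\fv{r}\cup\fv{u}$ and appeals to the inductive hypothesis on both subterms. For \ruleTAbs\ I use $\fv{(\absterm{p_i}{s_i}{\theta_i})_{i\in 1..n}}=\bigcup_{i}(\fv{s_i}\setminus\fm{p_i})$; the inductive hypothesis on each body gives $\fv{s_i}\subseteq\dom{\Gamma,\theta_i}=\dom{\Gamma}\cup\fm{p_i}$, using $\dom{\theta_i}=\fm{p_i}$, and removing $\fm{p_i}$ leaves $\fv{s_i}\setminus\fm{p_i}\subseteq\dom{\Gamma}$, as required.

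Finally, for item (iii) I argue by induction on $s$, restricting $\Gamma$ to $\fv{s}$ and using item (i) to re-weaken. The variable and constant cases are immediate. For $\appterm{r}{u}$ I apply the inductive hypothesis to obtain derivations over $\res{\Gamma}{\fv{r}}$ and $\res{\Gamma}{\fv{u}}$, then weaken both to the common context $\res{\Gamma}{\fv{r}\cup\fv{u}}=\res{\Gamma}{\fv{\appterm{r}{u}}}$ via item (i) and recombine. The main obstacle, and the step requiring the most care, is again \ruleTAbs: the inductive hypothesis on $s_i$ yields a derivation over $\res{\Gamma,\theta_i}{\fv{s_i}}$, and one must verify that $\res{\Gamma}{\fv{s}},\theta_i$ extends this context, the key observation being that any $y\in\fv{s_i}$ either lies in $\fm{p_i}=\dom{\theta_i}$ (and is supplied, with the same type, by $\theta_i$) or lies in $\fv{s_i}\setminus\fm{p_i}\subseteq\fv{s}$ (and is supplied with its original type by $\res{\Gamma}{\fv{s}}$). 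Weakening via item (i) then permits re-applying \ruleTAbs\ with the unchanged pattern-typing and compatibility premises, closing the induction.
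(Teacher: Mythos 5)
Your proof is correct and follows essentially the same route as the paper: structural induction over terms (equivalently, over the derivation, dispatching \ruleTSubs\ via the Generation Lemma), with weakening from item (i) used to recombine the restricted contexts in item (iii), and the same key inclusion $\res{(\Gamma,\theta_i)}{\fv{s_i}} \subseteq \res{\Gamma}{\fv{s}}, \theta_i$ in the abstraction case. No gaps.
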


\begin{proof}
The three cases are by induction on $s$ using the Generation Lemma.
\begin{enumerate}[(i)]
  \item $\sequTDeriv{\Delta}{s : A}$.
  \begin{itemize}
    \item $s = x$. By Lem.~\ref{lem:generation} (i) $\exists A'$ such that $A' \subtypemu A$ and $x : A' \in \Gamma$. Then $\Delta = \Delta', x : A'$ and by $\ruleTVar$ and $\ruleTSubs$ we get $\sequTDeriv{\Delta}{x : A}$.

    \item $s = \constterm{c}$. By Lem.~\ref{lem:generation} (ii) $\consttype{c} \subtypemu A$ and we conclude by $\ruleTConst$ and $\ruleTSubs$ $\sequTDeriv{\Delta}{\constterm{c} : A}$.

    \item $s = \appterm{r}{u}$. By Lem.~\ref{lem:generation} (iii) we have two possible cases:
    \begin{enumerate}[(a)]
      \item either $\exists D, A'$ such that $\datatype{D}{A'} \subtypemu A$, $\sequTDeriv{\Gamma}{r : D}$ and $\sequTDeriv{\Gamma}{u : A'}$. By inductive hypothesis $\sequTDeriv{\Delta}{r : D}$ and $\sequTDeriv{\Delta}{u : A'}$. Then by $\ruleTComp$ and $\ruleTSubs$ $\sequTDeriv{\Delta}{\dataterm{r}{u} : A}$.
      
      \item or $\exists A_1, \ldots, A_n, A'$ such that $A' \subtypemu A$, $\sequTDeriv{\Gamma}{r : \functype{\maxuniontype{i \in 1..n}{A_i}}{A'}}$, and $\sequTDeriv{\Gamma}{u : A_k}$ for some $k \in 1..n$. Applying the inductive hypothesis we get $\sequTDeriv{\Delta}{r : \functype{\maxuniontype{i \in 1..n}{A_i}}{A'}}$ and $\sequTDeriv{\Delta}{u : A_k}$, so we conclude by $\ruleTApp$ and $\ruleTSubs$ that $\sequTDeriv{\Delta}{\appterm{r}{u} : A}$.
    \end{enumerate}

    \item $s = (\absterm{p_i}{s_i}{\theta_i})_{i \in 1..n}$. By Lem.~\ref{lem:generation} (iv) $\exists A_1, \ldots, A_n, B$ such that $\functype{\maxuniontype{i \in 1..n}{A_i}}{B} \subtypemu A$, $\lista{\sequC{\theta_i}{p_i : A_i}}_{i \in 1..n}$ is compatible, $\dom{\theta_i} = \fm{p_i}$, $\sequPDeriv{\theta_i}{p_i : A_i}$ and $\sequTDeriv{\Gamma, \theta_i}{s_i : B}$ for every $i \in 1..n$. Without loss of generality we can assume $\dom{\Delta} \cap \dom{\theta_i} = \varnothing$ for all $i \in 1..n$. Then $\Delta, \theta_i$ is also a typing context and by inductive hypothesis $\sequTDeriv{\Delta, \theta_i}{s_i : B}$ for all $i \in 1..n$. Then by $\ruleTAbs$ and $\ruleTSubs$ we conclude $\sequTDeriv{\Delta}{(\absterm{p_i}{s_i}{\theta_i})_{i \in 1..n} : A}$.
  \end{itemize}

  \item $\fv{s} \subseteq \dom{\Gamma}$.
  \begin{itemize}
    \item $s = x$. By Lem.~\ref{lem:generation} (i) $\exists A'$ such that $A' \subtypemu A$ and $x : A' \in \Gamma$. Then $\fv{s} = \set{x} \subseteq \dom{\Gamma}$.

    \item $s = \constterm{c}$. Then $\fv{s} = \varnothing \subseteq \dom{\Gamma}$.

    \item $s = \appterm{r}{u}$. By Lem.~\ref{lem:generation} (iii) $\exists B, B'$ such that $\sequTDeriv{\Gamma}{r : B}$ and $\sequTDeriv{\Gamma}{u : B'}$. By inductive hypothesis $\fv{r} \subseteq \dom{\Gamma}$ and $\fv{u} \subseteq \dom{\Gamma}$. Then $\fv{s} = \fv{r}\cup\fv{u} \subseteq \dom{\Gamma}$.
		
    \item $s = (\absterm{p_i}{s_i}{\theta_i})_{i \in 1..n}$. By Lem.~\ref{lem:generation} (iv) $\exists A_1, \ldots, A_n, B$ such that $\functype{\maxuniontype{i \in 1..n}{A_i}}{B} \subtypemu A$, $\dom{\theta_i} = \fm{p_i}$ and $\sequTDeriv{\Gamma, \theta_i}{s_i : B}$ for every $i \in 1..n$. By inductive hypothesis $\fv{s_i} \subseteq \dom{\Gamma, \theta_i} = \dom{\Gamma} \uplus \fm{p_i}$ and we have $\fv{s_i} \setminus \fm{p_i} \subseteq \dom{\Gamma}$ for every $i \in 1..n$. Then $\fv{s} = \bigcup_{i \in 1..n}{\fv{s_i} \setminus \fm{p_i}} \subseteq \dom{\Gamma}$.
  \end{itemize}

  \item $\sequTDeriv{\res{\Gamma}{\fv{s}}}{s : A}$.
  \begin{itemize}
    \item $s = x$. By Lem.~\ref{lem:generation} (i) $\exists A'$ such that $A' \subtypemu A$ and $x : A' \in \Gamma$. Then by $\ruleTVar$ and $\ruleTSubs$ $\sequTDeriv{\res{\Gamma}{\fv{s}} = x : A'}{x : A}$.

    \item $s = \constterm{c}$. By Lem.~\ref{lem:generation} (ii) $\consttype{c} \subtypemu A$ and we conclude by $\ruleTConst$ and $\ruleTSubs$ $\sequTDeriv{\res{\Gamma}{\fv{s}}}{\constterm{c} : A}$.

    \item $s = \appterm{r}{u}$. By Lem.~\ref{lem:generation} (iii) we have two possible cases:
    \begin{enumerate}[(a)]
      \item either $\exists D, A'$ such that $\datatype{D}{A'} \subtypemu A$, $\sequTDeriv{\Gamma}{r : D}$ and $\sequTDeriv{\Gamma}{u : A'}$. By inductive hypothesis $\sequTDeriv{\res{\Gamma}{\fv{r}}}{r : D}$ and $\sequTDeriv{\res{\Gamma}{\fv{u}}}{u : A'}$. Since $\Gamma$ is a typing context, $\res{\Gamma}{\fv{r}} \subseteq \Gamma$ and $\res{\Gamma}{\fv{u}} \subseteq \Gamma$, then $\res{\Gamma}{\fv{r}} \cup \res{\Gamma}{\fv{u}} = \res{\Gamma}{\fv{\dataterm{r}{u}}}$ is also a typing context. Now, by Lem.~\ref{lem:basis} (i), $\sequTDeriv{\res{\Gamma}{\fv{s}}}{r : D}$ and $\sequTDeriv{\res{\Gamma}{\fv{s}}}{u : A'}$. Then we conclude by applying $\ruleTComp$ and $\ruleTSubs$.
      
      \item or $\exists A_1, \ldots, A_n, A'$ such that $A' \subtypemu A$, $\sequTDeriv{\Gamma}{r : \functype{\maxuniontype{i \in 1..n}{A_i}}{A'}}$, and $\sequTDeriv{\Gamma}{u : A_k}$ for some $k \in 1..n$. By inductive hypothesis $\sequTDeriv{\res{\Gamma}{\fv{u}}}{r : \functype{\maxuniontype{i \in 1..n}{A_i}}{A'}}$ and $\sequTDeriv{\res{\Gamma}{\fv{u}}}{u : A_k}$. Again we have $\res{\Gamma}{\fv{r}} \cup \res{\Gamma}{\fv{u}} = \res{\Gamma}{\fv{\appterm{r}{u}}}$ a typing context  and we can apply case (i) of this same lemma to get $\sequTDeriv{\res{\Gamma}{\fv{s}}}{r : \functype{\maxuniontype{i \in 1..n}{A_i}}{A'}}$ and $\sequTDeriv{\res{\Gamma}{\fv{s}}}{u : A_k}$. Finally we conclude by $\ruleTApp$ and $\ruleTSubs$.
    \end{enumerate}

    \item $s = (\absterm{p_i}{s_i}{\theta_i})_{i \in 1..n}$. By Lem.~\ref{lem:generation} (iv) $\exists A_1, \ldots, A_n, B$ such that $\functype{\maxuniontype{i \in 1..n}{A_i}}{B} \subtypemu A$, $\lista{\sequC{\theta_i}{p_i : A_i}}_{i \in 1..n}$ is compatible, $\dom{\theta_i} = \fm{p_i}$, $\sequPDeriv{\theta_i}{p_i : A_i}$ and $\sequTDeriv{\Gamma, \theta_i}{s_i : B}$ for every $i \in 1..n$. By inductive hypothesis $\sequTDeriv{\res{(\Gamma, \theta_i)}{\fv{s_i}}}{s_i : B}$ and it's immediate to see that $\res{(\Gamma, \theta_i)}{\fv{s_i}} \subseteq \res{\Gamma}{\fv{s_i}} \uplus \theta_i$, since $\dom{\Gamma} \cap \dom{\theta_i} = \varnothing$. Moreover, as $\dom{\theta_i} = \fm{p_i}$, we have $\res{\Gamma}{\fv{s_i}} = \res{\Gamma}{(\fv{s_i} \setminus \fm{p_i})} \subseteq \res{\Gamma}{\bigcup_{j \in 1..n}{(\fv{s_j} \setminus \fm{p_j})}}$. Then $\res{\Gamma}{\fv{s}} \uplus \theta_i \supseteq \res{\Gamma}{\fv{s_i}} \uplus \theta_i \supseteq \res{(\Gamma, \theta_i)}{\fv{s_i}}$ is also a typing context and, by Lem.~\ref{lem:basis} (i), we get $\sequTDeriv{\res{\Gamma}{\fv{s}}, \theta_i}{s_i : B}$ for every $i \in 1..n$. Finally we apply $\ruleTAbs$ and $\ruleTSubs$ to conclude $\sequTDeriv{\res{\Gamma}{\fv{s}}}{(\absterm{p_i}{s_i}{\theta_i})_{i \in 1..n} : A}$.
  \end{itemize}
\end{enumerate}
\end{proof}

The following lemma is useful to deduce the shape of the type when we know the
term is a data structure. Essentially it states that every data structure that
can be given a type, can also be typed with a more specific non-union datatype.

\begin{lemma}[Typing for Data Structures]
\label{lem:typingForDataStructures}
Suppose  $\sequTDeriv{\Gamma}{d : A}$, for $d$ a data structure.  Then $\exists D$ datatype such that $D$ is a non-union type, $D \subtypemu A$ and $\sequTDeriv{\Gamma}{d : D}$. Moreover,
\begin{enumerate}
\item If $d = \constterm{c}$, then $D \eqtypemu \consttype{c}$.
\item If $d = \dataterm{d'}{t}$, then $\exists D', A'$ such that $D \eqtypemu \datatype{D'}{A'}$, $\sequTDeriv{\Gamma}{d' : D'}$ and $\sequTDeriv{\Gamma}{t : A'}$.
\end{enumerate}
\end{lemma}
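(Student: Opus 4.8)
The plan is to proceed by induction on the structure of the data structure $d$ (equivalently, on the derivation of $\sequTDeriv{\Gamma}{d : A}$), relying throughout on the Generation Lemma (Lem.~\ref{lem:generation}). There are exactly two cases, following the grammar $d \Coloneq \constterm{c} \mathrel| \dataterm{d}{t}$.

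For the base case $d = \constterm{c}$, the Generation Lemma~(ii) yields $\consttype{c} \subtypemu A$. I would take $D = \consttype{c}$, which is an atom and hence a non-union datatype, and observe that $\sequTDeriv{\Gamma}{\constterm{c} : \consttype{c}}$ holds by \ruleTConst. Item~1 then follows by reflexivity of $\eqtypemu$, and $D \subtypemu A$ is precisely what the Generation Lemma provided.

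For the inductive case $d = \dataterm{d'}{t}$, note that $d'$ is itself a (strictly smaller) data structure and that this term is syntactically an application, so the Generation Lemma~(iii) applies and splits into two subcases. In subcase~(b) one would have $\sequTDeriv{\Gamma}{d' : \functype{\maxuniontype{i \in 1..n}{A_i}}{A'}}$; applying the induction hypothesis to $d'$ produces a non-union datatype $D_0$ with $D_0 \subtypemu \functype{\maxuniontype{i \in 1..n}{A_i}}{A'}$. I would rule this out: by Prop.~\ref{prop:subtypeSoundnessAndCompleteness} it gives $\toBTree{D_0} \subtypeco \toBTree{\functype{\maxuniontype{i \in 1..n}{A_i}}{A'}}$, where the left-hand tree is non-union with root an atom or $\idatatype$ (a datatype never produces $\ifunctype$ at the head, even after unfolding a $\irectype$, by contractiveness), while the right-hand tree has root $\ifunctype$. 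Lem.~\ref{lem:subtypingIsInvertible} forbids an atom or a $\idatatype$ from being a subtype of an $\ifunctype$, a contradiction. Hence only subcase~(a) survives, providing $D''$ and $A'$ with $\datatype{D''}{A'} \subtypemu A$, $\sequTDeriv{\Gamma}{d' : D''}$ and $\sequTDeriv{\Gamma}{t : A'}$.

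I would then apply the induction hypothesis to $\sequTDeriv{\Gamma}{d' : D''}$ to obtain a non-union datatype $D_0 \subtypemu D''$ with $\sequTDeriv{\Gamma}{d' : D_0}$, and set $D = \datatype{D_0}{A'}$. By \ruleTComp we get $\sequTDeriv{\Gamma}{d : \datatype{D_0}{A'}}$; $D$ is a compound and therefore a non-union datatype; and combining $D_0 \subtypemu D''$ with $A' \subtypemu A'$ (by \ruleSubmuRefl) through \ruleSubmuComp, followed by \ruleSubmuTrans with $\datatype{D''}{A'} \subtypemu A$, yields $D \subtypemu A$. Item~2 follows by taking the witnesses $D_0$ and $A'$: $D \eqtypemu \datatype{D_0}{A'}$ by reflexivity, and the two typing judgements are already in hand. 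The main obstacle is the elimination of subcase~(b); everything else is a routine assembly of the generation and subtyping rules, whereas ruling out~(b) is exactly where the coinductive invertibility machinery (Lem.~\ref{lem:subtypingIsInvertible}, via Prop.~\ref{prop:subtypeSoundnessAndCompleteness}) is indispensable.
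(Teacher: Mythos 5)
Your proof is correct and follows essentially the same route as the paper's: induction on $d$, the Generation Lemma to split the application case, and elimination of the functional subcase via invertibility of subtyping. The only (inessential) differences are that the paper dispatches subcase~(b) by citing Lem.~\ref{lem:supertypesOfNonUnionTypes} (which, for the empty union context, is exactly the head-constructor-preservation fact you re-derive from Prop.~\ref{prop:subtypeSoundnessAndCompleteness} and Lem.~\ref{lem:subtypingIsInvertible}), and that in subcase~(a) the paper takes $D = \datatype{D''}{A'}$ directly --- a compound is already non-union --- so your extra application of the induction hypothesis there is harmless but not needed.
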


\begin{proof}
By induction on $d$.
\begin{itemize}
  \item $d = \constterm{c}$. By Lem.~\ref{lem:generation} (ii) $D = \consttype{c} \subtypemu A$.

  \item $d = \dataterm{d'}{t}$. By Lem.~\ref{lem:generation} (iii) there are two possible cases:
  \begin{enumerate}[(a)]
    \item either $\exists D', A'$ such that $\datatype{D'}{A'} \subtypemu A$, $\sequTDeriv{\Gamma}{d' : D'}$ and $\sequTDeriv{\Gamma}{t : A'}$. Then the property holds with $D = \datatype{D'}{A'}$, since by $\ruleTComp$ we can derive $\sequTDeriv{\Gamma}{\dataterm{d'}{t} : D}$.
    
    \item or $\exists A_1, \ldots, A_n, A'$ such that $A' \subtypemu A$, $\sequTDeriv{\Gamma}{d' : \functype{\maxuniontype{i \in 1..n}{A_i}}{A'}}$, and $\sequTDeriv{\Gamma}{t : A_k}$ for some $k \in 1..n$. By inductive hypothesis applied to $d'$ we get that $\exists D'$ datatype such that $D'$ is not a union type and $D' \subtypemu \functype{\maxuniontype{i \in 1..n}{A_i}}{A'}$. But, by Lem.~\ref{lem:supertypesOfNonUnionTypes}, both of them have the same outermost type constructor, which leads to a contradiction. Hence this case does not apply.
  \end{enumerate}
\end{itemize}
\end{proof}

Some results on compatibility follow, the crucial one being Lem.~\ref{lem:compatibility}. 
This next lemma shows that maching failure is enough to guarantee that the type of the
argument is not a subtype of that of the pattern.

\begin{lemma}
\label{lem:mismatchingTypes}
Given $\sequTDeriv{\Gamma}{u : B}$, $\sequPDeriv{\theta}{p : A}$. If $\match{p}{u} = \fail$, then $B \not\subtypemu A$.
\end{lemma}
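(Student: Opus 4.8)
The plan is to prove the statement by induction on the structure of the pattern $p$, arguing by contradiction from the assumption $B \subtypemu A$. The first thing I would record is that $\match{p}{u} = \fail$ rules out $p = \matchable{x}$, since a matchable always succeeds, and forces $u$ to be a \emph{matchable form}: inspecting the defining clauses of the matching operation, $\fail$ can only arise from the fourth clause (which requires $u$ to be a matchable form) or from propagation of $\fail$ through $\uplus$ in the compound clause, and that clause only recurses when $u = \dataterm{u_1}{u_2}$ is itself a matchable form, hence a data structure. Using the Generation Lemma for Patterns (Lem.~\ref{lem:generationForPatterns}) I then read off the shape of $A$: either $A \eqtypemu \consttype{c}$ when $p = \constterm{c}$, or $A \eqtypemu \datatype{D}{A'}$ with $\sequPDeriv{\theta}{p_1 : D}$ and $\sequPDeriv{\theta}{p_2 : A'}$ when $p = \dataterm{p_1}{p_2}$.

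The base cases are all situations in which no genuine recursion into a compound occurs, namely $p = \constterm{c}$ (with $u$ any matchable form other than $\constterm{c}$) and $p = \dataterm{p_1}{p_2}$ with $u$ a matchable form that is not a compound data structure (a constant or an abstraction). The common idea is that the typing of $u$ yields a non-union subtype of $B$ whose outermost constructor clashes with the head of $A$. When $u$ is a data structure I extract such a subtype from Lem.~\ref{lem:typingForDataStructures} (a non-union datatype equivalent either to $\consttype{d}$ or to a $\datatype{\cdot}{\cdot}$ and below $B$); when $u$ is an abstraction I extract a function type $\functype{\maxuniontype{i \in 1..n}{A_i}}{B'} \subtypemu B$ from Lem.~\ref{lem:generation}(iv). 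Composing with the hypothesis $B \subtypemu A$ and folding in $A \eqtypemu \consttype{c}$ or $A \eqtypemu \datatype{D}{A'}$ via \ruleSubmuEq\ and \ruleSubmuTrans, I reduce to a subtyping $Y \subtypemu A_0$ between two non-union types whose trees carry distinct head symbols. Passing to trees by Prop.~\ref{prop:subtypeSoundnessAndCompleteness} and applying invertibility of subtyping for non-union trees (Lem.~\ref{lem:subtypingIsInvertible}, using its remark that the roles may be swapped) then forces the two heads to coincide, the desired contradiction.

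The inductive case is $p = \dataterm{p_1}{p_2}$ with $u = \dataterm{u_1}{u_2}$ a data structure. The compound matching clause gives $\match{p_1}{u_1} \uplus \match{p_2}{u_2} = \fail$, and the table defining $\uplus$ shows that at least one of the two sub-matches is $\fail$. Lem.~\ref{lem:typingForDataStructures} supplies a non-union datatype $\datatype{E}{C} \subtypemu B$ together with $\sequTDeriv{\Gamma}{u_1 : E}$ and $\sequTDeriv{\Gamma}{u_2 : C}$. From $B \subtypemu A \eqtypemu \datatype{D}{A'}$ I obtain $\datatype{E}{C} \subtypemu \datatype{D}{A'}$, where now both sides share the head $\idatatype$, so Prop.~\ref{prop:subtypingIsInvertible}(1) decomposes it into $E \subtypemu D$ and $C \subtypemu A'$. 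Applying the induction hypothesis to whichever sub-match failed — to $(\sequTDeriv{\Gamma}{u_1 : E}, \sequPDeriv{\theta}{p_1 : D})$ or to $(\sequTDeriv{\Gamma}{u_2 : C}, \sequPDeriv{\theta}{p_2 : A'})$ — yields $E \not\subtypemu D$ or $C \not\subtypemu A'$, contradicting the decomposition.

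The main obstacle I anticipate is the bookkeeping around union types and the passage between the syntactic relation $\subtypemu$ and its coinductive counterpart $\subtypeco$. The type $B$ of the argument is in general a union and need not display any head constructor of its own, so the structural clash can be exposed only after extracting a non-union subtype from the typing of $u$ and moving to trees, where Lem.~\ref{lem:subtypingIsInvertible} applies cleanly. Making sure that each head mismatch really does contradict invertibility, and that in the compound case the $\idatatype$-heads genuinely agree so that Prop.~\ref{prop:subtypingIsInvertible} can fire before the induction hypothesis is invoked, is where the argument must be most careful.
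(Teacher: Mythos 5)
Your proposal is correct and follows essentially the same route as the paper's proof: induction on $p$, case analysis on the matchable form $u$, extraction of a non-union subtype of $B$ via Lem.~\ref{lem:typingForDataStructures} or Lem.~\ref{lem:generation}(iv), a head-constructor clash resolved by invertibility of subtyping for non-union types, and the induction hypothesis applied to the failing sub-match in the compound/compound case. If anything, your version is slightly more careful than the paper's about the orientation of the subtypings obtained from invertibility in the inductive case.
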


\begin{proof}
By induction on $p$. We only analyse the cases where $\match{p}{u} = \fail$, otherwise the implication holds trivially.
\begin{itemize}
  \item $p = \constterm{c}$: then $u$ is a matchable form and $u \neq \constterm{c}$. By Lem.~\ref{lem:generationForPatterns} (ii), $A = \consttype{c}$.
  \begin{enumerate}
    \item $u = \constterm{d} \neq \constterm{c}$: by Lem.~\ref{lem:generation} (ii), $\consttype{d} \subtypemu B$. Then, if $B \subtypemu \consttype{c}$ we would have $\consttype{d} \subtypemu \consttype{c}$ by transitivity, which is clearly not possible by invertibility of subtyping for non-union types. Hence, it cannot be the case that $B \subtypemu A$.
    \item $u = \dataterm{u_1}{u_2}$: by Lem.~\ref{lem:typingForDataStructures}, $\exists D',B'$ such that $\datatype{D'}{B'} \subtypemu B$. Again, if $B \subtypemu \consttype{c}$ we have a contradiction.
    \item $u = (\absterm{q_j}{u_j}{})_{j \in 1..m}$: by Lem.~\ref{lem:generation} (iv), there exists $B_1, \ldots, B_m, B'$ such that $\functype{\maxuniontype{j \in 1..m}{B_j}}{B'} \subtypemu B$. Thus, we conclude by contradiction as in the previous case.
  \end{enumerate}
  
  \item $p = \dataterm{p_1}{p_2}$: here, by Lem.~\ref{lem:generationForPatterns} (iii), $\exists D, A'$ such that $A = \datatype{D}{A'}$ with $\sequPDeriv{\theta}{p_1 : D}$ and $\sequPDeriv{\theta}{p_2 : A'}$. There are three possible cases of mismatch:
  \begin{enumerate}
    \item $u = \constterm{d} \neq \constterm{c}$: similarly to the previous cases, by Lem.~\ref{lem:generation} (ii) we have $\consttype{d} \subtypemu B$ which leads to a contradiction if $B \subtypemu \datatype{D}{A'}$.
    \item $u = \dataterm{u_1}{u_2}$: then the mismatch was internal. Thus, we have $\match{p_i}{u_i} = \fail$ for at least one of the two possibilities. By Lem.~\ref{lem:typingForDataStructures}, $\exists D',B'$ such that $\datatype{D'}{B'} \subtypemu B$ with $\sequTDeriv{\Gamma}{u_1 : D'}$ and $\sequTDeriv{\Gamma}{u_2 : B'}$. Then, by inductive hypothesis, we have $D \not\subtypemu D'$, or $A' \not\subtypemu B'$, or both.

    Now suppose $B \subtypemu A \eqtypemu \datatype{D'}{A'}$. By transitivity we have $\datatype{D'}{B'} \subtypemu \datatype{D'}{A'}$ and by invertibility of subtyping for non-union types both $D \subtypemu D'$ and $A' \subtypemu B'$ should hold. Thus, we conclude $B \not\subtypemu A$.
    \item $u = (\absterm{q_j}{u_j}{})_{j \in 1..m}$: as before, by Lem.~\ref{lem:generation} (iv), we have $B_1, \ldots, B_m, B'$ such that $\functype{\maxuniontype{j \in 1..m}{B_j}}{B'} \subtypemu B$ and conclude by contradiction with $B \subtypemu \datatype{D}{A'}$.
  \end{enumerate}
\end{itemize}
\end{proof}

Define $\Psicomp{\sequC{\theta}{p : A}}{\sequC{\theta'}{q : B}} \eqdef \forall \pi \in \cpos{p}{q}.\lookup{A}{\pi} \cap \lookup{B}{\pi} \neq \varnothing$, so that compatibility can alternatively be characterized as: $$\compatible{\sequC{\theta}{p : A}}{\sequC{\theta'}{q : B}} \quad\text{iff}\quad \Psicomp{\sequC{\theta}{p : A}}{\sequC{\theta'}{q : B}} \implies B \subtypemu A$$

The Compatibility Lemma should be interpreted in the context of an abstraction.
Assume an argument $u$ of type $B$ is passed to a function where there are (at
least) two branches, defined by patterns $p$ and $q$, the latter having the
same type as $u$. If the argument matches the first pattern of (potentially) a
different type $A$, then $\Psicomp{\sequC{\theta}{p : A}}{\sequC{\theta'}{q : B}}$
must hold. Since patterns within an abstraction must be compatible, we get
$B \subtypemu A$ and thus $\sequTDeriv{\Gamma}{u : A}$ too.

\begin{lemma}[Compatibility Lemma]
\label{lem:compatibility}
Suppose $\sequTDeriv{\Gamma}{u : B}$, $\sequPDeriv{\theta}{p : A}$, $\sequPDeriv{\theta'}{q : B}$ and $\match{p}{u}$ is successful. Then, $\Psicomp{\sequC{\theta}{p : A}}{\sequC{\theta'}{q : B}}$ holds.
\end{lemma}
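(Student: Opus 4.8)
The plan is to prove the statement by induction on the structure of the pattern $p$, keeping $q$, $u$, $A$ and $B$ general at each step. The guiding observation is that at every mismatching position $\pi \in \cpos{p}{q}$ the subpattern $\at{p}{\pi}$ cannot be a matchable: a matchable subsumes any pattern, so $\matches{\at{p}{\pi}}{\at{q}{\pi}}$ would hold, contradicting $\pi \in \cpos{p}{q}$. Hence $\at{p}{\pi}$ is either a constant $\constterm{c}$ or a compound, with a well-defined head symbol $\odot$ ($\consttype{c}$ in the first case, $\idatatype$ in the second). I will exhibit this very $\odot$ as a witness of $\lookup{A}{\pi} \cap \lookup{B}{\pi} \neq \varnothing$. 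The key leverage is that a successful match $\match{p}{u}$ forces $\at{u}{\pi}$ to carry the same head symbol $\odot$ as $\at{p}{\pi}$, while $u$ has type $B$.

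First I would dispatch the cases where the mismatch sits at the root of the current subproblem. If $p = \matchable{x}$ then $p$ subsumes $q$ and $\cpos{p}{q} = \varnothing$, so there is nothing to prove. If $p = \constterm{c}$, the only candidate mismatching position is $\epsilon$, and a successful match forces $u = \constterm{c}$; the Generation Lemma for Patterns (Lem.~\ref{lem:generationForPatterns}) gives $A \eqtypemu \consttype{c}$, so $\consttype{c} \in \lookup{A}{\epsilon}$, while the Generation Lemma (Lem.~\ref{lem:generation}) gives $\consttype{c} \subtypemu B$. Similarly, if $p$ is a compound and $q$ is not, a successful match forces $u$ to be a compound data structure, Lem.~\ref{lem:generationForPatterns} yields $A \eqtypemu \datatype{D}{A'}$ (so $\idatatype \in \lookup{A}{\epsilon}$), and Lem.~\ref{lem:typingForDataStructures} yields a non-union datatype $\datatype{D'}{A''} \subtypemu B$. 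In both situations I place $\odot$ into $\lookup{B}{\epsilon}$ by invoking Lem.~\ref{lem:supertypesOfNonUnionTypes} with the empty union context: since an $\odot$-headed non-union type is a subtype of $B$, some union component of $B$ shares its outermost constructor $\odot$, whence $\odot \in \lookup{B}{\epsilon}$.

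The inductive step is the compound--compound case $p = \appterm{p_1}{p_2}$, $q = \appterm{q_1}{q_2}$. Here $\epsilon$ is not maximal in $\pos{p} \cap \pos{q}$, and $\cpos{p}{q}$ decomposes as $1\!\cdot\!\cpos{p_1}{q_1} \cup 2\!\cdot\!\cpos{p_2}{q_2}$, while $\lookup{A}{1\pi'} = \lookup{D}{\pi'}$ and $\lookup{B}{1\pi'} = \lookup{D_B}{\pi'}$ once $A \eqtypemu \datatype{D}{A'}$ and $B \eqtypemu \datatype{D_B}{B'}$ are exposed by Lem.~\ref{lem:generationForPatterns}. Success of $\match{p}{u}$ forces $u = \appterm{u_1}{u_2}$ (a data structure) with $\match{p_1}{u_1}$ and $\match{p_2}{u_2}$ both successful; applying Lem.~\ref{lem:typingForDataStructures} to $u$ and then invertibility of subtyping (Prop.~\ref{prop:subtypingIsInvertible}) lets me retype $u_1 : D_B$ and $u_2 : B'$ via $\ruleTSubs$, exactly the component types carried by $q_1$ and $q_2$. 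I can then apply the induction hypothesis to $(p_1,q_1,u_1)$ at type $D_B$ and to $(p_2,q_2,u_2)$ at type $B'$, and transport the resulting non-empty intersections through the prefixes $1$ and $2$.

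I expect the main obstacle to be the term side: turning $\sequTDeriv{\Gamma}{u : B}$ and the success of the match into membership of a concrete symbol in $\lookup{B}{\pi}$. This is where subtyping genuinely interferes, since $u$ may only be typable at $B$ through $\ruleTSubs$; the argument hinges on Lem.~\ref{lem:typingForDataStructures} (to recover a principal non-union datatype of the data structure $u$), on invertibility (to descend into type applications in lockstep with the term), and on Lem.~\ref{lem:supertypesOfNonUnionTypes} (to guarantee that passing to a supertype never erases the outermost constructor from the set of admitted symbols). A subsidiary point to check is that $\lookup{\cdot}{\pi}$ is invariant under $\eqtypemu$, so that the equivalences produced by the Generation Lemmas may be pushed through the lookup operator; this follows from Prop.~\ref{prop:eqtypeSoundnessAndCompleteness} together with the fact that $\lookup{C}{\pi}$ records exactly the root symbols of the maximal-union components of the subtree of $\toBTree{C}$ at $\pi$.
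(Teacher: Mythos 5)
Your proposal is correct and follows essentially the same route as the paper's proof: induction on $p$, case analysis on $q$, using matching success to force the shape of $u$, the Generation Lemmas and Lem.~\ref{lem:typingForDataStructures} to expose the head constructor on the type side, and invertibility of subtyping to retype $u_1,u_2$ at the component types of $B$ before applying the induction hypothesis. The only cosmetic differences are that you invoke Lem.~\ref{lem:supertypesOfNonUnionTypes} where the paper inlines the maximal-union decomposition plus Lem.~\ref{lem:subtypingIsInvertible}, and that you make explicit the $\eqtypemu$-invariance of $\lookup{\cdot}{\pi}$, which the paper leaves implicit.
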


\begin{proof}
By induction on $p$.
\begin{itemize}
  \item $p = \matchable{x}$: then the result is immediate since $\matches{\matchable{x}}{q}$ for every pattern $q$.
  
  \item $p = \constterm{c}$: if $\matches{\constterm{c}}{q}$ the result is immediate. So lets analize the case where $\nmatches{\constterm{c}}{q}$ (\ie $q \neq \constterm{c}$). We have $u = \constterm{c}$ by matching success and $\consttype{c} \subtypemu B$ by Lem.~\ref{lem:generation} (ii). Assume $B = \maxuniontype{i \in 1..n}{B_i}$ with $B_i \neq \iuniontype$, then $\consttype{c} \subtypemu B_j$ for some $j \in 1..n$. Moreover, by invertibility of subtyping of non-union types, $B_j = \consttype{c}$. On the other hand, by Lem.~\ref{lem:generationForPatterns} (ii), $A = \consttype{c}$. Then, $\lookup{A}{\epsilon} \cap \lookup{B}{\epsilon} \neq \varnothing$ and we conclude since $\cpos{p}{q} = \set{\epsilon}$.
  
  \item $p = \dataterm{p_1}{p_2}$: again, lets see the cases where $\nmatches{p}{q}$. By matching success we have $u = \dataterm{u_1}{u_2}$ a data structure with $\match{p}{u} = \match{p_1}{u_1} \uplus \match{p_2}{u_2}$ both successful. Moreover, by Lem.~\ref{lem:typingForDataStructures}, $\exists D',B'$ such that $\datatype{D'}{B'} \subtypemu B$ with $\sequTDeriv{\Gamma}{u_1 : D'}$ and $\sequTDeriv{\Gamma}{u_2 : B'}$. Now we analize the shape of $q$:
  \begin{enumerate}
    \item $q = \matchable{y}$: as before, assume $B = \maxuniontype{i \in 1..n}{B_i}$ with $B_i \neq \iuniontype$ for every $i \in 1..n$. Then, by definition and invertibility of subtyping for non-union types, from $\datatype{D'}{B'} \subtypemu B$ we have $B_j = \datatype{D'_j}{B'_j}$ for some $j \in 1..n$. Again, by Lem.~\ref{lem:generationForPatterns} (iii), $\exists D, A'$ such that $A = \datatype{D}{A'}$ and we conclude with $\lookup{A}{\epsilon} \cap \lookup{B}{\epsilon} \neq \varnothing$, given that $\cpos{p}{q} = \set{\epsilon}$.
    
    \item $q = \constterm{d}$: by Lem.~\ref{lem:generationForPatterns} (ii) we have $B = \consttype{d}$ which leads to a contradiction with $\datatype{D'}{B'} \subtypemu B$. Hence, this case is not possible.
    
    \item $q = \dataterm{q_1}{q_2}$: by Lem.~\ref{lem:generationForPatterns} (iii), $\exists D'', B''$ such that $B = \datatype{D''}{B''}$ with $\sequPDeriv{\theta'}{q_1 : D''}$ and $\sequPDeriv{\theta'}{q_2 : B''}$. Then, by invertibility of subtyping for non-union types, we get $D' \subtypemu D''$ and $B' \subtypemu B''$. Thus, $\sequTDeriv{\Gamma}{u_1 : D''}$ and $\sequTDeriv{u_2}{B''}$ by subsumption. On the other hand, by Lem.~\ref{lem:generationForPatterns} (iii), $\exists D, A'$ such that $A = \datatype{D}{A'}$ with $\sequPDeriv{\theta}{p_1 : D}$ and $\sequPDeriv{\theta}{p_2 : A'}$. Then, by inductive hypothesis, both $\Psicomp{\sequC{\theta}{p_1 : D'}}{\sequC{\theta'}{q_1 : D''}}$ and $\Psicomp{\sequC{\theta}{p_2 : A'}}{\sequC{\theta'}{q_2 : B''}}$ hold. Finally, since both patterns are compounds every mismatching position is internal, thus we can assure that $\Psicomp{\sequC{\theta}{p : A}}{\sequC{\theta'}{q : B}}$ holds too.
  \end{enumerate}
\end{itemize}
\end{proof}

Let $\Gamma, \theta$ be typing contexts, $\sigma$ a substitution. We write
$\sequT{\Gamma}{\sigma:\theta}$ to indicate that $\dom{\sigma}=\dom{\theta}$
and $\sequT{\Gamma}{\sigma(x) : \theta(x)}$, for all $x \in \dom{\sigma}$.
Likewise we use $\sequTDeriv{\Gamma}{\sigma : \theta}$ if each judgment is
derivable. Two auxiliary results before addressing SR.

The following lemma assures that the substitution yielded by a successful
match preserves the types of the variables in the pattern.

\begin{lemma}[Type of Successful Match]
\label{lem:typeOfSuccessfulMatch}
Suppose $\match{p}{u} = \sigma$ is successful, $\dom{\theta} = \fm{p}$, $\sequPDeriv{\theta}{p : A}$ and $\sequTDeriv{\Gamma}{u : A}$. Then $\sequTDeriv{\Gamma}{\sigma : \theta}$.
\end{lemma}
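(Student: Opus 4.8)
The plan is to argue by induction on the structure of the pattern $p$, restricting attention to the three shapes for which $\match{p}{u}$ can be successful. When $p = \matchable{x}$ the match yields $\sigma = \rename{x}{u}$ and Lem.~\ref{lem:generationForPatterns}(i) gives $\theta(x) = A$; since $\sequTDeriv{\Gamma}{u : A}$ holds by hypothesis and $\dom{\sigma} = \set{x} = \dom{\theta}$, the judgement $\sequTDeriv{\Gamma}{\sigma : \theta}$ is immediate. When $p = \constterm{c}$ a successful match forces $u = \constterm{c}$ and $\sigma = \set{}$, while $\fm{p} = \varnothing$ forces $\dom{\theta} = \varnothing$, so the conclusion holds vacuously.

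The interesting case is $p = \dataterm{p_1}{p_2}$. Here a successful match can only arise through the compound clause, so $u = \dataterm{u_1}{u_2}$ with $\dataterm{u_1}{u_2}$ a matchable form; since a compound matchable form is necessarily a data structure (matchable forms are either data structures or abstractions, and an abstraction is not a compound), $u$ is a data structure and $\sigma = \sigma_1 \uplus \sigma_2$ with $\sigma_1 = \match{p_1}{u_1}$, $\sigma_2 = \match{p_2}{u_2}$ both successful. By Lem.~\ref{lem:generationForPatterns}(iii) I obtain $D, A'$ with $A \eqtypemu \datatype{D}{A'}$, $\sequPDeriv{\theta}{p_1 : D}$ and $\sequPDeriv{\theta}{p_2 : A'}$. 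To type the components of $u$ I would invoke Lem.~\ref{lem:typingForDataStructures}(2) on $\sequTDeriv{\Gamma}{u : A}$, which (its proof having ruled out the functional application case) yields $D', A''$ with $\datatype{D'}{A''} \subtypemu A$, $\sequTDeriv{\Gamma}{u_1 : D'}$ and $\sequTDeriv{\Gamma}{u_2 : A''}$.

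The key alignment step combines these: from $\datatype{D'}{A''} \subtypemu A \eqtypemu \datatype{D}{A'}$ (closing up via $\ruleTSubs$ on the equivalence) and the invertibility of subtyping for non-union types (Prop.~\ref{prop:subtypingIsInvertible}) I get $D' \subtypemu D$ and $A'' \subtypemu A'$, so $\ruleTSubs$ upgrades the typings to $\sequTDeriv{\Gamma}{u_1 : D}$ and $\sequTDeriv{\Gamma}{u_2 : A'}$. Splitting the pattern context with Lem.~\ref{lem:basisForPatterns}(iii) gives $\sequPDeriv{\res{\theta}{\fm{p_1}}}{p_1 : D}$ and $\sequPDeriv{\res{\theta}{\fm{p_2}}}{p_2 : A'}$ with the two restricted domains equal to $\fm{p_1}$ and $\fm{p_2}$ respectively (using Lem.~\ref{lem:basisForPatterns}(ii)), which puts me exactly in position to apply the induction hypothesis to each sub-pattern. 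Finally, since patterns are linear, $\fm{p_1}$ and $\fm{p_2}$ are disjoint and partition $\dom{\theta}$, and $\dom{\sigma}$ splits along the same disjoint union; hence the partial conclusions $\sequTDeriv{\Gamma}{\sigma_1 : \res{\theta}{\fm{p_1}}}$ and $\sequTDeriv{\Gamma}{\sigma_2 : \res{\theta}{\fm{p_2}}}$ glue into $\sequTDeriv{\Gamma}{\sigma : \theta}$.

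I expect the main obstacle to be precisely the compound-case alignment: one must reconcile the existentially quantified decomposition $\datatype{D'}{A''}$ of the argument's type with the pattern's decomposition $\datatype{D}{A'}$, and this only goes through because subtyping of non-union types is invertible; without Prop.~\ref{prop:subtypingIsInvertible} the supertype relation obtained from Lem.~\ref{lem:typingForDataStructures} could not be pushed down to the two components to feed the induction hypothesis. A secondary bookkeeping point is to justify that restricting $\theta$ along $\fm{p_1}$ and $\fm{p_2}$ produces contexts whose domains are exactly those free matchables, which rests on linearity of patterns and on Lem.~\ref{lem:basisForPatterns}.
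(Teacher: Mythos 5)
Your proposal is correct and follows essentially the same route as the paper's proof: induction on $p$, with the compound case handled by Lem.~\ref{lem:generationForPatterns}(iii), Lem.~\ref{lem:typingForDataStructures}, invertibility of subtyping (Prop.~\ref{prop:subtypingIsInvertible}) plus \ruleTSubs{} to realign the component types, Lem.~\ref{lem:basisForPatterns}(iii) to split $\theta$, and linearity to glue the disjoint substitutions. No gaps.
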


\begin{proof}
By induction on $p$.
\begin{itemize}
  \item $p = \matchable{x}$. Then $\sigma = \rename{x}{u}$ and, by Lem.~\ref{lem:generationForPatterns} (i), $x : A \in \theta$. Then $\theta = \set{x : A}$ and $\sequTDeriv{\Gamma}{\sigma : \theta}$ that holds by hypothesis.

  \item $p = \constterm{c}$. The property holds trivially as $\dom{\sigma} = \varnothing = \dom{\theta}$.

  \item $p = \dataterm{p_1}{p_2}$. Then, as the matching was successful, $u = \dataterm{u_1}{u_2}$ is a data structure and $\sigma = \match{p_1}{u_1} \uplus \match{p_2}{u_2} = \sigma_1 \uplus \sigma_2$. By Lem.~\ref{lem:generationForPatterns} (iii), $\exists D, A'$ such that $A = \datatype{D}{A'}$, $\sequPDeriv{\theta}{p_1 : D}$ and $\sequPDeriv{\theta}{p_2 : A'}$. Then, by Lem.~\ref{lem:basisForPatterns} (iii), $\sequPDeriv{\theta_1}{p_1 : D}$ and $\sequPDeriv{\theta_2}{p_2 : A'}$ with $\theta_1 = \res{\theta}{\fm{p_1}}$ and $\theta_2 = \res{\theta}{\fm{p_2}}$.

  On the other hand, by Lem.~\ref{lem:typingForDataStructures}, $\exists D', A''$ such that $\datatype{D'}{A''} \subtypemu A$, $\sequTDeriv{\Gamma}{u_1 : D'}$ and $\sequTDeriv{\Gamma}{u_2 : A''}$. From $\datatype{D'}{A''} \subtype \datatype{D}{A'} \eqtypemu A$ we get, by Prop.~\ref{prop:subtypingIsInvertible}, $D' \subtypemu D$ and $A'' \subtypemu A'$. Then we can derive  $\sequTDeriv{\Gamma}{u_1 : D}$ and $\sequTDeriv{\Gamma}{u_2 : A'}$ by applying $\ruleTSubs$.

  Finally we can apply the inductive hypothesis on both side of the derivation and we get $\sequTDeriv{\Gamma}{\sigma_1 : \theta_1}$ and $\sequTDeriv{\Gamma}{\sigma_2 : \theta_2}$. As $\sigma_1$ and $\sigma_2$ are disjoint then $\theta_1$ and $\theta_2$ are as well, and we can assure that $\sequTDeriv{\Gamma}{\sigma : \theta}$.
\end{itemize}
\end{proof}

Finally, we recall to the standard Substitution Lemma for type systems. It may also
be interpreted in the context of an abstraction. Given $\absterm{p}{s}{\theta}$, where
$\theta$ has the type assignments for the variables in $p$, every substitution that
preserves $\theta$ will also preserve the type of $s$ once $\theta$ is abstracted.

\begin{lemma}[Substitution Lemma]
\label{lem:substitution}
Suppose $\sequTDeriv{\Gamma,\theta}{s : A}$ and $\sequTDeriv{\Gamma}{\sigma : \theta}$. Then $\sequTDeriv{\Gamma}{\sigma s : A}$.
\end{lemma}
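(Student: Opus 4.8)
The plan is to proceed by induction on the derivation of $\sequTDeriv{\Gamma,\theta}{s : A}$, analysing the last rule applied. This is more convenient than induction on the structure of $s$ because the non-syntactic rule $\ruleTSubs$ is then handled as a single case: if $\sequTDeriv{\Gamma,\theta}{s : A''}$ with $A'' \subtypemu A$, the inductive hypothesis yields $\sequTDeriv{\Gamma}{\sigma s : A''}$ and one reapplies $\ruleTSubs$. I shall assume throughout the usual variable convention, so that $\dom{\sigma} = \dom{\theta}$ is disjoint from $\dom{\Gamma}$ and from the free matchables of any pattern occurring in $s$, and that the free variables introduced by $\sigma$ do not clash with those matchables; this guarantees that $\sigma$ commutes with each term constructor, in particular $\sigma\,(\absterm{p_i}{s_i}{\theta_i})_{i \in 1..n} = (\absterm{p_i}{\sigma s_i}{\theta_i})_{i \in 1..n}$.

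The base cases are routine. For $\ruleTVar$ we have $s = x$ with $x : A \in \Gamma, \theta$; if $x \in \dom{\theta}$ then $A = \theta(x)$ and $\sigma x = \sigma(x)$, so $\sequTDeriv{\Gamma}{\sigma x : A}$ is exactly one of the judgements furnished by $\sequTDeriv{\Gamma}{\sigma : \theta}$, whereas if $x \in \dom{\Gamma}$ then $\sigma x = x$ and $\ruleTVar$ applies directly. The case $\ruleTConst$ is immediate since $\sigma\,\constterm{c} = \constterm{c}$. For the structural rules $\ruleTComp$ and $\ruleTApp$ the substitution distributes over the two immediate subterms, the inductive hypothesis applies to each premise, and the same rule reassembles the conclusion, using that $\sigma$ does not alter the types $D$, $A'$, $\maxuniontype{i \in 1..n}{A_i}$, $B$, which mention no term variables.

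The main work is the $\ruleTAbs$ case. Here the premises give, for each $i \in 1..n$, a derivation $\sequTDeriv{\Gamma, \theta, \theta_i}{s_i : B}$, together with $\sequPDeriv{\theta_i}{p_i : A_i}$, $\dom{\theta_i} = \fm{p_i}$ and compatibility of the list $\lista{\sequC{\theta_i}{p_i : A_i}}_{i \in 1..n}$. To invoke the inductive hypothesis on each body I must first supply a typing of $\sigma$ against $\theta$ over the enlarged context $\Gamma, \theta_i$. This is where the Basis Lemma enters: by Lem.~\ref{lem:basis}~(i), each judgement $\sequTDeriv{\Gamma}{\sigma(x) : \theta(x)}$ weakens to $\sequTDeriv{\Gamma, \theta_i}{\sigma(x) : \theta(x)}$, so $\sequTDeriv{\Gamma, \theta_i}{\sigma : \theta}$; reordering the mutually disjoint contexts, the inductive hypothesis on $\sequTDeriv{(\Gamma, \theta_i), \theta}{s_i : B}$ then gives $\sequTDeriv{\Gamma, \theta_i}{\sigma s_i : B}$ for every $i$. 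Since the patterns, the $\theta_i$, the types $A_i$ and the compatibility side-condition are untouched by $\sigma$, a single application of $\ruleTAbs$ yields $\sequTDeriv{\Gamma}{(\absterm{p_i}{\sigma s_i}{\theta_i})_{i \in 1..n} : \functype{\maxuniontype{i \in 1..n}{A_i}}{B}}$, which is the desired conclusion.

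The principal obstacle is precisely the bookkeeping in the abstraction case: ensuring, via the variable convention and the Basis Lemma, that $\sigma$ may be pushed under the binder without capturing the matchables of the $p_i$ and without disturbing the local contexts $\theta_i$, while the extended-context typing of $\sigma$ needed for the inductive hypothesis is obtained by weakening. Once these context manipulations are set up correctly, every case closes by reapplying the rule that was analysed, followed, where the inductive hypothesis lowered the type, by $\ruleTSubs$.
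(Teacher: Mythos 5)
Your proof is correct, and the abstraction case --- the only place where real work happens --- is handled exactly as in the paper: weaken $\sequTDeriv{\Gamma}{\sigma : \theta}$ to the extended context $\Gamma, \theta_i$ via the Basis Lemma, apply the inductive hypothesis to each body, and reassemble with $\ruleTAbs$, relying on a variable convention ($\sigma \avoids \theta_i$) to push $\sigma$ under the binder. Where you diverge is the induction scheme: the paper argues by structural induction on $s$ and uses the Generation Lemma (Lem.~\ref{lem:generation}) to invert each typing judgement, which forces every case to end with the relevant introduction rule followed by $\ruleTSubs$; you instead do induction on the derivation, so $\ruleTSubs$ becomes its own one-line case and no inversion lemma is needed. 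Your route is slightly more self-contained and arguably cleaner here, since the Substitution Lemma does not intrinsically need Generation; the paper's route has the mild advantage of uniformity, as it reuses the same Generation-Lemma-driven template employed throughout Sec.~\ref{sec:typingMetatheory} and in the proof of Subject Reduction. Both arguments are sound, and the inductive hypothesis in your $\ruleTAbs$ case is legitimately applied to the subderivation with $\Gamma, \theta_i$ playing the role of $\Gamma$, since the lemma is stated for arbitrary contexts.
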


\begin{proof}
By induction on $s$.
\begin{itemize}
  \item $s = x$. By Lem.~\ref{lem:generation} (i), $\exists A'$ such that $A' \subtypemu A$ and $x : A' \in \Gamma, \theta$. If $x \in \dom{\sigma}$, as $\dom{\sigma} = \dom{\theta}$, $x : A' \in \theta$ and by hypothesis $\sequTDeriv{\Gamma}{\sigma(x) : \theta(x)}$. Then by $\ruleTSubs$ we get $\sequTDeriv{\Gamma}{\sigma x : A}$. If not, $x : A' \in \Gamma$ and $\sigma x = x$, then by $\ruleTVar$ and $\ruleTSubs$ we conclude $\sequTDeriv{\Gamma}{\sigma x : A}$.

  \item $s = \constterm{c}$. By Lem.~\ref{lem:generation} (ii), $A \subtypemu \consttype{c}$ and, as $\sigma\constterm{c} = \constterm{c}$, by $\ruleTConst$ and $\ruleTSubs$ we have $\sequTDeriv{\Gamma}{\sigma\constterm{c} : A}$.

  \item $s = \appterm{r}{u}$. By Lem.~\ref{lem:generation} (iii) we have two cases:
  \begin{enumerate}[(a)]
    \item either $\exists D, A'$ such that $\datatype{D}{A'} \subtypemu A$, $\sequTDeriv{\Gamma, \theta}{r : D}$ and $\sequTDeriv{\Gamma, \theta}{u : A'}$. By inductive hypothesis $\sequTDeriv{\Gamma}{\sigma r : D}$ and $\sequTDeriv{\Gamma}{\sigma u : A'}$. As $\dataterm{\sigma r}{\sigma u} = \sigma(\dataterm{r}{u})$ by $\ruleTComp$ and $\ruleTSubs$ we get $\sequTDeriv{\Gamma}{\sigma(\dataterm{r}{u}) : A}$.
    \item or $\exists A_1, \ldots, A_n, A'$ such that $A' \subtypemu A$, $\sequTDeriv{\Gamma, \theta}{r : \functype{\maxuniontype{i \in 1..n}{A_i}}{A'}}$, and $\sequTDeriv{\Gamma, \theta}{u : A_j}$ for some $j \in 1..n$. Similarly to the previous case, we apply the inductive hypothesis to get $\sequTDeriv{\Gamma}{\sigma r : \functype{\maxuniontype{i \in 1..n}{A_i}}{A'}}$ and $\sequTDeriv{\Gamma}{\sigma u : A_j}$. Then we conclude by $\ruleTApp$ and $\ruleTSubs$ that $\sequTDeriv{\Gamma}{\sigma(\dataterm{r}{u}) : A}$.
  \end{enumerate}

  \item $s = (\absterm{p_i}{s_i}{\theta_i})_{i \in 1..n}$. By Lem.~\ref{lem:generation} (iv), $\exists A_1, \ldots, A_n, B$ such that $\functype{\maxuniontype{i \in 1..n}{A_i}}{B} \subtypemu A$, $\lista{\sequC{\theta_i}{p_i : A_i}}_{i \in 1..n}$ is compatible, $\dom{\theta_i} = \fm{p_i}$, $\sequPDeriv{\theta_i}{p_i : A_i}$ and $\sequTDeriv{\Gamma, \theta, \theta_i}{s_i : B}$ for every $i \in 1..n$. Without loss of generality we can assume $ \sigma \avoids \theta_i$\footnote{Here we mean $\sigma \avoids x$ for every $x \in \dom{\theta_i}$.} and $\Gamma, \theta_i$ is a basis. Then $\sigma s = (\absterm{p_i}{\sigma s_i}{\theta_i})_{i \in 1..n}$ and, by Lem.~\ref{lem:basis} (i), $\sequTDeriv{\Gamma, \theta_i}{\sigma : \theta}$. By inductive hypothesis we get $\sequTDeriv{\Gamma, \theta_i}{\sigma s_i : B}$ for every $i \in 1..n$. Finally, by $\ruleTAbs$ and $\ruleTSubs$, we conclude $\sequTDeriv{\Gamma}{\sigma s : A}$.
\end{itemize}
\end{proof}


\section{Safety}
\label{sec:safety}

Subject Reduction (Prop.~\ref{prop:subjectReduction}) and Progress (Prop.~\ref{prop:progress}) are addressed next.

\begin{proposition}[Subject Reduction]
\label{prop:subjectReduction}
If $\sequTDeriv{\Gamma}{s : A}$ and $s \reduce s'$, then $\sequTDeriv{\Gamma}{s' : A}$.
\end{proposition}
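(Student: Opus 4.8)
The plan is to argue by induction on the derivation of $s \reduce s'$, following the contextual closure of the single axiom \ruleBeta. The congruence cases are routine. If the redex lies inside an application $\appterm{r}{u}$ (in \capp{} application and compound formation are one and the same syntactic construct, so this also covers reductions inside data structures), I apply the Generation Lemma (Lem.~\ref{lem:generation}~(iii)) to expose the premises of \ruleTComp{} or \ruleTApp, invoke the induction hypothesis on the reducing subterm, and rebuild the judgement with the same rule followed by \ruleTSubs. If the redex lies inside a branch body $s_i$ of an abstraction, Lem.~\ref{lem:generation}~(iv) gives $\sequTDeriv{\Gamma,\theta_i}{s_i:B}$; the induction hypothesis yields the same type for the reduct, and since the patterns, the contexts $\theta_i$ and hence the compatibility side-condition are untouched, \ruleTAbs{} reassembles the abstraction. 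The real work is the base case $\appterm{(\absterm{p_i}{s_i}{\theta_i})_{i\in 1..n}}{u} \reduce \sigma_j s_j$, where $\match{p_i}{u}=\fail$ for $i<j$ and $\match{p_j}{u}=\sigma_j$.

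For the base case I would first analyse the typing of the redex with Lem.~\ref{lem:generation}~(iii). Case (iii.a) (the application typed as a compound) is impossible: the head is an abstraction, hence carries a function type by Lem.~\ref{lem:generation}~(iv), and a non-union function type cannot be a subtype of a datatype (by Lem.~\ref{lem:supertypesOfNonUnionTypes} together with the invertibility of Lem.~\ref{lem:subtypingIsInvertible}, exactly as in the discarded case of Lem.~\ref{lem:typingForDataStructures}). So case (iii.b) applies. Combining it with Lem.~\ref{lem:generation}~(iv) for the abstraction and the invertibility of subtyping for $\ifunctype$ (Prop.~\ref{prop:subtypingIsInvertible}), I obtain branch data $\sequPDeriv{\theta_i}{p_i:A_i}$, a common body type $C$ with $\sequTDeriv{\Gamma,\theta_i}{s_i:C}$, the compatibility of $\lista{\sequC{\theta_i}{p_i:A_i}}_{i\in 1..n}$, the bound $C\subtypemu A$, and, crucially, $\sequTDeriv{\Gamma}{u:\maxuniontype{i\in 1..n}{A_i}}$.

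The heart of the proof is to upgrade this to $\sequTDeriv{\Gamma}{u:A_j}$, i.e.\ to single out the branch that actually fires, and here I would split on whether $u$ is a matchable form. If it is, then (being a data structure or an abstraction) it admits, by Lem.~\ref{lem:typingForDataStructures} or by Lem.~\ref{lem:generation}~(iv), a \emph{non-union} type $T$ with $\sequTDeriv{\Gamma}{u:T}$ and $T\subtypemu\maxuniontype{i\in 1..n}{A_i}$; since $T$ is non-union, subtyping into the union yields $T\subtypemu A_{i_0}$ for some $i_0\in 1..n$ (Lem.~\ref{lem:supertypesOfNonUnionTypes}), so $\sequTDeriv{\Gamma}{u:A_{i_0}}$. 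The index $i_0$ cannot be $<j$, for then $\match{p_{i_0}}{u}=\fail$ together with $\sequTDeriv{\Gamma}{u:A_{i_0}}$ and $\sequPDeriv{\theta_{i_0}}{p_{i_0}:A_{i_0}}$ would contradict Lem.~\ref{lem:mismatchingTypes}, which forces $A_{i_0}\not\subtypemu A_{i_0}$. If $i_0=j$ we are done; if $i_0>j$, the Compatibility Lemma (Lem.~\ref{lem:compatibility}), fed with $\sequTDeriv{\Gamma}{u:A_{i_0}}$ and the success of $\match{p_j}{u}$, gives $\Psicomp{\sequC{\theta_j}{p_j:A_j}}{\sequC{\theta_{i_0}}{p_{i_0}:A_{i_0}}}$, whereupon list compatibility ($j<i_0$) yields $A_{i_0}\subtypemu A_j$ through the alternative characterisation of compatibility, and subsumption delivers $\sequTDeriv{\Gamma}{u:A_j}$. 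If instead $u$ is not a matchable form, the matching clauses force $j=1$ with $p_1$ a matchable (every other pattern shape returns \wait, not \fail, against such a $u$); then $p_1$ subsumes all later patterns, so compatibility gives $A_i\subtypemu A_1$ for all $i$, hence $\maxuniontype{i\in 1..n}{A_i}\subtypemu A_1=A_j$ and again $\sequTDeriv{\Gamma}{u:A_j}$. Once $\sequTDeriv{\Gamma}{u:A_j}$ is in hand, Lem.~\ref{lem:typeOfSuccessfulMatch} gives $\sequTDeriv{\Gamma}{\sigma_j:\theta_j}$, the Substitution Lemma (Lem.~\ref{lem:substitution}) gives $\sequTDeriv{\Gamma}{\sigma_j s_j:C}$, and $C\subtypemu A$ with \ruleTSubs{} concludes $\sequTDeriv{\Gamma}{\sigma_j s_j:A}$.

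The main obstacle is precisely the derivation of $\sequTDeriv{\Gamma}{u:A_j}$: the union type $\maxuniontype{i\in 1..n}{A_i}$ inferred for $u$ does not by itself point at the branch that reduction selects, and closing this gap is exactly where the operational data (which patterns fail, which succeeds) must be married to the static compatibility condition via Lem.~\ref{lem:mismatchingTypes} and Lem.~\ref{lem:compatibility}. Keeping the orientation of subtyping straight throughout — the contravariance of function domains in the invertibility step, and the asymmetry of compatibility in the branch order $j<i_0$ — is the delicate point; the remaining steps are routine bookkeeping with the Generation, Match and Substitution lemmas.
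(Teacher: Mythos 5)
Your proposal is correct and follows essentially the same route as the paper's proof: dismissing the compound-typing alternative of the Generation Lemma, extracting the branch data and the union type for the argument, and then pinning down $\sequTDeriv{\Gamma}{u:A_j}$ by the case split on matchable forms, using Lem.~\ref{lem:mismatchingTypes} to exclude earlier branches and Lem.~\ref{lem:compatibility} together with list compatibility to descend from the selected non-union component to $A_j$, before closing with the Match and Substitution lemmas. The only cosmetic difference is that you induct on the reduction derivation where the paper inducts on the term, which changes nothing of substance.
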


\begin{proof}
By induction on $s$.
\begin{itemize}
  \item $s = x$ or $s = \constterm{c}$. The property holds trivially as there is no $s'$ such that $s \reduce s'$.

  \item $s = \appterm{r}{u}$. Here we may consider three possibilities:
  \begin{enumerate}
    \item $r \reduce r'$. By Lem.~\ref{lem:generation} (iii) we have two cases:
    \begin{enumerate}[(a)]
      \item either $\exists D, A'$ such that $\datatype{D}{A'} \subtypemu A$, $\sequTDeriv{\Gamma}{r : D}$ and $\sequTDeriv{\Gamma}{u : A'}$. By inductive hypothesis $\sequTDeriv{\Gamma}{r' : D}$. Then, by $\ruleTComp$ and $\ruleTSubs$, we have $\sequTDeriv{\Gamma}{s' : A}$.
      \item or $\exists A_1, \ldots, A_n, A'$ such that $A' \subtypemu A$, $\sequTDeriv{\Gamma}{r : \functype{\maxuniontype{i \in 1..n}{A_i}}{A'}}$, and $\sequTDeriv{\Gamma}{u : A_k}$ for some $k \in 1..n$. By inductive hypothesis $\sequTDeriv{\Gamma}{r' : \functype{\maxuniontype{i \in 1..n}{A_i}}{A'}}$ and by applying $\ruleTApp$ and $\ruleTSubs$ we conclude $\sequTDeriv{\Gamma}{s' : A}$.
    \end{enumerate}

    \item $u \reduce u'$. This case is similar to the previous one as by Lem.~\ref{lem:generation} we have the same two possible cases:
    \begin{enumerate}[(a)]
      \item either $\exists D, A'$ such that $\datatype{D}{A'} \subtypemu A$, $\sequTDeriv{\Gamma}{r : D}$ and $\sequTDeriv{\Gamma}{u : A'}$. By inductive hypothesis $\sequTDeriv{\Gamma}{u' : A'}$. Then, by $\ruleTComp$ and $\ruleTSubs$, we have $\sequTDeriv{\Gamma}{s' : A}$.
      \item or $\exists A_1, \ldots, A_n, A'$ such that $A' \subtypemu A$, $\sequTDeriv{\Gamma}{r : \functype{\maxuniontype{i \in 1..n}{A_i}}{A'}}$, and $\sequTDeriv{\Gamma}{u : A_k}$ for some $k \in 1..n$. By inductive hypothesis $\sequTDeriv{\Gamma}{u' : A_k}$ and by applying $\ruleTApp$ and $\ruleTSubs$ we conclude $\sequTDeriv{\Gamma}{s' : A}$.
    \end{enumerate}

    \item $r = (\absterm{p_i}{s_i}{\theta_i})_{i \in 1..n}$ and $s' = \match{p_k}{u}s_k$ for some $k \in 1..n$ such that $\match{p_k}{u} = \sigma$ and $\match{p_i}{u} = \fail$ for every $i < k$. Assume, towards an absurd, that Lem.~\ref{lem:generation} (iii.a) holds for $s$. Then, $\exists D, A'$ such that $\datatype{D}{A'} \subtypemu A$, $\sequTDeriv{\Gamma}{r : D}$ and $\sequTDeriv{\Gamma}{u : A'}$. But, by Lem.~\ref{lem:generation} (iv) applied to $\sequTDeriv{\Gamma}{r : D}$, $\exists A_1, \ldots, A_n, B$ such that $\functype{\maxuniontype{i \in 1..n}{A_i}}{B} \subtypemu D$ and, by Lem.~\ref{lem:supertypesOfNonUnionTypes}, $\exists \U$ such that $D \eqtypemu \U[B']$ with $\functype{\maxuniontype{i \in 1..n}{A_i}}{B} \subtypemu B'$ which is a contradiction since $D$ is a data type. Thus, it must be the case that Lem.~\ref{lem:generation} (iii.b) holds for $s$.

    Then, $\exists C_1, \ldots, C_m, A'$ such that $A' \subtypemu A$, $\sequTDeriv{\Gamma}{r : \functype{\maxuniontype{j \in 1..m}{C_m}}{A'}}$ and:
    \begin{equation}
      \label{eq:sr:i}
      \sequTDeriv{\Gamma}{u : C_{k'}}
    \end{equation}
    for some $k' \in 1..m$. Applying once again Lem.~\ref{lem:generation} (iv), this time to $\sequTDeriv{\Gamma}{r : \functype{\maxuniontype{j \in 1..m}{C_m}}{A'}}$, we get $\exists A_1, \ldots, A_n, B$ such that: 
    \begin{equation}
      \label{eq:sr:ii}
      \functype{\maxuniontype{i \in 1..n}{A_i}}{B} \subtypemu \functype{\maxuniontype{j \in 1..m}{C_m}}{A'}
    \end{equation}
    $\dom{\theta_i} = \fm{p_i}$, $\lista{\sequC{\theta_i}{p_i : A_i}}_{i \in 1..n}$ is compatible, $\sequPDeriv{\theta_i}{p_i : A_i}$ and $\sequTDeriv{\Gamma, \theta_i}{s_i : B}$ for every $i \in 1..n$.

    From~(\ref{eq:sr:ii}) and Prop.~\ref{prop:subtypingIsInvertible} we have $B \subtypemu A'$ and
    \begin{equation}
      \label{eq:sr:iii}
      \maxuniontype{j \in 1..m}{C_m} \subtypemu \maxuniontype{i \in 1..n}{A_i}
    \end{equation}

    We want to show that $\sequTDeriv{\Gamma}{u : A_k}$. For that we need to distinguish two cases:
    \begin{enumerate}
      \item If $u$ is in matchable form, we have two possibilities:
      \begin{enumerate}
        \item $u$ is a data structure: then, by Lem.~\ref{lem:typingForDataStructures}, there exists a non-union datatype $D$ such that $D \subtypemu C_{k'}$ and $\sequTDeriv{\Gamma}{u : D}$.
        \item $u$ is an abstraction: then, by Lem.~\ref{lem:generation} (iv), $\exists C', C''$ such that $\functype{C'}{C''} \subtypemu C_{k'}$ and $\sequTDeriv{\Gamma}{u : \functype{C'}{C''}}$.
      \end{enumerate}
      Then, in both cases there exists a non-union type, say $C$, such that $C \subtypemu C_{k'}$ and $\sequTDeriv{\Gamma}{u : C}$. Then, from (\ref{eq:sr:iii}) we get: $$C \subtypemu \maxuniontype{i \in 1..n}{A_i}$$ and, since $C$ is non-union, $C \subtypemu A_l$ for some $l \in 1..n$. Hence, by subsumption $\sequTDeriv{\Gamma}{u : A_l}$.
  
      If $k = l$ we are done, so assume $k \neq l$. Recall the conditions for the reduction rule, where $\match{p_i}{u} = \fail$ for every $i < k$. Then, by Lem.~\ref{lem:mismatchingTypes}, we have $A_l \not\subtypemu A_i$. Thus, it must be the case that $k < l$. By Lem.~\ref{lem:compatibility} with hypothesis $\sequTDeriv{\Gamma}{u : A_l}$, $\sequPDeriv{\theta_k}{p_k : A_k}$, $\sequPDeriv{\theta_l}{p_l : A_l}$ and $\match{p_k}{u} = \sigma$ we get that $\Psicomp{\sequC{\theta_k}{p_k : A_k}}{\sequC{\theta_l}{p_l : A_l}}$ holds. Additionally, we already saw that the list $\lista{\sequC{\theta_i}{p_i : A_i}}_{i \in 1..n}$ is compatible, thus $\compatible{\sequC{\theta_k}{p_k : A_k}}{\sequC{\theta_l}{p_l : A_l}}$ and by definition $A_l \subtypemu A_k$. Finally we conclude by subsumption once again, $\sequTDeriv{\Gamma}{u : A_k}$.
  
      \item If $u$ is not in matchable form, then $p_k = \matchable{x}$ and by the premises of the reductions rule we need $\match{p_i}{u} = \fail$ for every $i < k$. Thus, necessarily $k = 1$. Moreover, since $\matches{\matchable{x}}{p_i}$ for every $i \in 1..n$, by compatibility we have $A_i \subtypemu A_k$. Then, from (\ref{eq:sr:iii}) we get $$C_{k'} \subtypemu \maxuniontype{j \in 1..m}{C_j} \subtypemu \maxuniontype{i \in i..n}{A_i} \subtypemu A_k$$ Thus, by subsumption, $\sequTDeriv{\Gamma}{u : A_k}$.
    \end{enumerate}

    Finally, in either case we have $\sequTDeriv{\Gamma}{u : A_k}$. Now Lem.~\ref{lem:typeOfSuccessfulMatch} and~\ref{lem:substitution} with $\sequTDeriv{\Gamma, \theta_k}{s_k : B}$ entails $\sequTDeriv{\Gamma}{s' : B}$ and we conclude by subsumption, $\sequTDeriv{\Gamma}{s' : A}$ (recall $B \subtypemu A' \subtypemu A$).
  \end{enumerate}
  
  \item $s = (\absterm{p_i}{s_i}{\theta_i})_{i \in 1..n}$. Then $s' = \absterm{p_1}{s_1}{\theta_1} \icaseterm \ldots \icaseterm \absterm{p_k}{s'_k}{\theta_n} \icaseterm \ldots \icaseterm \absterm{p_n}{s_n}{\theta_n}$ with $s_k \reduce s'_k$. By Lem.~\ref{lem:generation} (iv), $\exists A_1, \ldots, A_n, B$ s.t. $\functype{\maxuniontype{i \in 1..n}{A_i}}{B} \subtypemu A$, $\lista{\sequC{\theta_i}{p_i : A_i}}_{i \in 1..n}$ is compatible, $\dom{\theta_i} = \fm{p_i}$, $\sequPDeriv{\theta_i}{p_i : A_i}$ and $\sequTDeriv{\Gamma, \theta_i}{s_i : B}$ for every $i \in 1..n$. By inductive hypothesis $\sequTDeriv{\Gamma, \theta_k}{s'_k : A_k}$ and by applying $\ruleTAbs$ and $\ruleTSubs$ we conclude $\sequTDeriv{\Gamma}{s' : A}$.
\end{itemize}
\end{proof}

Let the set of \emphdef{values} be defined as $v \Coloneq \dataterm{x}{v_1 \ldots v_n} \mathrel| \dataterm{\constterm{c}}{v_1 \ldots v_n} \mathrel| (\absterm{p_i}{s_i}{\theta_i})_{i \in 1..n}$.
The following auxiliary property guarantees the success of matching for well-typed closed values (note that values are already in matchable form).

\begin{lemma}[Successful Match for Closed Values]
\label{lem:successfulMatchForClosedValues}
Suppose $\sequTDeriv{}{v : A}$ and $\sequPDeriv{\theta}{p : A}$ where $v$ is a value. Then, $\match{p}{v}$ is successful.
\end{lemma}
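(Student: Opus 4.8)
The plan is to proceed by induction on the structure of the pattern $p$, using the Generation Lemmas together with invertibility of subtyping to force the closed value $v$ into exactly the shape demanded by $p$, and then to read off success directly from the defining clauses of the matching operation.

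First I would dispatch the two base cases. If $p = \matchable{x}$, then $\match{\matchable{x}}{v} = \rename{x}{v}$ is successful by the first defining clause of matching, with no typing information needed. If $p = \constterm{c}$, then Lem.~\ref{lem:generationForPatterns}~(ii) gives $A \eqtypemu \consttype{c}$, and the goal reduces to showing $v = \constterm{c}$. Since $v$ is a closed value it is either a (possibly compound) data structure or an abstraction. An abstraction is excluded because Lem.~\ref{lem:generation}~(iv) would yield $\functype{A_1}{B} \subtypemu A \eqtypemu \consttype{c}$, contradicting invertibility of subtyping for non-union types (a function type cannot lie below an atom); a proper compound $\dataterm{\constterm{d}}{v_1 \ldots v_n}$ is excluded because Lem.~\ref{lem:typingForDataStructures} forces a type $\datatype{D'}{A''} \subtypemu \consttype{c}$, again impossible by invertibility. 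Hence $v$ is a bare constant $\constterm{d}$, and $\consttype{d} \subtypemu \consttype{c}$ forces $d = c$, so $\match{\constterm{c}}{\constterm{c}} = \set{}$ succeeds.

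The inductive case is $p = \dataterm{p_1}{p_2}$. Here Lem.~\ref{lem:generationForPatterns}~(iii) gives $A \eqtypemu \datatype{D}{A'}$ with $\sequPDeriv{\theta}{p_1 : D}$ and $\sequPDeriv{\theta}{p_2 : A'}$. As in the constant case, invertibility rules out $v$ being an abstraction or a bare constant, so $v$ must be a compound data structure $\dataterm{v'}{v''}$ with $v', v''$ closed values; being a data structure it is a matchable form, which triggers the clause $\match{\dataterm{p_1}{p_2}}{\dataterm{v'}{v''}} = \match{p_1}{v'} \uplus \match{p_2}{v''}$. I would then invoke Lem.~\ref{lem:typingForDataStructures} to obtain $\datatype{D'}{A''} \subtypemu A \eqtypemu \datatype{D}{A'}$ with $\sequTDeriv{}{v' : D'}$ and $\sequTDeriv{}{v'' : A''}$, and apply Prop.~\ref{prop:subtypingIsInvertible} to get $D' \subtypemu D$ and $A'' \subtypemu A'$. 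Subsumption ($\ruleTSubs$) then re-types $v'$ at $D$ and $v''$ at $A'$, matching the respective pattern types exactly, so the induction hypothesis applies to $\pair{p_1}{v'}$ and $\pair{p_2}{v''}$ and both sub-matches succeed; their disjoint union is therefore a successful match.

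The main obstacle, and where the care lies, is the shape analysis of $v$: at each step I must rule out the abstraction alternative and the ``wrong'' data-structure alternative before the matching clause I want can fire, and these exclusions rest squarely on invertibility of subtyping for non-union types (Prop.~\ref{prop:subtypingIsInvertible} and the tree-level Lem.~\ref{lem:subtypingIsInvertible}, together with Lem.~\ref{lem:supertypesOfNonUnionTypes} for the ``same outermost constructor'' observations). The second delicate point is threading the types through subsumption so that each sub-value and sub-pattern share a single type before appealing to the induction hypothesis; this is precisely the role of the invertibility step, so once invertibility is available the remainder is routine.
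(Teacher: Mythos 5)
Your proposal is correct and follows essentially the same route as the paper's proof: induction on $p$, using the Generation Lemmas, Lem.~\ref{lem:typingForDataStructures} and Lem.~\ref{lem:supertypesOfNonUnionTypes} to exclude the abstraction and mismatched data-structure shapes, then Prop.~\ref{prop:subtypingIsInvertible} and subsumption to re-type the components before applying the induction hypothesis. The only detail left tacit is that the disjoint union of the two successful sub-matches is itself successful because patterns are linear, which the paper states explicitly.
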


\begin{proof}
By induction on $p$. Note that $v$ cannot be a variable since it is typed on the empty context and, by Lem.~\ref{lem:basis}, $\fv{v} \subseteq \varnothing$. Hence, it is a closed term. Then $v$ is either a data structure or a case.
\begin{itemize}
  \item $p = \matchable{x}$. The property holds trivially with the substitution $\rename{x}{v}$.
	
  \item $p = \constterm{c}$. By Lem.~\ref{lem:generationForPatterns} (ii), $A = \consttype{c}$. Suppose $v = (\absterm{q_i}{s_i}{\theta_i})_{i \in 1..n}$. By Lem.~\ref{lem:generation} (iv), $\exists A_1, \ldots, A_n, B$ such that $\functype{\maxuniontype{i \in 1..n}{A_i}}{B} \subtypemu \consttype{c}$ and, by Lem.~\ref{lem:supertypesOfNonUnionTypes}, $\exists \U, A'$ such that $\consttype{c} \eqtypemu \U[A']$, $\functype{\maxuniontype{i \in 1..n}{A_i}}{B} \subtypemu A'$ and they both have the same outermost type constructor. This leads to a contradiction. Hence $v$ is not a case.

  Then it must be a data structure. By Lem.~\ref{lem:typingForDataStructures}, $\exists D$ such that $D$ is a non-union type, $D \subtypemu \consttype{c}$ and $\sequTDeriv{}{r : D}$. Furthermore, case (2) of the lemma does not hold since $A \eqtypemu \consttype{c}$. Then, by case (1), $v = \constterm{c}$ and $D \eqtypemu \consttype{c}$. Finally we can assure that $\match{p}{v} = \match{\constterm{c}}{\constterm{c}}$ is successful.

  \item $p = \dataterm{p_1}{p_2}$. By Lem.~\ref{lem:generationForPatterns} (iii), $\exists D, A'$ such that $A = \datatype{D}{A'}$, $\sequPDeriv{\theta}{p_1 : D}$ and $\sequPDeriv{\theta}{p_2 : A'}$. Similarly to the previous case we may conclude that if $v = (\absterm{q_i}{s_i}{\theta_i})_{i \in 1..n}$ there exists a functional type $B$ such that $\datatype{D}{A'} \eqtypemu \U[B]$ which leads to a contradiction. Hence we are again in the case that $v$ is a data structure.

  By Lem.~\ref{lem:typingForDataStructures}, $\exists D'$ such that $D'$ is a non-union type, $D' \subtypemu \datatype{D}{A'}$ and $\sequTDeriv{}{v : D'}$. Moreover, we can assure that case (2) of the lemma holds, so we have $v = \dataterm{v_1}{v_2}$ and $\exists D'', A''$ such that $D' \eqtypemu \datatype{D''}{A''}$, $\sequTDeriv{}{v_1 : D''}$ and $\sequTDeriv{}{v_2 : A''}$. Now by Prop.~\ref{prop:subtypingIsInvertible} with $\datatype{D''}{A''} \subtypemu \datatype{D}{A'}$ we get $D'' \subtypemu D$ and $A'' \subtypemu A'$, and by $\ruleTSubs$ $\sequTDeriv{}{v_1: D}$ and $\sequTDeriv{}{v_2: A'}$.

  Then we can apply the inductive hypothesis and to deduce that both $\match{p_1}{v_1}$ and $\match{p_2}{v_2}$ are successful. Finally by linearity of patterns we can safely conclude that $\match{p}{v} = \match{p_1}{v_1} \uplus \match{p_2}{v_2}$ is also successful.
\end{itemize}
\end{proof}

\begin{proposition}[Progress]
\label{prop:progress}
If $\sequTDeriv{}{s : A}$ and $s$ is not a value, then $\exists s'$ s.t. $s \reduce s'$.
\end{proposition}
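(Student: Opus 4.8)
The plan is to argue by structural induction on the (necessarily closed) term $s$, dispatching the base cases at once and concentrating all the work on the application case. If $s = x$, then $s$ is untypable in the empty context by Lem.~\ref{lem:generation} (i), so this case is vacuous; if $s = \constterm{c}$ (the $n=0$ instance of the data-structure clause) or $s = (\absterm{p_i}{s_i}{\theta_i})_{i \in 1..n}$, then $s$ is already a value and there is nothing to prove. Hence only $s = \appterm{r}{u}$ remains. Here I would first peel off the congruence cases: by Lem.~\ref{lem:generation} (iii) both $r$ and $u$ are typable in the empty context, so if $r$ is not a value the inductive hypothesis gives $r \reduce r'$ and thus $s \reduce \appterm{r'}{u}$, and symmetrically if $r$ is a value but $u$ is not we obtain $s \reduce \appterm{r}{u'}$.

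The interesting situation is when both $r$ and $u$ are values. Since $s$ is closed, $r$ is a closed value, hence either a data structure headed by a constant or an abstraction; in the former case $\appterm{r}{u}$ would itself be a value, contradicting the hypothesis, so $r$ must be an abstraction $(\absterm{p_i}{s_i}{\theta_i})_{i \in 1..n}$. To show that $\ruleBeta$ fires I would, exactly as in the Subject Reduction proof, first rule out Lem.~\ref{lem:generation} (iii.a) for $s$: it would force $\sequTDeriv{}{r : D}$ for a datatype $D$, whereas Lem.~\ref{lem:generation} (iv) on the abstraction $r$ gives $\functype{\maxuniontype{i \in 1..n}{A_i}}{B} \subtypemu D$, contradicting (via Lem.~\ref{lem:supertypesOfNonUnionTypes}) that $D$ is a datatype. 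Thus (iii.b) applies, yielding $\sequTDeriv{}{u : C_{k'}}$ and $\sequTDeriv{}{r : \functype{\maxuniontype{j \in 1..m}{C_j}}{A'}}$, while Lem.~\ref{lem:generation} (iv) on $r$ supplies the pattern typings $\sequPDeriv{\theta_i}{p_i : A_i}$. Invertibility of subtyping (Prop.~\ref{prop:subtypingIsInvertible}) applied to $\functype{\maxuniontype{i \in 1..n}{A_i}}{B} \subtypemu \functype{\maxuniontype{j \in 1..m}{C_j}}{A'}$ then gives $\maxuniontype{j \in 1..m}{C_j} \subtypemu \maxuniontype{i \in 1..n}{A_i}$, hence by transitivity $C_{k'} \subtypemu \maxuniontype{i \in 1..n}{A_i}$.

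To finish I would extract from the value $u$ a non-union type: since $u$ is a data structure or an abstraction, Lem.~\ref{lem:typingForDataStructures} (resp. Lem.~\ref{lem:generation} (iv)) yields a non-union type $C$ with $\sequTDeriv{}{u : C}$ and $C \subtypemu C_{k'} \subtypemu \maxuniontype{i \in 1..n}{A_i}$; being non-union, $C \subtypemu A_l$ for some $l \in 1..n$, whence $\sequTDeriv{}{u : A_l}$ by subsumption. Now Lem.~\ref{lem:successfulMatchForClosedValues} with $\sequPDeriv{\theta_l}{p_l : A_l}$ guarantees that $\match{p_l}{u}$ is successful, so at least one branch matches. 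Because $u$ is in matchable form, every $\match{p_i}{u}$ is decided (never $\wait$), so taking the least index $j$ with $\match{p_j}{u}$ successful we have $\match{p_i}{u} = \fail$ for all $i < j$, and $\ruleBeta$ gives $s \reduce \match{p_j}{u}\,s_j$. The hard part is precisely this last step: guaranteeing that a well-typed closed application of an abstraction is never stuck for lack of a matching branch. This is where exhaustiveness is recovered from the typing discipline, through Lem.~\ref{lem:successfulMatchForClosedValues} together with the reduction of the argument's type to a non-union component of the union of pattern types.
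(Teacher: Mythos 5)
Your proof is correct and takes essentially the same route as the paper's: structural induction on $s$, the Generation Lemma plus the inductive hypothesis for the congruence cases, and in the key case ruling out Lem.~\ref{lem:generation} (iii.a), using invertibility of subtyping to place the argument's non-union type under some $A_l$, and then invoking Lem.~\ref{lem:successfulMatchForClosedValues} to fire \ruleBeta. You are in fact slightly more explicit than the paper about why $r$ must be an abstraction rather than a data-structure value and about selecting the \emph{least} successfully matching branch so that all earlier matches fail.
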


\begin{proof}
By induction on $s$ analyzing the subterm of $s$ that is not yet a value.
\begin{itemize}
  \item $s = x$, $s = \constterm{c}$ or $s = (\absterm{p_i}{s_i}{\theta_i})_{i \in 1..n}$. The property holds trivially as $s$ is already a value.

  \item $s = \appterm{r}{u}$. Here we have three possible cases:
  \begin{enumerate}
    \item $r$ is not yet a value. Then, by Lem.~\ref{lem:generation} (iii), $\exists A_1, A_2$ such that $\sequTDeriv{}{r : A_1}$ and $\sequTDeriv{}{u : A_2}$. By inductive hypothesis $\exists r'$ such that $r \reduce r'$ and we conclude with $s' = \appterm{r'}{u}$.
    \item $r$ is a value and $u$ is not. Again by Lem.~\ref{lem:generation} (iii), $\exists A_1, A_2$ such that $\sequTDeriv{}{r : A_1}$ and $\sequTDeriv{}{u : A_2}$. By inductive hypothesis $\exists u'$ such that $u \reduce u'$ and we conclude with $s' = \appterm{r}{u'}$.
    \item $r = (\absterm{p_i}{s_i}{\theta_i})_{i \in 1..n}$ with $u$ already a value. As for SR, by Lem.~\ref{lem:generation} (iii.b), we have that $\exists C_1, \ldots, C_m, A'$ such that $A' \subtypemu A$, $\sequTDeriv{}{r :\functype{\maxuniontype{j \in 1..m}{C_j}}{A'}}$ and
    \begin{equation}
      \label{eq:prog:i}
      \sequTDeriv{}{u : C_{k'}}
    \end{equation}
    for some $k' \in 1..m$. And, by Lem.~\ref{lem:generation} (iv) on $\sequTDeriv{}{r :\functype{\maxuniontype{j \in 1..m}{C_j}}{A'}}$, $\exists A_1, \ldots, A_n, B$ such that
    \begin{equation}
      \label{eq:prog:ii}
      \functype{\maxuniontype{i \in 1..n}{A_i}}{B} \subtypemu \functype{\maxuniontype{j \in 1..m}{C_j}}{A'}
    \end{equation}
    $\dom{\theta_i} = \fm{p_i}, \lista{\sequC{\theta_i}{p_i : A_i}}_{i \in 1..n}$ is compatible, $\sequPDeriv{\theta_i}{p_i : A_i}$ and $\sequTDeriv{\theta_i}{s_i : B}$ for every $i \in 1..n$.

    From (\ref{eq:prog:ii}) and Prop.~\ref{prop:subtypingIsInvertible} we have $B \subtypemu A'$ and
    \begin{equation}
      \label{eq:prog:iii}
      \maxuniontype{j \in 1..m}{C_j} \subtypemu \maxuniontype{i \in 1..n}{A_i}
    \end{equation}
    Additionally, by (\ref{eq:prog:i}) and Lem.~\ref{lem:basis} we know that $u$ is a closed value, \ie a data structure or an abstraction. Hence, $u$ is in matchable form and matching agains every pattern $p_i$ is decided. Then, we have to possibilities as in the proof for SR:
    \begin{enumerate}
      \item $u$ is a data structure: by Lem.~\ref{lem:typingForDataStructures}, there exists a non-union datatype $D$ such that $D \subtypemu C_{k'}$ and $\sequTDeriv{\Gamma}{u : D}$.
      \item $u$ is an abstraction: by Lem.~\ref{lem:generation} (iv), $\exists C', C''$ such that $\functype{C'}{C''} \subtypemu C_{k'}$ and $\sequTDeriv{\Gamma}{u : \functype{C'}{C''}}$.
    \end{enumerate}
    In both cases we can assume there is a non-union type, say $C$, such that $C \subtypemu C_{k'}$ and $\sequTDeriv{}{u : C}$. Then, from (\ref{eq:prog:iii}) we get $C \subtypemu \maxuniontype{i \in 1..n}{A_i}$ and $C \subtypemu A_k$ for some $k \in 1..n$, as before. Thus, by subsumption, $\sequTDeriv{}{u : A_k}$. Finally, with $\sequPDeriv{\theta_k}{p_k : A_k}$ we are under the hypothesis of Lem.~\ref{lem:successfulMatchForClosedValues}, and we conclude by taking $s' = \match{p_k}{u}s_k$.
  \end{enumerate}
\end{itemize}
\end{proof}


\section{Conclusions}

A type system is proposed for a calculus that supports path polymorphism and
two fundamental properties are addressed, namely Subject Reduction and
Progress. The type system includes type application, constants as types,
union and recursive types. Both properties rely crucially on a notion of
pattern \emph{compatibility} and on invertibility of subtyping of $\mu$-types.
This last result is proved via a coinductive semantics for the finite
$\mu$-types. Regarding future work an outline of possible avenues follows.

\begin{itemize}
  \item There exists extensive work on type-checking for recursive
  types~\cite{Eikelder:1991,DBLP:journals/toplas/AmadioC93,DBLP:conf/fossacs/JhaPZ02},
  including some efficient algorithms for both
  equivalence~\cite{DBLP:journals/iandc/PalsbergZ01} and
  subtyping~\cite{DBLP:journals/mscs/KozenPS95}. We are currently adapting
  these ideas to $\capp$.
  
  \item We already mentioned the addition of parametric polymorphism
  (presumably in the style of
  F$_{<:}$~\cite{Cardelli1991,Pierce:2002:TPL:509043,Colazzo200571}). We
  believe this should not present major difficulties. 
  
  \item Strong normalization requires devising a notion of positive/negative
  occurrence in the presence of strong $\mu$-type equality, which is known not
  to be obvious~\cite[page 515]{BDS:2013}.
  
  \item A more ambitious extension is that of \emph{dynamic patterns}, namely
  patterns that may be computed at run-time, \ppc\ being the prime example of a
  calculus supporting this feature.
\end{itemize}


\bibliographystyle{entcs}
\bibliography{biblio}

\end{document}
